\newcommand{\acksection}{\section*{Acknowledgments and Disclosure of Funding}}
\definecolor{mydarkblue}{rgb}{0,0.08,0.45}
\newtheorem{theorem}{Theorem}
\crefname{theorem}{theorem}{Theorems}
\Crefname{Theorem}{Theorem}{Theorems}
\newaliascnt{lemma}{theorem}
\newtheorem{lemma}[theorem]{Lemma}
\crefname{lemma}{lemma}{lemmas}
\Crefname{Lemma}{Lemma}{Lemmas}
\newaliascnt{proposition}{theorem}
\newtheorem{proposition}[theorem]{Proposition}
\crefname{proposition}{proposition}{propositions}
\Crefname{Proposition}{Proposition}{Propositions}
\newaliascnt{corollary}{theorem}
\crefname{corollary}{corollary}{corollaries}
\Crefname{Corollary}{Corollary}{Corollaries}
\newtheorem{assumption}{\textbf{H}\hspace{-3pt}}
\Crefname{assumption}{\textbf{H}\hspace{-3pt}}{\textbf{H}\hspace{-3pt}}
\crefname{assumption}{\textbf{H}}{\textbf{H}}
\Crefname{assumptionSC}{\textbf{SC}\hspace{-3pt}}{\textbf{SC}\hspace{-3pt}}
\crefname{assumptionSC}{\textbf{SC}}{\textbf{SC}}
\Crefname{assumptionF}{\textbf{F}\hspace{-3pt}}{\textbf{F}\hspace{-3pt}}
\crefname{assumptionF}{\textbf{F}}{\textbf{F}}
\Crefname{assumptionQC}{\textbf{QC}\hspace{-3pt}}{\textbf{QC}\hspace{-3pt}}
\crefname{assumptionQC}{\textbf{QC}}{\textbf{QC}}
\Crefname{assumptionBR}{\textbf{BR}\hspace{-3pt}}{\textbf{BR}\hspace{-3pt}}
\crefname{assumptionBR}{\textbf{BR}}{\textbf{SC}}
\Crefname{assumptionR}{\textbf{R}\hspace{-3pt}}{\textbf{R}\hspace{-3pt}}
\crefname{assumptionR}{\textbf{R}}{\textbf{R}}
\Crefname{assumptionA}{\textbf{A}\hspace{-3pt}}{\textbf{A}\hspace{-3pt}}
\crefname{assumptionA}{\textbf{A}}{\textbf{A}}
\Crefname{assumptionMD}{\textbf{MD}\hspace{-3pt}}{\textbf{MD}\hspace{-3pt}}
\crefname{assumptionMD}{\textbf{MD}}{\textbf{MD}}
\Crefname{assumptionMA}{\textbf{MA}\hspace{-3pt}}{\textbf{MA}\hspace{-3pt}}
\crefname{assumptionMA}{\textbf{MA}}{\textbf{MA}}
\Crefname{assumptionB}{\textbf{B}\hspace{-3pt}}{\textbf{B}\hspace{-3pt}}
\crefname{assumptionB}{\textbf{B}}{\textbf{B}}
\definecolor{darkgreen}{RGB}{0,128,0}
\newcommand{\B}[1]{\mathbf{#1}}
\newcommand{\btheta}{\boldsymbol{\theta}}
\newcommand{\tbtheta}{\tilde{\btheta}}
\newcommand{\bz}{\B{z}}
\newcommand{\Zrm}{\mathrm{Z}}
\newcommand{\bx}{\B{x}}
\newcommand{\by}{\B{y}}
\newcommand{\dd}{\mathrm{d}}
\newcommand{\Rd}{\mathbb{R}^d}
\newcommand{\grad}{\nabla}
\newcommand{\E}{\mathbb{E}}
\newcommand{\PP}{\mathbb{P}}
\newcommand{\bthetastar}{\boldsymbol{\theta^{\star}}}
\newcommand{\1}{\mathds{1}}
\newcommand{\R}{\mathbb{R}}
\newcommand{\pr}[1]{\left({#1}\right)}
\newcommand{\prn}[1]{({#1})}
\newcommand{\prbig}[1]{\big({#1}\big)}
\newcommand{\prBig}[1]{\Big({#1}\Big)}
\newcommand{\prbigg}[1]{\bigg({#1}\bigg)}
\newcommand{\prBigg}[1]{\Bigg({#1}\Bigg)}
\newcommand{\br}[1]{\left[{#1}\right]}
\newcommand{\brn}[1]{[{#1}]}
\newcommand{\brbig}[1]{\big[{#1}\big]}
\newcommand{\brBig}[1]{\Big[{#1}\Big]}
\newcommand{\brbigg}[1]{\bigg[{#1}\bigg]}
\newcommand{\ac}[1]{\left\{{#1}\right\}}
\newcommand{\acn}[1]{\{{#1}\}}
\newcommand{\acbig}[1]{\big\{{#1}\big\}}
\newcommand{\acBig}[1]{\Big\{{#1}\Big\}}
\newcommand{\acbigg}[1]{\bigg\{{#1}\bigg\}}
\newcommand{\normn}[1]{\|{#1}\|} 
\newcommand{\normbig}[1]{\big\|{#1}\big\|} 
\newcommand{\normBig}[1]{\Big\|{#1}\Big\|} 
\newcommand{\normbigg}[1]{\bigg\|{#1}\bigg\|}
\newcommand{\absn}[1]{\textstyle|{#1}|}
\newcommand{\nofrac}[2]{{#1}/{#2}}
\newcommand{\nosqrt}[1]{{#1}^{\half}}
\def\eqsp{\;}
\renewcommandx{\iint}[2][1=]{\ifthenelse{\equal{#1}{}}{[#2]}{[#1:#2]}}
\def\bfx{\mathbf{x}}
\def\msa{\mathsf{A}}
\def\msz{\mathsf{Z}}
\def\msb{\mathsf{B}}
\def\msx{\mathsf{X}}
\def\msy{\mathsf{Y}}
\def\mcbb{\mathcal{B}}  %%% \mcb est déjà pris
\newcommand{\mcb}[1]{\mathcal{B}(#1)}
\def\mcy{\mathcal{Y}}
\def\mcx{\mathcal{X}}
\def\rset{\mathbb{R}}
\def\nset{\mathbb{N}}
\def\nsets{\mathbb{N}^*}
\def\rmd{\mathrm{d}}
\def\rmZ{\mathrm{Z}}
\def\rme{\mathrm{e}}
\newcommandx{\functionspace}[2][1=+]{\mathbb{F}_{#1}(#2)}
\newcommand{\argmin}{\operatorname*{arg\,min}}
\newcommandx{\VarDeux}[3][3=]{\operatorname{Var}^{#3}_{#1}\left\{#2 \right\}}
\newcommand{\LeftEqNo}{\let\veqno\@@leqno}
\newcommand{\N}{\ensuremath{\mathbb{N}}}
\newcommand{\PE}{\mathbb{E}}
\newcommand{\absLigne}[1]{\vert #1 \vert}
\newcommandx{\Vnorm}[2][1=V]{\| #2 \|_{#1}}
\newcommandx{\VnormEq}[2][1=V]{\left\| #2 \right\|_{#1}}
\newcommandx{\normLigne}[2][1=]{\ifthenelse{\equal{#1}{}}{\Vert #2 \Vert}{\Vert #2\Vert^{#1}}}
\newcommand{\parenthese}[1]{\left(#1 \right)}
\newcommand{\parentheseLigne}[1]{(#1 )}
\newcommand{\parentheseDeuxLigne}[1]{[ #1 ]}
\newcommand{\defEns}[1]{\left\lbrace #1 \right\rbrace }
\newcommand{\defEnsLigne}[1]{\lbrace #1 \rbrace }
\newcommand{\ps}[2]{\left\langle#1,#2 \right\rangle}
\newcommand{\psLigne}[2]{\langle#1,#2 \rangle}
\newcommandx\probaMarkovTilde[2][2=]
\newcommand{\expe}[1]{\PE \left[ #1 \right]}
\newcommand{\bigO}{\ensuremath{\mathcal{O}}}
\newcommand{\plusinfty}{+\infty}
\def\ie{\textit{i.e.}}
\def\eqsp{\;}
\newcommand{\ooint}[1]{\left(#1\right)}
\newcommandx{\weight}[2][2=n]{\omega_{#1,#2}^N}
\newcommandx\sequence[3][2=,3=]
\newcommandx\sequenceD[3][2=,3=]
\newcommandx{\sequencen}[2][2=n\in\N]{\ensuremath{\{ #1_n, \eqsp #2 \}}}
\newcommandx\sequenceDouble[4][3=,4=]
\newcommandx{\sequencenDouble}[3][3=n\in\N]{\ensuremath{\{ (#1_{n},#2_{n}), \eqsp #3 \}}}
\def\eg{e.g.}
\newcommand{\opnorm}[1]{{\left\vert\kern-0.25ex\left\vert\kern-0.25ex\left\vert #1 
    \right\vert\kern-0.25ex\right\vert\kern-0.25ex\right\vert}}
\newcommandx{\CPE}[3][1=]{{\mathbb E}_{#1}\left[#2 \left \vert #3 \right. \right]} %%%% esperance conditionnelle
\newcommandx{\CPVar}[3][1=]{\mathrm{Var}^{#3}_{#1}\left\{ #2 \right\}}
\newcommand{\CPP}[3][]
{\ifthenelse{\equal{#1}{}}{{\mathbb P}\left(\left. #2 \, \right| #3 \right)}{{\mathbb P}_{#1}\left(\left. #2 \, \right | #3 \right)}}
\newcommandx{\osc}[2][1=]{\mathrm{osc}_{#1}(#2)}
\def\b{b}
\def\tbfx{\tilde{\bfx}}
\def\tX{\tilde{X}}
\def\Yb{\bar{Y}}
\newcommand\coupling[2]{\Gamma(\mu,\nu)}
\newcommand{\half}{{\nicefrac{1}{2}}}
\newcommandx{\wasserstein}[2][1=2]{\ifthenelse{\equal{#1}{}}{W}{W_{#1}}}
\newcommandx{\wassersteinLigne}[3][1=\distance,3=]{\mathbf{W}_{#1}^{#3}(#2)}
\newcommandx{\wassersteinD}[1][1=\distance]{\mathbf{W}_{#1}}
\newcommandx{\wassersteinDLigne}[1][1=\distance]{\mathbf{W}_{#1}}
\def\varespilon{\varepsilon}
\def\bgamma{\bar{\gamma}}
\def\tZ{\tilde{Z}}
\def\tP{\tilde{P}}
\def\tQ{\tilde{Q}}
\def\bfz{\mathbf{z}}
\def\zbf{\bfz}
\def\tbfz{\tilde{\mathbf{z}}}
\def\tR{\tilde{R}}
\def\thetabf{\btheta}
\def\bftheta{\btheta}
\def\Abf{\mathbf{A}}
\def\bfgamma{\boldsymbol{\gamma}}
\def\gammabf{\bfgamma}
\def\bfrho{\boldsymbol{\rho}}
\def\rhobf{\boldsymbol{\rho}}
\def\bfN{\boldsymbol{N}}
\def\bfOne{\boldsymbol{1}}
\def\Yb{Y}
\def\Yc{\tilde{Y}}
\def\Ybr{\mathrm{Y}}
\def\Ycr{\tilde{\mathrm{Y}}}
\def\Zb{Z}
\def\Zc{\tilde{Z}}
\newcommand{\txts}{\textstyle}
\def\loiGauss{\mathrm{N}}
\def\tPi{\tilde{\Pi}}
\def\tildeU{V}
\def\bfOne{\boldsymbol{1}}
\title{DG-LMC: A Turn-key and Scalable Synchronous Distributed MCMC Algorithm via Langevin Monte Carlo within Gibbs}
\author[1,2,*]{\textbf{Vincent Plassier}}
\author[1,*]{\textbf{Maxime Vono}}
\author[3,*]{\textbf{Alain Durmus}}
\author[2]{\textbf{Eric Moulines}}
\affil[1]{Lagrange Mathematics and Computing Research Center, Paris, France}
\affil[2]{Ecole Polytechnique, Universit\'e Paris-Saclay, Palaiseau, France}
\affil[3]{Ecole Normale Sup\'erieure Paris-Saclay, Gif-sur-Yvette, France}
\affil[*]{Equal contribution}
\begin{document}

\maketitle

\begin{abstract}
  Performing reliable Bayesian inference on a big data scale is becoming a keystone in the modern era of machine learning.
  A workhorse class of methods to achieve this task are Markov chain Monte Carlo (MCMC) algorithms and their design to handle distributed datasets has been the subject of many works.
  However, existing methods are not completely either reliable or computationally efficient.
  In this paper, we propose to fill this gap in the case where the dataset is partitioned and stored on computing nodes within a cluster under a master/slaves architecture.
  We derive a user-friendly centralised distributed MCMC algorithm with provable scaling in high-dimensional settings.
  We illustrate the relevance of the proposed methodology on both synthetic and real data experiments.
\end{abstract}

\section{Introduction}
\label{sec:introduction}

In the current machine learning era, data acquisition has seen
significant progress due to rapid technological advances which now
allow for more accurate, cheaper and faster data storage and
collection.  This data quest is motivated by modern machine learning
techniques and algorithms which are now well-proven and have become
common tools for data analysis.  In most cases, the empirical success
of these methods are based on a very large sample size
\citep{Bardenet17,Bottou_SIREV_2018}.  This need for data is also
theoretically justified by data probabilistic modelling which asserts
that under appropriate conditions, the more data can be processed, the
more accurate the inference can be performed.  However, in recent
years, several challenges have emerged regarding the use and access to
data in mainstream machine learning methods.  Indeed, first the amount
of data is now so large that it has outpaced the increase in
computation power of computing resources
\citep{Survey_distributed_ML}.  Second, in many modern applications,
data storage and/or use are not on a single machine but shared across
several units \citep{AstroPortal,Bernstein2009}.  
Third, life privacy is becoming a prime concern for many users of machine learning applications who are therefore asking for methods preserving data anonymity \citep{Shokri15,Abadi16}. 
Distributed machine learning aims at tackling these issues.
One of its popular paradigms, referred to as data-parallel approach, is to consider that the training data are divided across multiple machines.  
Each of these units constitutes a worker node of a computing network and can perform a \textit{local} inference based on the data it has access.   
Regarding the choice of the network, several options and frameworks have been considered.
We focus here on the master/slaves architecture where the worker nodes communicate with each other through a device called the \emph{master} node.\mbox{}\\

\noindent Under this framework, we are interested in carrying Bayesian inference about a parameter $\boldsymbol{\theta}\in\mathbb{R}^d$ based on observed data $\{\by_k\}_{k=1}^n \in \msy^n$ \citep{Robert94}.
The dataset is assumed to be partitioned into $S$ \emph{shards} and stored on $S$ machines among a collection of $b$ worker nodes.
The subset of observations associated to worker $i \in [b]$ is denoted by $\boldsymbol{y}_i$, with potentially $\boldsymbol{y}_i = \{\emptyset\}$ if $i \in \iint{S+1:b}$, $b > S$.
The posterior distribution of interest is assumed to admit a density w.r.t. the $d$-dimensional Lebesgue measure which factorises across workers, \emph{i.e.,}  
\begin{equation}
  \label{eq:target_density}
  \pi(\boldsymbol{\theta}\mid\B{y}_{1:n}) = Z_{\pi}^{-1}\, \prod_{i=1}^b\mathrm{e}^{-U_{\boldsymbol{y}_i}(\B{A}_i\boldsymbol{\theta})}  \eqsp,
\end{equation}
where $Z_{\pi} = \int_{\Rd} \prod_{i=1}^b\mathrm{e}^{-U_{\boldsymbol{y}_i}(\B{A}_i\boldsymbol{\theta})}\,\dd \btheta$ is a normalisation constant and $\B{A}_i \in \mathbb{R}^{d_i \times d}$ are matrices that might act on the parameter of interest.
For $i \in [b]$, the potential function $U_{\boldsymbol{y}_i}: \mathbb{R}^{d_i} \rightarrow \mathbb{R}$ is assumed to depend only on the subset of observations $\boldsymbol{y}_i$. Note that for $i \in \iint{S+1:b}$, $b > S$, $U_{\boldsymbol{y}_i}$ does not depend on the data but only on the prior.
For the sake of brevity, the dependency of $\pi$ w.r.t. the observations $\{\boldsymbol{y}_i\}_{i=1}^b$ is notationally omitted and for $i \in [b]$, $U_{\boldsymbol{y}_i}$ is simply denoted by $U_i$.\mbox{}\\

\noindent To sample from $\pi$ given by \eqref{eq:target_density} in a distributed fashion, a large number of approximate methods have been proposed in the past ten years \citep{Neiswanger2014,Ahn14,2015_variational_consensus,Scott2016,nemeth2018,pmlr-v84-chowdhury18a,Rendell2020}.
Despite multiple research lines, to the best of authors' knowledge, none of these proposals has been proven to be satisfactory.
Indeed, the latter are not completely either computationally efficient in high-dimensional settings, reliable or theoretically grounded \citep{Jordan2019}.\mbox{}\\

\noindent This work is an attempt to fill this gap.
To this purpose, we follow the data augmentation approach introduced in \citet{Vono_AXDA_2019} and referred to as asymptotically exact data augmentation (AXDA). 
Given a tolerance parameter $\bfrho$, the main idea behind this methodology is to consider a joint distribution $\Pi_{\bfrho}$ on the extended state space $\rset^d \times \prod_{i=1}^b \rset^{d_i}$ such that $\Pi_{\bfrho}$ has a density w.r.t. the Lebesgue measure of the form $(\bftheta,\zbf_{1:b}) \mapsto \prod_{i=1}^b \Pi_{\bfrho}^{i}(\bftheta,\bz_i)$, with $\bftheta \in \rset^d$ and $\zbf_i \in \rset^{d_i}$, $i \in [b]$. 
$\Pi_{\bfrho}$ is carefully designed so that its marginal w.r.t. $\bftheta$, denoted by $\pi_{\bfrho}$, is a proxy of \eqref{eq:target_density} for which quantitative approximation bounds can be derived and are controlled by $\bfrho$. 
In addition, for any $i \in [b]$,  $\Pi_{\bfrho}^{i}(\bftheta,\bz_i)$ only depends on the data $\boldsymbol{y}_i$, and therefore plays a role similar to the local posterior $\pi^{i}(\btheta) \propto \mathrm{e}^{-U_i(\B{A}_i\btheta)}$ in popular embarrassingly parallel approaches \citep{Neiswanger2014,Scott2016}.
However, compared to this class of methods, AXDA does not seek for each worker to sample from $\Pi_{\bfrho}^{i}$.
Following a data augmentation strategy based on Gibbs sampling, AXDA instead requires each worker to sample from the conditional distribution $\Pi_{\bfrho}(\zbf_i \mid\bftheta)$ and to communicate its sample to the master.
$\Pi_{\bfrho}$ is generally chosen such that sampling from $\Pi_{\bfrho}(\bftheta \mid \zbf_{1:b})$ is easy and does not require to access to the data.
However, two main challenges remain: one has to sample efficiently from the conditional distribution $\Pi_{\bfrho}(\zbf_i \mid\bftheta)$ for $i \in [b]$ and avoid too frequent communication rounds on the master.
Existing AXDA-based approaches unfortunately do not fulfill these important requirements \citep{Vono_Paulin_Doucet_2019,Rendell2020}.
In this work, we leverage these issues by considering the use of the Langevin Monte Carlo (LMC) algorithm to approximately sample from $\Pi_{\bfrho}(\zbf_i \mid\bftheta)$ \citep{rossky:doll:friedman:1978,Roberts1996}.\mbox{}\\

\noindent Our contributions are summarised in what follows.
\begin{enumerate*}[label=(\arabic*)]
\item We introduce in \Cref{sec:pres_DG_LMC} a new methodology called Distributed Gibbs using Langevin Monte Carlo (DG-LMC).
\item Importantly, we provide in \Cref{sec:theoretical_analysis_DG_LMC} a  detailed quantitative analysis of the induced bias and show explicit convergence results.
This stands for our main contribution and to the best of authors' knowledge, this theoretical study is one of the most complete among existing works which focused on distributed Bayesian machine learning with a master/slaves architecture. 
In particular, we discuss the complexity of our algorithm, the choice of hyperparameters, and provide practitioners with simple prescriptions to tune them. Further, we provide a thorough comparison of our method with existing approaches in \Cref{sec:comparison}.
\item Finally, in \Cref{sec:experiments}, we show the benefits of the proposed sampler over popular and recent distributed MCMC algorithms on several numerical experiments.
\end{enumerate*}
All the proofs are postponed to the Appendices.\\[1em]

\noindent\textbf{Notations and conventions.} 
The Euclidean norm on $\mathbb{R}^d$ is denoted by $\|\cdot\|$.
For $n \ge 1$, we refer to $\{1,\ldots,n\}$ with the notation $[n]$ and for $i_1,i_2 \in \nset$, $i_1 \le i_2$, $\{i_1,\ldots,i_2\}$ with the notation $[i_1:i_2]$.
For $0\le i < j$ and ($\B{u}_k; k \in \{i,\cdots,j\}$), we use the notation $\B{u}_{i:j}$ to refer to the vector $[\B{u}_i^{\top},\cdots,\B{u}_{j}^{\top}]^{\top}$.
We denote by $\loiGauss(\B{m},\B{\Sigma})$ the Gaussian distribution with mean vector $\B{m}$ and covariance matrix $\B{\Sigma}$.
For a given matrix $\B{M} \in \mathbb{R}^{d \times d}$, we denote its smallest eigenvalue by $\lambda_{\mathrm{min}}(\B{M})$.
We denote by $\mathcal{B}(\mathbb{R}^d)$ the Borel $\sigma$-field of $\mathbb{R}^d$.
We define the Wasserstein distance of order $2$ for any probability measures $\mu,\nu$ on $\Rd$ with finite $2$-moment by $\wasserstein{} (\mu, \nu) = (\inf_{\zeta \in \mathcal{T}(\mu,\nu)} \int_{\mathbb{R}^d \times \mathbb{R}^d}\|\btheta-\btheta'\|^2\dd\zeta(\btheta,\btheta'))^{\half}$, where $\mathcal{T}(\mu, \nu)$ is the set of transference plans of $\mu$ and $\nu$.

\section{Distributed Gibbs using Langevin Monte Carlo (DG-LMC)}
\label{sec:pres_DG_LMC}

In this section, we present the proposed methodology which is based on the AXDA statistical framework and the popular LMC algorithm.  

AXDA relies on the decomposition of the target distribution $\pi$ given in \eqref{eq:target_density} to introduce an extended distribution which enjoys favorable properties for distributed computations. 
This distribution is defined on the state space $\rset^d \times \msz$, $\msz = \prod_{i=1}^b \rset^{d_i}$, and admits a density w.r.t. the Lebesgue measure given, for any $\bftheta \in \rset^d$, $\bfz_{1:b} \in \msz$, by
\begin{equation}
  \label{eq:joint_density_AXDA}
  \Pi_{\bfrho}(\boldsymbol{\theta},\bz_{1:b}) \propto \prod_{i=1}^b \tPi_{\bfrho}^{i}(\bftheta,\bfz_i) \eqsp,
\end{equation}
where $\tPi_{\bfrho}^{i}(\bftheta,\bfz_i) = \exp(-U_i(\B{z}_i) - \nicefrac{\norm{\bz_i - \B{A}_i\btheta}^2}{2\rho_i})$ and $\bfrho = \{\rho_i\}_{i=1}^b \in \rset_+^b$ is a sequence of positive tolerance parameters. 
Note that $\tPi_{\bfrho}^{i}$ is not necessarily a probability density function. 
Actually, for $\Pi_{\bfrho}$ to define a proper probability density, \ie~$\int_{\rset^d \times \msz} \prod_{i=1}^b \tPi_{\bfrho}^{i}(\bftheta,\bfz_i)  \rmd \bftheta \rmd \bfz_{1:b} < \infty$, some conditions are required.
\begin{assumption}
  \label{ass:well_defined_density}
There exists $b' \in [b-1]$ such that the following conditions hold: $\min_{i \in [b']} \inf_{\bz_{i}\in \mathbb{R}^{d_i}} U_i(\bz_{i})>-\infty$, and $\max_{i \in [b'+1:b]} \int_{\rset^{d_i}} \rme^{-U_i(\bz_i)} \rmd \bz_i < \infty$.
  In addition, $\sum_{j=b'+1}^b\B{A}_j^{\top}\B{A}_j$ is invertible.
\end{assumption}
The next result shows that these mild assumptions are sufficient to guarantee that the extended model \eqref{eq:joint_density_AXDA} is well-defined.
\begin{proposition}
  \label{prop:pi_rho_proper}
 Assume \Cref{ass:well_defined_density}. Then, for any $\bfrho \in \rset_+^b$, $\Pi_{\bfrho}$ in \eqref{eq:joint_density_AXDA} is a proper density.
\end{proposition}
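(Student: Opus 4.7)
The goal is to show $\int_{\rset^d \times \msz} \prod_{i=1}^b \tPi_{\bfrho}^{i}(\bftheta,\bfz_i) \rmd \bftheta \rmd \bfz_{1:b} < \infty$. Since the integrand is non-negative, Fubini's theorem allows any order of integration. The strategy is to split the $b$ factors into the two groups prescribed by \Cref{ass:well_defined_density}: for the first group ($i \in [b']$), I exploit the lower-boundedness of $U_i$ to integrate $\bz_i$ out against the Gaussian factor; for the second group ($i \in [b'+1:b]$), I swap the order of integration in $(\bftheta,\bz_{b'+1:b})$ and use the invertibility of $\sum_{j=b'+1}^b \B{A}_j^{\top}\B{A}_j$ to perform the Gaussian integral in $\bftheta$.

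\textbf{Step 1.} For each $i \in [b']$, since $c_i \defeq \inf U_i > -\infty$, a direct computation yields
\[
\int_{\rset^{d_i}} \tPi_{\bfrho}^{i}(\bftheta,\bfz_i) \rmd \bfz_i \le \mre^{-c_i}\int_{\rset^{d_i}} \exp(-\|\bz_i-\B{A}_i\btheta\|^2/(2\rho_i))\rmd \bfz_i = \mre^{-c_i}(2\pi\rho_i)^{d_i/2} \eqdef C_i,
\]
uniformly in $\bftheta$. Applying Fubini and this bound factor by factor reduces the problem to proving $I \defeq \int \prod_{i=b'+1}^{b} \tPi_{\bfrho}^{i}(\bftheta,\bfz_i)\rmd \bftheta \rmd \bfz_{b'+1:b} < \infty$.

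\textbf{Step 2.} Integrate $\bftheta$ first. Stack $\tilde{\B{A}} = [\B{A}_{b'+1}^\top,\ldots,\B{A}_b^\top]^\top$ and $\tilde{\bz} = (\bz_{b'+1},\ldots,\bz_b)$, and let $\B{R} = \diag(\rho_i\B{I}_{d_i})_{i=b'+1}^b$. Then $\sum_{i=b'+1}^b\|\bz_i-\B{A}_i\btheta\|^2/(2\rho_i) = \tfrac12(\tilde{\bz}-\tilde{\B{A}}\btheta)^\top\B{R}^{-1}(\tilde{\bz}-\tilde{\B{A}}\btheta)$, and setting $\B{M}' \defeq \tilde{\B{A}}^\top\B{R}^{-1}\tilde{\B{A}}$, completing the square in $\bftheta$ followed by a standard Gaussian integral gives
\[
\int_{\rset^d}\exp\Bigl(-\sum_{i=b'+1}^{b}\|\bz_i-\B{A}_i\btheta\|^2/(2\rho_i)\Bigr)\rmd\btheta = (2\pi)^{d/2}\det(\B{M}')^{-1/2}\exp\bigl(-\tfrac{1}{2}Q(\tilde{\bz})\bigr),
\]
where $Q(\tilde{\bz}) = \tilde{\bz}^\top[\B{R}^{-1}-\B{R}^{-1}\tilde{\B{A}}\B{M}'^{-1}\tilde{\B{A}}^\top\B{R}^{-1}]\tilde{\bz}$. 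Here the invertibility of $\sum_{j=b'+1}^b\B{A}_j^\top\B{A}_j$ combined with $\rho_j>0$ guarantees $\B{M}' \succ 0$, so the Gaussian integral is finite and $\det(\B{M}')>0$.

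\textbf{Step 3.} The key algebraic observation is that $Q(\tilde{\bz}) \ge 0$. Indeed, setting $\B{u} = \B{R}^{-1/2}\tilde{\bz}$, one rewrites $Q(\tilde{\bz}) = \B{u}^\top(\B{I}-\B{P})\B{u}$, where $\B{P} = \B{R}^{-1/2}\tilde{\B{A}}(\tilde{\B{A}}^\top\B{R}^{-1}\tilde{\B{A}})^{-1}\tilde{\B{A}}^\top\B{R}^{-1/2}$ is the orthogonal projector onto $\mathrm{Range}(\B{R}^{-1/2}\tilde{\B{A}})$, whence $\B{I}-\B{P}\succeq 0$. Consequently $\exp(-\tfrac12 Q(\tilde{\bz})) \le 1$, and
\[
I \le (2\pi)^{d/2}\det(\B{M}')^{-1/2}\prod_{i=b'+1}^{b}\int_{\rset^{d_i}}\mre^{-U_i(\bz_i)}\rmd\bz_i < \infty
\]
by the integrability assumption on $\mre^{-U_i}$ for $i\in[b'+1:b]$. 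Combining with Step 1 concludes the proof.

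The main obstacle is Step 3: after integrating out $\bftheta$, one must certify that the residual quadratic form in $\tilde{\bz}$ remains non-positive in the exponent so that it can be discarded; the projector identity above is what makes this transparent. The role of the splitting index $b'$ is then clear: the $U_i$'s that may fail to produce integrable $\mre^{-U_i}$ are only allowed among those workers whose Gaussian factor can be integrated first without needing the matrix invertibility hypothesis.
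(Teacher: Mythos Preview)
Your proof is correct and follows essentially the same approach as the paper's: bound the first $b'$ factors using $\inf U_i>-\infty$ and Gaussian integration in $\bz_i$, then integrate out $\bftheta$ as a Gaussian with precision $\B{M}'=\sum_{j>b'}\B{A}_j^\top\B{A}_j/\rho_j$, and discard the residual quadratic form via the projector identity $\B{I}-\B{P}\succeq 0$. The only cosmetic difference is that the paper extracts your Steps~2--3 (the completing-the-square decomposition) into a standalone lemma before applying it, while you perform the computation inline.
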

The data augmentation scheme \eqref{eq:joint_density_AXDA} is approximate in the sense that the $\btheta$-marginal defined by
\begin{equation}
  \label{eq:theta_marginal}
  \pi_{\bfrho}(\btheta ) = \int_{\msz} \Pi_{\bfrho}(\boldsymbol{\theta},\bz_{1:b}) \dd \bz_{1:b} \eqsp,
\end{equation}
coincides with \eqref{eq:target_density} only in the limiting case $\max_{i \in [b]} \rho_i \downarrow 0$ \citep{Scheffe1947}.
For a fixed $\bfrho$, quantitative results on the induced bias in total variation distance can be found in \citet{Vono_Paulin_Doucet_2019}.
The main benefit of working with \eqref{eq:joint_density_AXDA} is that conditionally upon $\btheta$, auxiliary variables $\{\bz_i\}_{i=1}^b$ are independent.
Therefore, they can be sampled in parallel within a Gibbs sampler.
For $i \in [b]$, the conditional density of $\bz_i$ given $\btheta$ writes
\begin{equation}
  \label{eq:conditional_zi_given_theta}
  \Pi_{\bfrho}(\bz_i\mid\boldsymbol{\theta}) \propto \exp\big(-U_i(\B{z}_i) - \textstyle\frac{\norm{\bz_i - \B{A}_i\btheta}^2}{2\rho_i}\big)\eqsp.
\end{equation}
On the other hand, the conditional distribution of $\btheta$ given $\bz_{1:b}$ is a Gaussian distribution
\begin{equation}
  \label{eq:def:Pi_rho_cond}
  \Pi_{\bfrho}(\boldsymbol{\theta} \mid \bz_{1:b}) = \loiGauss(\boldsymbol{\mu}(\bz_{1:b}),\B{Q}^{-1})\eqsp,
\end{equation}
with precision matrix $\B{Q} = \sum_{i=1}^b\B{A}_i^{\top}\B{A}_i/\rho_i$ and mean vector $\boldsymbol{\mu}(\bz_{1:b}) = \B{Q}^{-1}\sum_{i=1}^b\B{A}_i^{\top}\B{z}_i/\rho_i$.
Under \textbf{H}\ref{ass:well_defined_density}, note that $\B{Q}$ is invertible and therefore this conditional Gaussian distribution is well-defined.
Since sampling from high-dimensional Gaussian distributions can be performed efficiently \citep{Vono2020_gaussian}, this Gibbs sampling scheme is interesting as long as sampling from \eqref{eq:conditional_zi_given_theta} is cheap.
\citet{Vono_Paulin_Doucet_2019} proposed the use of a rejection sampling step requiring to set $\rho_i = \mathcal{O}(1/d_i)$. 
When $d_i \gg 1$, this condition unfortunately leads to prohibitive computational costs and hence prevents its practical use for general Bayesian inference problems.
Instead of sampling exactly from \eqref{eq:conditional_zi_given_theta}, \citet{Rendell2020} rather proposed to use Metropolis-Hastings algorithms.
However, it is not clear whether this choice indeed leads to efficient sampling schemes.
\begin{algorithm}
   \caption{Distributed Gibbs using LMC (DG-LMC)}
   \label{algo:ULAwSG}
  \begin{algorithmic}
     \STATE {\bfseries Input:} burn-in $T_{\mathrm{bi}}$; for $i \in [b]$, tolerance parameters $\rho_i > 0$, step-sizes $\gamma_i \in (0,\rho_i/(1+\rho_i M_i)]$, local LMC steps $N_i \ge 1$.
     \STATE Initialise $\btheta^{(0)}$ and $\bz_{1:b}^{(0)}$.
     \FOR{$t=0$ {\bfseries to} $T-1$}
      \STATE \COMMENT{Sampling from $\Pi_{\bfrho}(\bfz_{1:b}|\bftheta)$}
        \FOR{$i=1$ {\bfseries to} $b$ \COMMENT{In parallel on the $b$ workers}}
        \STATE $\B{u}_i^{(0)} = \bz_i^{(t)}$ 
          \FOR{$k=0$ {\bfseries to} $N_i-1$ \COMMENT{$N_i$ local LMC steps}}
            \STATE $\boldsymbol{\xi}_i^{(k,t)} \sim \loiGauss\pr{\B{0}_{d_i},\B{I}_{d_i}}$
            \STATE $\B{g}_i = \big(1-\frac{\gamma_i}{\rho_i}\big)\B{u}_i^{(k)} + \frac{\gamma_i}{\rho_i}\B{A}_i\btheta^{(t)} - \gamma_i\grad U_i\big(\B{u}_i^{(k)}\big)$
            \STATE $\B{u}_i^{(k+1)} = \B{g}_i + \sqrt{2\gamma_i}\boldsymbol{\xi}_i^{(k,t)}$ \COMMENT{See \eqref{eq:conditional_zi_given_theta}}
          \ENDFOR
          \STATE $\B{z}_i^{(t+1)} = \B{u}_i^{(N_i)}$
        \ENDFOR
        \STATE \COMMENT{Sampling from $\Pi_{\bfrho}(\bftheta |\bfz_{1:b})$}
        \STATE $\btheta^{(t+1)} \sim \loiGauss\Big(\boldsymbol{\mu}\big(\bz_{1:b}^{(t+1)}\big),\B{Q}^{-1}\Big)$ \COMMENT{See \eqref{eq:def:Pi_rho_cond}}
     \ENDFOR
     \STATE {\bfseries Output:} samples $\{\btheta^{(t)}\}_{t=T_{\mathrm{bi}}-1}^{T}$.
  \end{algorithmic}
\end{algorithm}
%\vspace*{-.2cm}
%
To tackle these issues, we propose to build upon LMC to end up with a distributed MCMC algorithm which is both simple to implement, efficient and amenable to a theoretical study.
LMC stands for a popular way to approximately generate samples from a given distribution based on the Euler-Maruyama discretisation scheme of the overdamped Langevin stochastic differential equation \citep{Roberts1996}.
At iteration $t$ of the considered Gibbs sampling scheme and given a current parameter $\btheta^{(t)}$, LMC applied to \eqref{eq:conditional_zi_given_theta} considers, for $i \in [b]$, the recursion
\begin{equation*}
  \bz_i^{(t+1)} = \big(1-\textstyle\frac{\gamma_i}{\rho_i}\big)\bz_i^{(t)} + \frac{\gamma_i}{\rho_i}\B{A}_i\btheta^{(t)} - \gamma_i\grad U_i\big(\B{z}_i^{(t)}\big) + \sqrt{2\gamma_i}\boldsymbol{\xi}_i^{(t)}
\end{equation*}
where $\gamma_i >0$ is a fixed step-size and $(\boldsymbol{\xi}_i^{(k)})_{k \in \mathbb{N},i\in [b]}$ a sequence of independent and identically distributed (i.i.d.) $d$-dimensional standard Gaussian random variables.
Only using a single step of LMC on each worker might incur important communication costs.
To mitigate the latter while increasing the proportion of time spent on exploring the state-space, we instead allow each worker to perform $N_i \ge 1$ LMC steps \citep{2019_Dieuleveut,Rendell2020}.
Letting $N_i$ varies across workers prevents \Cref{algo:ULAwSG} to suffer from a significant block-by-the-slowest delay in cases where the response times of the workers are unbalanced \citep{Ahn14}.
The proposed algorithm, coined Distributed Gibbs using Langevin Monte Carlo (DG-LMC), is depicted in \Cref{algo:ULAwSG} and illustrated in \Cref{fig:DULAG_schema}.
\begin{figure}
%\vskip 0.2in
\begin{center}
\centerline{\includegraphics[scale=0.5]{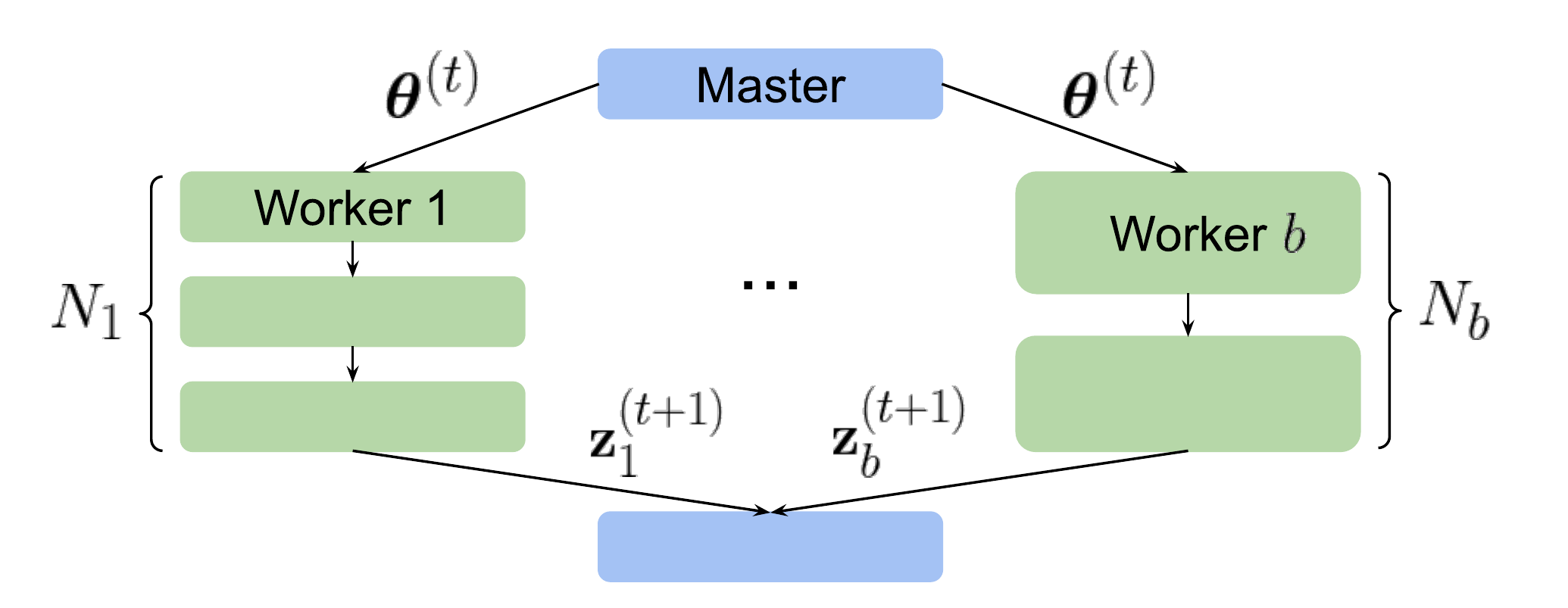}}
\caption{Illustration of one global iteration of \Cref{algo:ULAwSG}. For each worker, the width of the green box represents the amount of time required to perform one LMC step.}
\label{fig:DULAG_schema}
\end{center}
\vskip -0.4in
\end{figure}

\section{Detailed analysis of DG-LMC}
\label{sec:theoretical_analysis_DG_LMC}

In this section, we derive quantitative bias and convergence results for DG-LMC and show that its mixing time only scales quadratically w.r.t. the dimension $d$.
We also discuss the choice of hyperparameters and provide guidelines to tune them.

\subsection{Non-Asymptotic Analysis}
\label{subsec:non_asymptotic_analysis}

The scope of our analysis will focus on smooth and strongly log-concave target posterior distributions $\pi$.
While these assumptions may be restrictive in practice, they allow for a detailed theoretical study of the proposed algorithm.
\begin{assumption}
  \label{ass:supp_fort_convex}
  \begin{enumerate}[wide, labelwidth=!, labelindent=0pt,label=(\roman*),noitemsep,nolistsep]

    \item \label{ass:1} For any $i\in [b]$, $U_i$ is twice continuously differentiable and $\sup_{\bz_i\in\R^{d_i}} \|\nabla^2 U_i(\bz_{i})\| \le M_i$.

    \item \label{ass:2} For any $i\in [b]$, $U_i$ is $m_i$-strongly convex: there exists $m_i >0$ such that $m_i\B{I}_{d_i}\preceq \nabla ^2 U_i$.
    \end{enumerate}
  \end{assumption}
Under these assumptions, it is shown in \Cref{lem:U_stronglyconvex} in the Appendix that $-\log \pi$ is strongly convex with constant
\begin{equation}
\label{eq:mu}
m_U = \textstyle\lambda_{\min}(\sum_{i=1}^b m_i \B{A}_i^{\top} \B{A}_i)\eqsp.
\end{equation}
Behind the use of LMC, the main motivation is to end up with a simple hybrid Gibbs sampler amenable to a non-asymptotic theoretical analysis based on previous works \citep{durmus2018high,Dalalyan2019}.
In the following, this study is carried out using the Wasserstein distance of order 2.

\subsubsection{Convergence Results} 

DG-LMC introduced in \Cref{algo:ULAwSG} defines a homogeneous Markov chain $(V_t)_{t \in \mathbb{N}} = (\theta_t,Z_t)_{t \in \mathbb{N}}$ with realisations $(\btheta^{(t)},\bz_{1:b}^{(t)})_{t \in \mathbb{N}}$.
We denote by $P_{\bfrho,\bfgamma,\bfN}$ the Markov kernel associated with $(V_t)_{t \in \mathbb{N}}$.
Since no Metropolis-Hastings step is used in combination with LMC, the proposed algorithm does not fall into the class of Metropolis-within-Gibbs samplers \citep{roberts2006}.
Therefore, a first step is to show that $P_{\bfrho,\bfgamma,\bfN}$ admits an unique invariant distribution and is geometrically ergodic.
We proceed via an appropriate synchronous coupling which reduces the convergence analysis of $(V_t)_{t \in\nset}$ to that of the marginal process $(Z_t)_{t \in\nset}$.
While the proof of the convergence of $(Z_t)_{t \in\nset}$ shares some similarities with LMC \citep{durmus2018high}, the analysis of $(Z_t)_{t \in\nset}$ is much more involved and especially in the case $\max_{i \in [b]}N_i > 1$.
We believe that the proof techniques we developed to show the next result can be useful to the study of other MCMC approaches based on LMC.
\begin{proposition}
  \label{prop:convergence_rho_gamma}
  Assume \Cref{ass:well_defined_density}-\Cref{ass:supp_fort_convex} and let $c>0$ and $\bfgamma = \{\gamma_i\}_{i=1}^b$  $\bfN = \{N_i\}_{i=1}^b$ satisfying $\max_{i\in[b]}\gamma_i \leq \bgamma$,  $\min_{i\in[b]}\{N_i\gamma_i\}/\max_{i\in[b]}\{N_i\gamma_i\} \ge c$ and $\max_{i\in[b]}\{N_i\gamma_i\} \le C_1$ where $\bgamma,C_1$ are explicit constants only depending on $(m_i,M_i,\rho_i)_{i \in [b]}$\footnote{When $\bfN = \mathbf{1}_b$, $C_1 =\bar{\gamma} =  1/\txts\max_{i\in[b]}\{ M_i + \rho_i^{-1}\}$.}\footnote{When $\max_{i \in [b]} N_i > 1$, $C_1$ is of order $\min_{i \in [b]} \rho_i^2$ when $\max_{i \in [b]} \rho_i \rightarrow 0$, see \Cref{lem:contrac} in the Appendix.}.  Then, there exists a probability measure $\Pi_{\rhobf,\bfgamma,\bfN}$ such that  $\Pi_{\rhobf,\bfgamma,\bfN}$ is invariant for $P_{\bfrho,\bfgamma,\bfN}$.
  Moreover there exists $C_2 > 0$ such that for any integer $t\ge 0$ and $\B{v} = (\btheta,\bz) \in \mathbb{R}^d \times \mathsf{Z}$, we have
  \begin{align*}
  \wasserstein{}(\updelta_{\B{v}} P_{\bfrho,\bfgamma,\bfN}^{t}, \Pi_{\bfrho,\bfgamma,\bfN}) &\le C_2 \cdot (1-\min_{i\in[b]}\{N_i\gamma_im_i\}/2)^t \cdot \wasserstein{}(\updelta_{\B{v}}, \Pi_{\bfrho,\bfgamma,\bfN})\eqsp.
  \end{align*}
  Explicit expressions for $C_1$ and $C_2$ are given in \Cref{cor:convergence_rho_gamma} in the Appendix.
  Finally, if $\bfN = N \bfOne_b$ for $N \geq 1$, then  $\Pi_{\rhobf,\bfgamma,\bfN} =  \Pi_{\rhobf,\bfgamma,\bfOne_b}$.
\end{proposition}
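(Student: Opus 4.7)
}

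The plan is to reduce the convergence analysis of the joint chain $(V_t)_{t\in\nset} = (\theta_t, Z_t)_{t \in \nset}$ to that of the marginal chain $(Z_t)_{t\in \nset}$ via a synchronous coupling, then to establish a contraction inequality in $\wasserstein{}$ with explicit rate, and finally to invoke standard arguments to deduce existence/uniqueness of an invariant measure and geometric ergodicity.

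First, I would build two copies $(V_t)_{t \ge 0}$ and $(V_t')_{t \ge 0}$ of the chain driven by $P_{\bfrho,\bfgamma,\bfN}$, started respectively from $\B{v}=(\btheta,\bz)$ and $\B{v}' = (\btheta',\bz')$, in which all Gaussian increments used in the inner LMC loops and in the $\btheta$-refresh are shared. Under this coupling, the $\btheta$-update in \eqref{eq:def:Pi_rho_cond} becomes deterministic in differences, i.e.\ $\btheta^{(t+1)}-\btheta'^{(t+1)} = \B{Q}^{-1}\sum_{i=1}^b \B{A}_i^\top (\bz_i^{(t+1)}-\bz_i'^{(t+1)})/\rho_i$, so that $\|\btheta^{(t+1)}-\btheta'^{(t+1)}\|$ is controlled by $\|\bz^{(t+1)}-\bz'^{(t+1)}\|$. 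Hence it suffices to analyse the $\bz$-marginal.

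Next I would establish the key one-step contraction for the $\bz$-marginal under synchronous coupling. For a single worker $i$ and one inner LMC step, writing $\Delta_i^{(k)} = \B{u}_i^{(k)} - \B{u}_i'^{(k)}$ and $\Delta_\theta = \btheta^{(t)}-\btheta'^{(t)}$, the recursion of \Cref{algo:ULAwSG} yields
\begin{equation*}
\Delta_i^{(k+1)} = \brbig{(1-\gamma_i/\rho_i)\B{I}_{d_i} - \gamma_i H_i^{(k)}}\Delta_i^{(k)} + (\gamma_i/\rho_i)\,\B{A}_i\Delta_\theta,
\end{equation*}
where $H_i^{(k)} = \int_0^1 \nabla^2 U_i(\B{u}_i'^{(k)}+s\Delta_i^{(k)})\rmd s$. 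Under \Cref{ass:supp_fort_convex} and the step-size condition $\gamma_i \le \rho_i/(1+\rho_i M_i)$, the matrix in brackets is a contraction with rate at most $1-\gamma_i m_i - \gamma_i/\rho_i$ (eigenvalues lie in $[0,1-\gamma_i m_i - \gamma_i/\rho_i]$). Unrolling $N_i$ iterations yields a bound of the form
\begin{equation*}
\|\Delta_i^{(N_i)}\| \le (1-\gamma_i m_i - \gamma_i/\rho_i)^{N_i}\|\Delta_i^{(0)}\| + \alpha_i(N_i,\gamma_i,\rho_i)\,\|\B{A}_i\Delta_\theta\|,
\end{equation*}
with an explicit $\alpha_i$ that is $\mathcal{O}(\rho_i^{-1})$ when $N_i=1$ and of order $(1-\gamma_im_i/2)$-controlled geometric sum in general. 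Composing with the $\btheta$-refresh, which contributes the factor $\B{Q}^{-1}\sum_j \B{A}_j^\top(\cdot)/\rho_j$, I would combine the $b$ worker bounds through a weighted quadratic Lyapunov function $\Psi(\Delta) = \sum_i w_i\|\Delta_i\|^2$ with weights $w_i$ chosen so that the contribution of the cross term $\sum_i \alpha_i\|\B{A}_i\Delta_\theta\|^2$ is absorbed. A Cauchy–Schwarz / operator-norm bound on $\B{Q}^{-1}\sum_j\B{A}_j^\top/\rho_j$ gives contraction for a single outer iteration, with rate $1-\min_{i\in[b]}\{N_i\gamma_im_i\}/2$ provided $\max_{i\in[b]}\{N_i\gamma_i\}$ is small enough, which is where the constant $C_1$ and the ratio-condition $\min_iN_i\gamma_i / \max_iN_i\gamma_i \ge c$ enter.

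Iterating this contraction $t$ times gives the claimed geometric bound in $\wasserstein{}$, with $C_2$ absorbing the norm-equivalence between $\Psi$ and the Euclidean metric on $\rset^d\times\msz$ (and the passage back from the $\bz$-marginal to the joint $(\btheta,\bz)$ via the coupling identity above). Applying this to any two starting measures $\mu,\mu'$ with finite second moment yields the Cauchy property of $(\mu P_{\bfrho,\bfgamma,\bfN}^t)_t$ in the (complete) Wasserstein-2 space; its limit is the invariant distribution $\Pi_{\bfrho,\bfgamma,\bfN}$, which is unique by the same contraction. Finally, for the case $\bfN = N\bfOne_b$, I would note that for any fixed $\btheta$, the worker-wise LMC transition kernel $R_{\gamma_i,\rho_i,\btheta}$ and its $N$-fold iterate $R_{\gamma_i,\rho_i,\btheta}^{N}$ share the same invariant measure; consequently, if $\Pi_{\bfrho,\bfgamma,\bfOne_b}$ is invariant for $P_{\bfrho,\bfgamma,\bfOne_b}$, then a short computation shows it is also invariant for $P_{\bfrho,\bfgamma,N\bfOne_b}$ for every $N\ge 1$, and uniqueness forces equality.

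The hardest part is step three: setting up the right weighted Lyapunov function and tracking the interplay between the within-worker LMC contraction (rate $\approx (1-\gamma_im_i)^{N_i}$) and the coupling bias $(\gamma_i/\rho_i)\B{A}_i\Delta_\theta$ injected at every inner step, which is what forces the somewhat restrictive upper bound $C_1$ on $\max_i N_i\gamma_i$ and the balance condition $\min_i N_i\gamma_i/\max_i N_i\gamma_i \ge c$. Once this single-step contraction is in hand, the remaining arguments are standard.
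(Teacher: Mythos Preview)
Your high-level architecture matches the paper's: synchronous coupling, reduction to the $\bz$-marginal via $\theta_{t+1}-\theta'_{t+1} = \bar{\B{B}}_0^{-1}\B{B}_0^{\top}\B{\tilde{D}}_0^{1/2}(Z_{t+1}-Z'_{t+1})$ (\Cref{lem:geo_decr_cont1}), a one-step contraction for $(Z_t)$, then the standard completeness argument in $(\mathcal{P}_2,W_2)$. The divergence is entirely in how the one-step contraction is obtained, and there your sketch misses the key mechanism.

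The paper does \emph{not} unroll each worker into a scalar bound $(1-\gamma_i\tilde m_i)^{N_i}\|\Delta_i^{(0)}\|+\alpha_i\|\B{A}_i\Delta_\theta\|$ and then absorb the cross term by Cauchy--Schwarz on $\B{Q}^{-1}\sum_j\B{A}_j^{\top}/\rho_j$. Instead it works in the weighted norm $\|\cdot\|_{\B{D}_{\bfN\bfgamma}^{-1}}$ and keeps the full matrix structure. In stacked form, the $\theta$-feedback entering the inner recursion is $\B{J}(k)\B{D}_{\bfgamma/\sqrt{\bfrho}}\B{P}_0\B{\tilde D}_0^{1/2}(\tilde Z_n-Z_n)$, where $\B{P}_0=\B{B}_0\bar{\B{B}}_0^{-1}\B{B}_0^{\top}$ is the orthogonal projector onto the range of $\B{B}_0$. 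After a telescoping identity with $\B{M}_k^{(n)}=\prod_{l<k}(\B{I}_p-\B{C}_l^{(n)})^{-1}$ (\Cref{lem:contraction_V0}) and a first-order expansion of this matrix product (\Cref{lem:bound_T1_approx}), the leading operator is
\[
\B{S}_2=\B{I}_p-\textstyle\sum_l \B{J}(l)\B{H}_{U,l}^{(n)}-(\B{D}_{\bfN}\B{D}_{\bfgamma/\bfrho})^{1/2}(\B{I}_p-\B{P}_0)(\B{D}_{\bfN}\B{D}_{\bfgamma/\bfrho})^{1/2},
\]
and the projection inequality $\mathbf{0}\preceq\B{I}_p-\B{P}_0\preceq\B{I}_p$ immediately gives $\|\B{S}_2\|\le 1-\min_i N_i\gamma_i m_i$ (\Cref{lem:bound:S}). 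The second-order remainder is the explicit $r_{\bfgamma,\bfrho,\bfN}$, and $C_1$ is exactly the threshold making $r_{\bfgamma,\bfrho,\bfN}<\tfrac12\min_i N_i\gamma_i m_i$ (\Cref{lem:contrac}). Your operator-norm bound on the feedback would forfeit this cancellation: the diagonal gains an extra $\gamma_i/\rho_i$ of contraction, but a crude norm bound on the $\theta$-feedback need not return that amount for general $\B{A}_i$. In simple geometries (say $\B{A}_i=\B{I}$, equal $\rho_i$) the scalar route happens to close; in general it does not obviously produce the stated rate without reintroducing the projection identity.

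Your argument for the last assertion is incorrect. That $R_{\rho_i,\gamma_i}(\cdot\,|\,\btheta)$ and $R_{\rho_i,\gamma_i}^{N}(\cdot\,|\,\btheta)$ share the same invariant law on $\rset^{d_i}$ for each fixed $\btheta$ does \emph{not} imply that $P_{\bfrho,\bfgamma,\bfOne_b}$ and $P_{\bfrho,\bfgamma,N\bfOne_b}$ share an invariant law on $\rset^d\times\rset^p$: the $\theta$-refresh couples the $b$ workers, and $\Pi_{\bfrho,\bfgamma,\bfOne_b}$ is not a product of per-worker LMC stationary measures. The paper's argument instead peels off one LMC step, inserts an auxiliary $\theta$-refresh $\Pi_\rho(\rmd\tilde\theta^{(1)}\mid\tilde\bz^{(1)})$, uses the invariance $\Pi_{\bfrho,\bfgamma,\bfOne_b}P_{\bfrho,\bfgamma,\bfOne_b}=\Pi_{\bfrho,\bfgamma,\bfOne_b}$ to recover the same joint law on $(\tilde\theta^{(1)},\tilde\bz^{(1)})$, and iterates down from $N$ to $1$.
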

We now discuss \Cref{prop:convergence_rho_gamma}.
If we set, for any $i \in [b]$, $N_i = 1$, the convergence rate in \Cref{prop:convergence_rho_gamma} becomes equal to $1 - \min_{i\in[b]} \{\gamma_i m_i\}/2$.
In this specific case, we show in \Cref{prop:convergence_N_1} in the Appendix that DG-LMC actually admits the tighter convergence rate $1 - \min_{i\in[b]} \{\gamma_i m_i\}$ which simply corresponds to the rate at which the slowest LMC conditional kernel converges.
On the other hand, when $\max_{i \in [b]}N_i > 1$, the convergence of $P_{\bfrho,\bfgamma,\bfN}$ towards $\Pi_{\bfrho,\bfgamma,\bfN}$ only holds if $\max_{i\in[b]}\{N_i\gamma_i\}$ is sufficiently small.
This condition is necessary to ensure a contraction in $W_2$ and can be understood intuitively as follows in the case where $\bfN = N \mathbf{1}_b$ and $\bfgamma = \gamma \mathbf{1}_b$.
Given two vectors $(\theta_k,\theta_k')$ and an appropriate coupling $(Z_{k+1},Z_{k+1}')$, we can show that $Z_{k+1}-Z_{k+1}'$ involves two competing terms: one keeping $Z_{k+1}-Z_{k+1}'$ close to $Z_{k}-Z_{k}'$ and another one driving $Z_{k+1}-Z_{k+1}'$ away from $\theta_k-\theta_k'$ (and therefore of $Z_{k}-Z_{k}'$) as $N$ increases.
This implies that $N$ stands for a trade-off and the product $N\gamma$ cannot be arbitrarily chosen.
Finally, it is worth mentioning that the tolerance parameters $\{\rho_i\}_{i \in [b]}$ implicitly drive the convergence rate of DG-LMC.
In the case $N_i=1$, a sufficient condition on the step-sizes to ensure a contraction is $\gamma_i \le 2/(M_i+m_i+1/\rho_i)$. 
We can denote that the smaller $\rho_i$, the smaller $\gamma_i$ and the slower the convergence.

Starting from the results of \Cref{prop:convergence_rho_gamma}, we can analyse the convergence properties of DG-LMC.
We specify our result to the case where we take for the specific initial distribution
\begin{equation}
\label{eq:_def_mu_star_supp_2main} 
\mu_{\rhobf}^{\star} =   \updelta_{\B{z}^{\star}}  \otimes \Pi_{\rhobf}(\cdot|\bfz^{\star}) \eqsp,
\end{equation}
where $\bz^{\star} = ([\B{A}_1\btheta^{\star}]^{\top} , \cdots,[\B{A}_b\btheta^{\star}]^{\top})^{\top}$, $\btheta^{\star} = \argmin\{-\log\pi\}$ and $\Pi_{\rhobf}(\cdot|\bfz^{\star})$ is defined in \eqref{eq:def:Pi_rho_cond}.
Note that sampling from $\mu_{\rhobf}^{\star}$ is straightforward  and simply consists in setting $\zbf^{(0)} = \bfz^{\star}$ and drawing $\bftheta^{(0)}$ from $\Pi_{\bfrho}(\cdot \mid \bfz^{\star})$.
For $t \ge 1$, we consider the marginal law of $\theta_t$ initialised at $\B{v}^\star$ with distribution $\mu_{\rhobf}^{\star}$ and denote it $\Gamma_{\B{v}^\star}^t$.
As mentioned previously, the proposed approach relies on two approximations which both come with some bias we need to control.
This naturally brings us to consider the following inequality based on the triangular inequality and the definition of the Wasserstein distance:
\begin{align}
\wasserstein{}(\Gamma_{\B{v}^{\star}}^t,\pi) 
&\le \wasserstein{}(\mu^{\star}_{\rhobf} P_{\bfrho,\bfgamma,\bfN}^{t}, \Pi_{\bfrho,\bfgamma,\bfN}) + \wasserstein{} (\Pi_{\bfrho,\bfgamma,\bfN},\Pi_{\bfrho}) + \wasserstein{}(\pi_{\bfrho},\pi)\eqsp,
\label{eq:error_decomp}
\end{align}
where $\Pi_{\bfrho,\bfgamma,\bfN}$, $\Pi_{\bfrho}$ and $\pi_{\bfrho}$ are defined in \Cref{prop:convergence_rho_gamma}, \eqref{eq:joint_density_AXDA} and \eqref{eq:theta_marginal}, respectively.
In \Cref{cor:convergence_rho_gamma_star_v2} in the Appendix, we provide an upper bound on the first term on the right hand side based on \Cref{prop:convergence_rho_gamma}.
In the next section, we focus on controlling the last two terms on the right hand side. 

\subsubsection{Quantitative Bounds on the Bias} 
\begin{table*}
\caption{For the specific initialisation $\B{v}^{\star}$ with distribution $\mu_{\rhobf}^{\star}$ given in \eqref{eq:_def_mu_star_supp_2main}, dependencies w.r.t. $d$ and $\varepsilon$ of the parameters involved in \Cref{algo:ULAwSG} and of $t_{\operatorname{mix}}(\varepsilon;\B{v}^{\star})$ to get a $\wasserstein{}$-error of at most $\varepsilon$.}
\vskip 0.15in
\begin{center}
 {\renewcommand{\arraystretch}{1.5}
%\begin{sc}
\begin{tabular}{llccccc}
\toprule
Assumptions & & $\rho_{\varepsilon}$ & $\gamma_{\varepsilon}$ & $N_{\varepsilon}$ & $t_{\operatorname{mix}}(\varepsilon ; \B{v}^{\star})$ & \text{Gradient evaluations} \\
\midrule
\multirow{2}{*}{\Cref{ass:well_defined_density}, \Cref{ass:supp_fort_convex}} &$d$ & $\mathcal{O}(d^{-1})$ & $\mathcal{O}(d^{-3})$ & $\mathcal{O}(d)$ & $\mathcal{O}(d^2\log(d))$ & $\mathcal{O}(d^3\log(d))$ \\
&$\varepsilon$ & $\mathcal{O}(\varepsilon)$ & $\mathcal{O}(\varepsilon^{4})$ & $\mathcal{O}(\varepsilon^{-2})$ & $\mathcal{O}(\varepsilon^{-2}|\log(\varepsilon)|)$ & $\mathcal{O}(\varepsilon^{-4}|\log(\varepsilon)|)$ \\
\hline
\multirow{2}{*}{\Cref{ass:well_defined_density}, \Cref{ass:supp_fort_convex}, \Cref{ass:hessian_lipschitz}} &$d$ & $\mathcal{O}(d^{-1})$ & $\mathcal{O}(d^{-2})$ & $\mathcal{O}(1)$ & $\mathcal{O}(d^2\log(d))$ & $\mathcal{O}(d^2\log(d))$ \\
&$\varepsilon$ & $\mathcal{O}(\varepsilon)$ & $\mathcal{O}(\varepsilon^{2})$ & $\mathcal{O}(1)$ & $\mathcal{O}(\varepsilon^{-2}|\log(\varepsilon)|)$ & $\mathcal{O}(\varepsilon^{-2}|\log(\varepsilon)|)$ \\
\bottomrule
\end{tabular}}
%\end{sc}}
\end{center}
\vskip -0.2in
\label{table:mixing_time}
\end{table*}
The error term $W_2(\pi_{\bfrho},\pi)$ in \eqref{eq:error_decomp} is related to the underlying AXDA framework which induces an approximate posterior representation $\pi_{\bfrho}$.
It can be controlled by the sequence of positive tolerance parameters $\{\rho_i\}_{i=1}^b$.
By denoting $\bar{\rho} = \max_{i \in [b]} \rho_i$, \Cref{prop:bound_bias_rho} shows that this error can be quantitatively assessed and is of order $\mathcal{O}(\bar{\rho})$ for sufficiently small values of this parameter.
\begin{proposition}
  \label{prop:bound_bias_rho}
  Assume \Cref{ass:well_defined_density}, \Cref{ass:supp_fort_convex}.
  In addition, let $\B{A} = [\B{A}_1^\top,\ldots,\B{A}_b^\top]^{\top}$ and denote $\sigma_U^2=\|\B{A}^{\top}\B{A}\|\max_{i \in [b]}\{M_i^2\}/m_U$, where $m_U$ is defined in \eqref{eq:mu}. 
  Then, for any $\bar{\rho} \le \sigma_U^2/12$,
  $$
  W_2(\pi_{\bfrho},\pi) \le \sqrt{2/m_U}\max(A_{\bfrho},B_{\bfrho})\eqsp,
  $$
  where $A_{\bfrho} = d\mathcal{O}(\bar{\rho})$ and $B_{\bfrho} = d^{\half}\mathcal{O}(\bar{\rho})$ for $\bar{\rho} \downarrow 0$.
  Explicit expressions for $A_{\bfrho}, B_{\bfrho}$ are given in \Cref{sec:proof_prop1} in the Appendix.
\end{proposition}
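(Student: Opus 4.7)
The plan is to reduce the $W_2$ bound to a Kullback--Leibler bound via Talagrand's $T_2$ transportation inequality, and then to quantify this KL divergence through a Laplace/Taylor expansion of the Gaussian smoothing integral that links $\pi_{\bfrho}$ to $\pi$. First, \Cref{ass:supp_fort_convex}, combined with \Cref{ass:well_defined_density}, yields that $-\log\pi$ is $m_U$-strongly convex (this is the lemma cited just after the definition \eqref{eq:mu}). Hence $\pi$ satisfies the Otto--Villani $T_2$ inequality, and for any probability measure $\mu$ with finite second moment,
$$
W_2^2(\mu,\pi)\le (2/m_U)\,\mathrm{KL}(\mu\,\|\,\pi)\eqsp.
$$
Specialising $\mu=\pi_{\bfrho}$ reduces the task to establishing $\mathrm{KL}(\pi_{\bfrho}\,\|\,\pi)\le \max(A_{\bfrho},B_{\bfrho})^2$ with the announced dependencies. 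The prefactor $\sqrt{2/m_U}$ in the stated bound is exactly what $T_2$ contributes.

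Second, I would rewrite the log-density ratio explicitly. Performing in each $\bz_i$-integral of \eqref{eq:theta_marginal} the change of variable $\bz_i = \B{A}_i\btheta+\sqrt{\rho_i}\B{w}_i$ with $\B{w}_i\sim\loiGauss(\B{0}_{d_i},\B{I}_{d_i})$, one obtains
$$
\pi_{\bfrho}(\btheta) = C_{\bfrho}\cdot \pi(\btheta)\cdot\prod_{i=1}^b \phi_i(\btheta), \qquad \phi_i(\btheta) \defeq \mathbb{E}_{\B{w}_i}\!\left[\mathrm{e}^{-\Delta_i(\btheta,\B{w}_i)}\right],
$$
where $\Delta_i(\btheta,\B{w}_i) = U_i(\B{A}_i\btheta+\sqrt{\rho_i}\B{w}_i) - U_i(\B{A}_i\btheta)$ and $C_{\bfrho}$ is a $\btheta$-independent normalising constant. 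Consequently
$$
\mathrm{KL}(\pi_{\bfrho}\,\|\,\pi) = \sum_{i=1}^b \mathbb{E}_{\pi_{\bfrho}}\!\left[\log\phi_i(\btheta)\right] - \log \mathbb{E}_{\pi}\!\left[\textstyle\prod_{i=1}^b\phi_i(\btheta)\right]\eqsp.
$$
Next I Taylor-expand $U_i$ at $\B{A}_i\btheta$: by \Cref{ass:supp_fort_convex}\ref{ass:1}, $\Delta_i(\btheta,\B{w}_i) = \sqrt{\rho_i}\,\B{w}_i^{\top}\nabla U_i(\B{A}_i\btheta) + \tfrac{\rho_i}{2}\B{w}_i^{\top}\nabla^2 U_i(\xi_i)\B{w}_i$ for some $\xi_i$ on the segment, and computing the Gaussian log-MGF of this linear-plus-quadratic form yields two dominant contributions: a \emph{trace} term of order $\rho_i\,\mathrm{tr}(\nabla^2 U_i(\B{A}_i\btheta))=\mathcal{O}(M_i\rho_i d_i)$ and a \emph{gradient-norm} term of order $\rho_i\,\|\nabla U_i(\B{A}_i\btheta)\|^2$, plus a remainder that is $\mathcal{O}(\rho_i^2)$ uniformly in $\btheta$.

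Finally, I would take expectations under $\pi_{\bfrho}$ and $\pi$ and use two auxiliary facts. The first is strong log-concavity of $\pi_{\bfrho}$ itself with a constant comparable to $m_U$ (this can be shown by computing $\nabla^2(-\log\pi_{\bfrho})$ and using $\mathrm{Cov}_{\Pi_{\bfrho}(\cdot|\btheta)}(\bz_i)\preceq \rho_i(1+\rho_i m_i)^{-1}\B{I}_{d_i}$, which follows from strong log-concavity of the conditional \eqref{eq:conditional_zi_given_theta}); this delivers $\mathbb{E}_{\pi_{\bfrho}}\|\btheta-\btheta^{\star}\|^2\lesssim d/m_U$ and hence, by $M_i$-smoothness of $U_i$, a bound on $\mathbb{E}_{\pi_{\bfrho}}\|\nabla U_i(\B{A}_i\btheta)\|^2$ in terms of $\|\B{A}^{\top}\B{A}\|\max_i M_i^2$ — producing the constant $\sigma_U^2$. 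The trace piece then assembles into $A_{\bfrho}=d\,\mathcal{O}(\bar{\rho})$, while the gradient piece assembles into $B_{\bfrho}=d^{1/2}\mathcal{O}(\bar{\rho})$; combining via $\mathrm{KL}\le \max(A_{\bfrho},B_{\bfrho})^2$ and the $T_2$ step closes the argument. The main obstacle will be to control the higher-order (cubic and quartic) Taylor remainders uniformly in $\btheta$ and $\B{w}_i$, which is precisely what forces the smallness threshold $\bar{\rho}\le\sigma_U^2/12$ appearing in the statement; keeping these remainders strictly subdominant to the leading $\mathcal{O}(\bar{\rho})$ terms, while simultaneously tracking how the constants depend on $(m_i,M_i)_{i\in[b]}$ and on $\|\B{A}^{\top}\B{A}\|$, is the technically delicate part of the computation.
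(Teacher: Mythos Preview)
Your opening is the same as the paper's: $-\log\pi$ is $m_U$-strongly convex, hence $\pi$ satisfies Talagrand's $T_2$ inequality, and your identification $\log\phi_i(\btheta)=U_i(\B{A}_i\btheta)-U_i^{\rho_i}(\B{A}_i\btheta)$ is exactly the potential difference the paper works with. But two steps in your sketch do not go through, and the paper's route diverges from yours at precisely these points.

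First, the remainder in your Taylor expansion of $\log\phi_i$ is \emph{not} $\mathcal{O}(\rho_i^2)$ uniformly in $\btheta$: higher-order terms carry further powers of $\|\nabla U_i(\B{A}_i\btheta)\|$, which is unbounded on $\mathbb{R}^d$, and under \Cref{ass:supp_fort_convex} alone (no Lipschitz Hessian) you cannot replace $\nabla^2 U_i(\xi_i)$ by $\nabla^2 U_i(\B{A}_i\btheta)$ with a uniformly small error. The paper bypasses any Taylor remainder by using the exact sandwich $\underline{B}(\btheta)\le\sum_i\log\phi_i(\btheta)\le\overline{B}(\btheta)=\tfrac12\beta(\btheta)^2$ with $\beta(\btheta)=(\sum_i\rho_i\|\nabla U_i(\B{A}_i\btheta)\|^2)^{1/2}$, obtained directly from convexity and $M_i$-smoothness. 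Second, and more fundamentally, you assert $\mathrm{KL}\le\max(A_{\bfrho},B_{\bfrho})^2$ without a mechanism for the square: from $\mathrm{KL}(\pi_{\bfrho}\|\pi)=\mathbb{E}_{\pi_{\bfrho}}[\psi]-\log\mathbb{E}_\pi[e^\psi]$ with $\psi=\mathcal{O}(\bar\rho)$ you get at best $\mathrm{KL}=\mathcal{O}(\bar\rho)$ unless the leading pieces cancel, and arguing that cancellation via $\mathbb{E}_{\pi_{\bfrho}}\approx\mathbb{E}_\pi$ is circular. The paper instead bounds $\mathrm{KL}\le\chi^2(\pi_{\bfrho}|\pi)$ and controls $(\pi_{\bfrho}/\pi-1)$ on both sides: the lower bound, coming from $\underline{B}$ and a Jensen estimate on $Z_{\pi_{\bfrho}}/Z_\pi$, is a constant $-A_1$ independent of $\btheta$; the upper-bound contribution is an integral under $\pi$ of $e^{\beta^2}$, handled by sub-Gaussian concentration of the $L_\beta$-Lipschitz function $\beta$ under the strongly log-concave measure $\pi$. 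The smallness threshold on $\bar\rho$ is exactly what makes this exponential moment finite (it enforces $12L_\beta^2\le m_U$), not a remainder-subdominance device. All expectations are taken under $\pi$, so strong log-concavity of $\pi_{\bfrho}$ is never needed.
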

In the case where $\pi$ is Gaussian, the approximate distribution $\pi_{\boldsymbol{\rho}}$ admits an explicit expression and is Gaussian as well (e.g. when $b=1$, the mean is the same and the covariance matrix is inflated by a factor $\rho\mathbf{I}_d$), see for instance \citet[Section S2]{Rendell2020} and \citet[Section 5.1]{Vono_AXDA_2019}.
Hence, an explicit expression for $W_2(\pi_{\boldsymbol{\rho}},\pi)$ can be derived.
Based on this result, we can check that the upper bound provided by \Cref{prop:bound_bias_rho} matches the same asymptotics as $\rho \rightarrow 0$ and $d \rightarrow \infty$.

The second source of approximation error is induced by the use of LMC within \Cref{algo:ULAwSG} to target the conditional distribution $\Pi_{\bfrho}(\bz_{1:b}\mid\boldsymbol{\theta})$ in \eqref{eq:conditional_zi_given_theta}.
The stationary distribution of $P_{\bfrho,\bfgamma,\bfN}$ whose existence is ensured in \Cref{prop:convergence_rho_gamma} differs from $\Pi_{\bfrho}$.
The associated bias is assessed quantitatively in \Cref{prop:bias_gamma}.

\begin{proposition}
\label{prop:bias_gamma}
Assume \Cref{ass:well_defined_density}-\Cref{ass:supp_fort_convex}. For any $i \in [b]$, define $\tilde{M}_i = M_i + 1/\rho_i$ and let $\gammabf\in (\rset_+^*)^b$, $\bfN \in (\N^*)^b$ such that for any $i \in [b]$, 
\begin{align}
  \label{eq:gamma_bound}
  \gamma_i &\le \frac{m_i}{40\tilde{M}_i^2}\min_{i\in[b]}(m_i/\tilde{M}_i)^2/\max_{i\in[b]}(m_i/\tilde{M}_i)^2\eqsp, \\
  \label{eq:eq:good_choice_Nmain}
  N_{i} &= \big\lfloor m_{i}\min_{i\in[b]}\{m_{i}/\tilde{M}_{i}\}^2/(20\gamma_{i}\tilde{M}_{i}^2\max_{i\in[b]}\{m_{i}/\tilde{M}_{i}\}^2)\big\rfloor\eqsp.
\end{align}
Then, we have
\begin{equation*}
\wasserstein{}^{2}\pr{\Pi_{\bfrho,\bfgamma,\bfN},\Pi_{\bfrho}}
\le C_3\sum_{i=1}^b d_i\gamma_i \tilde{M}_i^2\eqsp,
\end{equation*}
where $C_3>0$ only depends of $(m_i,M_i,\B{A}_i,\rho_i)_{i=1}^{b}$ and is explicitly given in \Cref{cor:bias_pi_rho_pirho_gamma} in the Appendix.
\end{proposition}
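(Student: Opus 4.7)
The plan is to bracket $\Pi_{\bfrho,\bfgamma,\bfN}$ against $\Pi_{\bfrho}$ through an \emph{idealised} Gibbs kernel $\tilde{P}_{\bfrho}$ obtained from \Cref{algo:ULAwSG} by replacing, on each worker, the $N_i$ inner LMC iterations by an exact draw from $\Pi_{\bfrho}(\bz_i \mid \btheta)$. The successive conditional sampling steps of $\tilde{P}_{\bfrho}$ each preserve $\Pi_{\bfrho}$, so $\Pi_{\bfrho}\tilde{P}_{\bfrho} = \Pi_{\bfrho}$; on the other hand $\Pi_{\bfrho,\bfgamma,\bfN}P_{\bfrho,\bfgamma,\bfN} = \Pi_{\bfrho,\bfgamma,\bfN}$ by \Cref{prop:convergence_rho_gamma}. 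Combining both invariances with the $\wasserstein{}$-contraction of $P_{\bfrho,\bfgamma,\bfN}$ at rate $\alpha = 1 - \min_{i\in[b]}\{N_i\gamma_i m_i\}/2$ from the same proposition and the triangle inequality, I obtain the master estimate
\begin{equation*}
\wasserstein{}(\Pi_{\bfrho,\bfgamma,\bfN}, \Pi_{\bfrho}) \le (1-\alpha)^{-1}\, \wasserstein{}(\Pi_{\bfrho} P_{\bfrho,\bfgamma,\bfN}, \Pi_{\bfrho} \tilde{P}_{\bfrho}),
\end{equation*}
which reduces the problem to an averaged one-step bias initialised at stationarity of the idealised chain.

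To bound the right-hand side I would draw $(\btheta, \bz_{1:b}) \sim \Pi_{\bfrho}$ and couple the two kernels synchronously: for each worker $i$ I use, conditionally on $\btheta$, the $\wasserstein{}$-optimal coupling between the $N_i$-step LMC output initialised at $\bz_i$ and an exact draw from $\Pi_{\bfrho}(\cdot \mid \btheta)$; for the closing Gaussian $\btheta$-update I share the noise on both sides. Since that update is affine in $\bz_{1:b}$ via \eqref{eq:def:Pi_rho_cond}, the resulting joint squared $\wasserstein{}$ is controlled by a constant depending only on $(\B{A}_i, \rho_i)_{i\in[b]}$ times $\sum_{i=1}^b \mathbb{E}[\wasserstein{}^2(\Pi_{\bfrho}(\cdot\mid\btheta) R_{\gamma_i,\btheta}^{N_i}, \Pi_{\bfrho}(\cdot\mid\btheta))]$, where $R_{\gamma_i,\btheta}$ denotes the LMC kernel targeting the conditional \eqref{eq:conditional_zi_given_theta}. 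The crucial observation is that at stationarity of the idealised chain the per-worker LMC iteration is initialised from its \emph{target} distribution, reducing us to a standard one-step LMC bias problem. I would then insert the LMC stationary distribution $\pi_{\gamma_i,\btheta}$ via the triangle inequality, use LMC's $\wasserstein{}$-contraction at rate $(1-\gamma_i m_i)^{N_i}$---valid because \Cref{ass:supp_fort_convex} makes the conditional potential $(m_i + 1/\rho_i)$-strongly convex and $\tilde{M}_i$-smooth---and appeal to the classical Wasserstein LMC bias bound on strongly log-concave targets \citep{durmus2018high,Dalalyan2019}, which yields $\wasserstein{}^2(\pi_{\gamma_i,\btheta}, \Pi_{\bfrho}(\cdot\mid\btheta)) = \mathcal{O}(\gamma_i d_i \tilde{M}_i^2/m_i)$. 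The per-worker term is thus of order $\gamma_i d_i \tilde{M}_i^2/m_i$.

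Injecting this into the master estimate, the prefactor $(1-\alpha)^{-1}$ is controlled using the explicit choices \eqref{eq:gamma_bound}--\eqref{eq:eq:good_choice_Nmain}, which are designed so that $N_i \gamma_i m_i$ is bounded below by a constant of order $\min_j(m_j/\tilde{M}_j)^2/\max_j(m_j/\tilde{M}_j)^2$; summing the per-worker bounds produces the announced $C_3 \sum_i d_i \gamma_i \tilde{M}_i^2$. The main obstacle will be making the per-worker LMC bias \emph{uniform in $\btheta$}: the Wasserstein-bias bound depends a priori on second moments of $\Pi_{\bfrho}(\cdot\mid\btheta)$, which are shifted by $\B{A}_i\btheta$, so I expect this to be handled either through a centring argument (working in $\bz_i - \B{A}_i\btheta$) or by a careful control of conditional moments under the joint $\Pi_{\bfrho}$. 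The joint calibration \eqref{eq:gamma_bound}--\eqref{eq:eq:good_choice_Nmain} of $\gamma_i$ and $N_i$ is the delicate ingredient tying everything together, as it must simultaneously keep the LMC-contraction term $(1-\gamma_i m_i)^{N_i}$ subordinate to the stationary LMC bias and leave $(1-\alpha)^{-1}$ under dimension-free control.
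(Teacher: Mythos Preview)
Your high-level strategy is close in spirit to the paper's but differs in one structural choice and has one genuine gap.

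\textbf{Different idealised kernel.} The paper does \emph{not} compare against exact conditional sampling. Its idealised kernel $\tilde P_{\bfrho,\bfgamma}$ replaces the $N_i$ LMC steps by the \emph{continuous Langevin diffusion} run for time $N_i\gamma_i$ (see \eqref{eq:def:cont_process}--\eqref{eq:def:P_tilde}). Both idealisations leave $\Pi_{\bfrho}$ invariant, but the diffusion allows a fully synchronous coupling sharing the same Brownian increments, so that the one-step discrepancy is exactly the Euler--Maruyama discretisation error $\B{\mathrm T}_2^{(n)}$ of \eqref{eq:def:T2}, which is bounded directly via It\^o calculus (\Cref{lem:bound:cont_expec_T2}). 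Your route through the LMC stationary law $\pi_{\gamma_i,\btheta}$ and the black-box bounds of \citet{durmus2018high} is more modular but less sharp, and the uniformity-in-$\btheta$ issue you flag is indeed handled in the paper by centring at $\bz^i_{n,\star}=\arg\min V_i^{\tilde\theta_n}$ and using the moment bound \eqref{eq:bound:cont_sum_expec1}, which depends only on $\tilde m_i$.

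\textbf{The master-estimate gap.} Your inequality $W_2(\Pi_{\bfrho,\bfgamma,\bfN},\Pi_\bfrho)\le(1-\alpha)^{-1}W_2(\Pi_\bfrho P_{\bfrho,\bfgamma,\bfN},\Pi_\bfrho)$ requires $P_{\bfrho,\bfgamma,\bfN}$ to be a one-step $W_2$-contraction with rate $\alpha$. \Cref{prop:convergence_rho_gamma} does \emph{not} give this: the underlying \Cref{lem:W2_dirac_contraction} has rate $(1-\cdot)^{n-1}$ (so the exponent is $0$ at $n=1$), a multiplicative constant $C_2>1$, and the bound is in a mixed weighted functional of $\|\bz-\tilde\bz\|$ and $\|\btheta-\tilde\btheta\|$, not in the plain norm. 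The paper sidesteps this by never invoking a global contraction: it runs the synchronous coupling directly, obtains the recursion of \Cref{lem:bound:cont_Z_{i}nduction} in the $\B D_{\bfN\bfgamma}^{-1}$-norm on the $\bz$-marginal (which \emph{is} a clean contraction from step $1$), unrolls it in \Cref{lem:bound:cont_expec_norm_spe}, and only then sends $n\to\infty$ in \Cref{lem:bound:cont_wass}. If you want to rescue your argument you will have to work in that same weighted $\bz$-norm (or iterate $P$ enough times to absorb $C_2$), which in effect reproduces the paper's computation.
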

With the notation $\bar{\gamma} = \max_{i \in [b]} \gamma_i$, \Cref{prop:bias_gamma} implies that $\wasserstein{}(\Pi_{\bfrho}, \Pi_{\bfrho,\bfgamma,\bfN}) \le \mathcal{O}(\bar{\gamma}^{\half})(\sum_{i=1}^b d_i)^{\half}$ for $\bar{\gamma} \downarrow 0$.
Note that this result is in line with \citet[Corollary 7]{durmus2018high} and can be improved under further regularity assumptions on $U$, as shown below.

\begin{assumption}
  \label{ass:hessian_lipschitz}
  $U$ is three times continuously differentiable and there exists $L_i > 0$ such that for all $\bz_i,\bz_i'\in\R^{d_i}$, $\|\nabla^2 U_i(\bz_i) - \nabla^2 U_i(\bz_i')\|\le L_{i}\|\bz_i-\bz_i'\|$.
\end{assumption}
\begin{proposition}
\label{prop:bias_gamma_bis}
Assume \Cref{ass:well_defined_density}-\Cref{ass:supp_fort_convex}-\Cref{ass:hessian_lipschitz}.
For any $i \in [b]$, define $\tilde{M}_i = M_i + 1/\rho_i$ and let $\gammabf\in (\rset_+^*)^b$, $\bfN\in (\N^*)^b$ such that for any $i \in [b]$, \eqref{eq:gamma_bound} and \eqref{eq:eq:good_choice_Nmain} hold.
Then, we have 
\begin{equation*}
\wasserstein{}^2\pr{\Pi_{\bfrho,\bfgamma,\bfN},\Pi_{\bfrho}}\le C_4 \sum_{i\in[b]}d_i\gamma_i(1/\tilde{M}_i^2 + \gamma_i\tilde{M}_i^2)\eqsp,
\end{equation*}
where $C_4>0$ only depends on $(m_i,M_i,L_i,\B{A}_i,\rho_i)_{i=1}^{b}$ and is explicitly given in \Cref{cor:bias_pi_rho_pirho_gamma_alternative} in the Appendix.
\end{proposition}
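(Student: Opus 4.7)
The plan is to mirror the proof of \Cref{prop:bias_gamma} while upgrading every one-step weak-error estimate with the extra regularity granted by \Cref{ass:hessian_lipschitz}. The argument will have three stages: a synchronous coupling producing a one-step error inequality, a refined analysis of the inner LMC micro-iterations, and a transfer to stationarity via the contraction from \Cref{prop:convergence_rho_gamma}.

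First I would introduce two Markov chains $(V_t)_{t\ge 0}$ and $(\tilde V_t)_{t\ge 0}$, with $V_t=(\btheta^{(t)},\bz_{1:b}^{(t)})$ driven by $P_{\bfrho,\bfgamma,\bfN}$ and $\tilde V_t=(\tilde\btheta^{(t)},\tilde\bz_{1:b}^{(t)})$ driven by the ``oracle'' Gibbs kernel $\tilde P_{\bfrho}$ that samples exactly from $\Pi_{\bfrho}(\bz_i\mid\btheta)$ in \eqref{eq:conditional_zi_given_theta} and from $\Pi_{\bfrho}(\btheta\mid\bz_{1:b})$ in \eqref{eq:def:Pi_rho_cond}; note that $\Pi_{\bfrho}$ is $\tilde P_{\bfrho}$-invariant. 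The two chains share the Gaussian noise of the $\btheta$-update and, on each worker $i$, couple the $N_i$ LMC micro-steps to the continuous Langevin SDE targeting $\Pi_{\bfrho}(\cdot\mid\btheta)$ through a common Brownian motion. With $V_0\sim\Pi_{\bfrho,\bfgamma,\bfN}$ and $\tilde V_0\sim\Pi_{\bfrho}$ coupled optimally, I would derive a one-step inequality of the form
\begin{equation*}
\E\|V_{t+1}-\tilde V_{t+1}\|^2 \le (1-\kappa)\,\E\|V_t-\tilde V_t\|^2 + E^2\eqsp,
\end{equation*}
where $\kappa\ge \min_{i\in[b]}\{N_i\gamma_i m_i\}$ is the contraction rate from \Cref{prop:convergence_rho_gamma} (its hypotheses hold under \eqref{eq:gamma_bound}--\eqref{eq:eq:good_choice_Nmain}) and $E^2$ is the squared one-step discrepancy between $P_{\bfrho,\bfgamma,\bfN}$ and $\tilde P_{\bfrho}$. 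Iterating and passing to stationarity yields $\wasserstein{}^2(\Pi_{\bfrho,\bfgamma,\bfN},\Pi_{\bfrho})\le E^2/\kappa$.

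The substantive step is to prove $E^2\le C \sum_{i\in[b]} d_i\gamma_i(1/\tilde M_i^2+\gamma_i\tilde M_i^2)$. Fix $(\btheta,\bz)$; on worker $i$ the target $\Pi_{\bfrho}(\cdot\mid\btheta)$ is strongly log-concave with modulus $\ge m_i$, gradient Lipschitz with constant $\tilde M_i$, and -- crucially -- has a Lipschitz Hessian under \Cref{ass:hessian_lipschitz}. Under \Cref{ass:supp_fort_convex} alone, a first-order Taylor expansion of $\nabla U_i$ produces the $O(\gamma_i \tilde M_i^2 d_i)$ per-step error driving \Cref{prop:bias_gamma}; the additional regularity allows a second-order expansion along the Langevin trajectory, which cancels the leading It\^o term and yields an improved per-step squared weak-error of order $d_i\gamma_i^2 \tilde M_i^2$, contributing the second summand of the announced bound. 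The first summand $d_i\gamma_i/\tilde M_i^2$ accounts for the fact that the inner chain is only run for $N_i$ steps and has not reached the conditional stationary measure: propagating the $(1-\gamma_i m_i)$-contraction of the LMC kernel on $\Pi_{\bfrho}(\cdot\mid\btheta)$ over the inner iterations and using \eqref{eq:eq:good_choice_Nmain} to convert $N_i$ into $\tilde M_i$ and $\gamma_i$ yields this residual contribution.

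The main obstacle I expect is preserving the improvement through the $N_i$ coupled micro-iterations, where naive error propagation would reintroduce the $O(\gamma_i)$ leading term. The remedy is to expand $\nabla U_i(\B{u}_i^{(k)})-\nabla U_i(\B{U}_i(k\gamma_i))$ to second order using the Lipschitz Hessian, match the first-order term against the It\^o increment of the continuous Langevin flow, and absorb the remaining third-order terms into the constant $C_4$ by means of the step-size bound \eqref{eq:gamma_bound} and the $N_i$-choice \eqref{eq:eq:good_choice_Nmain}. Tracking the resulting dependencies on $(m_i,M_i,L_i,\bfA_i,\rho_i)$ through the three-level coupling ($\btheta$-resample, parallel inter-worker updates, inner LMC) closes the argument, with the explicit constant deferred to \Cref{cor:bias_pi_rho_pirho_gamma_alternative}.
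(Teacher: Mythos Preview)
Your overall scaffold---synchronous coupling, one-step error plus contraction, iterate to stationarity---matches the paper, but there is a genuine gap in how you define the oracle process. You describe $\tilde P_{\bfrho}$ as sampling \emph{exactly} from $\Pi_{\bfrho}(\bz_i\mid\btheta)$, yet propose to couple it to the LMC micro-iterations through a common Brownian motion; these two specifications are incompatible. The paper's oracle instead runs the \emph{continuous Langevin diffusion} targeting $\Pi_{\bfrho}(\cdot\mid\btheta)$ for the finite horizon $N_i\gamma_i$, starting from the same point as the discrete chain. This oracle does not produce an exact conditional sample, but a short argument shows the resulting kernel still leaves $\Pi_{\bfrho}$ invariant (because the conditional is stationary for the diffusion). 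This choice is essential: with an exact-sampling oracle, the one-step discrepancy $E^2$ would contain the distance between $N_i$ LMC steps and the exact conditional, which does not vanish as $\gamma_i\to 0$ under \eqref{eq:eq:good_choice_Nmain} since $N_i\gamma_i$ stays of fixed order. Your attribution of the summand $d_i\gamma_i/\tilde M_i^2$ to ``the inner chain not having reached the conditional stationary measure'' is therefore incorrect; no such term arises in the paper's decomposition.

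The actual improvement mechanism is also more specific than ``a second-order expansion cancels the leading It\^o term''. The discretisation error on worker $i$ is $\int_0^{\gamma_i}[\nabla\tilde U_i(\tilde Y_{k\gamma_i+l})-\nabla\tilde U_i(\tilde Y_{k\gamma_i})]\,\dd l$; applying It\^o's formula to $\nabla\tilde U_i$ splits this into a drift piece, a vector-Laplacian piece (bounded via \Cref{ass:hessian_lipschitz}), and a stochastic-integral piece $a_{3,k}^{(i,n)}$. The gain over \Cref{prop:bias_gamma} comes from the last piece: the stochastic integrals at distinct micro-steps $k$ are martingale increments, so their cross-terms vanish in expectation and $\E\bigl\|\sum_k a_{3,k}^{(i,n)}\bigr\|^2$ scales like $\sum_k\E\|a_{3,k}^{(i,n)}\|^2$ rather than the Cauchy--Schwarz bound $N_i\sum_k\E\|a_{3,k}^{(i,n)}\|^2$. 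This orthogonality---not a Taylor cancellation---is what upgrades the rate to $d_i\gamma_i^2\tilde M_i^2$. The term $d_i\gamma_i/\tilde M_i^2$ then emerges from the higher-order remainders of this expansion (involving products like $N_i^3\gamma_i^4\tilde M_i^4$ and $N_i^4\gamma_i^5\tilde M_i^5$) after substituting $N_i\gamma_i\tilde M_i\le m_i/(20\tilde M_i)$ from \eqref{eq:eq:good_choice_Nmain}.
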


\subsubsection{Mixing Time with Explicit Dependencies} 
Based on explicit non-asymptotic bounds shown in Propositions \ref{prop:convergence_rho_gamma}, \ref{prop:bound_bias_rho} and \ref{prop:bias_gamma} and the decomposition \eqref{eq:error_decomp}, we are now able to analyse the scaling of \Cref{algo:ULAwSG} in high dimension.
Given a prescribed precision $\varepsilon>0$ and an initial condition $\B{v}^\star$ with distribution $\mu_{\rhobf}^{\star}$ given in \eqref{eq:_def_mu_star_supp_2main}, we define the $\varepsilon$-mixing time associated to $\Gamma_{\B{v}^\star}$ by
\[
t_{\operatorname{mix}}(\varepsilon ; \B{v}^\star)=\min \big\{t \in \mathbb{N} : W_2\big(\Gamma_{\B{v}^\star}^{t}, \pi\big) \le \varepsilon\big\}\eqsp.
\]
This quantity stands for the minimum number of DG-LMC iterations such that the $\btheta$-marginal distribution is at most at an $\varepsilon$ $W_2$-distance from the initial target $\pi$.
Under the condition that  $b \max_{i \in [b]}d_i = \mathcal{O}(d)$ and by assuming for simplicity that for any $i \in [b]$, $m_i=m, M_i=M, L_i=L, \rho_i=\rho,\gamma_i=\gamma$ and $N_i=N$, \Cref{table:mixing_time} gathers the dependencies w.r.t. $d$ and $\varepsilon$ of the parameters involved in \Cref{algo:ULAwSG} and of $t_{\operatorname{mix}}(\varepsilon;\B{v}^{\star})$ to get a $W_2$-error of at most $\varepsilon$.
Note that the mixing time of \Cref{algo:ULAwSG} scales at most quadratically (up to polylogarithmic factors) in the dimension.
When \Cref{ass:hessian_lipschitz} holds, we can see that the number of local iterations becomes independent of $d$ and $\varepsilon$ which leads to a total number of gradient evaluations with better dependencies w.r.t. to these quantities.
Up to the authors' knowledge, these explicit results are the first among the centralised distributed MCMC literature and in particular give the dependency w.r.t. $d$ and $\varepsilon$ of the number of local LMC iterations on each worker. 
Overall, the proposed approach appears as a scalable and reliable alternative for high-dimensional and distributed Bayesian inference.

\subsection{DG-LMC in Practice: Guidelines for Practitioners}
\label{subsec:discussion}

We now discuss practical guidelines for setting the values of hyperparameters involved in \Cref{algo:ULAwSG}.
%From \Cref{prop:convergence_rho_gamma}, $\max_{i \in [b]}\{N_i\gamma_i\}$ should be taken sufficiently small to ensure that the chain defined by \Cref{algo:ULAwSG} admits an unique stationary distribution. 
Based on \Cref{prop:convergence_rho_gamma}, we theoretically show an optimal choice of order $N_i\gamma_i \asymp m_i\rho_i^2/(\rho_iM_i+1)^2$.
Ideally, within the considered distributed setting, the optimal value for $(N_i,\gamma_i)_{i \in [b]}$ would boil down to optimise the value of $\max_{i \in [b]}\{N_i\gamma_i\}$ under the constraints derived in \Cref{prop:convergence_rho_gamma} combined with communication considerations.
In particular, this would imply a comprehensive modelling of the communication costs including I/O bandwiths constraints.
These optimisation tasks fall outside the scope of the present paper and therefore we let the search of optimal values for future works.
Since our aim here is to provide practitioners with simple prescriptions, we rather focus on general rules involving tractable quantities.
%\textcolor{red}{MV: dire que le principal frein est la modélisation des couts de communication}

\subsubsection{Selection of $\boldsymbol{\gamma}$ and $\bfrho$}
\label{subsubsec:selection_rho_gamma}

% \begin{figure}
% %\vskip 0.2in
% \begin{center}
% \centerline{\includegraphics[scale=0.39]{}}
% \caption{Illustration of the bias-variance trade-off associated to the tolerance parameter $\rho$ on a toy Gaussian example with $\nabla U_i$ being $M$-Lipschitz. The quantity shown in the y-axis refers to the mean square error of the estimator of $\theta^2$ using $t$ samples given by AXULA.}
% \label{fig:MSE_toy}
% \end{center}
% \vskip -0.2in
% \end{figure}
%
From \citet{Durmus2017} and references therein, a simple sufficient condition on step-sizes $\bfgamma = \{\gamma_i\}_{i=1}^b$ to guarantee the stability of LMC is $\gamma_i \le \rho_i/(\rho_iM_i + 1)$ for $i \in [b]$.
Both the values of $\gamma_i$ and $\rho_i$ are subject to a bias-variance trade-off.
%, see \Cref{fig:MSE_toy}.
More precisely, large values yield a Markov chain with small estimation variance but high asymptotic bias.
Conversely, small values produce a Markov chain with small asymptotic bias but which requires a large number of iterations to obtain a stable estimator.
We propose to mitigate this trade-off by setting $\gamma_i$ to a reasonably large value, that is for $i \in [b]$, $\gamma_i \in [0.1\rho_i/(\rho_iM_i + 1),0.5\rho_i/(\rho_iM_i + 1)]$.
Since $\gamma_i$ saturates to $1/M_i$ when $\rho_i \rightarrow \infty$, there is no computational advantage to choose very large values for $\rho_i$.
Based on several numerical studies, we found that setting $\rho_i$ of the order of $1/M_i$ was a good compromise between computational efficiency and asymptotic bias.

\subsubsection{$\bfN$: A Trade-Off between Asymptotic Bias and Communication Overhead}

In a similar vein, the choice of $\bfN = \{N_i\}_{i=1}^b$ also stands for a trade-off but here between asymptotic accuracy and communication costs.
Indeed, a large number of local LMC iterations reduces the communication overhead but at the expense of a larger asymptotic bias since the master parameter is not updated enough.
\citet{Ahn14} proposed to tune the number of local iterations $N_i$ on a given worker based on the amount of time needed to perform one local iteration, denoted here by $\uptau_i$.
Given an average number of local iterations $N_{\mathrm{avg}}$, the authors set $N_i = q_iN_{\mathrm{avg}}b$ with $q_i = \uptau_i^{-1}/\sum_{k=1}^b\uptau_k^{-1}$ so that $b^{-1}\sum_{i=1}^bN_i = N_{\mathrm{avg}}$.
As mentioned by the aforementioned authors, this choice allows to keep the block-by-the-slowest delay small by letting fast workers perform more iterations in the same wall-clock time.
Although they showed how to tune $N_i$ w.r.t. communication considerations, they let the choice of $N_{\mathrm{avg}}$ to the practitioner.
Here, we propose a simple guideline to set $N_{\mathrm{avg}}$ such that $N_i$ stands for a good compromise between the amount of time spent on exploring the state-space and communication overhead.
As highlighted in the discussion after \Cref{prop:convergence_rho_gamma}, as $\gamma_i$ becomes smaller, more local LMC iterations are required to sufficiently explore the latent space before the global consensus round on the master.
Assuming for any $i \in [b]$ that $\gamma_i$ has been chosen following our guidelines in \Cref{subsubsec:selection_rho_gamma}, this suggests to set $N_{\mathrm{avg}} = \lceil(1/b)\sum_{i\in[b]}\rho_i/(\gamma_i[\rho_iM_i + 1])\rceil$.

\section{Related work}
\label{sec:comparison}

As already mentioned in \Cref{sec:introduction}, hosts of contributions have focused on deriving distributed MCMC algorithms to sample from \eqref{eq:target_density}.
This section briefly reviews the main existing research lines and draws a detailed comparison with the proposed methodology.

\subsection{Existing distributed MCMC methods}

\begin{table}[t]
\caption{Synthetic overview of the main existing distributed MCMC methods under a master-slave architecture. The column \emph{Exact} means that the Markov chain defined by the MCMC sampler admits \eqref{eq:target_density} as invariant distribution.
The column \emph{Comm.} reports the communication frequency. A value of $1$ means that the sampler communicates after every iteration. $T$ stands for the total number of iterations and $N < T$ is a tunable parameter to mitigate communication costs.
The acronym D-SGLD stands for distributed stochastic gradient Langevin dynamics.}
\label{table:overview}
\vskip 0.15in
\begin{center}
\begin{footnotesize}
\begin{sc}
\begin{tabular}{lccccc}
\toprule
Method & Type & Exact & Comm. & Bias bounds & Scaling\\
\midrule
\citet{Wang2013} & one-shot & $\times$ & $1/T$ & $\surd$ & $\mathcal{O}(\mathrm{e}^d)$  \\
\citet{Neiswanger2014} & one-shot & $\times$ & $1/T$ & $\times$ & $\mathcal{O}(\mathrm{e}^d)$ \\
\citet{Minsker2014} & one-shot & $\times$ & $1/T$ & $\surd$ & unknown \\
\citet{Srivastava2015} & one-shot & $\times$ & $1/T$ & $\times$ & unknown \\
\citet{Wang2015} & one-shot & $\times$ & $1/T$ & $\surd$ & $\mathcal{O}(\mathrm{e}^d)$ \\
\citet{Scott2016} & one-shot & $\times$ & $1/T$ & $\times$ & unknown \\
\citet{nemeth2018} & one-shot & $\times$ & $1/T$ & $\times$ & unknown \\
\citet{Jordan2019} & one-shot & $\times$ & $1/T$ & $\surd$ & unknown \\
%\citet{mesquita20} & one-shot & $\times$ & $1/T$ & $\times$ & unknown \\
\citet{Ahn14} & D-SGLD & $\times$ & $1/N$ & $\times$ & unknown \\
\citet{Chen16_stale_gradients} & D-SGLD & $\times$ & $1$ & $\surd$ & unknown  \\
\citet{2020_conducive_gradients} & D-SGLD & $\times$ & $1/N$ & $\surd$ & unknown \\
\citet{2015_variational_consensus} & g. consensus & $\times$ & $1/N$ & $\times$ & unknown \\
\citet{pmlr-v84-chowdhury18a} & g. consensus & $\surd$ & $1$ & N/A & unknown \\
\citet{Rendell2020} & g. consensus & $\times$ & $1/N$ & $\surd$ & unknown \\
 This paper & g. consensus & $\times$ & $1/N$ & $\surd$ & $\mathcal{O}(d^2\log(d))$ \\
\bottomrule
\end{tabular}
\end{sc}
\end{footnotesize}
\end{center}
\vskip -0.1in
\end{table}
Existing methodologies are mostly approximate and can be loosely speaking divided into three groups: \emph{one-shot}, distributed stochastic gradient MCMC and \emph{global consensus} approaches.
To ease the understanding, a synthetic overview of their main characteristics is presented in \Cref{table:overview}.

One-shot approaches stand for communication-efficient schemes where workers and master only exchange information at the very beginning and the end of the sampling task; similarly to MapReduce schemes \citep{DeanGehmawat_MapReduce2014}.
Most of these methods assume that the posterior density factorises into a product of local posteriors and launch independent Markov chains across workers to target them.
The local posterior samples are then combined through the master node using a single final aggregation step. This step turns to be the milestone of one-shot approaches and was the topic of multiple contributions \citep{Wang2013,Neiswanger2014,Minsker2014,Srivastava2015,Scott2016,nemeth2018}.
Unfortunately, the latter are either infeasible in high-dimensional settings or have been shown to yield inaccurate posterior representations empirically, if the posterior is not near-Gaussian, or if the local posteriors differ significantly \citep{Wang2015,dai_pollock_roberts_2019,Rendell2020}.
Alternative schemes have been recently proposed to tackle these issues but their theoretical scaling w.r.t. the dimension $d$ is currently unknown \citep{Jordan2019,mesquita20}.

Albeit popular in the machine learning community, distributed stochastic gradient MCMC methods \citep{Ahn14} suffer from high variance when the dataset is large because of the use of stochastic gradients \citep{Brosse2018}.
Some surrogates have been recently proposed to reduce this variance such as the use of \emph{stale} or \emph{conducive}  gradients \citep{Chen16_stale_gradients,2020_conducive_gradients}.
However, these variance reduction methods require an increasing number of workers for the former and come at the price of a prohibitive pre-processing step for the latter.
In addition, it is currently unclear whether these methods are able to generate efficiently accurate samples from a given target distribution.

Contrary to aforementioned distributed MCMC approaches, global consensus methods periodically share information between workers by performing a consensus round between the master and the workers \citep{2015_variational_consensus,pmlr-v84-chowdhury18a,Vono2019,Rendell2020}.
Again, they have been shown to perform well in practice but their theoretical understanding is currently limited. 

\subsection{Comparison with the proposed methodology}
\label{subsec:method_complexity}

\Cref{table:overview} compares \Cref{algo:ULAwSG} with existing approaches detailed previously.
In addition to having a simple implementation and guidelines, it is worth noticing that DG-LMC appears to benefit from favorable convergence properties compared to the other considered methodologies.

We complement this comparison with an informal discussion on the computational and communication complexities of \Cref{algo:ULAwSG}.
Recall that the dataset is assumed to be partitioned into $S$ shards and stored on $S$ workers among a collection of $b$ computing nodes. 
Suppose that the $s$-th shard has  size $n_s$, and let $T$ be the number of total MCMC iterations and $c_{\mathrm{com}}$ the communication cost. 
In addition, denote by $c_{\mathrm{eval}}^{(i)}$ the approximate wall-clock time required to evaluate $U_i$ or its gradient.  
For the ease of exposition, we do not discuss the additional overhead due to bandwidth restrictions and assume similar computation costs, \emph{i.e.}, $Nc_{\mathrm{eval}}\simeq N_ic_{\mathrm{eval}}^{(i)}$, to perform each local LMC step at each iteration of \Cref{algo:ULAwSG}.  
Under these assumptions, the total complexity of \Cref{algo:ULAwSG} is $\mathcal{O}(T[2c_{\mathrm{com}} + Nc_{\mathrm{eval}}])$. 
%The factor $2$ corresponds to the fact that, after $N$ local LMC steps, communication is required to and from each computing node.  
Following the same reasoning,  distributed stochastic gradient Langevin dynamics (D-SGLD) and one-shot approaches admit complexities of the order $\mathcal{O}(T[2c_{\mathrm{com}} +
Nc_{\mathrm{eval}}n_{\mathrm{mb}}/n_s])$ and
$\mathcal{O}(Tc_{\mathrm{eval}} + 2c_{\mathrm{com}})$, respectively.
The integer $n_{\mathrm{mb}}$ stands for the
mini-batch size used in D-SGLD.  
Despite their very low communication overhead,
existing one-shot approaches are rarely reliable and therefore not necessarily efficient to sample from $\pi$ given a prescribed computational budget, see \citet{Rendell2020} for a recent overview.
D-SGLD seems to enjoy a lower complexity than
\Cref{algo:ULAwSG} when $n_{\mathrm{mb}}$ is small.  
Unfortunately, this choice comes with two main shortcomings: (i) a larger number of iterations $T$ to achieve the same precision because of higher variance of gradient estimators, and (ii) a smaller amount of time spent on exploration compared to communication latency. 
By falling into the global consensus class of methods, the proposed methodology hence appears as a good compromise between one-shot and D-SGLD algorithms in terms of both computational complexity and accuracy.
\Cref{sec:experiments} will enhance the benefits of \Cref{algo:ULAwSG} by showing experimentally better convergence properties and posterior approximation.

\section{Experiments}
\label{sec:experiments}

This section compares numerically DG-LMC with the most popular and recent centralised distributed MCMC approaches namely D-SGLD and the global consensus Monte Carlo (GCMC) algorithm proposed in \citet{Rendell2020}.
Since all these approaches share the same communication latency, this feature is not discussed here.

\begin{figure}[h]
\vskip -0.1in
\begin{center}
\mbox{{\includegraphics[scale=0.7]{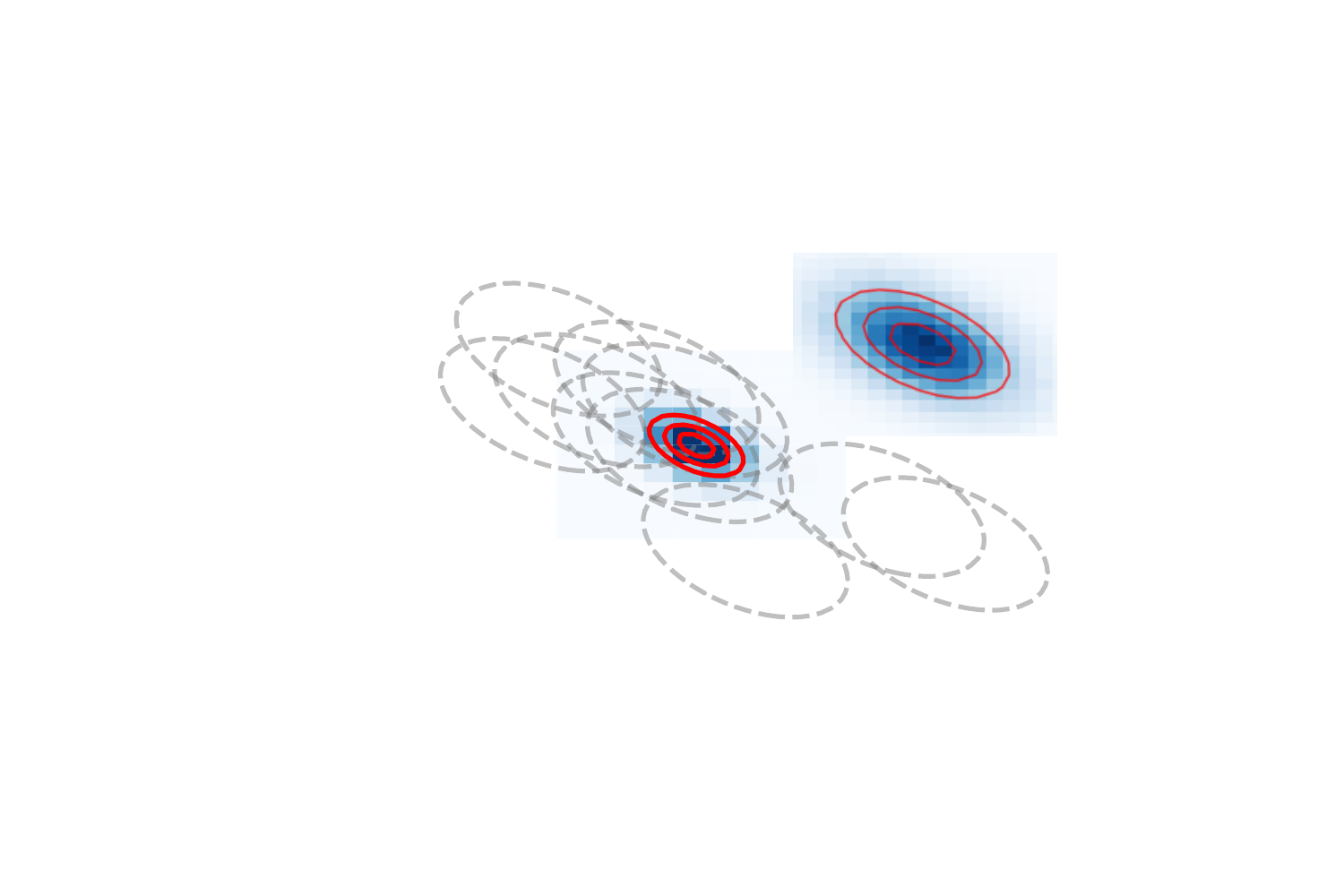}}}
\mbox{{\includegraphics[scale=0.7]{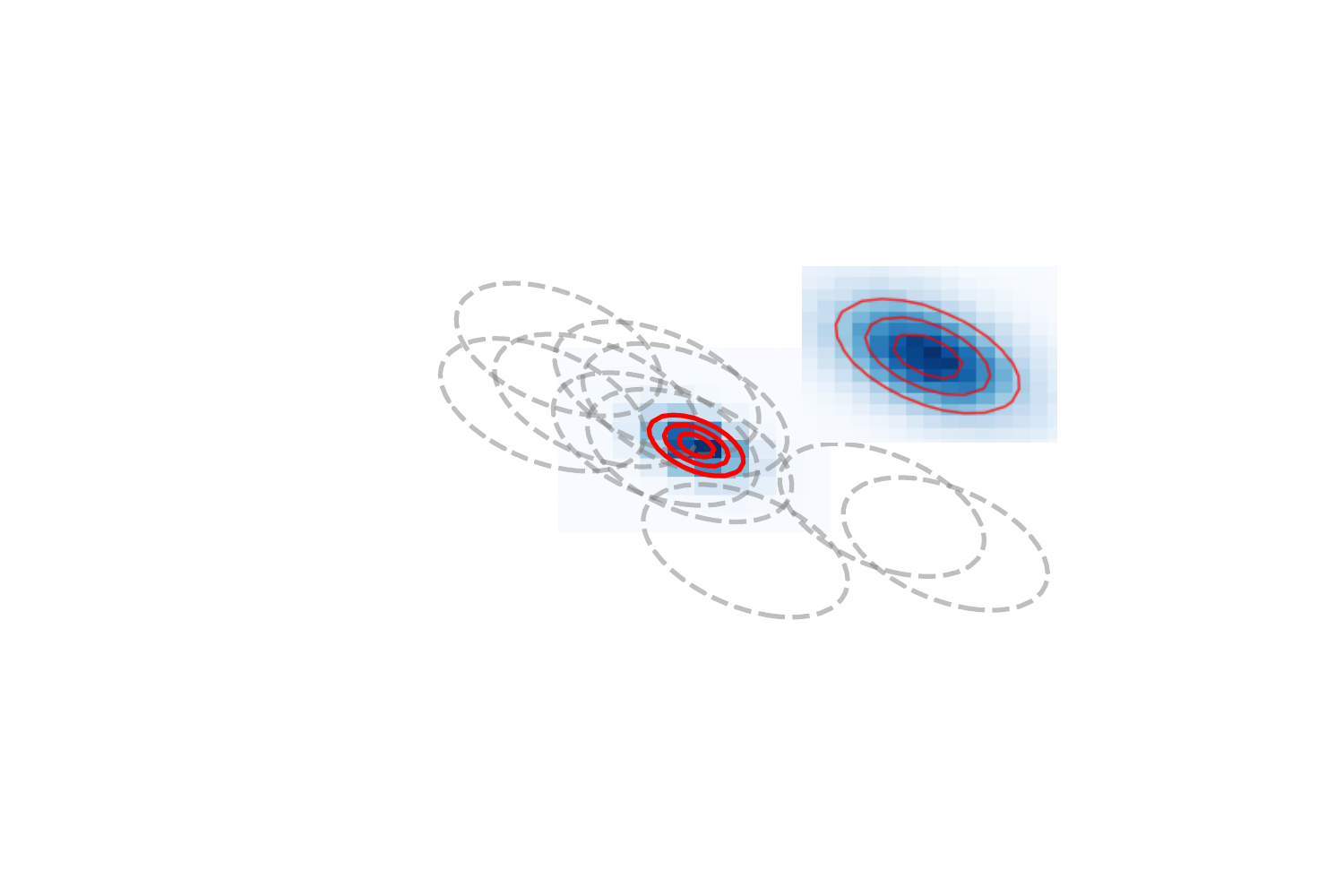}}}
\mbox{{\includegraphics[scale=0.7]{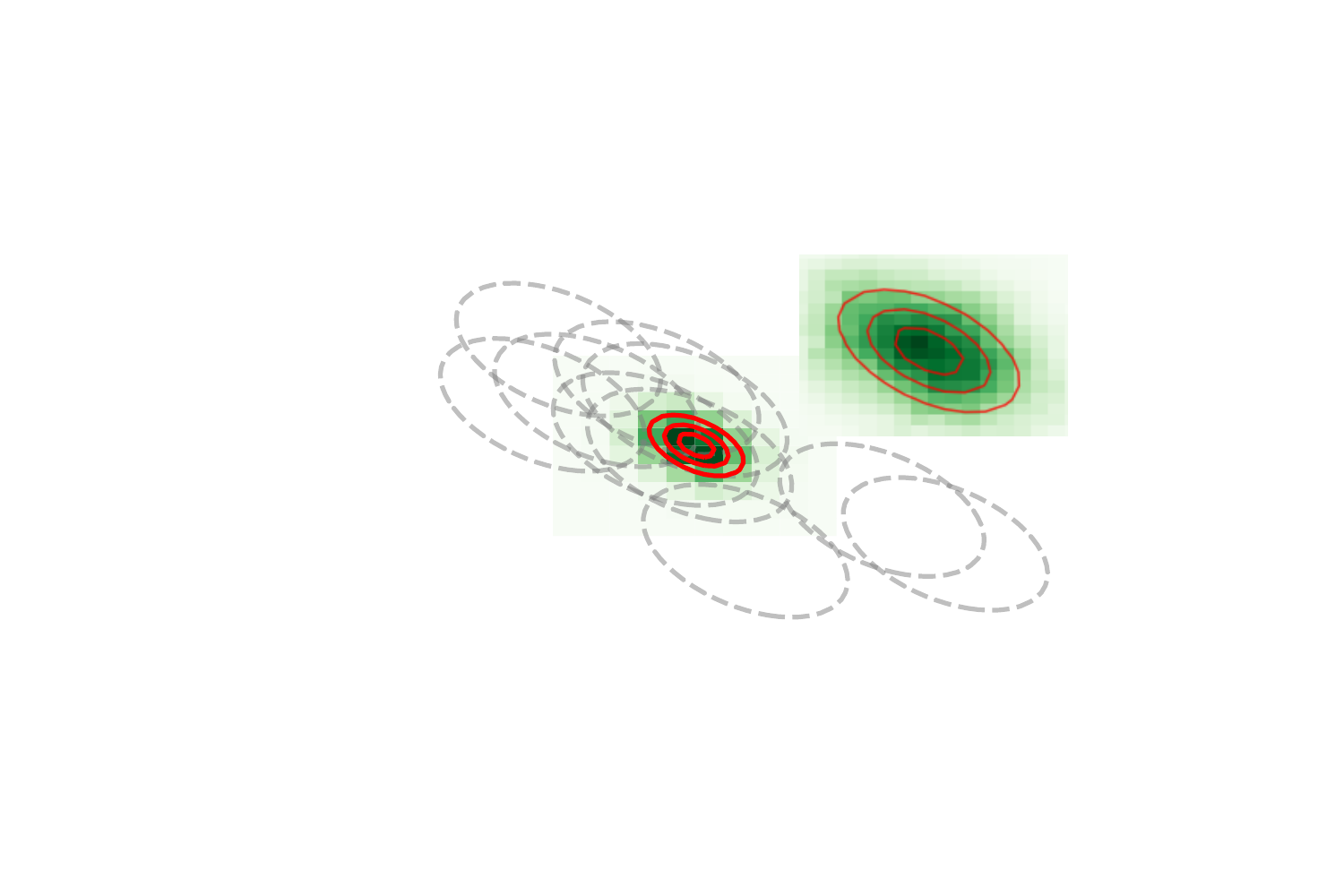}}}
\mbox{{\includegraphics[scale=0.7]{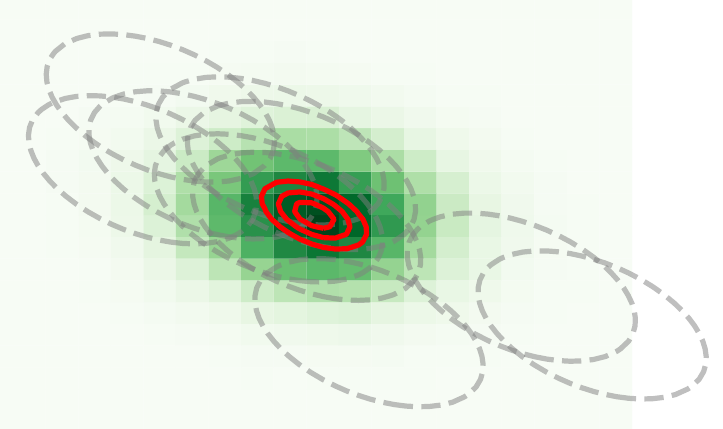}}}
\mbox{{\includegraphics[scale=0.33]{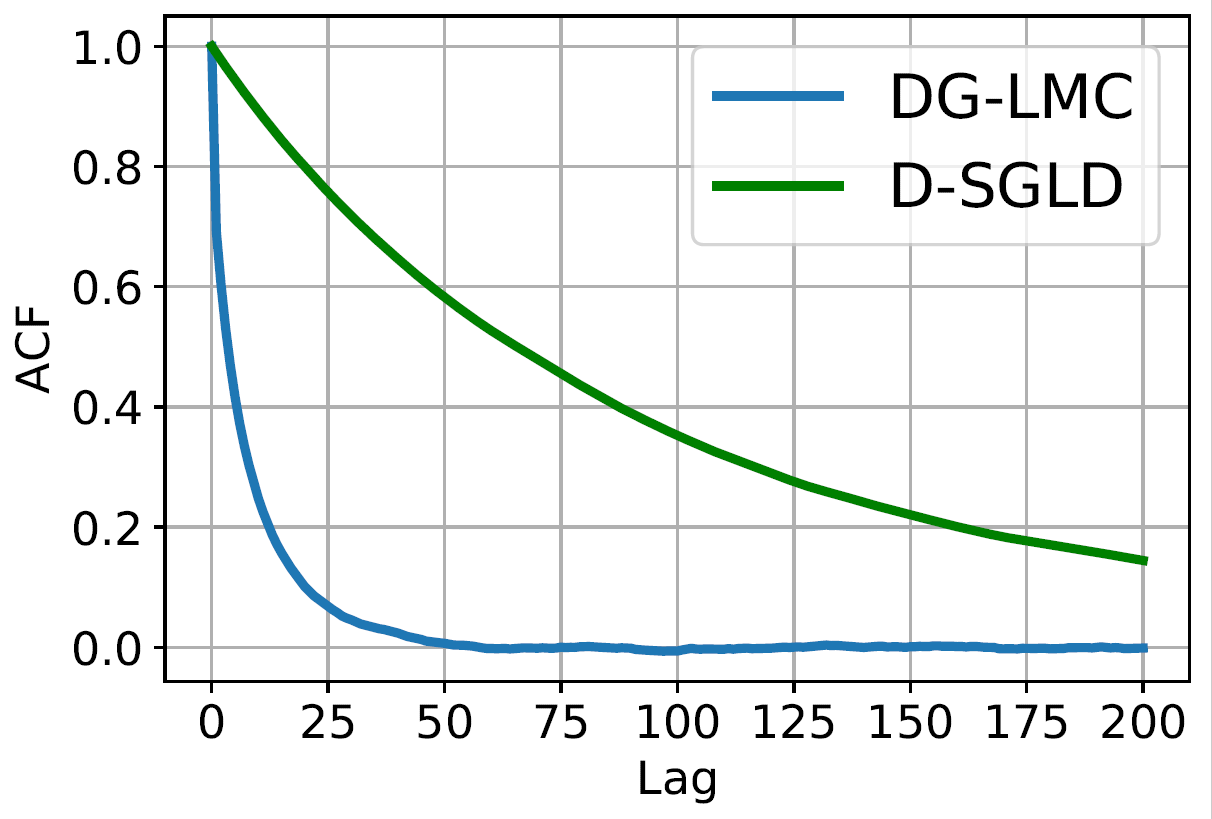}}}
\mbox{{\includegraphics[scale=0.5]{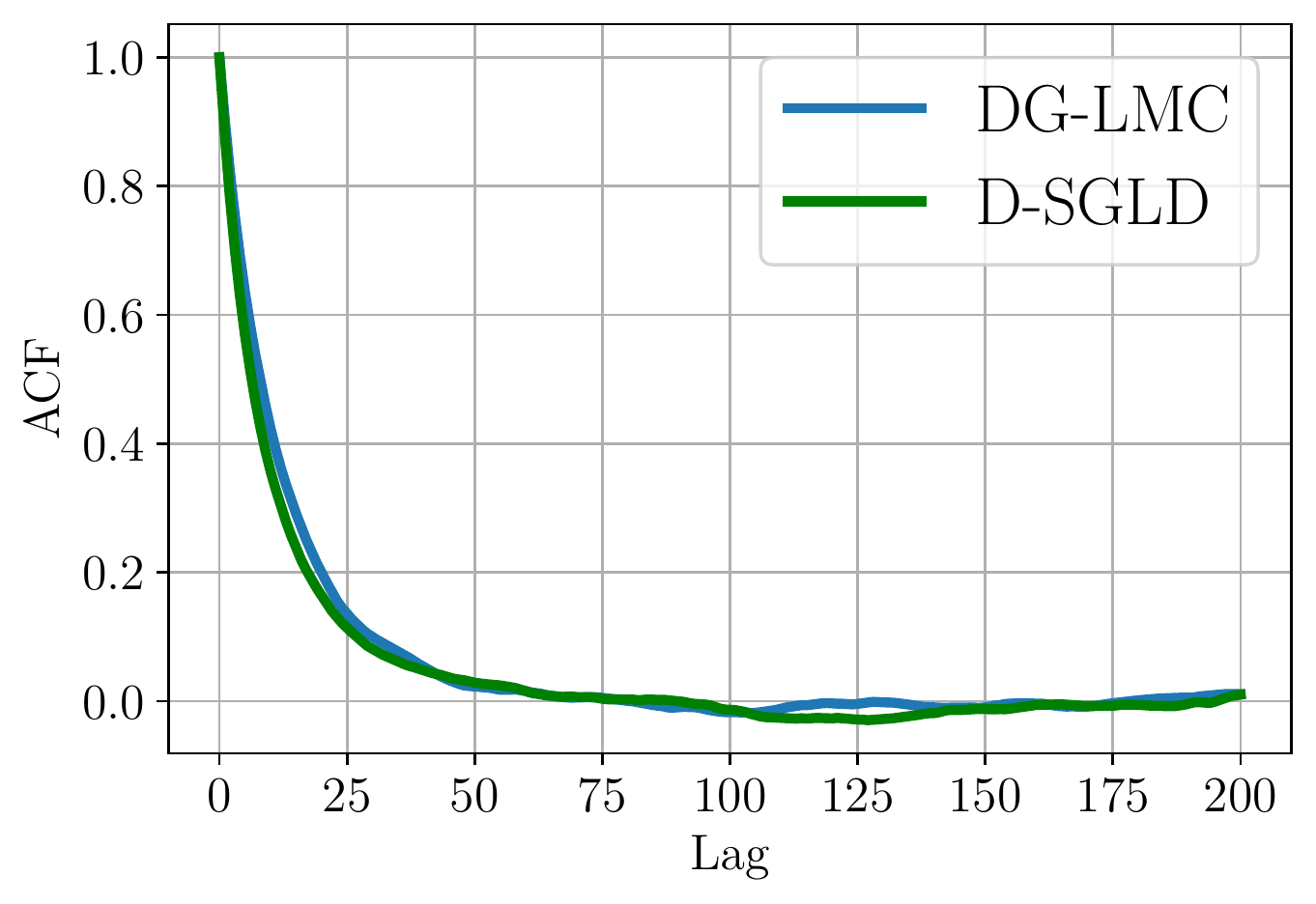}}}
\caption{Toy Gaussian experiment. (left) $N=1$ local iterations and (right) $N=10$. (top) DG-LMC, (middle) D-SGLD and (bottom) ACF comparison between DG-LMC and D-SGLD.}
\label{fig:exp1}
\end{center}
\vskip -0.2in
\end{figure}
\begin{figure*}
%\vskip 0.2in
\begin{center}
\mbox{{\includegraphics[scale=0.5]{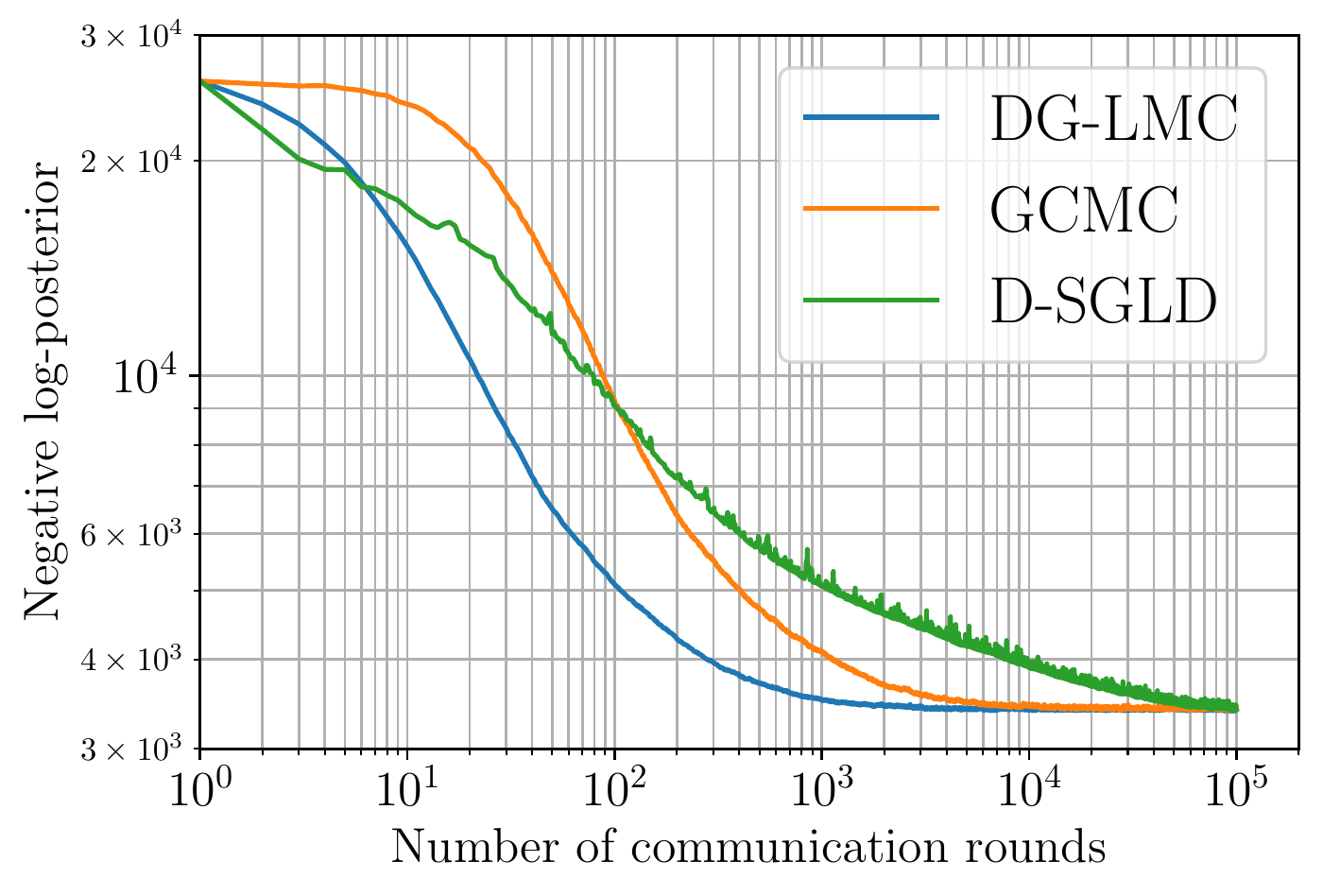}}}
\mbox{{\includegraphics[scale=0.5]{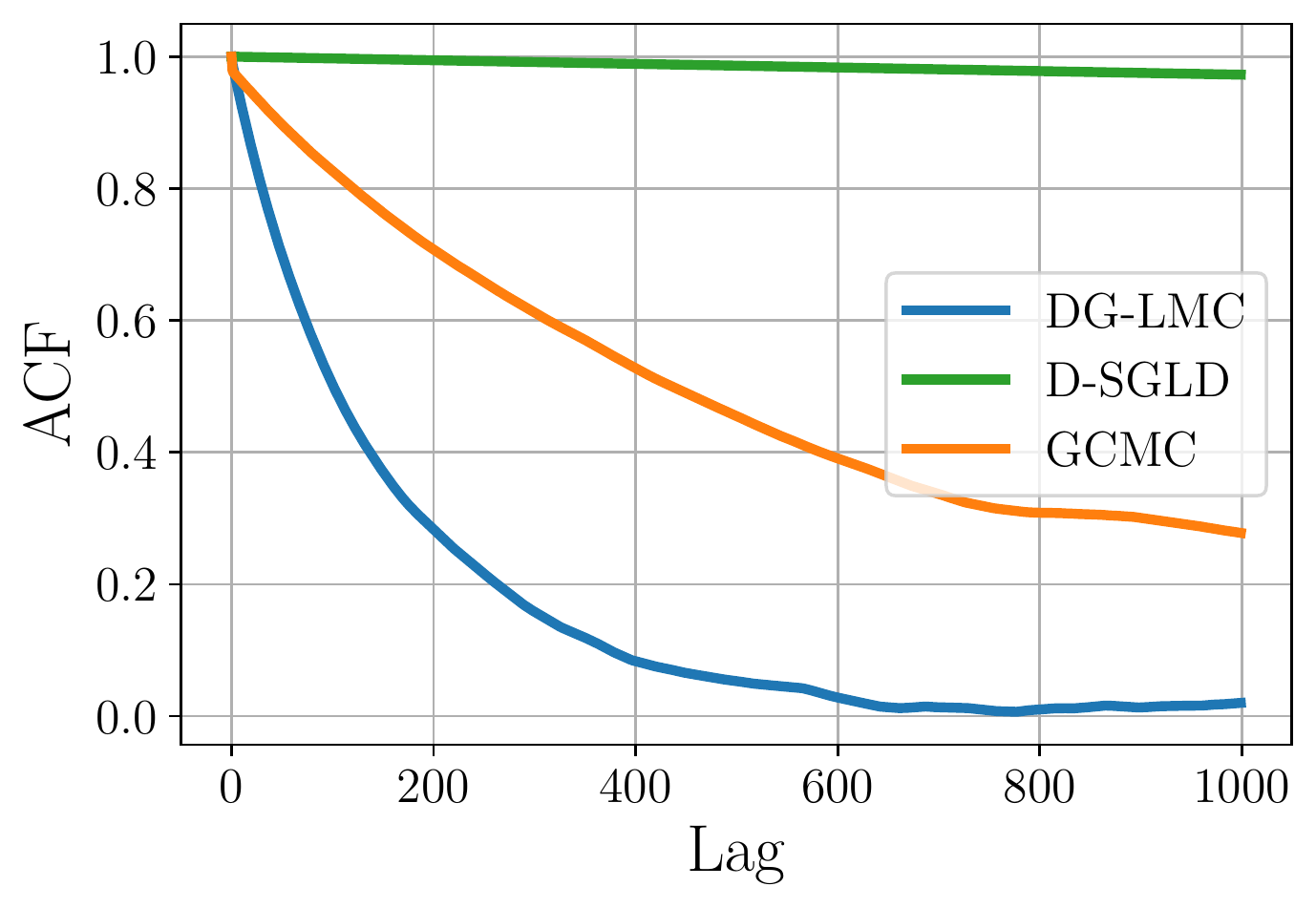}}}
\mbox{{\includegraphics[scale=0.5]{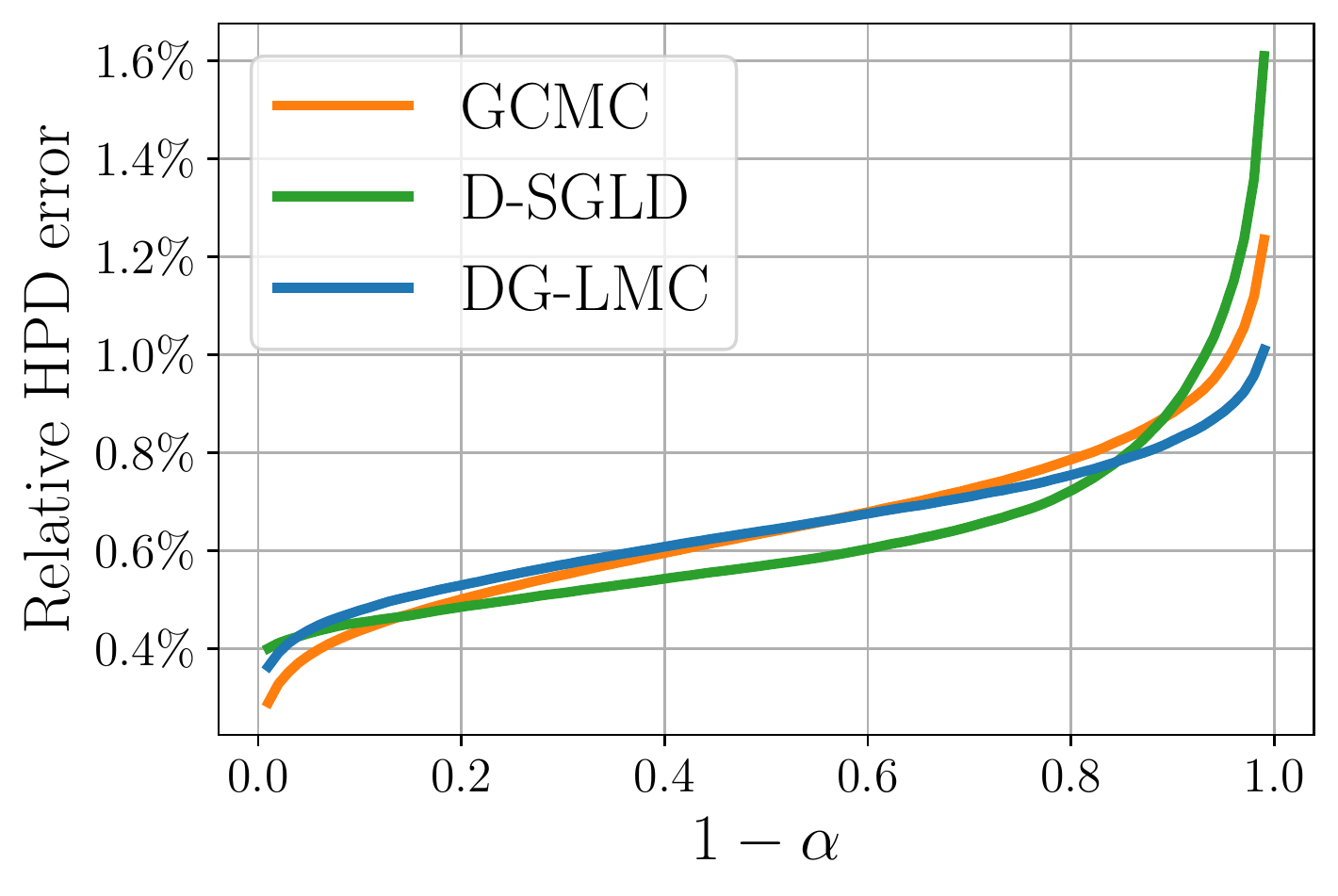}}}
\mbox{{\includegraphics[scale=0.5]{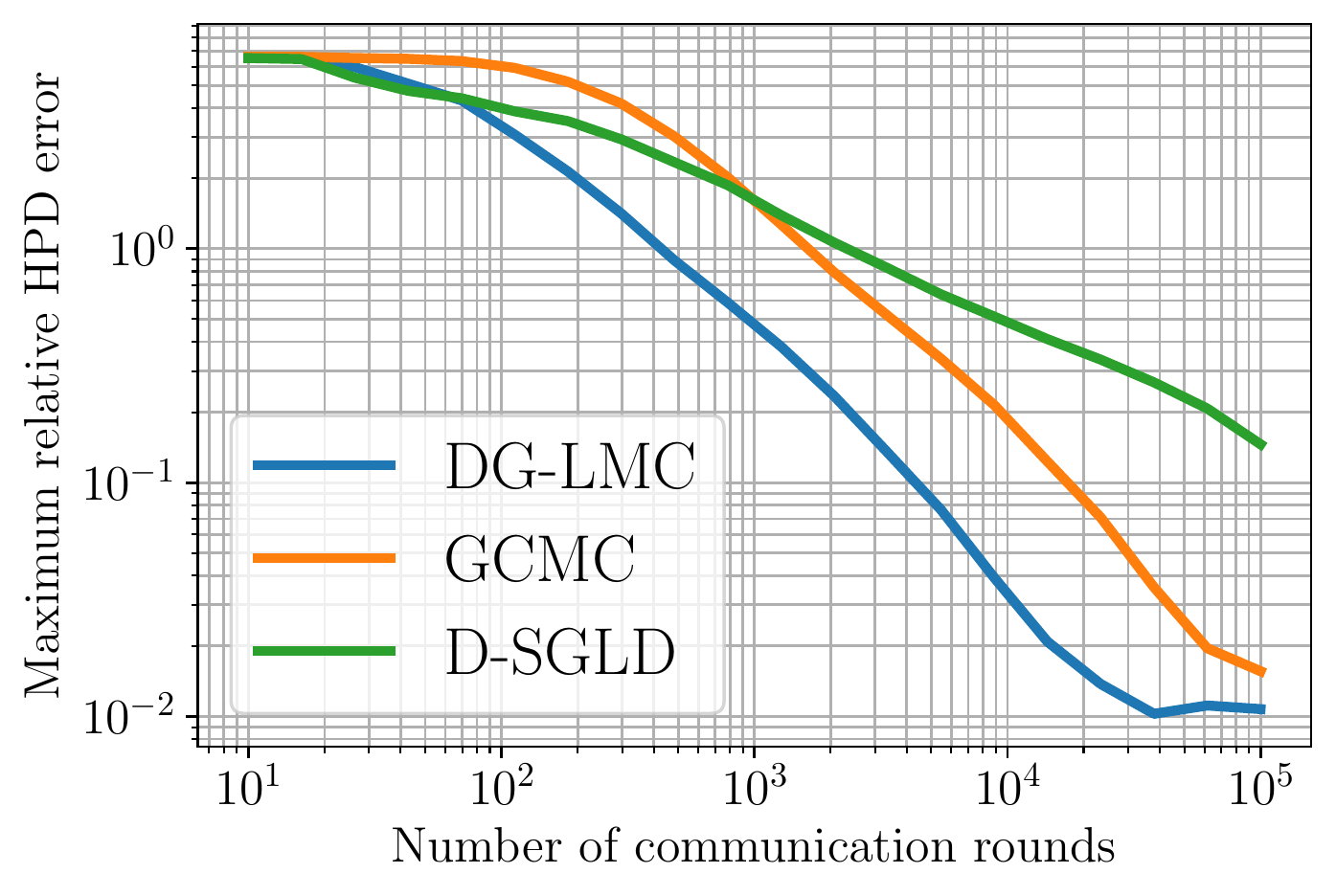}}}
\caption{Logistic regression. From left to right: negative log-posterior, ACF, HPD relative error after and during the sampling procedure.}
\label{fig:exp2}
\end{center}
\vskip -0.2in
\end{figure*}

\subsection{Toy Gaussian Example}
\label{subsec:toy_gaussian_example}

In this toy example, we first illustrate the behavior of DG-LMC w.r.t. the number of local iterations which drives the communication overhead.
We consider the conjugate Gaussian model $\pi(\btheta|\by_{1:n}) \propto \loiGauss(\btheta|\B{0}_d,\B{\Sigma_0})\prod_{i=1}^n\loiGauss(\B{y}_i|\btheta,\B{\Sigma_1})$, with positive definite matrices $\B{\Sigma_0},\B{\Sigma_1}$.
We set $d=2$, allocate $n=20,000$ observations to a cluster made of $b=10$ workers and compare DG-LMC with D-SGLD.
Both MCMC algorithms have been run using the same number of local iterations $N$ per worker and for a fixed budget of $T=100,000$ iterations including a burn-in period equal to $T_{\mathrm{bi}} = 10,000$.
Regarding DG-LMC, we follow the guidelines in \Cref{subsubsec:selection_rho_gamma} and set for all $i \in [b]$, $\B{A}_i = \B{I}_d$, $\rho_i = 1/(5M_i)$ and $\gamma_i =  0.25\rho_i/(\rho_i M_i +1)$.
On the other hand, D-SGLD has been run with batch-size $n/(10b)$ and a step-size chosen such that the resulting posterior approximation is similar to that of DG-LMC for $N=1$.
\Cref{fig:exp1} depicts the results for $N=1$ and $N=10$ on the left and right columns, respectively.
The top row (resp. middle row) shows the contours of the $b$ local posteriors in dashed grey, the contours of the target posterior in red and the 2D histogram built with DG-LMC (resp. D-SGLD) samples in blue (resp. green). 
When required, a zoomed version of these figures is depicted at the top right corner.
It can be noted that DG-LMC exhibits better mixing properties while achieving similar performances as shown by the autocorrelation function (ACF) on the bottom row.
Furthermore, its posterior approximation is robust to the choice of $N$ in contrast to D-SGLD, which needs further tuning of its step-size to yield an accurate posterior representation.
This feature is particularly important for distributed computations since $N$ is directly related to communication costs and might often change depending upon the hardware architecture.

\subsection{Bayesian Logistic Regression}
\begin{figure}
%\vskip 0.2in
\begin{center}
\mbox{{\includegraphics[scale=0.35]{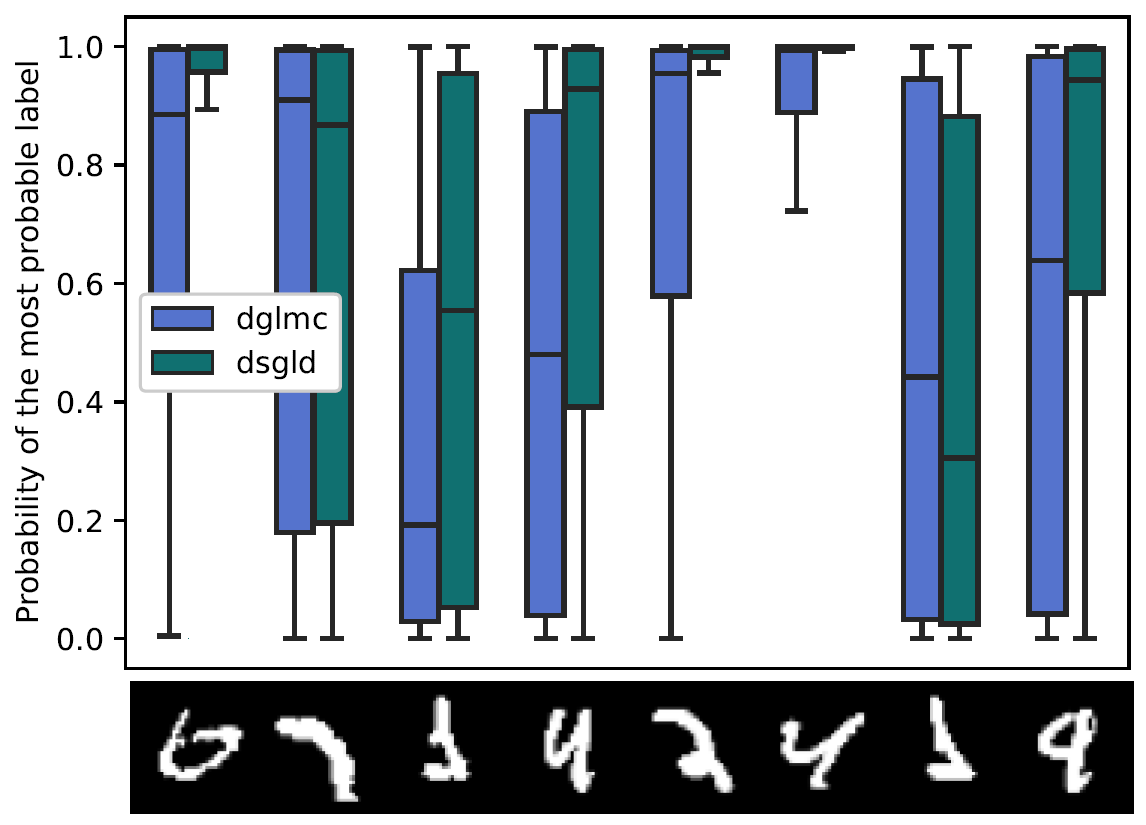}}}
\mbox{{\includegraphics[scale=0.35]{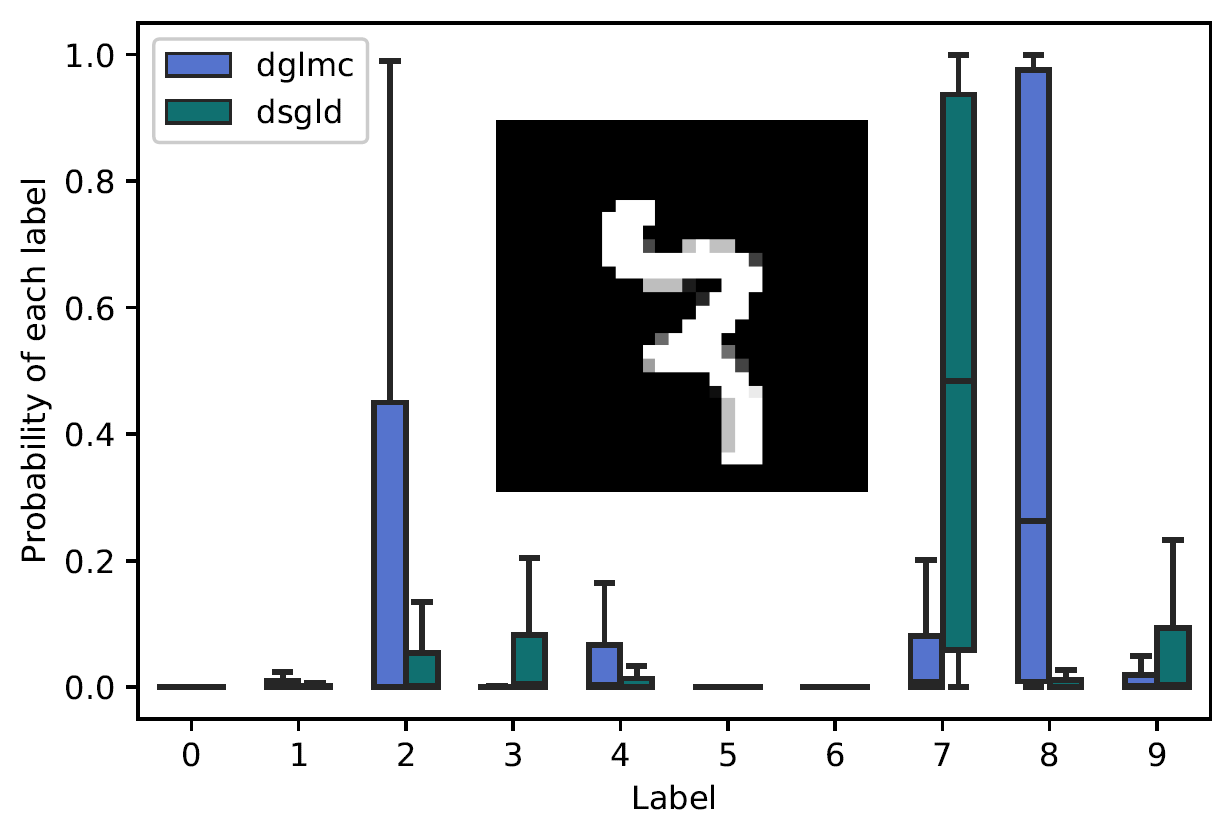}}}
\caption{Bayesian neural network. (left) probability of the most probable label for 8 examples and (right) probability of each label for a single example.}
\label{fig:exp3}
\end{center}
\vskip -0.2in
\end{figure}
This second experiment considers a more challenging problem namely Bayesian logistic regression.
We use the \textit{covtype}\footnote{\scriptsize\url{www.csie.ntu.edu.tw/~cjlin/libsvmtools/datasets}} dataset with $d=54$ and containing $n$ = 581,012 observations partitioned into $b=16$ shards.
We set $N=10$, $T=200,000$, $T_{\mathrm{bi}}=T/10$ for all approaches, and again used the guidelines in \Cref{subsubsec:selection_rho_gamma} to tune DG-LMC.
Under the Bayesian paradigm, we are interested in performing uncertainty quantification by estimating highest posterior density (HPD) regions.
For any $\alpha \in (0,1)$, define $\mathcal{C}_{\alpha} = \{\btheta \in \Rd ; -\log \pi(\btheta|\by_{1:n}) \le \eta_{\alpha}\}$ where $\eta_{\alpha} \in \mathbb{R}$ is chosen such that $\int_{\mathcal{C}_{\alpha}}\pi(\btheta|\by_{1:n})\dd\btheta = 1-\alpha$.
For the three approximate MCMC approaches, we computed the relative HPD error based on the scalar summary $\eta_{\alpha}$, \emph{i.e.} $|\eta_{\alpha}-\eta_{\alpha}^{\mathrm{true}}|/\eta_{\alpha}^{\mathrm{true}}$ where $\eta_{\alpha}^{\mathrm{true}}$ has been estimated using the Metropolis adjusted Langevin algorithm.
The parameters of GCMC and D-SGLD have been chosen such that all MCMC algorithms achieve similar HPD error.
\Cref{fig:exp2} shows that this error is reasonable and of the order of 1\%.
Nonetheless, one can denote that DG-LMC achieves this precision level faster than GCMC and D-SGLD due to better mixing properties.
This confirms that the proposed methodology is indeed efficient and reliable to perform Bayesian analyses compared to existing popular methodologies.

\subsection{Bayesian Neural Network}

Up to now, both our theoretical and experimental results focused on the strongly log-concave scenario and showed that even in this case, DG-LMC appeared as a competitive alternative.
In this last experiment, we propose to end the study of DG-LMC on an open note without ground truth by tackling the challenging sampling problem associated to Bayesian neural networks. 
We consider the MNIST training dataset consisting of $n=60,000$ observations partitioned into $b=50$ shards and such that for any $i \in [n]$ and $k \in [10]$, $\mathbb{P}(y_i = k | \btheta,\B{x}_i) = \beta_k$ where $\beta_k$ is the $k$-th element of $\sigma(\sigma(\B{x}_i^{\top}\B{W}_1 + \B{b}_1)\B{W}_2 + \B{b}_2)$, $\sigma(\cdot)$ is the sigmoid function, $\B{x}_i$ are covariates, and $\B{W}_1$, $\B{W}_2$, $\B{b}_1$ and $\B{b}_2$ are matrices of size 784$\times$128, 128 $\times$ 10, 1$\times$128 and 1$\times$10, respectively. 
We set normal priors for each weight matrix and bias vector, $N=10$ and ran DG-LMC with constant hyperparameters across workers $(\rho,\gamma) = (0.02,0.005)$ and D-SGLD using a step-size of $10^{-5}$.
Exact MCMC approaches are too computationally costly to launch for this experiment and therefore no ground truth about the true posterior distribution is available.
To this purpose, \Cref{fig:exp3} only compares the credibility regions associated to the posterior predictive distribution.
Similarly to previous experiments, we found that D-SGLD was highly sensitive to hyperparameters choices (step-size and mini-batch size).
Except for a few testing examples, most of conclusions given by DG-LMC and D-SGLD regarding the predictive uncertainty coincide.
In addition, posterior accuracies on the test set given by both algorithms are similar.

\section{Conclusion}

In this paper, a simple algorithm coined DG-LMC has been introduced for distributed MCMC sampling.  In addition, it has been established that this method inherits favorable convergence properties and numerical illustrations support our claims. 

% Acknowledgements should only appear in the accepted version.
\section*{Acknowledgements}

The authors acknowledge support of the Lagrange Mathematics and Computing Research Center.

\bibliography{../../bibliography/biblio}
\bibliographystyle{chicago}

\newpage

\appendix

\newtheorem{unlemma}{Lemma S}
\newtheorem{unproposition}{Proposition S}
\newtheorem{uncorollary}{Corollary S}
\newtheorem{untheorem}{Theorem S}

\setcounter{equation}{0}
\setcounter{figure}{0}
\setcounter{table}{0}
\setcounter{lemma}{0}
\setcounter{proposition}{0}
\setcounter{corollary}{0}
\makeatletter
\renewcommand{\theequation}{S\arabic{equation}}
\renewcommand{\thefigure}{S\arabic{figure}}
\renewcommand{\thetheorem}{S\arabic{theorem}}
\renewcommand{\thelemma}{S\arabic{lemma}}
\renewcommand{\thesection}{S\arabic{section}}
\renewcommand{\theproposition}{S\arabic{proposition}}
\renewcommand{\thecorollary}{S\arabic{corollary}}

% !TEX root = main.tex

\begin{center}
\textbf{\large APPENDIX \\ \vspace{.15cm}
DG-LMC: A Turn-key and Scalable Synchronous Distributed MCMC Algorithm via Langevin Monte Carlo within Gibbs}
\end{center}
 \vspace{0.4cm}

% \startcontents[sections]
% \tableofcontents
% \printcontents[sections]{l}{1}{\setcounter{tocdepth}{2}}

%\vp{S'occuper de la table des matières spécifiques aux appendices.}

\noindent\textbf{Notations and conventions.}
We denote by $\mathcal{B}(\mathbb{R}^d)$ the Borel $\sigma$-field of $\mathbb{R}^d$, $\mathbb{M}(\mathbb{R}^d)$ the set of all Borel measurable functions $f$ on $\mathbb{R}^d$, $\norm{f}_{\infty} = \sup_{\bx \in \mathbb{R}^d}|f(\bx)|$ and $\norm{\cdot}$ the Euclidean norm on $\mathbb{R}^d$.
For $\mu$ a probability measure on $(\Rd,\mathcal{B}(\Rd))$ and $f \in \mathbb{M}(\mathbb{R}^d)$ a $\mu$-integrable function, denote by $\mu(f)$ the integral of $f$ with respect to (w.r.t.) $\mu$.
Let $\mu$ and $\nu$ be two sigma-finite measures on $(\Rd,\mathcal{B}(\Rd))$. 
Denote by $\mu \ll \nu$ if $\mu$ is absolutely continuous w.r.t. $\nu$ and $\dd \mu/\dd \nu$ the associated density. 
Let $\mu$, $\nu$ be two probability measures on $(\Rd,\mathcal{B}(\Rd))$. 
Define the Kullback-Leibler (KL) divergence
of $\mu$ from $\nu$ by
\begin{equation*}
\mathrm{KL} (\mu|\nu) = 
\begin{cases}
  \int_{\Rd} \frac{\dd \mu}{\dd \nu}(\bx)\log\pr{\frac{\dd \mu}{\dd \nu}(\bx)}\,\dd \nu(\bx)\eqsp, & \text{if $\mu \ll \nu$}\\
  +\infty & \text{otherwise.}
\end{cases}  
\end{equation*}
In addition, define the Pearson $\chi^2$-divergence of $\mu$ from $\nu$ by
\begin{equation*}
\chi^2 (\mu|\nu) = 
\begin{cases}
  \int_{\Rd} \pr{\frac{\dd \mu}{\dd \nu}(\bx) - 1}^2\,\dd \nu(\bx)\eqsp, & \text{if $\mu \ll \nu$}\\
  +\infty & \text{otherwise.}
\end{cases}  
\end{equation*}
We say that $\zeta$ is a transference plan of $\mu$ and $\nu$ if it is a probability measure on $(\Rd \times \Rd, \mathcal{B}(\Rd \times \Rd))$ such that for all measurable set $\mathsf{A}$ of $\Rd$, $\zeta(\mathsf{A} \times \Rd) = \mu(\mathsf{A})$
and $\zeta(\Rd \times \mathsf{A}) = \nu(\mathsf{A})$.
 We denote by $\mathcal{T}(\mu, \nu)$ the set of transference plans of $\mu$ and $\nu$.
 In addition, we say that a couple of $\mathbb{R}^d$-random variables $(X,Y)$ is a coupling of $\mu$ and $\nu$ if there exists $\zeta \in \mathcal{T}(\mu, \nu)$ such that $(X,Y)$ are distributed according to $\zeta$.
 Let $\B{M}$ be a $d\times d$ symmetric positive definite matrix. Denote $\left\langle, \right\rangle_{\B{M}}$ the scalar product corresponding to $\B{M}$, defined for any $\B{x}, \B{y} \in\Rd$ by $\langle\B{x},\B{y} \rangle_{\B{M}} = \B{x}^{\top} \B{M} \B{y}$. Denote $\|\cdot\|_{\B{M}}$ the corresponding norm. 
We denote by $\mathcal{P}_2(\Rd)$ the set of probability measures with finite $2$-moment: for all $\mu \in \mathcal{P}_2(\Rd), \int_{\Rd} \|\B{x}\|^2\,\dd\mu(\B{x}) < \infty$. 
We define the Wasserstein distance of order $2$ associated with $\|\cdot\|_{\B{M}}$ for any probability measures $\mu,\nu \in \mathcal{P}_2(\Rd)$ by
\[
W_{\B{M}}^2 (\mu, \nu) = \inf_{\zeta \in \mathcal{T}(\mu,\nu)} \int_{\mathbb{R}^d \times \mathbb{R}^d}\|\B{x}-\B{y}\|_{\B{M}}^2\,\dd\zeta(\B{x},\B{y}) \eqsp.
\]
In the case when $\B{M} = \B{I}_d$, we will denote the Wasserstein distance of order $2$ by $\wasserstein{}$. 
By \citet[Theorem 4.1]{Villani2008}, for all $\mu$, $\nu$ probability measures on $\Rd$, there exists a transference
plan $\zeta^{\star} \in \mathcal{T}(\mu,\nu)$ such that for any coupling $(X,Y)$ distributed according to $\zeta^{\star}$, $W_{\B{M}}(\mu, \nu) = \mathbb{E}[\|\B{x}-\B{y}\|_{\B{M}}^2]^{1/2}$. 
This kind of transference plan (respectively coupling) will be called an optimal transference plan (respectively optimal coupling) associated with $W_{\B{M}}$. 
By \citet[Theorem 6.16]{Villani2008}, $\mathcal{P}_2(\Rd)$ equipped with the
Wasserstein distance $W_{\B{M}}$ is a complete separable metric space.
The total variation norm between two probability measures $\mu$ and $\nu$ on $(\mathbb{R}^d,\mathcal{B}(\mathbb{R}^d))$ is defined by 
\begin{equation*}
    \norm{\mu - \nu}_{\mathrm{TV}} = \sup_{f \in \mathbb{M}(\mathbb{R}^d), \norm{f}_{\infty} \le 1} \left|\int_{\mathbb{R}^d}f(\bx)\,\dd\mu(\bx) - \int_{\mathbb{R}^d}f(\bx)\,\dd\nu(\bx) \right|\eqsp.
\end{equation*}
For the sake of simplicity, with little abuse, we shall use the same notations for
a probability distribution and its associated probability density function.
For a Markov chain with transition kernel $P$ on $\mathbb{R}^d$ and invariant distribution $\pi$, we define the $\varepsilon$-mixing time associated to a statistical distance $D$, precision $\varepsilon > 0$ and initial distribution $\nu$, by
\begin{equation*}
t_{\mathrm{mix}}(\varepsilon;\nu) = \min \ac{t \ge 0 \ \big\vert \ D(\nu P^t,\pi) \le \varepsilon}\eqsp,
\end{equation*}
which stands for the minimum number of steps of the Markov chain such that its distribution is at most at an $\varepsilon$ $D$-distance from the invariant distribution $\pi$. 
For $n \ge 1$, we refer to the set of integers between $1$ and $n$ with the notation $[n]$.
The $d$-multidimensional Gaussian probability distribution with mean $\boldsymbol{\mu}$ and covariance matrix $\B{\Sigma}$ is denoted by $\loiGauss(\boldsymbol{\mu},\B{\Sigma})$.
When $\boldsymbol{\mu} = \B{0}_d$ and $\B{\Sigma} = \B{I}_d$, the associated probability density function is denoted by by $\upphi_d$.
Let $F:\R^d\to\R$ be a twice continuously differentiable function, denote $\vec{\Delta}$ the vector Laplacian of $F$ defined, for all $x\in\R^d$, by $\vec{\Delta}F(x)=\acn{\sum_{l=1}^{d}(\partial^2 F_k)(x)/\partial x_l^2}_{k=1}^{d}$.
For $0\le i < j$, we use the notation $\B{u}_{i:j}$ to refer to the vector $[\B{u}_{i}^{\top},\cdots,\B{u}_{j}^{\top}]^{\top}$ built by stacking $j-i+1$ vectors ($\B{u}_{k}; \, k \in \{i,\cdots,j\}$).
For a given matrix $\B{M} \in \mathbb{R}^{d \times d}$, we denote its smallest and largest eigenvalues by $\lambda_{\mathrm{min}}(\B{M})$ and $\lambda_{\mathrm{max}}(\B{M})$, respectively. 
Fix $b\in\N^*$ and let $\B{M}_{1},\ldots,\B{M}_b$ be $d$-dimensional matrices. We denote $\prod_{\ell=i}^j \B{M}_\ell = \B{M}_j \ldots \B{M}_i$ if $i\le j$ and with the convention $\prod_{\ell=i}^j \B{M}_\ell = \B{I}_d$ if $i >j$.
% \vp{Definition d'une matrice diagonale.}
% For any $b\in\N^*$, $(d_i)_{i\in[b]}\in\N^*$ and $(\B{M}_i)_{i\in[b]}\in(\R^{d_i\times d_i})_{i\in[b]}$, we denote $\mathrm{diag}(\B{M}_1,\ldots,\B{M}_b)$ the matrix in $\R^{(d_1+\cdots+d_b)\times (d_1+\cdots+d_b)}$ where for any $(l,k)\in[d_1+\cdots+d_b]^2$, the element of the $l$-th row and $k$-th column is $0$ if $j_l\neq j_k$ and equal to the element of the $(l-\sum_{s< j_l}d_{s})$-th row and $(k-\sum_{s< j_k}d_{s})$-th column of $\B{M}_{d_{j_l}}$ where we consider $(j_l,j_k)\in\acn{1,\ldots,b}^2$ such that $\sum_{s< j_l}d_{s}\le l< \sum_{s\le j_l}d_{s}, \sum_{s< j_k}d_{s}\le k\le \sum_{s\le j_k}d_{s}$ with $\sum_{s<1}d_{s}=0$.
For any $b\in\N^*$, $(d_i)_{i\in[b]}\in(\N^*)^b$ and $(\B{M}_i)_{i\in[b]}\in \otimes_{i\in[b]} \R^{d_i\times d_i}$, we denote $\mathrm{diag}(\B{M}_1,\ldots,\B{M}_b)$ the unique matrix $\B{M}\in\R^{(\sum_i d_i)\times (\sum_i d_i)}$ satisfying for any $\B{u}=(\B{u}_1,\ldots,\B{u}_b)\in\R^{d_1}\times\cdots\times\R^{d_b}$, $\B{M}\B{u}=\sum_{i=1}^b \B{M}_i\B{u}_i$ which corresponds to
\[\renewcommand{\arraystretch}{2}
\B{M}=
\begin{pmatrix}
\B{M}_1 & \B{0}_{d_1,d_2} & \cdots & \B{0}_{d_b,d_b}\\%[1ex]
\B{0}_{d_2,d_1} & \ddots & \ddots & \vdots\\%[1ex]
\vdots & \ddots & \ddots & \B{0}_{d_{b-1}, d_{b}} \\%[1ex]
\B{0}_{d_b,d_1} & \cdots & \B{0}_{d_{b}, d_{b-1}} & \B{M}_{b}%[1ex]
\end{pmatrix}\eqsp.
\]
% \vp{Definition d'une matrice diagonale.}
% For $(d_i)_{i\in[b]}\in(\N^*)^b$ and $(\B{M}_i)_{i\in[b]}\in(\R^{d_i\times d_i})^b$, we denote $\mathrm{diag}(\B{M}_1,\ldots,\B{M}_b)$ the matrix in $\R^{(d_1+\cdots+d_b)\times (d_1+\cdots+d_b)}$ where for any $(l,k)\in[d_1+\cdots+d_b]^2$, the element of the $l$-th row and $k$-th column is $0$ if $j_l\neq j_k$ and equal to the element of the $(l-\sum_{s< j_l}d_{s})$-th row and $(k-\sum_{s< j_k}d_{s})$-th column of $\B{M}_{d_{j_l}}$ where we consider $(j_l,j_k)\in\acn{1,\ldots,b}^2$ such that $\sum_{s< j_l}d_{s}\le l< \sum_{s\le j_l}d_{s}, \sum_{s< j_k}d_{s}\le k\le \sum_{s\le j_k}d_{s}$ with $\sum_{s<1}d_{s}=0$.
For any $\mathbf{v}\in\R^b$, define the block diagonal matrix
\begin{equation}\label{eq:def_diag}
  \B{D}_{\mathbf{v}} = \mathrm{diag}\pr{v_1 \cdot\B{I}_{d_1},\hdots,v_b\cdot\B{I}_{d_b}}\in\R^{p\times p}\eqsp.
\end{equation}
For any symmetric matrices $\B{S}_{1}, \B{S}_{2}\in\R^{p\times p}$, we note $\B{S}_{1}\preccurlyeq\B{S}_{2}$ if and only if, for any $\B{u}\in\R^{p}$, we have $\B{u}^\top(\B{S}_{2}-\B{S}_{1})\B{u}\ge 0$.
Let $(\msx,\mcx)$ and $(\msy,\mcy)$ be two measurable spaces, we say that a transition probability kernel on $(\msy \times \msx) \times \mcy$ is a conditional Markov kernel. One elementary step in most Gibbs samplers corresponds to a conditional Markov kernel.

\newpage
\begin{center}
\textbf{\Large Table of contents}
\end{center}
 \vspace{0.4cm}

\startcontents[sections,subsections]
\printcontents[sections,subsections]{l}{1}{\setcounter{tocdepth}{3}}

\section{Proof of \Cref{prop:pi_rho_proper}}
\label{sec:proof_proposition_1}

% !TEX root = main.tex

Let $b' \in [b-1]$, $p' = \sum_{i=b'+1}^b d_i$ and consider 
\begin{align}\label{eq:def_B_bar_B}
  &\B{B}_{b'}^{\top} = [\B{A}^{\top}_{b'+1}/\rho_{b'+1}^{\half} \cdots \B{A}^{\top}_b/\rho_b^{\half}] \in \rset^{d \times p'} \eqsp,
  &\bar{\B{B}}_{b'} = \B{B}_{b'}^{\top}\B{B}_{b'} = \sum_{i=b'+1}^b \{\B{A}_i^{\top}\B{A}_i/\rho_i\}  \in \rset^{d \times d}\eqsp.
\end{align}
Note that under  \Cref{ass:well_defined_density}, $\bar{\B{B}}_{b'}$ is invertible. Indeed, it is a  symmetric positive definite matrix since
for any $\btheta \in \rset^d$, $\ps{\bar{\B{B}}_{b'}\btheta}{\btheta} \ge [\min_{i\in\iint{b}}\rho_i^{-1}] \psLigne{\sum^b_{i=b'+1} \B{A}_i^{\top} \B{A}_i \btheta}{\btheta} >0$ using that  $\sum^b_{i=b'+1} \B{A}_i^{\top} \B{A}_i$ is invertible.
Define the orthogonal projection onto the range of $\B{B}_{b'}$ and the diagonal matrix: 
\begin{equation}\label{eq:def_projection}
\B{P}_{b'} = \B{B}_{b'} \bar{\B{B}}_{b'}^{-1} \B{B}_{b'}^{\top} \eqsp, \qquad   \B{\tilde{D}}_{b'} = \mathrm{diag}(\B{I}_{d_{b'+1}}/\rho_{b'+1},\ldots,\B{I}_{d_b}/\rho_b) \eqsp. 
\end{equation}

\subsection{Technical lemma}

\begin{lemma}
  \label{lem:cal_conditionelle}
  Assume \Cref{ass:well_defined_density}. For any $(\thetabf,\zbf_{b'+1:b}) \in \rset^d\times \rset^{p'}$, setting $\zbf = \zbf_{b'+1:b}$, we have 
  \begin{multline*}
    \sum_{i=b'+1}^b\ac{\normLigne[2]{\zbf_i-\Abf_i \thetabf}/\rho_i}
    = (\B{\tilde{D}}_{b'}^{\half} \B{z})^{\top}\{\B{I}_{p'} - \B{P}_{b'}\}(\B{\tilde{D}}_{b'}^{\half} \B{z})\\
    + (\bftheta-\bar{\B{B}}_{b'}^{-1}\B{B}_{b'}^{\top} \B{\tilde{D}}_{b'}^{\half} \zbf )^{\top} \bar{\B{B}}_{b'} (\bftheta-\bar{\B{B}}_{b'}^{-1}\B{B}_{b'}^{\top} \B{\tilde{D}}_{b'}^{\half} \zbf ) \eqsp. 
  \end{multline*}
\end{lemma}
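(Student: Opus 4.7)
The statement is an algebraic identity; the whole game is to rewrite the left-hand side as a single quadratic form in a stacked variable and then complete the square in $\btheta$.

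First I would rewrite the LHS so the sum collapses into the norm of a single vector. Setting $\B{u} = \B{\tilde{D}}_{b'}^{1/2}\B{z} \in \rset^{p'}$, we have $\B{u}_i = \B{z}_i/\rho_i^{1/2}$ for $i\in\{b'+1,\dots,b\}$, and by the very definition of $\B{B}_{b'}$ in \eqref{eq:def_B_bar_B}, the vector $\B{B}_{b'}\btheta \in \rset^{p'}$ is precisely the stacking of the blocks $\B{A}_i\btheta/\rho_i^{1/2}$. Hence
\begin{equation*}
  \sum_{i=b'+1}^b \norm{\B{z}_i-\B{A}_i\btheta}^2/\rho_i \;=\; \norm{\B{u} - \B{B}_{b'}\btheta}^2 \;=\; \norm{\B{u}}^2 - 2\ps{\B{u}}{\B{B}_{b'}\btheta} + \btheta^{\top}\bar{\B{B}}_{b'}\btheta \eqsp,
\end{equation*}
where I used $\B{B}_{b'}^{\top}\B{B}_{b'} = \bar{\B{B}}_{b'}$.

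Next, since $\bar{\B{B}}_{b'}$ is invertible (this is where \Cref{ass:well_defined_density} enters, as already noted right after \eqref{eq:def_B_bar_B}), I would complete the square in $\btheta$:
\begin{equation*}
\btheta^{\top}\bar{\B{B}}_{b'}\btheta - 2\ps{\B{u}}{\B{B}_{b'}\btheta} = \prbig{\btheta - \bar{\B{B}}_{b'}^{-1}\B{B}_{b'}^{\top}\B{u}}^{\top}\bar{\B{B}}_{b'}\prbig{\btheta - \bar{\B{B}}_{b'}^{-1}\B{B}_{b'}^{\top}\B{u}} - \B{u}^{\top}\B{B}_{b'}\bar{\B{B}}_{b'}^{-1}\B{B}_{b'}^{\top}\B{u} \eqsp,
\end{equation*}
and observe that the last subtracted term is exactly $\B{u}^{\top}\B{P}_{b'}\B{u}$ by the definition of $\B{P}_{b'}$ in \eqref{eq:def_projection}.

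Finally I would combine the two previous displays: the leftover $\norm{\B{u}}^2 - \B{u}^{\top}\B{P}_{b'}\B{u}$ equals $\B{u}^{\top}(\B{I}_{p'}-\B{P}_{b'})\B{u} = (\B{\tilde{D}}_{b'}^{1/2}\B{z})^{\top}(\B{I}_{p'}-\B{P}_{b'})(\B{\tilde{D}}_{b'}^{1/2}\B{z})$, which together with the completed square in $\btheta$ gives the claimed identity. There is no real obstacle here; the only thing to keep straight is the bookkeeping between the block-diagonal/stacking operations and their matrix counterparts $\B{B}_{b'}$ and $\B{\tilde{D}}_{b'}$, and to note that the invertibility of $\bar{\B{B}}_{b'}$ provided by \Cref{ass:well_defined_density} is what legitimises completing the square.
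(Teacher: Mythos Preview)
Your proof is correct and follows essentially the same approach as the paper: both arguments expand the sum as a quadratic in $\btheta$, complete the square using the invertibility of $\bar{\B{B}}_{b'}$, and identify the residual term with the projection $\B{P}_{b'}$. The only cosmetic difference is that you first package the sum as $\norm{\B{u}-\B{B}_{b'}\btheta}^2$ with $\B{u}=\B{\tilde D}_{b'}^{1/2}\B{z}$, whereas the paper works directly with $\B{b}=\B{B}_{b'}^{\top}\B{\tilde D}_{b'}^{1/2}\B{z}$ and $\sum_i\norm{\B{z}_i}^2/\rho_i$.
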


\begin{proof}
  Setting $\B{b} = \B{B}_{b'}^{\top} \B{\tilde{D}}_{b'}^{\half} \zbf$ and using the fact that $\bar{\B{B}}_{b'}$ is symmetric, we have
  \begin{align*}
    \sum_{i=b'+1}^b\ac{\normLigne[2]{\zbf_i-\Abf_i \thetabf}/\rho_i}
    &= \bftheta^{\top} \bar{\B{B}}_{b'} \bftheta -2 \bftheta^{\top} \B{b} +\sum_{i=b'+1}^b \norm{\zbf_i}^2/\rho_i \\
    &  =  \sum_{i=b'+1}^b \norm{\zbf_i}^2/\rho_i - \B{b}^{\top} \bar{\B{B}}_{b'}^{-1} \B{b} + (\bftheta-\bar{\B{B}}_{b'}^{-1}\B{b} )^{\top} \bar{\B{B}}_{b'} (\bftheta-\bar{\B{B}}_{b'}^{-1}\B{b} ) \eqsp. 
  \end{align*}
  Using that $\B{b}^{\top} \bar{\B{B}}_{b'}^{-1} \B{b} = (\B{\tilde{D}}_{b'}^{\half}\B{z})^{\top} \B{P}_{b'} (\B{\tilde{D}}_{b'}^{\half}\B{z})$ and $\B{P}_{b'}$ is a projection, $\B{P}_{b'}^2=\B{P}_{b'}$ completes the proof. 
\end{proof}

\subsection{Proof of \Cref{prop:pi_rho_proper}}

\begin{proposition}
\label{prop:integrability}
Assume \Cref{ass:well_defined_density}. Then, the function $\psi: (\thetabf,\zbf_{1:b}) \mapsto \prod_{i=1}^b \exp\{-U_i(\zbf_i) - \normLigne[2]{\zbf_i- \Abf_i \thetabf}/(2\rho_i)\}$ is integrable on $\rset^d \times \rset^p$, where $p = \sum_{i=1}^b d_i$.
\end{proposition}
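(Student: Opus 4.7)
The plan is to prove integrability by partitioning the factors according to the splitting $[b]=[b']\cup[b'+1\!:\!b]$ granted by \Cref{ass:well_defined_density}, and integrating out the two groups of $\zbf_i$'s and $\thetabf$ in the right order.

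First I would handle the ``bounded below'' factors. For each $i\in[b']$, condition (i) of \Cref{ass:well_defined_density} yields a finite lower bound $m_i:=\inf_{\zbf_i} U_i(\zbf_i)>-\infty$, hence for every fixed $\thetabf$,
\begin{equation*}
\int_{\rset^{d_i}} \rme^{-U_i(\zbf_i)-\normLigne[2]{\zbf_i-\Abf_i\thetabf}/(2\rho_i)}\,\rmd\zbf_i
\;\le\; \rme^{-m_i}(2\pi\rho_i)^{d_i/2}\eqsp,
\end{equation*}
which is a constant independent of $\thetabf$. By Fubini, integrating out $\zbf_{1:b'}$ reduces the problem to showing that
\begin{equation*}
J \;=\; \int_{\rset^d\times\rset^{p'}} \exp\!\Big(-\textstyle\sum_{i=b'+1}^b U_i(\zbf_i) - \frac{1}{2}\sum_{i=b'+1}^b \normLigne[2]{\zbf_i-\Abf_i\thetabf}/\rho_i\Big)\,\rmd\thetabf\,\rmd\zbf_{b'+1:b}
\end{equation*}
is finite, where $p'=\sum_{i=b'+1}^b d_i$.

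Next I would invoke \Cref{lem:cal_conditionelle}: with $\zbf=\zbf_{b'+1:b}$, the quadratic form in the exponent of $J$ rewrites as
\begin{equation*}
(\B{\tilde{D}}_{b'}^{\half}\zbf)^{\top}\{\B{I}_{p'}-\B{P}_{b'}\}(\B{\tilde{D}}_{b'}^{\half}\zbf)
+ (\thetabf-\mubf(\zbf))^{\top}\bar{\B{B}}_{b'}(\thetabf-\mubf(\zbf))\eqsp,
\end{equation*}
with $\mubf(\zbf)=\bar{\B{B}}_{b'}^{-1}\B{B}_{b'}^{\top}\B{\tilde{D}}_{b'}^{\half}\zbf$. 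Because $\sum_{j=b'+1}^b\Abf_j^{\top}\Abf_j$ is invertible (condition (iii)), $\bar{\B{B}}_{b'}$ is symmetric positive definite, so a standard Gaussian integration in $\thetabf$ yields the finite factor $(2\pi)^{d/2}/\sqrt{\det\bar{\B{B}}_{b'}}$.

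Finally, I would exploit that $\B{I}_{p'}-\B{P}_{b'}$ is an orthogonal projection, hence positive semidefinite, so the leftover cross-term $\exp\{-\tfrac{1}{2}(\B{\tilde{D}}_{b'}^{\half}\zbf)^{\top}(\B{I}_{p'}-\B{P}_{b'})(\B{\tilde{D}}_{b'}^{\half}\zbf)\}\le 1$. Consequently
\begin{equation*}
J \;\le\; \frac{(2\pi)^{d/2}}{\sqrt{\det\bar{\B{B}}_{b'}}}\,\prod_{i=b'+1}^b \int_{\rset^{d_i}} \rme^{-U_i(\zbf_i)}\,\rmd\zbf_i\eqsp,
\end{equation*}
and each factor in the product is finite by condition (ii) of \Cref{ass:well_defined_density}. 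Combining everything proves $\int \psi<\infty$. The only nontrivial ingredient is the algebraic identity of \Cref{lem:cal_conditionelle} (already proved above); the remaining work is routine Fubini plus Gaussian integration, so I expect no real obstacle.
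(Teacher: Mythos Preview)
Your proposal is correct and follows essentially the same approach as the paper's proof: split the factors according to $[b']$ versus $[b'+1\!:\!b]$, bound away the first group using the lower bound on $U_i$ and Gaussian integration in $\zbf_i$, then apply \Cref{lem:cal_conditionelle} to perform the Gaussian integral in $\thetabf$, drop the nonnegative projection term, and conclude by the integrability of $\rme^{-U_i}$ for $i>b'$. The steps and their order match the paper's argument almost exactly.
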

\begin{proof}
Using \Cref{ass:well_defined_density} and the Fubini theorem, there exists $C_1 > 0$ such that:
\begin{align}
  \nonumber
  &\int_{\mathbb{R}^d}\br{\prod_{i=1}^{b'}\int_{\mathbb{R}^{d_i}}\mathrm{e}^{-U_i(\B{z}_{i})} \mathrm{e}^{- \frac{\norm{\B{z}_i-\B{A}_i\btheta}^2}{2\rho_i}}\,\dd\B{z}_{i} \cdot \prod_{j=b'+1}^{b}\int_{\mathbb{R}^{d_j}}\mathrm{e}^{-U_j(\B{z}_{j})} \mathrm{e}^{- \frac{\norm{\B{z}_j-\B{A}_j\btheta}^2}{2\rho_j}}\,\dd\B{z}_{j}  }\,\dd \btheta \\
  \nonumber
  &\le C_1 \int_{\mathbb{R}^d}\br{ \prod_{i=1}^{b'}\int_{\mathbb{R}^{d_i}}\mathrm{e}^{- \frac{\norm{\B{z}_i-\B{A}_i\btheta}^2}{2\rho_i}}\,\dd\B{z}_{i} \cdot \prod_{j=b'+1}^{b}\int_{\mathbb{R}^{d_j}}\mathrm{e}^{-U_j(\B{z}_{j})} \mathrm{e}^{- \frac{\norm{\B{z}_j-\B{A}_j\btheta}^2}{2\rho_j}}\,\dd\B{z}_{j}  }\,\dd \btheta \\
  \nonumber
  &\le C_1 \prod_{i=1}^{b'} (2\uppi\rho_i)^{d_i/2} \int_{\mathbb{R}^d}\br{\prod_{j=b'+1}^{b}\int_{\mathbb{R}^{d_j}}\mathrm{e}^{-U_j(\B{z}_{j})} \exp\pr{-\norm{\B{z}_j-\B{A}_j\btheta}^2/(2\rho_j)}\,\dd\B{z}_{j}  }\,\dd \btheta \\
  &= C_1 \prod_{i=1}^{b'} (2\uppi\rho_i)^{d_i/2} \int_{\mathbb{R}^{d_{b'+1}}}\cdots \int_{\mathbb{R}^{d_b}}\br{\prod_{j=b'+1}^b \mathrm{e}^{-U_j(\B{z}_{j})}} \br{\int_{\mathbb{R}^d}\prod_{j=b'+1}^{b} \mathrm{e}^{- \frac{\norm{\B{z}_j-\B{A}_j\btheta}^2}{2\rho_j}}\,\dd \btheta}\,\dd\B{z}_{b'+1:b}\eqsp. \label{eq:integrab1}
  \end{align}
  Using \Cref{lem:cal_conditionelle} and the fact that $\B{I}_{p'} - \B{P}_{\b'}$ is positive definite, we obtain 
  \begin{align*}
    &\int_{\mathbb{R}^d}\prod_{j=b'+1}^{b} \exp\pr{- \norm{\B{z}_j-\B{A}_j\btheta}^2/(2\rho_j)}\,\dd \btheta \\
    &= \exp\pr{-(\B{\tilde{D}}_{b'}^{\half} \B{z})^{\top}\{\B{I}_{p'} - \B{P}_{\b'}\}(\B{\tilde{D}}_{b'}^{\half} \B{z})/2}\\
    &\times\int_{\mathbb{R}^d} \exp\pr{-(\bftheta-\bar{\B{B}}_{b'}^{-1}\B{B}_{b'}^{\top} \B{\tilde{D}}_{b'}^{\half} \zbf )^{\top} \bar{\B{B}}_{b'} (\bftheta-\bar{\B{B}}_{b'}^{-1}\B{B}_{b'}^{\top} \B{\tilde{D}}_{b'}^{\half} \zbf )/2}\,\dd \btheta \\
    &\le \mathrm{det}\pr{\bar{\B{B}}_{b'}}^{-1/2}(2\uppi)^{d/2}.
  \end{align*}
Then, the proof is completed by plugging this expression into \eqref{eq:integrab1} and using from \Cref{ass:well_defined_density} that $\bz_{b'+1:b} \mapsto \prod_{j=b'+1}^b \mathrm{e}^{-U_j(\B{z}_{j})}$ is integrable.
\end{proof}

\section{Proof of \Cref{prop:convergence_rho_gamma}}\label{sec:proof-proposition-2-1}

%!TEX root = main.tex

This section aims at proving \Cref{prop:convergence_rho_gamma} in the main paper. 
To ease the understanding, we dissociate the scenarios where $\max_{i \in [b]} N_i = 1$ and $\max_{i \in [b]} N_i > 1$. In addition, in all this section $\bfrho \in (\rset_+^*)^b$ is assumed to be fixed. 

\subsection{Single local LMC iteration}

In this section, we assume that a single LMC step is performed locally on each worker, that is $\max_{i \in [b]} N_i = 1$. For this, we introduce the conditional Markov transition kernel defined for any $\bfgamma = (\gamma_1,\ldots,\gamma_b)$, $\btheta \in \Rd$, $\bz = (\bz_{1},\cdots,\bz_{b}) \in \mathbb{R}^{d_1} \times \cdots \times \mathbb{R}^{d_b}$, and for $i \in [b]$, $\mathsf{B}_i \in \mathcal{B}(\mathbb{R}^{d_i})$, by
\begin{align}
Q_{\bfrho, \bfgamma}\pr{\bz,\mathsf{B}_1\times\cdots\times\mathsf{B}_b|\btheta}
=\prod_{i=1}^b R_{\rho_i,\gamma_i}(\bz_{i},\mathsf{B}_i|\btheta)\eqsp, \label{eq:Q_rho_gamma}
\end{align}
where  
\begin{equation}
\label{eq:def_R_rho_gamma}
  R_{\rho_i, \gamma_i}(\bz_{i},\mathsf{B}_i|\btheta) = \int_{\mathsf{B}_i}\exp\ac{-\frac{1}{4\gamma_i}\norm{\tilde{\bz}_{i} - \pr{1-\frac{\gamma_i}{\rho_i}}\bz_{i} - \frac{\gamma_i}{\rho_i}\B{A}_i\btheta + \gamma_i \grad U_i(\bz_{i})}^2}\frac{\dd\tilde{\bz}_{i}}{(4\uppi\gamma_i)^{d_i/2}}\eqsp.
\end{equation}
Recall that $p = \sum_{i=1}^b d_i$.
The considered Gibbs sampler in \Cref{algo:ULAwSG} defines a homogeneous Markov chain $X_{n}^{\top} = (\theta_{n}^{\top},Z_{n}^{\top})_{n \ge 1}$ where $Z_{n}^{\top} = ([Z_{n}^1]^{\top},\cdots,[Z_{n}^b]^{\top})$. Indeed, it is easy to show that for any $n \in \nset$ and measurable bounded function $f : \rset^p \to \rset_+$, $\PE[f(Z_{n+1})|X_n] = \int_{\rset^p} f(\zbf)Q_{\bfrho,\gammabf}(Z_{n},\dd \zbf|\theta_n)$ and therefore $(X_n)_{n \in\nset}$ is associated with the Markov kernel defined, for any $\bx^{\top} = (\btheta^{\top},\bz^{\top}) \in \R^{d} \times \R^{p}$ and $\msa \in \mcbb(\rset^d)$, $\msb \in \mathcal{B}( \mathbb{R}^p)$, by
\begin{equation}\label{eq:def:prop2:P_rho_gamma}
  P_{\bfrho, \bfgamma}(\bx,\msa \times \msb) = \int_{\msb} Q_{\bfrho,\bfgamma}\pr{\bz,\dd {\tbfz}|\thetabf}\int_{\msa}\Pi_{\bfrho}(\dd\tbtheta|\tbfz) \eqsp,
\end{equation}
% where to simplify the notation, we note similarly the measure $\Pi_{\bfrho}(\cdot|\tbfz)$ and its distribution defined in \eqref{eq:def:Pi_rho_cond}.
where $\Pi_{\bfrho}(\cdot|\tbfz)$ is defined in \eqref{eq:def:Pi_rho_cond}.
Let $(\xi_n)_{n \ge 1}$ be a sequence of i.i.d. $d$-dimensional standard Gaussian random variables independent of the family of independent random variables $\{(\eta_{n}^{i})_{n\ge 1} : i\in [b]\}$ where for any $i \in [b]$ and $n \ge 1$, $\eta_{n}^{i}$ is a $d_i$-dimensional standard Gaussian random variable.
We define the stochastic processes $(X_{n},\tilde{X}_{n})_{n\ge 0}$ on $\R^p\times\R^p$ starting from $(X_0,\tilde{X}_0) = (\bx,\tilde{\bx}) = ((\btheta^{\top},\bz^{\top})^{\top},(\tilde{\btheta}^{\top},\tilde{\bz}^{\top})^{\top})$ and following the recursion for $n\ge 0$,
\begin{equation}\label{eq:def:X}
X_{n+1}=(\theta_{n+1}^{\top}, Z_{n+1}^{\top})^{\top}\eqsp, \qquad
\tilde{X}_{n+1}=(\tilde{\theta}_{n+1}^{\top}, \tilde{Z}_{n+1}^{\top})^{\top}\eqsp,
\end{equation}
where $Z_{n+1} = ([Z_{n+1}^1]^{\top},\ldots,[Z_{n+1}^b]^{\top})^{\top}, \tilde{Z}_{n+1} = ([\tilde{Z}_{n+1}^1]^{\top},\ldots,[\tilde{Z}_{n+1}^b]^{\top})^{\top}$ are defined, for any $i \in [b]$, by
\begin{align}
  \label{eq:coupling_Z}
      Z_{n+1}^{i} &= \pr{1-\gamma_i/\rho_i}Z_{n}^{i} + \pr{\gamma_i/\rho_i}\B{A}_i\theta_{n} - \gamma_i\grad U_i(Z_{n}^{i}) + \sqrt{2\gamma_i}\eta_{n+1}^i\eqsp,\\
      \nonumber
      \tilde{Z}_{n+1}^{i}& = \pr{1-\gamma_i/\rho_i}\tilde{Z}_{n}^{i} + \pr{\gamma_i/\rho_i}\B{A}_i\tilde{\theta}_{n} - \gamma_i\grad U_i(\Zc_{n}^{i}) + \sqrt{2\gamma_i}\eta_{n+1}^i\eqsp,
\end{align}
and $\theta_{n+1}, \tilde{\theta}_{n+1}$ by
\begin{equation}
\label{eq:coupling_theta}
      \theta_{n+1} = \bar{\B{B}}_0^{-1}\B{B}_0^{\top}\B{\tilde{D}}_{0}^{\half} \Zb_{n+1} + \bar{\B{B}}_0^{-\half} \xi_{n+1}\eqsp,
 \qquad 
      \tilde{\theta}_{n+1} = \bar{\B{B}}_0^{-1}\B{B}_0^{\top}\B{\tilde{D}}_{0}^{\half} \Zc_{n+1} + \bar{\B{B}}_0^{-\half} \xi_{n+1}
\eqsp,
\end{equation}
where $\bar{\B{B}}_0$, $\B{B}_0$ and $\B{\tilde{D}}_{0}$ are given in \eqref{eq:def_B_bar_B} and \eqref{eq:def_projection}, respectively.
Note that $X_{n}$ and $\tX_{n}$ are distributed according to $\updelta_{\bx}P_{\bfrho,\bfgamma}^{n}$ and $\updelta_{\tbfx}P_{\bfrho,\bfgamma}^{n}$, respectively.
Hence, by definition of the Wasserstein distance of order 2, it follows that 
\begin{equation}\label{eq:W2X_def}
    \wasserstein{}(\updelta_{\bx}P_{\bfrho,\bfgamma}^{n},\updelta_{\tbfx}P_{\bfrho,\bfgamma}^{n}) \le \mathbb{E}\br{\|X_{n}-\tX_{n}\|^{2}}^{\half}.
\end{equation}
Thus, in this section we focus on upper bounding the squared norm $ \|X_{n}-\tX_{n}\|$ from which we get an explicit bound on the Wasserstein distance thanks to the previous inequality.
\subsubsection{Supporting lemmata}
Note that \Cref{ass:well_defined_density} implies the invertibility of the matrix $\B{B}_0$ defined in \eqref{eq:def_B_bar_B} since we have the existence of $b'\in[b-1]$, such that $\sum_{i=b'+1}^b \lambda_{\min}(\B{A}_i^{\top}\B{A}_i)/\rho_i> 0$ and by the semi-positiveness of the symmetric matrices $\{\B{A}_i^{\top}\B{A}_i\}_{i \in [b]}$, we get that $\lambda_{\min}\prn{\B{B}_0}=\sum_{i=1}^{b}\lambda_{\min}\prn{\B{A}_i^{\top}\B{A}_i}/\rho_i \ge \sum_{i=b'+1}^b \lambda_{\min}(\B{A}_i^{\top}\B{A}_i)/\rho_i$.
To prove \Cref{prop:convergence_rho_gamma} in the case $\max_{i \in [b]}N_i=1$, we first upper bound \eqref{eq:W2X_def} by building upon the following two technical lemmas.
\begin{lemma}\label{lem:geo_decr}
Assume \Cref{ass:well_defined_density} and consider $(X_{n},\tilde{X}_{n})_{n \in\nset}$ defined in \eqref{eq:def:X}. Then, for any $n \in \nset$, it holds almost surely that
\[
\|X_{n+1}- \tilde{X}_{n+1} \|^2 \le (1 + \normLigne[2]{\bar{\B{B}}_0^{-1}\B{B}_0^{\top}\B{\tilde{D}}_{0}^{\half}}) \|Z_{n+1} - \tilde{Z}_{n+1} \|^2 \eqsp.
\]
\end{lemma}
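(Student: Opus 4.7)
The statement follows almost immediately from the synchronous coupling in \eqref{eq:coupling_theta}, so my plan is short. First, I would decompose the squared Euclidean norm on $\R^d\times\R^p$ into its two blocks,
\[
\|X_{n+1}-\tilde X_{n+1}\|^2 = \|\theta_{n+1}-\tilde\theta_{n+1}\|^2 + \|Z_{n+1}-\tilde Z_{n+1}\|^2\eqsp,
\]
which is valid because $X_{n+1}$ and $\tilde X_{n+1}$ are obtained by stacking the $\theta$- and $Z$-coordinates in the same order.

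Next I would exploit the crucial feature of the coupling: in \eqref{eq:coupling_theta}, the same Gaussian vector $\xi_{n+1}$ is used to produce both $\theta_{n+1}$ and $\tilde\theta_{n+1}$. Subtracting the two updates, the term $\bar{\B{B}}_0^{-1/2}\xi_{n+1}$ cancels and one is left with the deterministic identity
\[
\theta_{n+1}-\tilde\theta_{n+1} = \bar{\B{B}}_0^{-1}\B{B}_0^{\top}\B{\tilde{D}}_0^{1/2}\,(Z_{n+1}-\tilde Z_{n+1})\eqsp.
\]
Applying the operator-norm bound $\|Mu\|\le\|M\|\,\|u\|$ with $M=\bar{\B{B}}_0^{-1}\B{B}_0^{\top}\B{\tilde{D}}_0^{1/2}$ yields $\|\theta_{n+1}-\tilde\theta_{n+1}\|^2\le \|\bar{\B{B}}_0^{-1}\B{B}_0^{\top}\B{\tilde{D}}_0^{1/2}\|^2\,\|Z_{n+1}-\tilde Z_{n+1}\|^2$. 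Note that the matrix $\bar{\B{B}}_0$ is invertible under \Cref{ass:well_defined_density} (as recalled just before the lemma), so $M$ is well-defined.

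Plugging this bound back into the block decomposition and factoring out $\|Z_{n+1}-\tilde Z_{n+1}\|^2$ delivers the claimed inequality, almost surely with respect to the noise variables. There is no real obstacle here: the key point is simply that the synchronous coupling of the $\theta$-update (same $\xi_{n+1}$) turns the difference $\theta_{n+1}-\tilde\theta_{n+1}$ into a pure linear function of $Z_{n+1}-\tilde Z_{n+1}$, which reduces the convergence analysis of $(X_n)$ to that of the marginal $(Z_n)$, as announced in the discussion preceding \Cref{prop:convergence_rho_gamma}.
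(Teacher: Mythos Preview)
Your proof is correct and follows essentially the same approach as the paper: subtract the two $\theta$-updates in \eqref{eq:coupling_theta} so that the common Gaussian $\xi_{n+1}$ cancels, obtain $\theta_{n+1}-\tilde\theta_{n+1}=\bar{\B{B}}_0^{-1}\B{B}_0^{\top}\B{\tilde{D}}_0^{1/2}(Z_{n+1}-\tilde Z_{n+1})$, then use the block decomposition of $\|X_{n+1}-\tilde X_{n+1}\|^2$ together with the operator-norm bound.
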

\begin{proof}
Let $n\ge 0$. By \eqref{eq:coupling_theta}, we have $\theta_{n+1}-\tilde{\theta}_{n+1} 
= \bar{\B{B}}_0^{-1}\B{B}_0^{\top}\B{\tilde{D}}_{0}^{\half} (Z_{n+1} - \tilde{Z}_{n+1})$ which implies that 
\begin{equation*}
\|X_{n+1}- \tilde{X}_{n+1} \|^2 = \|\theta_{n+1} - \tilde{\theta}_{n+1} \|^2 + \|Z_{n+1} - \tilde{Z}_{n+1} \|^2 \le (1+\|\bar{\B{B}}_0^{-1}\B{B}_0^{\top}\B{\tilde{D}}_{0}^{\half}\|^2) \|Z_{n+1} - \tilde{Z}_{n+1} \|^2 \eqsp. 
\end{equation*}
\end{proof}
Define the contraction factor
\begin{equation}\label{eq:def:kappa}
  \kappa_{\bfgamma} \txts = \max_{i\in[b]} \ac{\absLigne{1 - \gamma_i m_i} \vee
  \absLigne{1 - \gamma_i (M_i+1/\rho_i)}}\eqsp.
\end{equation}
Then, the following result holds.
\begin{lemma}\label{lem:W2_dirac}
Assume \Cref{ass:well_defined_density}-\Cref{ass:supp_fort_convex} and let $\gammabf \in (\rset_+^*)^b$.
Then for any $\bx=(\bz^{\top},\btheta^{\top})^{\top},\tilde{\bx}=(\tilde{\bz}^{\top},\tilde{\btheta}^{\top})^{\top}$, with $(\btheta,\tilde{\btheta}) \in (\mathbb{R}^d)^2$ and $(\bz,\tilde{\bz}) \in (\mathbb{R}^p)^2$, for any $n\ge 1$, we have
\begin{multline*}
\wasserstein{}(\updelta_{\bx} P_{\bfrho, \bfgamma}^{n}, \updelta_{\tilde{\bx}} P_{\bfrho, \bfgamma}^{n})
\le \kappa_{\bfgamma}^{n-1} \cdot \prbigg{(1 + \|\bar{\B{B}}_0^{-1}\B{B}_0^{\top}\B{\tilde{D}}_{0}^{\half}\|^2) \cdot \frac{\max_{i \in [b]}\{\gamma_i\}}{\min_{i \in [b]}\{\gamma_i\}}}^{\half}\\
\times\br{\kappa_{\bfgamma}\|\bz-\tilde{\bz}\| + \|\B{D}_{\bfgamma/\sqrt{\bfrho}}\B{B}_0\|\|\btheta-\tilde{\btheta}\|}\eqsp,
\end{multline*}
where $\B{D}_{\bfgamma/\sqrt{\bfrho}}$ is defined as in \eqref{eq:def_diag} with $\bfgamma/\sqrt{\bfrho} = (\gamma_1/\rho_1^{\half},\ldots,\gamma_b/\rho_b^{\half})$, $\bar{\B{B}}_0$, $\B{B}_0$, $P_{\bfrho, \bfgamma}$ and $\kappa_{\bfgamma}$ are given in \eqref{eq:def_B_bar_B}, \eqref{eq:def:prop2:P_rho_gamma}, \eqref{eq:def:kappa}, respectively.
\end{lemma}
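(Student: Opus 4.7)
The plan starts from \Cref{lem:geo_decr}, which together with \eqref{eq:W2X_def} reduces the goal to estimating $\mathbb{E}[\|Z_n-\tilde{Z}_n\|^2]^{1/2}$ under the synchronous coupling \eqref{eq:def:X}--\eqref{eq:coupling_theta}; the factor $(1+\|\bar{\B{B}}_0^{-1}\B{B}_0^{\top}\B{\tilde{D}}_{0}^{\half}\|^2)^{\half}$ in the claim is inherited from that lemma. Writing $W_n=Z_n-\tilde{Z}_n$ and $F_n=\theta_n-\tilde{\theta}_n$, I would invoke \Cref{ass:supp_fort_convex} to represent $\grad U_i(Z_n^i)-\grad U_i(\tilde{Z}_n^i)=H_n^i W_n^i$ for a symmetric matrix $H_n^i$ with $m_i\B{I}_{d_i}\preceq H_n^i\preceq M_i\B{I}_{d_i}$. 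The recursion \eqref{eq:coupling_Z} then rewrites as the linear relation $W_{n+1}=S_n W_n+ TF_n$, with $T=\B{D}_{\bfgamma/\sqrt{\bfrho}}\B{B}_0$ and $S_n$ block-diagonal having $i$-th block $\B{I}_{d_i}-\gamma_i(H_n^i+\B{I}_{d_i}/\rho_i)$; a block-wise eigenvalue bound yields $\|S_n\|\le \kappa_{\bfgamma}$ and hence the one-step estimate $\|W_1\|\le \kappa_{\bfgamma}\|\bz-\tilde{\bz}\|+\|T\|\|\btheta-\tilde{\btheta}\|$, which is precisely the bracket appearing in the target bound.

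For $n\ge 1$, the common Gaussian noise $\xi_n$ in \eqref{eq:coupling_theta} cancels in the difference, so $F_n= K W_n$ with $K=\bar{\B{B}}_0^{-1}\B{B}_0^{\top}\B{\tilde{D}}_{0}^{\half}$. The recursion therefore closes as $W_{n+1}=(S_n+TK)W_n$ for $n\ge 1$, and the remaining task is to bound the operator norm of $S_n+TK$. The naive bound $\|S_n\|+\|TK\|$ is visibly too loose, so I would switch to the weighted Euclidean norm $\|W\|_{1/\bfgamma}^2=\sum_i\|W^i\|^2/\gamma_i$. Under the change of variables $\tilde{W}=\B{D}_{\bfgamma}^{-1/2}W$, a direct computation yields $\tilde{W}_{n+1}=\tilde{S}_n\tilde{W}_n$ with the \emph{symmetric} matrix
\begin{equation*}
\tilde{S}_n=\B{I}_p-\mathrm{diag}(\gamma_i H_n^i)-\bigl[\mathrm{diag}((\gamma_i/\rho_i)\B{I}_{d_i})-E\bar{\B{B}}_0^{-1}E^{\top}\bigr], \qquad E=\B{D}_{\sqrt{\bfgamma/\bfrho}}\B{B}_0.
\end{equation*}

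The crux, and the step I expect to be the main obstacle, is the positive semidefinite inequality $E\bar{\B{B}}_0^{-1}E^{\top}\preceq\mathrm{diag}((\gamma_i/\rho_i)\B{I}_{d_i})$. I would establish it via Cauchy--Schwarz: for any $W\in\R^p$, setting $u=\bar{\B{B}}_0^{-1}E^{\top}W\in\R^d$ and using the identity $\bar{\B{B}}_0=\sum_j \B{A}_j^{\top}\B{A}_j/\rho_j$,
\begin{equation*}
\langle u,E^{\top}W\rangle=\sum_i\frac{\sqrt{\gamma_i}}{\rho_i}\langle \B{A}_i u,W^i\rangle\le \Bigl(\sum_i \frac{\|\B{A}_i u\|^2}{\rho_i}\Bigr)^{\half}\Bigl(\sum_i \frac{\gamma_i\|W^i\|^2}{\rho_i}\Bigr)^{\half}= \langle u,E^{\top}W\rangle^{\half}\Bigl(\sum_i \frac{\gamma_i\|W^i\|^2}{\rho_i}\Bigr)^{\half},
\end{equation*}
from which the inequality follows after cancelling $\langle u,E^\top W\rangle^{\half}$. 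Combined with the elementary $\B{0}\preceq\mathrm{diag}(\gamma_i H_n^i)\preceq \mathrm{diag}(\gamma_i M_i\B{I}_{d_i})$, this sandwiches $\tilde{S}_n$ blockwise between $(1-\gamma_i(M_i+1/\rho_i))\B{I}_{d_i}$ and $(1-\gamma_i m_i)\B{I}_{d_i}$, so $\|\tilde{S}_n\|\le \kappa_{\bfgamma}$ pathwise. Iterating yields $\|W_n\|_{1/\bfgamma}\le \kappa_{\bfgamma}^{n-1}\|W_1\|_{1/\bfgamma}$ for $n\ge 1$, and the norm comparisons $\|W\|^2\le \max_i\gamma_i\cdot \|W\|_{1/\bfgamma}^2$ and $\|W\|_{1/\bfgamma}^2\le \|W\|^2/\min_i\gamma_i$ contribute the $(\max_i\gamma_i/\min_i\gamma_i)^{\half}$ factor. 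Assembling this with \Cref{lem:geo_decr} and the one-step bound on $\|W_1\|$ produces the claimed estimate.
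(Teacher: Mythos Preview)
Your proposal is correct and follows essentially the same route as the paper: the synchronous coupling, the passage to the weighted norm $\|\cdot\|_{\B{D}_{\bfgamma}^{-1}}$, the symmetric conjugated matrix, the block sandwich giving $\kappa_{\bfgamma}$, the iteration down to $n=1$, and the final assembly via \Cref{lem:geo_decr} are all identical. The one spot where the paper is slightly more economical is your ``crux'': your inequality $E\bar{\B{B}}_0^{-1}E^{\top}\preceq\B{D}_{\bfgamma/\bfrho}$ is exactly $\B{D}_{\bfgamma/\bfrho}^{\half}(\B{I}_p-\B{P}_0)\B{D}_{\bfgamma/\bfrho}^{\half}\succeq 0$, and the paper obtains this in one line by observing that $\B{P}_0=\B{B}_0\bar{\B{B}}_0^{-1}\B{B}_0^{\top}$ is the orthogonal projector onto the range of $\B{B}_0$ (defined in \eqref{eq:def_projection}), so $\B{I}_p-\B{P}_0\succeq 0$ immediately. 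Your Cauchy--Schwarz argument is a valid alternative derivation of the same fact.
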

\begin{proof}
  Consider $(X_{k},\tilde{X}_{k})_{k \in\nset}$ defined in \eqref{eq:def:X}. By \eqref{eq:W2X_def} and \Cref{lem:geo_decr}, we need to bound $ ( \| Z_{k}-\tilde{Z}_{k}\|)_{k \in\nset}$. 
  Let $n \in \N^*$.
  For any $i \in [b]$, we have by \eqref{eq:coupling_Z}, that 
\begin{align}
\label{eq:zi_recursion}
  Z_{n+1}^{i}-\tilde{Z}_{n+1}^{i} &= \prBig{1-\frac{\gamma_i}{\rho_i}}(Z_{n}^{i}-\tilde{Z}_{n}^{i}) + \frac{\gamma_i}{\rho_i}\B{A}_i(\theta_{n}-\tilde{\theta}_{n}) - \gamma_i\pr{\grad U_i(Z_{n}^{i})-\grad U_i(\tilde{Z}_{n}^{i})}\eqsp. 
\end{align}
Since $U_i$ is twice differentiable, we have
\[
\grad U_i(Z_{n}^{i})-\grad U_i(\tilde{Z}_{n}^{i}) = \int_0^1 \grad^2 U_i(\tilde{Z}_{n}^{i} + t(Z_{n}^{i}-\tilde{Z}_{n}^{i}))\,\dd t \cdot (Z_{n}^{i}-\tilde{Z}_{n}^{i})\eqsp.
\]
Using $\theta_{n}-\tilde{\theta}_{n} 
= \bar{\B{B}}_0^{-1}\B{B}_0^{\top}\B{\tilde{D}}_{0}^{\half} (Z_{n} - \tilde{Z}_{n})$, it follows that
\begin{multline*}
  Z_{n+1}^{i}-\tilde{Z}_{n+1}^{i} = \pr{\Big[1-\frac{\gamma_i}{\rho_i}\Big]\B{I}_{d_i} - \gamma_i\int_0^1 \grad^2 U_i(\tilde{Z}_{n}^{i} + t(Z_{n}^{i}-\tilde{Z}_{n}^{i}))\,\dd t}(Z_{n}^{i}-\tilde{Z}_{n}^{i})\\
  + \frac{\gamma_i}{\rho_i}\B{A}_i\bar{\B{B}}_0^{-1}\B{B}_0^{\top}\B{\tilde{D}}_{0}^{\half}(Z_{n}-\tilde{Z}_{n})\eqsp. 
\end{multline*}
% Therefore, for $n \ge 1$, it is sufficient to study $(Z_{k},\tilde{Z}_{k})_{k \in \nset}$.
Consider the  $p\times p$ block diagonal matrix defined by
\begin{align*}%\label{eq:DU}
  &\B{D}_{U,n} = \mathrm{diag}\pr{\gamma_1\int_0^1 \grad^2 U_1(\tilde{Z}_{n}^{1} + t(Z_{n}^{1}-\tilde{Z}_{n}^{1}))\,\dd t, \cdots, \gamma_b\int_0^1 \grad^2 U_b(\tilde{Z}_{n}^{b} + t(Z_{n}^{b}-\tilde{Z}_{n}^{b}))\,\dd t}\eqsp.
\end{align*}
With the projection matrix $\B{P}_{0}$ defined in \eqref{eq:def_projection}, the difference $Z_{n+1}-\tilde{Z}_{n+1}$ can be rewritten as
\begin{align*}
  %\label{eq:diff_z}
  Z_{n+1}-\tilde{Z}_{n+1} &= \pr{\B{I}_p - \B{D}_{U,n} - \B{D}_{\bfgamma}^{\half}\B{D}_{\bfgamma/\bfrho}^{\half}(\B{I}_{p}-\B{P}_{0})\B{\tilde{D}}_{0}^{\half}}(Z_{n}-\tilde{Z}_{n})\eqsp,
\end{align*}
 where $\B{D}_{\bfgamma/\bfrho}$ is defined as in \eqref{eq:def_diag} with $\bfgamma/\bfrho = (\gamma_1/\rho_1,\ldots,\gamma_b/\rho_b)$.
Since $\B{D}_{U,n}$ commutes with $\B{D}_{\bfgamma}$ and $\B{P}_{0}$ is an orthogonal projection matrix, using  \Cref{ass:supp_fort_convex}-\ref{ass:1}-\ref{ass:2}, we get 
\begin{align*}
  &\| Z_{n+1}-\tilde{Z}_{n+1}\|_{\B{D}_{\bfgamma}^{-1}}\\
  &= \|\B{D}_{\bfgamma}^{-\half}(\B{D}_{\bfgamma}^{\half}\B{D}_{\bfgamma}^{-\half} - \B{D}_{\bfgamma}^{\half} \B{D}_{U,n}\B{D}_{\bfgamma}^{-\half}
- \B{D}_{\bfgamma}^{\half} \B{D}_{\bfgamma/\bfrho}^{\half} (\B{I}_{p}-\B{P}_{0}) \B{D}_{\bfgamma/\bfrho}^{\half}\B{D}_{\bfgamma}^{-\half})(Z_{n}-\tilde{Z}_{n})\| \nonumber \\
  &\le \|\B{I}_p - \B{D}_{U,n} - \B{D}_{\bfgamma/\bfrho}^{\half} \pr{\B{I}_{p} - \B{P}_{0}}\B{D}_{\bfgamma/\bfrho}^{\half}\| \|Z_{n}-\tilde{Z}_{n}\|_{\B{D}_{\bfgamma}^{-1}}\eqsp.
\end{align*}
Note that  \Cref{ass:well_defined_density} and \Cref{ass:supp_fort_convex} and the fact that $\B{P}_{0}$ is an orthogonal projector, so $  \B{0}_p\preccurlyeq\B{I}_p - \B{P}_{0}$, imply that
\begin{multline*}
  \mathrm{diag}(\{1- \gamma_1(M_1 + 1/\rho_1)\}\B{I}_{d_1}, \cdots, \{1- \gamma_b (M_b + 1/\rho_b)\} \B{I}_{d_b})
\preccurlyeq
\B{I}_p - \B{D}_{U,n} - \B{D}_{\bfgamma/\bfrho}^{\half} \pr{\B{I}_{p}-\B{P}_{0}}\B{D}_{\bfgamma/\bfrho}^{\half}
\\  \preccurlyeq
  \mathrm{diag}\pr{\left\{1-  \gamma_1 m_1 \right\}\B{I}_{d_1}, \hdots, \left\{1-  \gamma_b m_b\right\} \B{I}_{d_b}} \eqsp.
\end{multline*}
Therefore, we get
\begin{align}
  \| Z_{n+1}-\tilde{Z}_{n+1}\|_{\B{D}_{\bfgamma}^{-1}}  &\le \max_{i \in [b]}\ac{\max (|1-\gamma_i m_i|,|1-\gamma_i(M_i+1/\rho_i)|)}\|Z_{n}-\tilde{Z}_{n}\|_{\B{D}_{\bfgamma}^{-1}}\nonumber\\ 
  &= \kappa_{\bfgamma}\|Z_{n}-\tilde{Z}_{n}\|_{\B{D}_{\bfgamma}^{-1}}\eqsp. \label{eq:contraction}
\end{align}
An immediate induction shows, for any $n \ge 1$,
\begin{equation}
  \label{eq:contract_n_2}
  \|Z_{n}-\tilde{Z}_{n}\|_{\B{D}_{\bfgamma}^{-1}} \le \kappa_{\bfgamma}^{n-1} \|Z_1-\tilde{Z}_1\|_{\B{D}_{\bfgamma}^{-1}}\eqsp.
\end{equation}
In addition, by \eqref{eq:zi_recursion}, we have for any $i \in [b]$, 
\begin{align*}
  Z_{1}^{i}-\tilde{Z}_{1}^{i} &= \prBig{1-\frac{\gamma_i}{\rho_i}}(\bz_{i}-\tilde{\bz}_{i}) + \frac{\gamma_i}{\rho_i}\B{A}_i(\btheta-\tilde{\btheta}) - \gamma_i(\grad U_i(\bz_{i})-\grad U_i(\tilde{\bz}_{i}))\eqsp.
\end{align*}
It follows that $Z_{1}-\tilde{Z}_{1} = (\B{I}_p - \B{D}_{\bfgamma/\bfrho} - \B{D}_{U,0})(\bz-\tilde{\bz}) + \B{D}_{\bfgamma/\bfrho}\B{\tilde{D}}_{0}^{-\half}\B{B}_0(\btheta-\tilde{\btheta})$.
Using the triangle inequality and \Cref{ass:supp_fort_convex}  gives
\begin{align*}
\|Z_1-\tilde{Z}_1\|_{\B{D}_{\bfgamma}^{-1}} 
&\txts\le (\min_{i \in [b]} \{\gamma_i\})^{-\half}\|(\B{I}_p - \B{D}_{\bfgamma/\bfrho} - \B{D}_{U,0})(\bz-\tilde{\bz}) + (\B{D}_{\bfgamma/\sqrt{\bfrho}}\B{B}_0(\btheta-\tilde{\btheta})\| \\
&\txts\le (\min_{i \in [b]} \{\gamma_i\})^{-\half} \Big[\|\B{I}_p - \B{D}_{\bfgamma/\bfrho} - \B{D}_{U,0}\|\|\bz-\tilde{\bz}\| + \|\B{D}_{\bfgamma/\sqrt{\bfrho}}\B{B}_0\|\|\btheta-\tilde{\btheta}\|\Big] \\
&\txts \le (\min_{i \in [b]} \{\gamma_i\})^{-\half} \Big[\max_{i\in[b]}\{\absn{1-\gamma_i(m_i+1/\rho_i)},\absn{1-\gamma_i(M_i+1/\rho_i)}\}\|\bz-\tilde{\bz}\| \\
&\qquad+ \|\B{D}_{\bfgamma/\sqrt{\bfrho}}\B{B}_0\|\|\btheta-\tilde{\btheta}\|\Big]\\
&\txts \le (\min_{i \in [b]} \{\gamma_i\})^{-\half} \brBig{\kappa_{\gamma}\|\bz-\tilde{\bz}\| + \|\B{D}_{\bfgamma/\sqrt{\bfrho}}\B{B}_0\|\|\btheta-\tilde{\btheta}\|}\eqsp.
\end{align*}
Combining \eqref{eq:contract_n_2} and the previous inequality and using \Cref{lem:geo_decr}, we get for $n \ge 1$,
\begin{multline*}
\| X_{n} - \tilde{X}_{n} \|^2
\le \kappa_{\bfgamma}^{2(n-1)} \pr{1 + \|\bar{\B{B}}_0^{-1}\B{B}_0^{\top}\B{\tilde{D}}_{0}^{\half} \|^2} \frac{\max_{i \in [b]}\{\gamma_i\}}{\min_{i \in [b]}\{\gamma_i\}}\\
\times \brBig{\kappa_{\gamma}\|\bz-\tilde{\bz}\| + \|\B{D}_{\bfgamma/\sqrt{\bfrho}}\B{B}_0\|\|\btheta-\tilde{\btheta}\|}^2\eqsp.
\end{multline*}
The proof is concluded by \eqref{eq:W2X_def}.
\end{proof}
\subsubsection{Specific case of \Cref{prop:convergence_rho_gamma}}
Based on the previous lemmata, we provide in what follows a specific instance of \Cref{prop:convergence_rho_gamma} in the scenario where $\max_{i \in [b]} N_i = 1$.
\begin{proposition}\label{prop:convergence_N_1}
     Assume \Cref{ass:well_defined_density}-\Cref{ass:supp_fort_convex} and let $\bfgamma\in(\R_+^*)^{b}$ such that, for any $i\in[b]$, $\gamma_i\le 2(m_i+M_i+1/\rho_i)^{-1}$.
     Then, $P_{\bfrho,\bfgamma}$ defined in \eqref{eq:def:prop2:P_rho_gamma} admits a unique stationary distribution $\Pi_{\bfrho,\bfgamma}$ and for any $\bx=(\bz^{\top},\btheta^{\top})^{\top}$ with $\btheta \in \mathbb{R}^d$, $\bz \in \mathbb{R}^p$ and any $n \in \N^*$, we have
    \begin{align*}
      W_{2}^2(\updelta_{\B{x}} P_{\bfrho,\bfgamma}^{n}, \Pi_{\bfrho,\bfgamma}) &\le
      \prbig{1-\min_{i\in[b]}\acn{\gamma_im_i}}^{2(n-1)} \prbigg{(1 + \|\bar{\B{B}}_0^{-1}\B{B}_0^{\top}\B{\tilde{D}}_{0}^{\half}\|^2) \cdot \frac{\max_{i \in [b]}\{\gamma_i\}}{\min_{i \in [b]}\{\gamma_i\}}}\\
      &\times \int_{\rset^{d}\times \rset^p}\Big[\prn{1-\min_{i\in[b]}\acn{\gamma_im_i}}\|\bz-\tilde{\bz}\| + \|\B{D}_{\bfgamma/\sqrt{\bfrho}}\B{B}_0\|\|\btheta-\tilde{\btheta}\|\Big]^2 \dd \Pi_{\bfrho,\bfgamma}(\tilde{\bx})\eqsp,
    \end{align*}
    where $\bar{\B{B}}_0,\B{B}_0,\B{\tilde{D}}_{0}$, $P_{\bfrho, \bfgamma}$ are defined in \eqref{eq:def_B_bar_B} and \eqref{eq:def_projection}.
\end{proposition}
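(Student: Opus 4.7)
The plan is to leverage \Cref{lem:W2_dirac} almost directly, with one preliminary algebraic simplification and one standard extension from Dirac initializations to general ones via coupling.

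First I would simplify the contraction rate. Under the assumed step-size restriction $\gamma_i \le 2/(m_i + M_i + 1/\rho_i)$, I want to show that $\kappa_{\bfgamma}$ in \eqref{eq:def:kappa} equals $1 - \min_{i \in [b]} \{\gamma_i m_i\}$. This is purely elementary: the condition forces $\gamma_i m_i \le 2m_i/(m_i+M_i+1/\rho_i) \le 1$, so $|1 - \gamma_i m_i| = 1 - \gamma_i m_i$; and one checks $|1 - \gamma_i(M_i + 1/\rho_i)| \le 1 - 2m_i/(m_i+M_i+1/\rho_i) \le 1 - \gamma_i m_i$ by bounding both $1 - \gamma_i(M_i+1/\rho_i)$ and $\gamma_i(M_i+1/\rho_i) - 1$. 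Taking the max over $i$ yields the claim, and guarantees $\kappa_{\bfgamma} \in [0,1)$.

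Next I would establish existence and uniqueness of $\Pi_{\bfrho,\bfgamma}$. Uniqueness is immediate: given two invariant measures $\mu, \nu \in \mathcal{P}_2$, \Cref{lem:W2_dirac} yields $\wasserstein{}(\mu,\nu) = \wasserstein{}(\mu P_{\bfrho,\bfgamma}^n, \nu P_{\bfrho,\bfgamma}^n) \le \kappa_{\bfgamma}^{n-1} \, C(\mu,\nu) \to 0$. For existence, I would show that for any fixed $\bx$, the sequence $\mu_n = \updelta_{\bx} P_{\bfrho,\bfgamma}^n$ is Cauchy in $(\mathcal{P}_2(\R^{d+p}), \wasserstein{})$. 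Controlling $\wasserstein{}(\mu_n, \mu_{n+m})$ by the synchronous coupling from \eqref{eq:def:X}--\eqref{eq:coupling_theta} together with \Cref{lem:W2_dirac} reduces the Cauchy property to a uniform-in-$m$ second moment bound $\int \|\tilde{\bx}\|^2 \, \mathrm{d}\mu_m(\tilde{\bx}) < \infty$; this follows from a one-step Lyapunov inequality using the strong convexity of \Cref{ass:supp_fort_convex} applied to the LMC step on each worker, combined with the Gaussian $\btheta$-update having bounded variance. Completeness of $(\mathcal{P}_2, \wasserstein{})$ then yields a limit $\Pi_{\bfrho,\bfgamma}$, and invariance follows from continuity of $\mu \mapsto \mu P_{\bfrho,\bfgamma}$ in $\wasserstein{}$.

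Finally, for the quantitative bound I would use the synchronous coupling of \eqref{eq:def:X}--\eqref{eq:coupling_theta} with $(X_0, \tilde{X}_0) = (\bx, \tilde{\bx})$ where $\tilde{\bx} \sim \Pi_{\bfrho,\bfgamma}$. Then $X_n \sim \updelta_{\bx} P_{\bfrho,\bfgamma}^n$ and $\tilde{X}_n \sim \Pi_{\bfrho,\bfgamma}$ by invariance, so
\begin{equation*}
\wasserstein{}^2(\updelta_{\bx} P_{\bfrho,\bfgamma}^n, \Pi_{\bfrho,\bfgamma}) \le \int \mathbb{E}\brbig{\|X_n - \tilde{X}_n\|^2 \mid \tilde{X}_0 = \tilde{\bx}} \, \mathrm{d}\Pi_{\bfrho,\bfgamma}(\tilde{\bx}) \eqsp.
\end{equation*}
Inside the integral, I would invoke the pathwise bound proved en route to \Cref{lem:W2_dirac} (the combination of \eqref{eq:contract_n_2} with \Cref{lem:geo_decr} and the explicit one-step bound on $\|Z_1 - \tilde{Z}_1\|_{\B{D}_{\bfgamma}^{-1}}$), which is almost sure and hence transfers under the integral. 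Substituting $\kappa_{\bfgamma} = 1 - \min_{i \in [b]} \{\gamma_i m_i\}$ from the first paragraph yields exactly the stated bound.

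The main obstacle is the existence argument: \Cref{lem:W2_dirac} only controls distances between two Dirac-initialized chains, so turning that into Cauchyness in $\wasserstein{}$ requires the supplementary uniform second-moment bound along $(\mu_n)$, which is the one genuinely new piece of work. Everything else is routine manipulation of \Cref{lem:W2_dirac}.
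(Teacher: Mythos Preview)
Your proposal is correct and follows essentially the same route as the paper: reduce $\kappa_{\bfgamma}$ to $1-\min_{i\in[b]}\{\gamma_i m_i\}$ under the step-size condition, then combine \Cref{lem:W2_dirac} with a Wasserstein-contraction argument for existence/uniqueness of $\Pi_{\bfrho,\bfgamma}$ and the quantitative bound. The only difference is that what you identify as the ``main obstacle'' (existence via a Cauchy-sequence plus Lyapunov moment bound) is dispatched in the paper by a direct citation to \citet[Lemma 20.3.2, Theorem 20.3.4]{douc2018markov}, which packages exactly this argument for Wasserstein-contracting kernels---so your extra work is sound but unnecessary.
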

\begin{proof}
For any $i\in[b]$, note that the condition $0 <\gamma_i\le 2(m_i+M_i+1/\rho_i)^{-1}$ ensures that $\kappa_{\bfgamma}=1-\min_{i\in[b]}\acn{\gamma_im_i}\in(0,1)$ and the proof follows from \Cref{lem:W2_dirac} combined with \citet[Lemma 20.3.2, Theorem 20.3.4]{douc2018markov}.
\end{proof}

% !TEX root = main.tex

\subsection{Multiple local LMC iterations}

In this section, we consider the general case $\max_{i \in [b]} N_i \ge 1$. For this, we introduce the conditional Markov transition kernel defined for any $\bfgamma = (\gamma_1,\ldots,\gamma_b)$, $\bfN = (N_1,\ldots,N_b)$, $\btheta \in \Rd$, $\bz = (\bz_{1},\cdots,\bz_{b}) \in \mathbb{R}^{d_1} \times \cdots \times \mathbb{R}^{d_b}$, for $i \in [b]$ and $\mathsf{B}_i \in \mathcal{B}(\mathbb{R}^{d_i})$, by
\begin{equation}
  \label{eq:Q_rho_gamma_N}
Q_{\bfrho, \bfgamma,\bfN}\pr{\bz,\mathsf{B}_1\times\cdots\times\mathsf{B}_b|\btheta}
=\prod_{i=1}^b R_{\rho_i,\gamma_i}^{N_i}(\bz_{i},\mathsf{B}_i|\btheta)\eqsp, 
\end{equation}
where $R_{\rho_i,\gamma_i}$ is defined by \eqref{eq:def_R_rho_gamma}. 
Then, as in the case $\max_{i \in[b]} N_i = 1$, the Gibbs sampler presented in \Cref{algo:ULAwSG} defines a homogeneous Markov chain $X_{n}^{\top} = (\theta_{n}^{\top},Z_{n}^{\top})_{n \ge 1}$ where $Z_{n}^{\top} = ([Z_{n}^1]^{\top},\cdots,[Z_{n}^b]^{\top})$. Indeed, it is easy to show that for any $n \in \nset$ and measurable function $f : \rset^p \to \rset_+$, $\PE[f(Z_{n+1})|X_n] = \int_{\rset^p} f(\zbf)Q_{\bfrho,\gammabf,\bfN}(Z_{n},\dd \zbf|\theta_n)$.
Therefore, $(X_n)_{n \in\nset}$ is associated with the Markov kernel defined, for any $\bx^{\top} = (\btheta^{\top},\bz^{\top})$ and $\msa \in \mcbb(\rset^d)$, $\msb \in \mathcal{B}( \mathbb{R}^p)$, by
\begin{equation}\label{eq:P_rho_gamma_N}
  P_{\bfrho, \bfgamma,\bfN}(\bx,\msa \times \msb) = \int_{\msb} Q_{\bfrho,\bfgamma,\bfN}\pr{\bz,\dd {\tbfz}|\thetabf}\int_{\msa}\Pi_{\bfrho}(\dd \tbtheta|\tbfz) \eqsp,
\end{equation}
where $\Pi_{\bfrho}(\cdot|\tbfz)$ is defined in \eqref{eq:def:Pi_rho_cond}.
We now define a coupling between $\updelta_{\B{x}}P^n_{\bfrho,\bfgamma,\bfN}$ and $\updelta_{\B{\tilde{x}}}P^n_{\bfrho,\bfgamma,\bfN}$ for any $n \ge 1$ and $\B{x}, \B{\tilde{x}} \in \R^{d} \times \R^p$.
Let $(\xi_n)_{n \ge 1}$ be a sequence of i.i.d. $d$-dimensional standard Gaussian random variables independent of the family of independent random variables $\{(\eta_{n}^{i})_{n\ge 1} : i\in [b]\}$ where for any $i \in [b]$ and $n \ge 1$, $\eta_{n}^{i}$ is a $d_i$-dimensional standard Gaussian random variable.
Define by induction the synchronous coupling $(\theta_n, Z_n)_{n \ge 0}, (\tilde{\theta}_n, \Zc_n)_{n \ge 0}$, for any $i\in [b]$ starting from $(\theta_0,\Zb_0) = \bx = ( \btheta, \bz)$, $(\tilde{\theta}_0,\tilde{\Zb}_0) = \tilde{\bx} = ( \tilde{\btheta}, \tilde{\bz})$ and for any $n\ge 0$ by
\begin{equation}\label{eq:cont_coupling}
\begin{aligned}
	&\Zc_{n+1}^{i} = \Yc_{N_i}^{(i, n)}\eqsp,
	\qquad \tilde{\theta}_{n+1} = \bar{\B{B}}_0^{-1}\B{B}_0^{\top}\B{\tilde{D}}_{0}^{\half} \Zc_{n+1} + \bar{\B{B}}_0^{-\half} \xi_{n+1}\eqsp, \\
	&\Zb_{n+1}^{i} = \Yb_{N_i}^{(i, n)}\eqsp,
	\qquad \theta_{n+1} = \bar{\B{B}}_0^{-1}\B{B}_0^{\top}\B{\tilde{D}}_{0}^{\half} \Zb_{n+1} + \bar{\B{B}}_0^{-\half} \xi_{n+1}\eqsp,
\end{aligned}
\end{equation}
where $\bar{\B{B}}_0,\B{B}_0,\B{\tilde{D}}_{0}$ are given by \eqref{eq:def_B_bar_B}-\eqref{eq:def_projection} and $\Yc_{0}^{(i, n)} = \Zc_n^i$, $\Yb_{0}^{(i, n)} = \Zb_n^i$, and for any $k\in\N$
\begin{equation}\label{eq:coupling_process_Y}
\begin{aligned}
&\Yc_{k+1}^{(i, n)} = \Yc_{k}^{(i, n)}-\gamma_i\nabla \tildeU_i(\Yc_{k}^{(i, n)}) + (\gamma_i/\rho_i) \B{A}_i \tilde{\theta}_{n} + \sqrt{2\gamma_i}\eta_{k+1}^{(i, n)}\eqsp,\\
&\Yb_{k+1}^{(i, n)} = \Yb_{k}^{(i, n)} -\gamma_i\nabla \tildeU_i(\Yb_{k}^{(i, n)}) + (\gamma_i/\rho_i) \B{A}_i \theta_{n} + \sqrt{2\gamma_i}\eta_{k+1}^{(i, n)}\eqsp,
\end{aligned}
\end{equation}
where, for any $\bz_i \in \R^{d_i}$, $\tildeU_i$ is defined by
\begin{equation}
	\label{eq:def:cont_Utilde}
	\tildeU_i(\bz_i) = U_i(\bz_i) + (2\rho_i)^{-1}\norm{\bz_i}^2\eqsp.
\end{equation}
For any $n,k\in\N$ consider the $p\times p$ matrices defined by
\begin{align}
&\B{H}_{U,k}^{(n)} = \mathrm{diag}\bigg(\gamma_1\int_0^1 \nabla^2 U_1((1-s) \Yb_{k}^{(1, n)} + s \Yc_{k}^{(1, n)} )\,\dd s, \nonumber\\
&\qquad\qquad\qquad\qquad\qquad\qquad\qquad\qquad\hdots, \gamma_b\int_0^1 \nabla^2 U_b ((1-s) \Yb_{k}^{(b, n)} + s \Yc_{k}^{(b, n)} )\,\dd s\bigg)\eqsp, \nonumber\\
&\B{J}(k) = \mathrm{diag}\pr{\1_{[N_1]}(k+1)\cdot\B{I}_{d_1},\cdots,\1_{[N_b]}(k+1)\cdot\B{I}_{d_b}}\eqsp,\label{eq:def:J}\\
&\B{C}_{k}^{(n)} = \B{J}(k)(\B{D}_{\bfgamma/\bfrho} + \B{H}_{U,k}^{(n)})\eqsp,\label{eq:def:C}\\
&\B{M}_{k+1}^{(n)} = (\B{I}_{p} - \B{C}_{0}^{(n)} )^{-1} \ldots (\B{I}_{p} - \B{C}_{k}^{(n)} )^{-1}\eqsp, \qquad \text{ with } \B{M}_0^{(n)} = \B{I}_p \eqsp.\label{eq:def:M}
\end{align}
Under \Cref{ass:supp_fort_convex}, we have $\|\B{C}_{k}^{(n)}\|\le \max_{i\in[b]}\acn{\gamma_i(M_i+1/\rho_i)}$, thus if we suppose that for any $i\in[b], 0<\gamma_i<(M_i+1/\rho_i)^{-1}$, the matrix $(\B{I}_p-\B{C}_{k}^{(n)})$ is invertible.
In addition, for any $n\in\N, k\ge\max_{i\in[b]}\{N_i\}, \B{C}_k^{(n)}=\mathbf{0}_{p\times p}$, hence the sequence $(\B{M}_{k}^{(n)})_{k\in\N}$ is stationary and we denote its limit by $\B{M}_{\infty}^{(n)}$ which is equal to $\B{M}_{\max_{i\in[b]}\{N_i\}}^{(n)}$. 
\subsubsection{Technical lemmata}

Similarly to \Cref{lem:geo_decr}, the following result shows that it is enough to consider the marginal process $(Z_n,\tilde{Z}_n)_{n \ge 0}$ to control 
\begin{equation}\label{eq:W2X_def2}
    \wasserstein{}(\updelta_{\bx}P_{\bfrho,\bfgamma,\bfN}^{n},\updelta_{\tbfx}P_{\bfrho,\bfgamma,\bfN}^{n}) \le \mathbb{E}\br{\|X_{n}-\tX_{n}\|^{2}}^{\half}\eqsp.
\end{equation}
\begin{lemma}\label{lem:geo_decr_cont1}
Assume \Cref{ass:well_defined_density} and let $\bfN\in(\N^*)^{b}, \gammabf \in (\rset_+^*)^b$.
Then, for any $n \in \nset$, the random variables $X_{n}=(\theta_{n}^{\top}, \Zb_{n}^{\top})^{\top}, \tilde{X}_{n}=(\tilde{\theta}_{n}^{\top}, \Zc_{n}^{\top})^{\top}$ defined in \eqref{eq:cont_coupling} satisfy
\[
\|\tilde{X}_{n+1}-X_{n+1}\|^2 \le (1 + \| \bar{\B{B}}_0^{-1}\B{B}_0^{\top}\B{\tilde{D}}_{0}^{\half} \|^2) \|\Zc_{n+1} - \Zb_{n+1} \|^2\eqsp,
\]
where $\bar{\B{B}}_0,\B{B}_0,\B{\tilde{D}}_{0}$ are defined in \eqref{eq:def_B_bar_B}-\eqref{eq:def_projection}.
\end{lemma}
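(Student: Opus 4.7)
The plan is to observe that this lemma is structurally identical to \Cref{lem:geo_decr} from the $\max_i N_i = 1$ case: the number of inner LMC steps $N_i$ enters only through the definition of $\Zb_{n+1}^i = \Yb_{N_i}^{(i,n)}$ and $\Zc_{n+1}^i = \Yc_{N_i}^{(i,n)}$, whereas the transition from $Z_{n+1}$ back up to $\theta_{n+1}$ in \eqref{eq:cont_coupling} is unchanged. So the argument only needs to exploit the $\theta$-update, not the $Z$-update.

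First I would subtract the two defining equations for $\theta_{n+1}$ and $\tilde\theta_{n+1}$ in \eqref{eq:cont_coupling}. Because the synchronous coupling reuses the same Gaussian vector $\xi_{n+1}$ in both chains, the additive noise $\bar{\B{B}}_0^{-1/2}\xi_{n+1}$ cancels and I obtain, almost surely,
\[
\tilde\theta_{n+1} - \theta_{n+1} = \bar{\B{B}}_0^{-1}\B{B}_0^{\top}\B{\tilde{D}}_{0}^{\half}\,(\Zc_{n+1} - \Zb_{n+1}).
\]
This identity uses only \Cref{ass:well_defined_density} to ensure that $\bar{\B{B}}_0$ is invertible, which has already been established in the discussion preceding \Cref{lem:geo_decr}.

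Next I would write $\|\tilde X_{n+1} - X_{n+1}\|^2 = \|\tilde\theta_{n+1} - \theta_{n+1}\|^2 + \|\Zc_{n+1} - \Zb_{n+1}\|^2$, apply the operator-norm bound to the $\theta$-component, and factor out $\|\Zc_{n+1}-\Zb_{n+1}\|^2$ to conclude
\[
\|\tilde X_{n+1} - X_{n+1}\|^2 \le \bigl(1 + \|\bar{\B{B}}_0^{-1}\B{B}_0^{\top}\B{\tilde{D}}_{0}^{\half}\|^2\bigr)\,\|\Zc_{n+1} - \Zb_{n+1}\|^2.
\]

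There is no real obstacle here; the content of the lemma is simply that the Wasserstein analysis of the full chain $(\theta_n, Z_n)$ reduces to the analysis of the marginal process $(Z_n)$, with a dimension-independent multiplicative constant determined by the conditional Gaussian update \eqref{eq:def:Pi_rho_cond}. The genuinely new work for the $\max_i N_i > 1$ regime will come in the subsequent step, namely controlling $\|\Zc_{n+1}-\Zb_{n+1}\|$ through the product of matrices $\B{M}_k^{(n)}$ defined in \eqref{eq:def:M} that accumulate over the $N_i$ inner iterations.
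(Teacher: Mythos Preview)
Your proposal is correct and matches the paper's approach exactly: the paper simply notes that the proof is identical to that of \Cref{lem:geo_decr} and omits it, and your argument reproduces that proof verbatim (subtract the $\theta$-updates so the shared noise $\xi_{n+1}$ cancels, apply the operator norm, and use the Pythagorean split of $\|\tilde X_{n+1}-X_{n+1}\|^2$).
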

\begin{proof}
The proof is similar to the proof of \Cref{lem:geo_decr} and is omitted.
\end{proof}
To ease notation, for any $i\in[b]$, we consider all along this section the quantities
\begin{align}\label{eq:def_tilde_m_M}
  &\tilde{m}_i=m_i+1/\rho_i\eqsp,
  &\tilde{M}_i= M_i+1/\rho_i \eqsp.
\end{align}
The following lemma provides an explicit expression for $\|\Zc_{n+1}-\Zb_{n+1}\|$ with respect to $\|\Zc_{n}-\Zb_{n}\|$.
\begin{lemma}
\label{lem:contraction_V0}
Assume \Cref{ass:well_defined_density}-\Cref{ass:supp_fort_convex} and let $\bfN\in(\N^*)^{b}, \gammabf \in (\rset_+^*)^b$ such that, for any $i\in[b], \gamma_i < \tilde{M}_i^{-1}$.
Then, for any $n\ge 1$, we have
\begin{multline}
\|\Zc_{n+1}-\Zb_{n+1}\|_{\B{D}_{\bfN\bfgamma}^{-1}} \le \normBig{[\B{M}_{\infty}^{(n)}]^{-1}
+ \sum_{k=0}^{\infty}[\B{M}_{\infty}^{(n)}]^{-1} \B{M}_{k+1}^{(n)}\B{J}(k)\B{D}_{\bfN}^{-\half}\B{D}_{\bfgamma/\bfrho}^{\half} \B{P}_{0} \B{D}_{\bfgamma/\bfrho}^{\half}\B{D}_{\bfN}^{\half}}\\
\times\|\Zc_{n}-\Zb_{n}\|_{\B{D}_{\bfN\bfgamma}^{-1}}\eqsp,\label{eq:contration_V0}
\end{multline}
where $(\B{M}_k^{(n)})_{k \in \N}$ is defined in \eqref{eq:def:M}, $(\tilde{Z}_k, Z_k)_{k\in\N}$ in \eqref{eq:cont_coupling}, $\bfN\bfgamma = (\gamma_1 N_1,\ldots,\gamma_b N_b)$ and $\bfgamma/\bfrho = (\gamma_1 /\rho_1,\ldots,\gamma_b/\rho_b)$.
\end{lemma}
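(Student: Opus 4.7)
The plan is to derive an exact linear expression for $\tilde{Z}_{n+1}-Z_{n+1}$ in terms of $\tilde{Z}_n - Z_n$, then convert to the quoted operator-norm bound by conjugation.

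First, I fix $n\ge 1$ and study the inner recursion \eqref{eq:coupling_process_Y}. Extending each chain by constancy after its $N_i$ local steps (so that the indicator in $\B{J}(k)$ naturally switches off block $i$ once $k\ge N_i$), I subtract the two coupled updates for $\tilde{Y}_k^{(i,n)}$ and $Y_k^{(i,n)}$; the noise cancels. Applying the fundamental theorem of calculus to $\nabla\tilde U_i$ and using $\nabla^2\tilde U_i = \nabla^2 U_i + \rho_i^{-1}\B{I}_{d_i}$ produces exactly the block $\B{D}_{\bfgamma/\bfrho}+\B{H}_{U,k}^{(n)}$. Stacking across workers, this yields the affine recursion
\[
\tilde Y_{k+1}-Y_{k+1} = (\B{I}_p - \B{C}_k^{(n)})(\tilde Y_k - Y_k) + \B{J}(k)\B{D}_{\bfgamma/\sqrt{\bfrho}}\B{B}_0(\tilde\theta_n-\theta_n),
\]
where the assumption $\gamma_i<\tilde M_i^{-1}$ guarantees that each factor $\B{I}_p-\B{C}_k^{(n)}$ is invertible.

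Next, I solve this recursion by induction, being careful that the $\B{C}_k^{(n)}$ do not commute across $k$ but share the same block-diagonal pattern. The telescoping products match the definition \eqref{eq:def:M} of $\B{M}_k^{(n)}$ and give, for every $k\ge 0$,
\[
\tilde Y_k-Y_k = [\B{M}_k^{(n)}]^{-1}(\tilde Z_n - Z_n) + \sum_{j=0}^{k-1}[\B{M}_k^{(n)}]^{-1}\B{M}_{j+1}^{(n)}\B{J}(j)\B{D}_{\bfgamma/\sqrt{\bfrho}}\B{B}_0(\tilde\theta_n-\theta_n).
\]
For $k\ge \max_i N_i$ the sequence is constant and its $i$-th block equals $\tilde Z_{n+1}^{i}-Z_{n+1}^{i}$, so the left-hand side becomes $\tilde Z_{n+1}-Z_{n+1}$ and the matrices stabilise at $\B{M}_\infty^{(n)}$. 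Since $n\ge 1$, the $\theta$-update \eqref{eq:coupling_theta} at step $n$ yields $\tilde\theta_n-\theta_n = \bar{\B{B}}_0^{-1}\B{B}_0^{\top}\B{\tilde{D}}_0^{\half}(\tilde Z_n - Z_n)$ (the Gaussian noise cancels), and using $\B{P}_0 = \B{B}_0\bar{\B{B}}_0^{-1}\B{B}_0^{\top}$ collapses $\B{B}_0(\tilde\theta_n-\theta_n)$ to $\B{P}_0\B{\tilde D}_0^{\half}(\tilde Z_n-Z_n)$. This produces an identity of the form $\tilde Z_{n+1}-Z_{n+1}=M(\tilde Z_n-Z_n)$ with an explicit $M$.

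Finally, I apply the weighted norm: $\|Mv\|_{\B{D}_{\bfN\bfgamma}^{-1}} \le \|\B{D}_{\bfN\bfgamma}^{-\half}M\B{D}_{\bfN\bfgamma}^{\half}\|\,\|v\|_{\B{D}_{\bfN\bfgamma}^{-1}}$. All of $[\B{M}_\infty^{(n)}]^{-1}$, $\B{M}_{j+1}^{(n)}$, $\B{J}(j)$ and $\B{D}_{\bfgamma/\sqrt{\bfrho}}$ are block-diagonal with scalar-times-identity factors (in each $d_i$-block), hence commute with the scalar-diagonal $\B{D}_{\bfN\bfgamma}^{\pm\half}$; the only non-trivial conjugation is around $\B{P}_0$. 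A direct block-by-block check using the scalar identities $\gamma_i/\sqrt{\rho_i}\cdot 1/\sqrt{N_i\gamma_i}= N_i^{-1/2}\sqrt{\gamma_i/\rho_i}$ and $1/\sqrt{\rho_i}\cdot\sqrt{N_i\gamma_i}= N_i^{1/2}\sqrt{\gamma_i/\rho_i}$ gives
\[
\B{D}_{\bfN\bfgamma}^{-\half}\B{D}_{\bfgamma/\sqrt{\bfrho}}\B{P}_0\B{\tilde D}_0^{\half}\B{D}_{\bfN\bfgamma}^{\half} = \B{D}_{\bfN}^{-\half}\B{D}_{\bfgamma/\bfrho}^{\half}\B{P}_0\B{D}_{\bfgamma/\bfrho}^{\half}\B{D}_{\bfN}^{\half}.
\]
The operator-norm bound of the resulting conjugated matrix is exactly the right-hand side of \eqref{eq:contration_V0}. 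The main obstacle is purely bookkeeping: maintaining the correct ordering in the telescoping product and ensuring that the conjugation lands on the precise weighted form stated; there is no analytical subtlety beyond strong convexity and the mean-value identity.
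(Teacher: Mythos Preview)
Your proof is correct and follows essentially the same route as the paper: derive the block recursion for $\tilde Y_k-Y_k$, telescope it using the matrices $\B{M}_k^{(n)}$, substitute $\tilde\theta_n-\theta_n=\bar{\B{B}}_0^{-1}\B{B}_0^{\top}\B{\tilde D}_0^{\half}(\tilde Z_n-Z_n)$, and conjugate by $\B{D}_{\bfN\bfgamma}^{\pm\half}$. One small imprecision: the blocks of $\B{M}_{j+1}^{(n)}$ are not scalar-times-identity (they involve Hessian integrals), but commutation with $\B{D}_{\bfN\bfgamma}^{\pm\half}$ still holds because the latter \emph{is} scalar-times-identity on each $d_i$-block, which suffices.
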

\begin{proof}
Let $n \ge 1$.
By \eqref{eq:coupling_process_Y}, for any $i\in[b], k\in\N$, we obtain
\begin{multline*}%\label{eq:Y_continu}
\Yc_{k+1}^{(i, n)} - \Yb_{k+1}^{(i, n)}
=\prbigg{\B{I}_{d_i} - \gamma_i \int_0^1 \nabla^2 \tildeU_i((1-s) \Yb_{k}^{(i, n)} + s \Yc_{k}^{(i, n)})\,\dd s} (\Yc_{k}^{(i, n)} - \Yb_{k}^{(i, n)})\\
+ (\gamma_i/\rho_i) \B{A}_i (\tilde{\theta}_{n} - \theta_{n})\eqsp.
\end{multline*}
Consider the process $((\Ycr_{k}^{(n)},\Ybr_{k}^{(n)})= \{\Ycr_{k}^{(i,n)}, \Ybr_{k}^{(i,n)}\}_{i=1}^b)_{k\in\N}$ with values in $\R^p\times\R^p$ defined for any $i\in[b]$, $k\ge 0$, by
\begin{align}\label{eq:def:discrete_stoch_approx}
	&\Ycr_k^{(i,n)}=\Yc_{\min(k, N_i)}^{(i,n)}\eqsp,
	&\Ybr_k^{(i,n)}=\Yb_{\min(k, N_i)}^{(i,n)}\eqsp.
\end{align}
By \eqref{eq:cont_coupling}, we have $\B{A}_i(\tilde{\theta}_{n}-\theta_{n})=\B{A}_i\bar{\B{B}}_0^{-1}\B{B}_0^{\top}\B{\tilde{D}}_{0}^{\half}(\Zc_{n}-\Zb_{n})$. Since $\B{B}_{0}^{\top} = [\B{A}^{\top}_{1}/\rho_{1}^{\half} \cdots \B{A}^{\top}_b/\rho_b^{\half}]$ and $\B{P}_{0} = \B{B}_{0} \bar{\B{B}}_{0}^{-1} \B{B}_{0}^{\top}$ is the orthogonal projection matrix defined in \eqref{eq:def_projection}, it follows that
\begin{align}\label{eq:eq:rec_N}
\Ycr_{k+1}^{(n)} - \Ybr_{k+1}^{(n)}
=& \prbig{\B{I}_{p}-\B{C}_{k}^{(n)}}(\Ycr_{k}^{(n)} - \Ybr_{k}^{(n)})
+ \B{J}(k)\B{D}_{\bfgamma/\sqrt{\bfrho}} \B{P}_{0}\B{\tilde{D}}_{0}^{\half} (\Ycr_{0}^{(n)} - \Ybr_{0}^{(n)})\eqsp.
\end{align}
%Define the matrix $\B{M}_{k}^{(n)}$ by $\B{M}_{0}^{(n)} = \B{I}_p$ and for $k\ge 1$,
%$
%\B{M}_{k}^{(n)} = \prbig{\B{I}_{p} - \B{C}_{0}^{(n)}}^{-1} \ldots\prbig{\B{I}_{p} - \B{C}_{k-1}^{(n)}}^{-1}
%$.
Since $\B{D}_{\bfN\bfgamma}$ commutes with $\B{C}_{k}^{(n)}$ and $\B{J}(k)$, multiplying \eqref{eq:eq:rec_N} by $\B{M}_{k+1}^{(n)}\B{D}_{\bfN\bfgamma}^{-\half}$, yields
\begin{multline}\label{eq:to_be_telescoped}
\B{M}_{k+1}^{(n)}\B{D}_{\bfN\bfgamma}^{-\half} (\Ycr_{k+1}^{(n)} - \Ybr_{k+1}^{(n)})
= \B{M}_{k}^{(n)} \B{D}_{\bfN\bfgamma}^{-\half}(\Ycr_{k}^{(n)} - \Ybr_{k}^{(n)})\\
+ \B{M}_{k+1}^{(n)}\B{J}(k)\B{D}_{\bfN}^{-\half}\B{D}_{\bfgamma/\bfrho}^{\half} \B{P}_{0}\B{\tilde{D}}_{0}^{\half} (\Ycr_{0}^{(n)} - \Ybr_{0}^{(n)})\eqsp.
\end{multline}
By definition of the processes in \eqref{eq:cont_coupling}-\eqref{eq:coupling_process_Y} and \eqref{eq:def:discrete_stoch_approx}, we have for $k \ge \max_{i\in[b]}\{N_i\}$, $(\Ycr_{k}^{(n)}, \Ybr_{k}^{(n)})=(\tilde{Z}_{n+1},Z_{n+1})$ and $\B{J}(k)=\mathbf{0}_{p\times p}$.
Therefore summing the previous equality \eqref{eq:to_be_telescoped} yields
\begin{multline*}
\txts\B{M}_{\infty}^{(n)} \B{D}_{\bfN\bfgamma}^{-\half}(\tilde{Z}_{n+1}-Z_{n+1})
= \brbig{\B{M}_{0}^{(n)}
+ \sum_{k=0}^{\infty} \B{M}_{k+1}^{(n)}\B{J}(k)\B{D}_{\bfN}^{-\half}\B{D}_{\bfgamma/\bfrho}^{\half}\B{P}_{0}\B{D}_{\bfgamma/\bfrho}^{\half}\B{D}_{\bfN}^{\half}}\\
\times\B{D}_{\bfN\bfgamma}^{-\half}(\Ycr_{0}^{(n)} - \Ybr_{0}^{(n)})\eqsp.
\end{multline*}
Multiplying this last equality by $[\B{M}_{\infty}^{(n)}]^{-1}$ and applying the norm $\| \cdot\|_{\B{D}_{\bfN\bfgamma}^{-1}}$  concludes the proof.
\end{proof}
The three following lemmata aim at providing an explicit upper bound on \eqref{eq:contration_V0}.
To this end, for $n,k\in\N$ and $i \in [b]$, consider $\B{C}_{k}^{(i, n)}$ corresponding to the $i$-th diagonal block of $\B{C}_{k}^{(n)}$ defined in \eqref{eq:def:C}, \emph{i.e.}
\begin{equation}
  \label{eq:def_B_C_i}
  \B{C}_{k}^{(i, n)}= \1_{[N_i]}(k+1)\gamma_i\ac{\rho_i^{-1}\B{I}_{d_i}+\int_0^1 \nabla^2 U_i((1-s) \Yb_{k}^{(i, n)} + s \Yc_{k}^{(i, n)} )\,\dd s}\in\R^{d_i\times d_i} \eqsp,
\end{equation}
where, for any $n \in \N$ and $i \in [b]$, $(\Yb_{k}^{(i, n)},\Yc_{k}^{(i, n)})_{k \in\N}$ is defined in \eqref{eq:coupling_process_Y}.
Thus, using the definition~\eqref{eq:def:M} of $\B{M}_{k}^{(n)}$, we can write $[\B{M}_{\infty}^{(n)}]^{-1}\B{M}_{k}^{(n)}$ as a block-diagonal matrix $\mathrm{diag}(([\B{M}_{\infty}^{(n)}]^{-1}\B{M}_{k}^{(n)})^1,\ldots,([\B{M}_{\infty}^{(n)}]^{-1}\B{M}_{k}^{(n)})^b)$ where for any $i \in [b]$, $([\B{M}_{\infty}^{(n)}]^{-1}\B{M}_{k}^{(n)})^i= \prod_{l=k}^{N_i-1}(\B{I}_{d_i}-\B{C}_{l}^{(i, n)})\in\R^{d_i\times d_i}$.
\begin{lemma}\label{lem:bound_T1_approx}
Assume \Cref{ass:well_defined_density}-\Cref{ass:supp_fort_convex} and let $\bfN\in(\R_+^*)^b$, $\bfgamma\in(\R_+^*)^b$ such that, for any $i\in[b]$, $\gamma_i<\tilde{M}_i^{-1}$. Then, for any $i \in [b]$, $n \in \nset$ and $k\in[N_i]$, we have
\[
\normn{([\B{M}_{\infty}^{(n)}]^{-1} \B{M}_{k}^{(n)})^i - \B{I}_{d_i} - \txts\sum_{l= k}^{\infty}\B{C}_{l}^{(i, n)}}
\le \exp\{(N_i-k)\gamma_i\tilde{M}_i\} - 1 - (N_i-k)\gamma_i\tilde{M}_i\eqsp,
\]
where $\B{M}_k^{(n)}, \tilde{M}_i$ are defined in \eqref{eq:def:M}, \eqref{eq:def_tilde_m_M} respectively, and $\B{M}_{\infty}^{(n)}$ is the limit of the stationnary sequence $(\B{M}_{k}^{(n)})_{k\in\N}$.
\end{lemma}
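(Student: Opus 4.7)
The plan is to first unpack the definitions to identify $([\B{M}_\infty^{(n)}]^{-1}\B{M}_k^{(n)})^i$ as a finite matrix product of a simple form. Since $\B{C}_l^{(i,n)}$ vanishes as soon as $l\ge N_i$ by the indicator in \eqref{eq:def_B_C_i}, we have $\B{M}_\infty^{(n)}=\B{M}_{N_i}^{(n)}$ on the $i$-th block, and telescoping the product with its inverse gives
\[
([\B{M}_\infty^{(n)}]^{-1}\B{M}_k^{(n)})^i \;=\; (\B{I}_{d_i}-\B{C}_{N_i-1}^{(i,n)})(\B{I}_{d_i}-\B{C}_{N_i-2}^{(i,n)})\cdots(\B{I}_{d_i}-\B{C}_{k}^{(i,n)}) \;=\; \prod_{l=k}^{N_i-1}(\B{I}_{d_i}-\B{C}_{l}^{(i,n)}).
\]
The same vanishing also shows that the infinite sum $\sum_{l=k}^{\infty}\B{C}_l^{(i,n)}$ collapses to the finite sum $\sum_{l=k}^{N_i-1}\B{C}_l^{(i,n)}$, so the quantity to be controlled is the remainder of the first-order Taylor expansion of this product (the displayed signs in the statement should be understood as $\prod_{l=k}^{N_i-1}(\B{I}_{d_i}-\B{C}_{l}^{(i,n)}) - \B{I}_{d_i} + \sum_{l=k}^{\infty}\B{C}_l^{(i,n)}$).

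Next, I would expand the product multilinearly. Writing $n=N_i-k$ and $A_l=\B{C}_l^{(i,n)}$ for brevity,
\[
\prod_{l=k}^{N_i-1}(\B{I}_{d_i}-A_l) \;=\; \B{I}_{d_i} - \sum_{l=k}^{N_i-1} A_l + \sum_{j=2}^{n}(-1)^j \sum_{k\le l_1<\cdots<l_j\le N_i-1} A_{l_j}\cdots A_{l_1}.
\]
The zeroth- and first-order terms cancel against $\B{I}_{d_i}-\sum_l\B{C}_l^{(i,n)}$, leaving exactly the double sum on the right as the quantity whose norm we must estimate. The precise (decreasing) product order is irrelevant for the ensuing bound since we will use only submultiplicativity of the operator norm.

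The final step is a triangle inequality combined with the a priori bound on $\|A_l\|$. Under \Cref{ass:supp_fort_convex}(i), $\|\nabla^2 U_i\|\le M_i$, so from \eqref{eq:def_B_C_i} we obtain $\|\B{C}_l^{(i,n)}\|\le \gamma_i(M_i+1/\rho_i)=\gamma_i\tilde{M}_i$; the hypothesis $\gamma_i<\tilde{M}_i^{-1}$ just ensures that each $(\B{I}_{d_i}-\B{C}_l^{(i,n)})$ is invertible, so that $\B{M}_k^{(n)}$ itself is well-defined. Counting the $\binom{n}{j}$ ordered index tuples in each slice,
\[
\Big\|\sum_{j=2}^{n}(-1)^j\!\!\sum_{k\le l_1<\cdots<l_j\le N_i-1}\!\! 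A_{l_j}\cdots A_{l_1}\Big\| \;\le\; \sum_{j=2}^{n}\binom{n}{j}(\gamma_i\tilde{M}_i)^j \;=\; (1+\gamma_i\tilde{M}_i)^n - 1 - n\gamma_i\tilde{M}_i.
\]
The conclusion then follows from the elementary inequality $(1+x)^n\le \mathrm{e}^{nx}$ for $x\ge 0$, by subtracting the common linear term $1+n\gamma_i\tilde{M}_i$ from both sides and substituting $n=N_i-k$.

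I do not foresee a real obstacle here: the only steps requiring care are (i) the verification that the truncation of the infinite product and the infinite sum at $N_i-1$ is exact, which is built into the indicator in the definition of $\B{C}_l^{(i,n)}$, and (ii) the clean multilinear expansion, whose sign structure is handled uniformly by the triangle inequality so that the non-commutativity of the $A_l$'s plays no role.
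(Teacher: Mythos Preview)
Your proof is correct and follows essentially the same approach as the paper: multilinear expansion of the finite product $\prod_{l=k}^{N_i-1}(\B{I}_{d_i}-\B{C}_l^{(i,n)})$, triangle inequality together with $\|\B{C}_l^{(i,n)}\|\le\gamma_i\tilde{M}_i$, and then the elementary bound $(1+x)^n\le e^{nx}$ (the paper routes this through $\prod_l(1+\|\B{C}_l^{(i,n)}\|)\le\exp(\sum_l\|\B{C}_l^{(i,n)}\|)$ and the monotonicity of $t\mapsto e^t-1-t$, which is equivalent). Your observation about the sign in the statement is also right: the intended quantity is $\prod_l(\B{I}-\B{C}_l)-\B{I}_{d_i}+\sum_l\B{C}_l^{(i,n)}$, consistent with the definition of $\B{R}_k^{(i,n)}$ in \eqref{eq:def:cont_R}.
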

\begin{proof}
Let $n \in \N$, $i \in [b]$ and $k \in [N_i]$. The approximation error between $\prod_{l=k}^{\infty}(\B{I}_{d_i} - \B{C}_{l}^{(i, n)})$ and its linear approximation can be upper bounded as
\begin{align}
\nonumber
&\normbigg{\prod_{l=k}^{\infty}(\B{I}_{d_i} - \B{C}_{l}^{(i, n)}) - \B{I}_{d_i} - \sum_{l=k}^{\infty} \B{C}_{l}^{(i, n)}}
=\normbigg{\sum_{m=2}^{\infty}(-1)^{m}\sum_{k\le l_1<\cdots<l_m}\B{C}_{l_1}^{(i,n)}\cdots \B{C}_{l_m}^{(i,n)}} \\
\nonumber
&  \qquad \qquad \le \sum_{m=2}^{\infty}\sum_{k\le l_1<\cdots<l_m}\|\B{C}_{l_1}^{(i,n)}\|\cdots \|\B{C}_{l_m}^{(i,n)}\|
= \prod_{l= k}^{\infty}(1 +\| \B{C}_{l}^{(i, n)}\|) - 1 - \sum_{l\ge k} \|\B{C}_{l}^{(i, n)}\|\\
&\qquad \qquad \le \exp\prbigg{\sum_{l=  k}^{\infty}\|\B{C}_{l}^{(i, n)}\|} - 1 - \sum_{l=k}^{\infty}\| \B{C}_{l}^{(i, n)}\|\eqsp,\nonumber%\label{eq:bound:cont_R_exp}
\end{align}
where the products and the sums are well defined since for any $l\ge N_i$, we have $\B{C}_{l}^{(i, n)}=\mathbf{0}_{d_i}$.
Finally, the proof is concluded using that $x\mapsto \exp(x) - 1 - x$ is increasing on $\R$ and for $l\in \N$, $\| \B{C}_{l}^{(i, n)}\| \le \gamma_i\tilde{M}_i\1_{[N_i]}(l+1)$ from \Cref{ass:supp_fort_convex}-\ref{ass:1}.
\end{proof}
For any $\bfN=(N_1,\ldots,N_b)\in(\N^*)^b, \bfgamma=(\gamma_1,\ldots,\gamma_b)\in(\R_+^*)^b$, define the $p\times p$ block matrices
\begin{align}
\nonumber
&\B{S}_{1}=\mathrm{diag}(\{1-N_1 \gamma_1\tilde{M}_1\}\B{I}_{d_1}, \cdots, \{1-N_b \gamma_b \tilde{M}_b \} \B{I}_{d_b})\eqsp,\\
\label{eq:def:S}
&\B{S}_{2}=\B{I}_p - \sum_{l=0}^{\infty} \B{J}(l)\B{H}_{U,l}^{(n)}
- (\B{D}_{\bfN} \B{D}_{\bfgamma/\bfrho})^{\half} (\B{I}_p-\B{P}_{0}) (\B{D}_{\bfN} \B{D}_{\bfgamma/\bfrho})^{\half}\eqsp, \\
\nonumber
&\B{S}_{3}=\mathrm{diag}\pr{\left\{1- N_1 \gamma_1 m_1 \right\}\B{I}_{d_1}, \hdots, \left\{1- N_b \gamma_b m_b\right\} \B{I}_{d_b}}\eqsp,
\end{align}
where for any $i\in[b]$, $\tilde{M}_i$ is defined in \eqref{eq:def_tilde_m_M} and $\B{P}_{0}, \B{J}(l), \B{H}_{U,l}^{(n)}$ are defined in \eqref{eq:def_projection}, \eqref{eq:def:cont_J}, \eqref{eq:def:cont_H}, respectively.
\begin{lemma}\label{lem:bound:S}
Assume \Cref{ass:well_defined_density}-\Cref{ass:supp_fort_convex}. Then, for any $\bfN\in(\N^*)^{b}, \gammabf \in (\rset_+^*)^b$, we have
\[
\B{S}_{1}\preccurlyeq\B{S}_{2}\preccurlyeq\B{S}_{3}\eqsp.
\]
As a result, under the additional assumption, for any $i \in [b]$, $\gamma_iN_i \le 2/(m_i + M_i + 1/\rho_i)$, we get
\begin{equation}\label{eq:proof_contrac_N_expansion_1}
	\norm{\B{S}_2}
	\le 1- \min_{i\in[b]} \{N_i\gamma_i m_i\}\eqsp.
  \end{equation}
\end{lemma}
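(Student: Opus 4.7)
The approach is to decompose $\B{S}_2$ as $\B{I}_p$ minus two symmetric positive semi-definite matrices and to control each block separately in the Loewner order. First, I would argue that $\B{S}_2$ is itself symmetric: each summand $\B{J}(l)\B{H}_{U,l}^{(n)}$ is block-diagonal, with blocks that are averages of Hessians of $U_i$ and therefore symmetric by \Cref{ass:supp_fort_convex}; the matrix $\B{I}_p-\B{P}_{0}$ is an orthogonal projection hence symmetric; and $(\B{D}_{\bfN}\B{D}_{\bfgamma/\bfrho})^{\half}$ is a positive-diagonal matrix. Symmetry of $\B{S}_2$ is needed below to convert the sandwich into an operator-norm bound.

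Second, I would bound each of the two removed pieces. Since the $i$-th diagonal block of $\B{J}(l)$ equals $\1_{[N_i]}(l+1)\B{I}_{d_i}$, the infinite sum collapses to $l\in\{0,\ldots,N_i-1\}$, and the $i$-th block of $\sum_{l=0}^{\infty}\B{J}(l)\B{H}_{U,l}^{(n)}$ reads as a sum of $N_i$ terms of the form $\gamma_i\int_0^1 \nabla^2 U_i(\cdot)\,\dd s$. Under \Cref{ass:supp_fort_convex}, each such term lies between $\gamma_i m_i\B{I}_{d_i}$ and $\gamma_i M_i\B{I}_{d_i}$, so the $i$-th block of the full sum is sandwiched between $N_i\gamma_i m_i\B{I}_{d_i}$ and $N_i\gamma_i M_i\B{I}_{d_i}$. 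For the projection piece, the orthogonal-projection inequality $\B{0}_p\preccurlyeq\B{I}_p-\B{P}_{0}\preccurlyeq\B{I}_p$ and conjugation by the positive-diagonal matrix $(\B{D}_{\bfN}\B{D}_{\bfgamma/\bfrho})^{\half}$ (which preserves the Loewner order) yield
\begin{equation*}
\B{0}_p\preccurlyeq (\B{D}_{\bfN}\B{D}_{\bfgamma/\bfrho})^{\half}(\B{I}_p-\B{P}_{0})(\B{D}_{\bfN}\B{D}_{\bfgamma/\bfrho})^{\half}\preccurlyeq \mathrm{diag}\bigl(N_1\gamma_1\rho_1^{-1}\B{I}_{d_1},\ldots,N_b\gamma_b\rho_b^{-1}\B{I}_{d_b}\bigr)\eqsp.
\end{equation*}
Subtracting both bounds from $\B{I}_p$ and recalling $\tilde{M}_i=M_i+1/\rho_i$ delivers $\B{S}_1\preccurlyeq\B{S}_2\preccurlyeq\B{S}_3$ directly.

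For the operator-norm bound \eqref{eq:proof_contrac_N_expansion_1}, the symmetry of $\B{S}_2$ together with the Loewner sandwich implies that its spectrum lies in $[\min_{i\in[b]}(1-N_i\gamma_i\tilde{M}_i),\max_{i\in[b]}(1-N_i\gamma_i m_i)]$, so that
\[
\normn{\B{S}_2}\le \max\bigl(1-\min_{i\in[b]}\acn{N_i\gamma_i m_i},\;\max_{i\in[b]}\acn{N_i\gamma_i\tilde{M}_i}-1\bigr)\eqsp.
\]
The extra hypothesis $\gamma_iN_i\le 2/(m_i+M_i+1/\rho_i)$ rewrites as $N_i\gamma_i(m_i+\tilde{M}_i)\le 2$ for every $i$, which gives $N_i\gamma_i\tilde{M}_i-1\le 1-N_i\gamma_i m_i\le 1-\min_{j\in[b]} N_j\gamma_j m_j$. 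Both arguments of the above maximum are therefore bounded by $1-\min_{i\in[b]}N_i\gamma_i m_i$, which is the claimed inequality. The argument is essentially Loewner-order bookkeeping; the only delicate point is ensuring that the lower sandwich from $\B{S}_1$ remains tight enough after taking absolute values of eigenvalues, and this is precisely what the assumption on $\gamma_iN_i$ guarantees.
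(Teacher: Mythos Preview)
Your proof is correct and follows essentially the same approach as the paper: bound the Hessian sum blockwise using \Cref{ass:supp_fort_convex}, bound the projection piece via $\B{0}_p\preccurlyeq\B{I}_p-\B{P}_0\preccurlyeq\B{I}_p$ conjugated by the positive diagonal, and combine to get the Loewner sandwich. Your treatment of \eqref{eq:proof_contrac_N_expansion_1} is in fact slightly more careful than the paper's, which argues that the step-size condition makes $\B{S}_1$ positive definite (so that $\|\B{S}_2\|$ equals its top eigenvalue), whereas you directly bound both ends of the spectrum in absolute value; this avoids having to check the sign of $1-N_i\gamma_i\tilde{M}_i$.
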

\begin{proof}
Since $\B{P}_{0}$ is an orthogonal projection defined in \eqref{eq:def_projection}, we have $\B{P}_{0}\preccurlyeq\B{I}_p$, therefore we easily get
\[
\mathbf{0}_{p\times p}
\preccurlyeq (\B{D}_{\bfN} \B{D}_{\bfgamma/\bfrho})^{\half} (\B{I}_p-\B{P}_{0}) (\B{D}_{\bfN} \B{D}_{\bfgamma/\bfrho})^{\half}
\preccurlyeq \B{D}_{\bfN} \B{D}_{\bfgamma/\bfrho}
\]
and \Cref{ass:supp_fort_convex}-\ref{ass:1}-\ref{ass:2} imply 
\[
\mathrm{diag}(N_1\gamma_1m_1 \B{I}_{d_1},\cdots,N_b\gamma_bm_b \B{I}_{d_b})
\preccurlyeq \sum_{l=0}^{\infty} \B{J}(l)\B{H}_{U,l}^{(n)}
\preccurlyeq \mathrm{diag}(N_1\gamma_1M_1 \B{I}_{d_1},\cdots,N_b\gamma_bM_b \B{I}_{d_b})\eqsp.
\]
Substracting these previous inequalities and adding $\B{I}_p$ complete the first part of the proof.
The additional condition, for any $i \in [b]$, $\gamma_i N_i \le 2/(m_i + M_i + 1/\rho_i)$, ensures that $\B{S}_1$ is definite-positive.
Since $\B{S}_1 \preceq \B{S}_2$, we deduce that $\B{S}_2$ is symmetric positive-definite as well.
Then, $\norm{\B{S}_2}$ is equal to the largest eigenvalue of $\B{S}_2$.
The inequality $\B{S}_2 \preceq \B{S}_3$ concludes the second part of the proof.
\end{proof}
For any $\bfN=(N_1,\ldots,N_b)\in(\N^*)^b, \bfgamma=(\gamma_1,\ldots,\gamma_b)\in(\R_+^*)^b$, define
\begin{comment} % \vp{tighter version}
\begin{multline}\label{eq:def:r}
r_{\bfgamma, \bfrho, \bfN}
= \max_{i\in[b]}\{N_i \gamma_i/\rho_i\1_{N_i>1}\}\max_{i\in[b]}\{N_i\gamma_i\tilde{M}_i\1_{N_i>1}\} \Big(1/2+\max_{i\in[b]}\{N_i\gamma_i\tilde{M}_i\1_{N_i>1}\}\Big)\\
+ 4\max_{i\in[b]}\{N_i\gamma_i\tilde{M}_i\1_{N_i>1}\}^2\eqsp,
\end{multline}
\end{comment}
\begin{multline}\label{eq:def:r}
r_{\bfgamma, \bfrho, \bfN}
= \max_{i\in[b]}\{N_i \gamma_i/\rho_i\}\max_{i\in[b]}\{N_i\gamma_i\tilde{M}_i\} \Big(1/2+\max_{i\in[b]}\{N_i\gamma_i\tilde{M}_i\}\Big)
+ 4\max_{i\in[b]}\{N_i\gamma_i\tilde{M}_i\}^2\eqsp,
\end{multline}
where $\tilde{M}_i$ is defined in \eqref{eq:def_tilde_m_M}.
\begin{lemma}\label{lem:bound:cont_T1}
Assume \Cref{ass:well_defined_density}-\Cref{ass:supp_fort_convex}. Let $\bfN\in(\N^*)^{b}, \gammabf \in (\rset_+^*)^b$ such that, for any $i \in [b]$,  $N_i\gamma_i\le 2/(m_i + \tilde{M}_i)$ and $\gamma_i<\tilde{M}_i^{-1}$. Then, for any $n\in\N$, we have
\begin{align*}
\| [\B{M}_{\infty}^{(n)}]^{-1}
+ \txts\sum_{k=0}^{\infty}[\B{M}_{\infty}^{(n)}]^{-1} \B{M}_{k+1}^{(n)}\B{J}(k)\B{D}_{\bfN}^{-\half}\B{D}_{\bfgamma/\bfrho}^{\half} \B{P}_{0} \B{D}_{\bfgamma/\bfrho}^{\half} \B{D}_{\bfN}^{\half}\|
&\le  1 - \min_{i\in[b]} \{N_i \gamma_i m_i\} + r_{\bfgamma, \bfrho, \bfN}\eqsp,
\end{align*}
where $\B{P}_{0}$, $\B{D}_{\bfgamma/\bfrho}$, $\B{J}(k), \B{M}_{k}^{(n)}$ and $r_{\bfgamma, \bfrho, \bfN}$ are defined in \eqref{eq:def_projection}, \eqref{eq:def:J}, \eqref{eq:def:M} and \eqref{eq:def:r}, respectively.
\end{lemma}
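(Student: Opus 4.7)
The strategy is to apply the first-order Taylor expansion of $[\B{M}_{\infty}^{(n)}]^{-1}\B{M}_{k}^{(n)}$ supplied by \Cref{lem:bound_T1_approx}, to identify the resulting leading term as exactly the matrix $\B{S}_{2}$ from \eqref{eq:def:S}, and then to control the residuals. Writing $\B{K}_k = \B{J}(k)\B{D}_{\bfN}^{-\half}\B{D}_{\bfgamma/\bfrho}^{\half}\B{P}_{0}\B{D}_{\bfgamma/\bfrho}^{\half}\B{D}_{\bfN}^{\half}$, and introducing the abbreviations $\alpha = \max_{i \in [b]}\{N_i\gamma_i/\rho_i\}$, $\beta = \max_{i \in [b]}\{N_i\gamma_i\tilde{M}_i\}$, the step-size hypothesis $N_i\gamma_i \le 2/(m_i+\tilde{M}_i)$ implies $\beta\le 2$, while $\gamma_i<\tilde{M}_i^{-1}$ ensures that $(\B{I}_p - \B{C}_k^{(n)})$ is invertible, so \Cref{lem:bound_T1_approx} applies to yield, for any $k\ge 0$,
\[
[\B{M}_{\infty}^{(n)}]^{-1}\B{M}_{k}^{(n)} = \B{I}_p - \textstyle\sum_{l\ge k}\B{C}_{l}^{(n)} + \B{E}_{k}^{(n)} \eqsp,
\]
where $\B{E}_{k}^{(n)}$ is block-diagonal with $\|(\B{E}_{k}^{(n)})^{i}\|\le \exp\{(N_i-k)^{+} \gamma_i\tilde{M}_i\} - 1 - (N_i-k)^{+} \gamma_i\tilde{M}_i$.

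Substituting this expansion into the matrix whose norm is sought, and using the identities $\sum_{k\ge 0}\B{J}(k)\B{D}_{\bfN}^{-\half} = \B{D}_{\bfN}^{\half}$ and $\sum_{l\ge 0}\B{C}_l^{(n)} = \B{D}_{\bfN}\B{D}_{\bfgamma/\bfrho} + \sum_{l\ge 0}\B{J}(l)\B{H}_{U,l}^{(n)}$, the constant-plus-first-order part collapses to $\B{S}_{2}$, producing the decomposition
\[
[\B{M}_{\infty}^{(n)}]^{-1} + \textstyle\sum_{k\ge 0}[\B{M}_{\infty}^{(n)}]^{-1}\B{M}_{k+1}^{(n)}\B{K}_k
= \B{S}_{2} - \B{R}_{\mathrm{cr}} + \B{E}_{0}^{(n)} + \textstyle\sum_{k\ge 0}\B{E}_{k+1}^{(n)}\B{K}_k\eqsp,
\]
with $\B{R}_{\mathrm{cr}} = \sum_{l\ge 1}\B{C}_l^{(n)}\sum_{k < l}\B{K}_k$. \Cref{lem:bound:S} gives $\|\B{S}_{2}\|\le 1 - \min_{i\in[b]}\{N_i\gamma_im_i\}$, so it remains to bound the three residuals by $r_{\bfgamma,\bfrho,\bfN}$.

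For $\B{R}_{\mathrm{cr}}$, both $\B{C}_l^{(n)}$ and $\sum_{k<l}\B{J}(k)\B{D}_{\bfN}^{-\half}\B{D}_{\bfgamma/\bfrho}^{\half}$ are block-diagonal: with $\|\B{C}_l^{(i,n)}\| \le \gamma_i\tilde{M}_i$ for $l < N_i$ and $\sum_{l = 0}^{N_i - 1} l \le N_i^2/2$, the $i$-th block has norm at most $(1/2)N_i^{3/2}\gamma_i\tilde{M}_i (\gamma_i/\rho_i)^{\half} = (1/2)(N_i\gamma_i\tilde{M}_i)(N_i\gamma_i/\rho_i)^{\half} \le \sqrt{\alpha}\,\beta/2$, whence multiplication by $\|\B{P}_0\| \le 1$ and $\|\B{D}_{\bfgamma/\bfrho}^{\half}\B{D}_{\bfN}^{\half}\| \le \sqrt{\alpha}$ delivers $\|\B{R}_{\mathrm{cr}}\| \le \alpha\beta/2$. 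The bound $e^x - 1 - x \le 4x^2$ on $[0, 2]$ (valid since $\beta \le 2$) gives $\|\B{E}_0^{(n)}\| \le 4\beta^2$. The last residual is treated analogously: $\B{E}_{k+1}^{(n)}\B{J}(k)\B{D}_{\bfN}^{-\half}\B{D}_{\bfgamma/\bfrho}^{\half}$ is block-diagonal, and after summing over $k$ using $\sum_{j = 0}^{N_i - 1}(e^{j\gamma_i\tilde{M}_i} - 1 - j\gamma_i\tilde{M}_i)$ together with a sharper form of $e^x-1-x \le Cx^2$ tailored to the interval $[0,\beta]$, combined with the collapsing $N_i^{5/2}(\gamma_i\tilde{M}_i)^2(\gamma_i/\rho_i)^{\half} \le \sqrt{\alpha}\,\beta^2$, a final multiplication by $\sqrt{\alpha}$ yields $\|\sum_k \B{E}_{k+1}^{(n)}\B{K}_k\| \le \alpha\beta^2$. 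Summing the three contributions produces exactly $r_{\bfgamma,\bfrho,\bfN} = \alpha\beta/2 + \alpha\beta^2 + 4\beta^2$. The central technical difficulty is that $\B{P}_0$ is not block-diagonal: block-wise operator-norm estimates must be carefully propagated through the sandwich $\B{D}(\cdot)\B{P}_0\B{D}(\cdot)$ via sub-multiplicativity and $\|\B{P}_0\|\le 1$, and the non-trivial cross term $\B{R}_{\mathrm{cr}}$ only emerges once the linear approximation is substituted into every factor $[\B{M}_\infty^{(n)}]^{-1}\B{M}_{k+1}^{(n)}$ simultaneously and the resulting double sum is reindexed.
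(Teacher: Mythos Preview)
Your approach is essentially identical to the paper's proof: the same first-order expansion of $[\B{M}_{\infty}^{(n)}]^{-1}\B{M}_{k}^{(n)}$ (the paper names the remainder $\B{R}_{k}^{(n)}$ instead of $\B{E}_{k}^{(n)}$), the same four-term decomposition with $\B{S}_{2}$ as the leading part, and the same block-diagonal plus sub-multiplicativity estimates on the three residuals yielding exactly $\alpha\beta/2$, $4\beta^{2}$, and $\alpha\beta^{2}$. The only cosmetic difference is that the paper controls $\sum_{k}\|\B{R}_{k}^{(i,n)}\|$ through the integral comparison $(N_i\gamma_i\tilde{M}_i)^{-1}\int_{0}^{N_i\gamma_i\tilde{M}_i}(\rme^{t}-1-t)\,\dd t \le (N_i\gamma_i\tilde{M}_i)^{2}(\rme^{N_i\gamma_i\tilde{M}_i}+1)/12$ combined with $\rme^{2}+1\le 12$, rather than your pointwise $\rme^{x}-1-x\le Cx^{2}$ on $[0,2]$, but both routes land on the same $\alpha\beta^{2}$ contribution.
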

\begin{proof}
Let $n \in \nset$. For any $k \in \N$, define
\begin{equation}\label{eq:def:cont_R}
\B{R}_{k}^{(n)} = \prod_{l=k}^{\infty}(\B{I}_p - \B{C}_{l}^{(n)}) - \B{I}_p + \sum_{l=k}^{\infty} \B{C}_{l}^{(n)}\eqsp, \qquad   \B{R}_{k}^{(i, n)} = \prod_{l=k}^{\infty}(\B{I}_{d_i} - \B{C}_{l}^{(i, n)}) - \B{I}_{d_i} + \sum_{l=k}^{\infty} \B{C}_{l}^{(i, n)} \eqsp, \quad i \in [b]\eqsp,
\end{equation}
where $(\B{C}_{l}^{(i, n)})_{l \in \N}$ is defined in \eqref{eq:def_B_C_i} and remark that the products and the sums are well defined since for any $l\ge N_i$, we have $\B{C}_{l}^{(i, n)}=\mathbf{0}_{d_i}$.
By noting, for any $k\in [\max_{i\in[b]}N_i]$, that
$
[\B{M}_{\infty}^{(n)}]^{-1} \B{M}_{k}^{(n)}
= \prod_{l=k}^{\infty}(\B{I}_p - \B{C}_{l}^{(n)})
$,
it follows that
$
[\B{M}_{\infty}^{(n)}]^{-1} \B{M}_{k}^{(n)} = \B{I}_p - \sum_{l=k}^{\infty} \B{C}_{l}^{(n)} + \B{R}_{k}^{(n)}$.
Since for any $i\in[b], l\ge N_i$, $\B{R}_{k}^{(i, n)}=\B{0}_{d_i}$, thus we have $\B{J}(k)\B{R}_{k+1}^{(n)} =\B{R}_{k+1}^{(n)}$. In addition, using that $\B{M}_{0}^{(n)} = \B{I}_p$, $\B{C}_{l}^{(n)}=\B{J}(l)(\B{D}_{\bfgamma/\bfrho}+\B{H}_{U,l}^{(n)})$, $\B{D}_{\bfN}=\sum_{k=0}^{\infty}\B{J}(k)$, $\B{D}_{\bfN}\B{C}_{l}^{(n)}=\B{C}_{l}^{(n)}\B{D}_{\bfN}$, we get 
\begin{align}
\nonumber
&[\B{M}_{\infty}^{(n)}]^{-1}
+ \sum_{k=0}^{\infty}[\B{M}_{\infty}^{(n)}]^{-1} \B{M}_{k+1}^{(n)} \B{J}(k)\B{D}_{\bfN}^{-\half}\B{D}_{\bfgamma/\bfrho}^{\half} \B{P}_{0} \B{D}_{\bfgamma/\bfrho}^{\half}\B{D}_{\bfN}^{\half} \\
\nonumber
=& \ \B{I}_p - \sum_{l=0}^{\infty} \B{C}_{l}^{(n)}
+ \sum_{k=0}^{\infty} \B{J}(k)\B{D}_{\bfN}^{-\half}\B{D}_{\bfgamma/\bfrho}^{\half} \B{P}_{0} \B{D}_{\bfgamma/\bfrho}^{\half}\B{D}_{\bfN}^{\half} - \sum_{k=0}^{\infty} \sum_{l=k+1}^{\infty} \B{J}(k)\B{D}_{\bfN}^{-\half}\B{C}_{l}^{(n)} \B{D}_{\bfgamma/\bfrho}^{\half} \B{P}_{0} \B{D}_{\bfgamma/\bfrho}^{\half}\B{D}_{\bfN}^{\half} \\
\nonumber
&+ \B{R}_{0}^{(n)} + \sum_{k=0}^{\infty} \B{R}_{k+1}^{(n)} \B{J}(k) \B{D}_{\bfN}^{-\half}\B{D}_{\bfgamma/\bfrho}^{\half} \B{P}_{0} \B{D}_{\bfgamma/\bfrho}^{\half}\B{D}_{\bfN}^{\half} \\
\nonumber
=& \ \B{I}_p - \sum_{l=0}^{\infty}\B{J}(l)\B{H}_{U,l}^{(n)}
- \Big(\sum_{k=0}^{\infty} \B{J}(k)\Big)\B{D}_{\bfN}^{-\half} \B{D}_{\bfgamma/\bfrho}^{\half} (\B{I}_p-\B{P}_{0})\B{D}_{\bfgamma/\bfrho}^{\half}\B{D}_{\bfN}^{\half}\\
&- \sum_{l=1}^{\infty} \Big(\sum_{k=0}^{l-1}\B{J}(k)\Big)\B{D}_{\bfN}^{-\half}\B{C}_{l}^{(n)} \B{D}_{\bfgamma/\bfrho}^{\half} \B{P}_{0} \B{D}_{\bfgamma/\bfrho}^{\half}\B{D}_{\bfN}^{\half}
\nonumber
+ \B{R}_{0}^{(n)} + \sum_{k=0}^{\infty} \B{J}(k)\B{D}_{\bfN}^{-\half} \B{R}_{k+1}^{(n)} \B{D}_{\bfgamma/\bfrho}^{\half} \B{P}_{0} \B{D}_{\bfgamma/\bfrho}^{\half}\B{D}_{\bfN}^{\half} \\
\nonumber
=& \ \B{S}_2 - \sum_{l=1}^{\infty}\Big(\sum_{k=0}^{l-1}\B{J}(k)\Big)\B{D}_{\bfN}^{-1}\B{C}_{l}^{(n)} (\B{D}_{\bfN}\B{D}_{\bfgamma/\bfrho})^{\half} \B{P}_{0} (\B{D}_{\bfgamma/\bfrho}\B{D}_{\bfN})^{\half}\nonumber\\
&+ \B{R}_{0}^{(n)} + \sum_{k=1}^{\infty} \B{D}_{\bfN}^{-1} \B{R}_{k}^{(n)} (\B{D}_{\bfN}\B{D}_{\bfgamma/\bfrho})^{\half} \B{P}_{0} (\B{D}_{\bfgamma/\bfrho}\B{D}_{\bfN})^{\half}\eqsp,
\label{eq:bound:cont_terms}
\end{align}
where $\B{S}_2$ is defined in \eqref{eq:def:S}.
We now bound the different terms of \eqref{eq:bound:cont_terms} separately. 
First, using \eqref{eq:proof_contrac_N_expansion_1}, we have
\begin{equation}
	\norm{\B{S}_2} \le 1- \min_{i\in[b]} \{N_i\gamma_i m_i\}\eqsp.
  \end{equation}
By recalling $\B{R}_{0}^{(n)}$ defined in \eqref{eq:def:cont_R}, \Cref{lem:bound_T1_approx} shows that
\begin{align}
  \|\B{R}_{0}^{(n)}\| \le \max_{i \in[b]}\|\B{R}_{0}^{(i, n)}\|
&=\max_{i\in[b]}\bigg\{\Big\|\prod_{l=0}^{\infty}\big(\B{I}_{d_i}-\B{C}_{l}^{(i, n)}\big)-\B{I}_{d_i}-\sum_{l=0}^{\infty}\B{C}_{l}^{(i, n)}\Big\|\bigg\}
  \\
  &\le \max_{i\in[b]}\bigg\{\exp\Big(\sum_{l=0}^{\infty}\|\B{C}_{l}^{(i, n)}\|\Big)-1-\sum_{l=0}^{\infty}\|\B{C}_{l}^{(i, n)}\|\bigg\}\\
&\le \max_{i\in[b]}\big\{\exp\{(N_i-1)\gamma_i\tilde{M}_i\} -1-(N_i-1)\gamma_i\tilde{M}_i\big\} \\
&\le \max_{i\in[b]}\{((N_i-1)\gamma_i\tilde{M}_i)^2 \mathrm{e}^{(N_i-1)\gamma_i\tilde{M}_i}\}/2 \\
&\le 4\max_{i\in[b]}\{(N_i-1)\gamma_i\tilde{M}_i\}^2\eqsp,\label{eq:boundR_0}
\end{align}
where, in the penultimate line, we used for any $t \ge 0$, that $\exp(t) - 1 -t \le t^2\exp(t)/2$.
Regarding the second term of \eqref{eq:bound:cont_terms}, using that $\B{P}_{0}$ is an orthogonal projector, we get 
\begin{multline*}%\label{eq:bound:complex_bound}
\norm{\sum_{l=1}^{\infty}\Big(\sum_{k=0}^{l-1}\B{J}(k)\Big)\B{D}_{\bfN}^{-1}\B{C}_{l}^{(n)}(\B{D}_{\bfN}\B{D}_{\bfgamma/\bfrho})^{\half}\B{P}_{0}(\B{D}_{\bfN}\B{D}_{\bfgamma/\bfrho})^{\half}}\\
\le \max_{i\in[b]}\left(\frac{N_i\gamma_i}{\rho_i}\right)\norm{\sum_{l=1}^{\infty}\Big(\sum_{k=0}^{l-1}\B{J}(k)\Big)\B{D}_{\bfN}^{-1}\B{C}_{l}^{(n)}}\eqsp.
% \vp{tighter version : $\max_{i\in[b]}\left(\frac{N_i\gamma_i\1_{N_i>1}}{\rho_i}\right)\norm{\sum_{l=1}^{\infty}\Big(\sum_{k=0}^{l-1}\B{J}(k)\Big)\B{D}_{\bfN}^{-1}\B{C}_{l}^{(n)}}$.}
\end{multline*}
%\maxime{Normes à resizer.}
Combining the following upper bound
\[
\norm{\sum_{l=1}^{\infty}\Big(\sum_{k=0}^{l-1}\B{J}(k)\Big)\B{D}_{\bfN}^{-1}\B{C}_{l}^{(n)}}
\le \max_{i\in[b]}\acbigg{\frac{1}{N_i}\sum_{l=1}^{\infty}l\|\B{C}_{l}^{(i, n)}\|}
\]
with the fact, for any $i \in [b]$, that $\|\B{C}_{l}^{(i, n)}\|\le\gamma_i\tilde{M}_i\1_{[N_i]}(l+1)$, we get that
\begin{comment}  % \vp{tighter version}
\begin{multline}\label{eq:proof_contrac_N_expansion_0}
\norm{\sum_{l=1}^{\infty}\Big(\sum_{k=0}^{l-1}\B{J}(k)\Big)\B{D}_{\bfN}^{-1}\B{C}_{l}^{(n)}(\B{D}_{\bfN}\B{D}_{\bfgamma/\bfrho})^{\half}\B{P}_{0}(\B{D}_{\bfN}\B{D}_{\bfgamma/\bfrho})^{\half}}\\
\le \max_{i\in[b]} \bigg(\frac{N_i\gamma_i\1_{N_i>1}}{\rho_i}\bigg) \max_{i\in[b]}\acbigg{\frac{N_i\gamma_i\tilde{M}_i\1_{N_i>1}}{2}}\eqsp.
\end{multline}
\end{comment}
\begin{multline}\label{eq:proof_contrac_N_expansion_0}
\norm{\sum_{l=1}^{\infty}\Big(\sum_{k=0}^{l-1}\B{J}(k)\Big)\B{D}_{\bfN}^{-1}\B{C}_{l}^{(n)}(\B{D}_{\bfN}\B{D}_{\bfgamma/\bfrho})^{\half}\B{P}_{0}(\B{D}_{\bfN}\B{D}_{\bfgamma/\bfrho})^{\half}}\\
\le \max_{i\in[b]} \bigg(\frac{N_i\gamma_i}{\rho_i}\bigg) \max_{i\in[b]}\acbigg{\frac{N_i\gamma_i\tilde{M}_i}{2}}\eqsp.
\end{multline}
To upper bound the last term of \eqref{eq:bound:cont_terms}, we start from the following inequality
\begin{equation*}\label{eq:bound:DR}
\norm{\sum_{k=1}^{\infty}\B{D}_{\bfN}^{-1}\B{R}_{k}^{(n)}}
\le \max_{i\in[b]}\acbigg{\frac{1}{N_i}\sum_{k=1}^{N_i-1}\|\B{R}_{k}^{(i, n)}\|}\eqsp.
\end{equation*}
\Cref{lem:bound_T1_approx} shows that  for any $k\in [N_i-1]$ and $i \in [b]$,
$
\txts\|\B{R}_{k}^{(i, n)}\|\le \exp\{(N_i-k)\gamma_i\tilde{M}_i\}-1-(N_i-k)\gamma_i\tilde{M}_i
$. Then, for any $i\in[b]$, we have
\begin{comment} % \vp{tighter result!}
\begin{multline}
\frac{1}{N_i}\sum_{k=1}^{N_i-1}\|\B{R}_{k}^{(i, n)}\| \le \frac{1}{N_i}\sum_{k=1}^{N_i-1}[\exp\{(N_i-k)\gamma_i\tilde{M}_i\}-1-(N_i-k)\gamma_i\tilde{M}_i] \\
\le \1_{N_i>1} (N_i\gamma_i\tilde{M}_i)^{-1}\int_{0}^{N_i\gamma_i\tilde{M}_i}(\rme^{t}-1-t)\,\dd t
\le \frac{(N_i\gamma_i\tilde{M}_i)^2 \1_{N_i>1}}{12}\big(\rme^{N_i\gamma_i\tilde{M}_i}+1\big) \\
\le \max_{i \in [b]}\{(N_i\gamma_i\tilde{M}_i)^2 \1_{N_i>1}\}\eqsp,\label{eq:bound:cont_int_exp}
\end{multline}
\end{comment}
\begin{multline}
\frac{1}{N_i}\sum_{k=1}^{N_i-1}\|\B{R}_{k}^{(i, n)}\| \le \frac{1}{N_i}\sum_{k=1}^{N_i-1}[\exp\{(N_i-k)\gamma_i\tilde{M}_i\}-1-(N_i-k)\gamma_i\tilde{M}_i] \\
\le (N_i\gamma_i\tilde{M}_i)^{-1}\int_{0}^{N_i\gamma_i\tilde{M}_i}(\rme^{t}-1-t)\,\dd t
\le \frac{(N_i\gamma_i\tilde{M}_i)^2}{12}\big(\rme^{N_i\gamma_i\tilde{M}_i}+1\big) \\
\le \max_{i \in [b]}\{(N_i\gamma_i\tilde{M}_i)^2\}\eqsp,\label{eq:bound:cont_int_exp}
\end{multline}
where we have used $\mathrm{e}^2+1\le 12$. Plugging \eqref{eq:bound:cont_int_exp}, \eqref{eq:proof_contrac_N_expansion_0}, \eqref{eq:boundR_0} into \eqref{eq:proof_contrac_N_expansion_1}, we get
\begin{align*}
\normBig{[\B{M}_{\infty}^{(n)}]^{-1}
+ \sum_{k\in\N}[\B{M}_{\infty}^{(n)}]^{-1} \B{M}_{k+1}^{(n)}\B{J}(k)\B{D}_{\bfN}^{-\half}\B{D}_{\bfgamma/\bfrho}^{\half} \B{P}_{0} \B{D}_{\bfgamma/\bfrho}^{\half}\B{D}_{\bfN}^{\half}}
\le 1 - \min_{i\in[b]} \{N_i \gamma_i m_i\} + r_{\bfgamma,\bfrho,\bfN}\eqsp,
\end{align*}
where $r_{\bfgamma,\bfrho,\bfN}$ is defined in \eqref{eq:def:r}.
\end{proof}
\begin{lemma}\label{lem:W2_dirac_contraction}
Assume \Cref{ass:well_defined_density}-\Cref{ass:supp_fort_convex}. Let $\bfN\in(\N^*)^{b}, \gammabf \in (\rset_+^*)^b$ such that, for any $i \in [b]$, $N_i\gamma_i\le 2/(m_i + \tilde{M}_i)$ and $\gamma_i<\tilde{M}_i^{-1}$.
Then, for any $\bx=(\bz^{\top},\btheta^{\top})^{\top},\tilde{\bx}=(\tilde{\bz}^{\top},\tilde{\btheta}^{\top})^{\top} \in \mathbb{R}^{p+d}$, with $(\btheta,\tilde{\btheta}) \in (\mathbb{R}^d)^2, (\bz,\tilde{\bz})\in(\R^p)^2$ and any $n\ge 1$ we have
\begin{multline*}
\wasserstein{}^2(\updelta_{\tilde{\bx}} P_{\bfrho, \bfgamma,\bfN}^{n}, \updelta_{\bx} P_{\bfrho, \bfgamma,\bfN}^{n})
\le (1-\min_{i\in[b]}\{N_i \gamma_i m_i\} + r_{\bfgamma,\bfrho,\bfN})^{2n-2}
(1+\|\bar{\B{B}}_0^{-1}\B{B}_0^{\top}\B{\tilde{D}}_{0}^{\half}\|^2) \\
\times\frac{\max_{i\in[b]}\{N_i\gamma_i\}}{\min_{i\in[b]}\{N_i\gamma_i\}}
\txts\left[\|[\B{M}_{\infty}^{(0)}]^{-1}\|\|\tilde{\bz}-\bz\|
+ (\sum_{i\in[b]}\|\B{A}_i\|/\rho_i)\|\tilde{\btheta}- \btheta\|\right]^2\eqsp,
\end{multline*}
where $\B{B}_0, \bar{\B{B}}_0, \B{\tilde{D}}_{0}, P_{\bfrho, \bfgamma,\bfN}, \B{M}_{\infty}^{(0)}, r_{\bfgamma,\bfrho,\bfN}$ are defined in \eqref{eq:def_B_bar_B}, \eqref{eq:def_projection}, \eqref{eq:P_rho_gamma_N}, \eqref{eq:def:M}, \eqref{eq:def:r}, respectively.
\end{lemma}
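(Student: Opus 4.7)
The proof will proceed via the synchronous coupling $(X_n,\tilde X_n)_{n\ge 0}$ introduced in \eqref{eq:cont_coupling}--\eqref{eq:coupling_process_Y} and the characterization $\wasserstein{}^2(\updelta_{\bx}P_{\bfrho,\bfgamma,\bfN}^n,\updelta_{\tbfx}P_{\bfrho,\bfgamma,\bfN}^n)\le\PE[\|X_n-\tilde X_n\|^2]$ from \eqref{eq:W2X_def2}. My first step would be to invoke \Cref{lem:geo_decr_cont1} to reduce the analysis to the marginal difference $\Zc_n-\Zb_n$, paying the global prefactor $(1+\|\bar{\B{B}}_0^{-1}\B{B}_0^{\top}\B{\tilde{D}}_0^{\half}\|^2)$. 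Since the noise variables are shared across the two chains, this reduction is deterministic and the expectation can be dropped.

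The next step is the workhorse of the proof: iterate the one-step bound of \Cref{lem:contraction_V0}. For every $k\ge 1$, conditionally on the trajectory up to time $k$, the matrices $\B{M}_{\infty}^{(k)}$ and $\B{C}_l^{(k)}$ depend on the coupling but satisfy the estimate from \Cref{lem:bound:cont_T1}, yielding
\[
\|\Zc_{k+1}-\Zb_{k+1}\|_{\B{D}_{\bfN\bfgamma}^{-1}}
\le \prbig{1-\min_{i\in[b]}\{N_i\gamma_i m_i\}+r_{\bfgamma,\bfrho,\bfN}}\|\Zc_k-\Zb_k\|_{\B{D}_{\bfN\bfgamma}^{-1}}
\]
uniformly in the trajectory. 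Noting that the bound in \Cref{lem:bound:cont_T1} is deterministic, a straightforward induction starting from time $k=1$ gives
\[
\|\Zc_n-\Zb_n\|_{\B{D}_{\bfN\bfgamma}^{-1}}
\le \prbig{1-\min_{i\in[b]}\{N_i\gamma_i m_i\}+r_{\bfgamma,\bfrho,\bfN}}^{n-1}\|\Zc_1-\Zb_1\|_{\B{D}_{\bfN\bfgamma}^{-1}}\eqsp.
\]

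It then remains to bound $\|\Zc_1-\Zb_1\|_{\B{D}_{\bfN\bfgamma}^{-1}}$ in terms of the initial data $(\bz,\tbfz,\btheta,\tbtheta)$. This is the delicate step, because at $n=0$ the vectors $\btheta,\tbtheta$ are arbitrary and not yet tied to $\bz,\tbfz$ through the coupling \eqref{eq:cont_coupling}. Solving the inner recursion \eqref{eq:coupling_process_Y} blockwise for each worker $i\in[b]$ yields
\[
\Zc_1^i-\Zb_1^i
=\prod_{k=0}^{N_i-1}(\B{I}_{d_i}-\B{C}_k^{(i,0)})(\tbfz_i-\bz_i)
+ \sum_{k=0}^{N_i-1}\prod_{l=k+1}^{N_i-1}(\B{I}_{d_i}-\B{C}_l^{(i,0)})\cdot(\gamma_i/\rho_i)\B{A}_i(\tbtheta-\btheta)\eqsp.
\]
The first product is precisely the $i$-th diagonal block of $[\B{M}_\infty^{(0)}]^{-1}$, which handles the $\tbfz-\bz$ contribution. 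For the second term, I would use that under $\gamma_i<\tilde M_i^{-1}$ the operator $\B{I}_{d_i}-\B{C}_l^{(i,0)}$ is contractive with $\|\B{I}_{d_i}-\B{C}_l^{(i,0)}\|\le 1-\gamma_i m_i$, and a geometric-sum estimate bounds $\|\sum_{k=0}^{N_i-1}\prod_{l=k+1}^{N_i-1}(\B{I}_{d_i}-\B{C}_l^{(i,0)})\|$ by $(\gamma_i\tilde m_i)^{-1}$. Combined with $(\gamma_i/\rho_i)\|\B{A}_i\|$ this produces a contribution of order $\|\B{A}_i\|/\rho_i$ per worker, summed over $i\in[b]$ after a triangle inequality across blocks.

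The last step is purely bookkeeping: convert between $\|\cdot\|_{\B{D}_{\bfN\bfgamma}^{-1}}$ and the Euclidean norm using $(\max_i\{N_i\gamma_i\})^{-1}\|\cdot\|^2\le\|\cdot\|_{\B{D}_{\bfN\bfgamma}^{-1}}^2\le(\min_i\{N_i\gamma_i\})^{-1}\|\cdot\|^2$, which introduces the $\max/\min$ ratio and lets us chain the estimates into the stated bound on $\wasserstein{}^2$. The main obstacle is the $n=0$ step, where the absence of coupling between $(\btheta,\tbtheta)$ and $(\bz,\tbfz)$ forces the explicit block-diagonal analysis above; once that is handled, the remainder reduces to the already-proven contraction \Cref{lem:contraction_V0}--\Cref{lem:bound:cont_T1} combined with \Cref{lem:geo_decr_cont1}.
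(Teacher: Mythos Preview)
Your approach is essentially the paper's: reduce via \Cref{lem:geo_decr_cont1}, iterate the deterministic contraction \Cref{lem:contraction_V0}$+$\Cref{lem:bound:cont_T1} from time $1$, and treat the step $n=0$ separately by solving the inner recursion blockwise and splitting into a $[\B{M}_\infty^{(0)}]^{-1}$ term and a $\tbtheta-\btheta$ term. There is no missing idea.

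One bookkeeping point: to land exactly on the coefficient $\sum_{i\in[b]}\|\B{A}_i\|/\rho_i$ in the statement, the paper does \emph{not} use the geometric-series estimate $(\gamma_i\tilde m_i)^{-1}$ on $\sum_{k=0}^{N_i-1}\prod_{l=k+1}^{N_i-1}(\B{I}_{d_i}-\B{C}_l^{(i,0)})$; it uses the trivial bound $N_i$. That gives, blockwise and after passing to the $\B{D}_{\bfN\bfgamma}^{-1}$-norm, a contribution $\sqrt{N_i\gamma_i}\,\|\B{A}_i\|/\rho_i$, which after the final conversion $\|\cdot\|^2\le \max_i\{N_i\gamma_i\}\|\cdot\|_{\B{D}_{\bfN\bfgamma}^{-1}}^2$ and pulling out $(\min_i\{N_i\gamma_i\})^{-1/2}$ yields precisely the stated bracket. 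Your geometric bound would instead produce $\|\B{A}_i\|/(\rho_i\tilde m_i)$ per block, a different (sharper, since $\rho_i\tilde m_i\ge 1$) constant that does not match the lemma as written. Replacing $(\gamma_i\tilde m_i)^{-1}$ by $N_i$ in that step fixes the discrepancy; otherwise your argument goes through unchanged.
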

\begin{proof}
Combining \Cref{lem:contraction_V0} and \Cref{lem:bound:cont_T1}, we have for $n \ge 1$,
\begin{align*}
\|\Zc_{n+1}-\Zb_{n+1}\|_{\B{D}_{\bfN\bfgamma}^{-1}}
&\le  (1-\min_{i\in[b]} \{N_i \gamma_i m_i\} + r_{\bfgamma,\bfrho,\bfN})\|\Zc_{n}-\Zb_{n}\|_{\B{D}_{\bfN\bfgamma}^{-1}}\eqsp.
\end{align*}
Thereby, for any $n\ge 1$, we obtain by induction
\begin{equation}\label{eq:bound:induction_z}
\|\Zc_{n}-\Zb_{n}\|_{\B{D}_{\bfN\bfgamma}^{-1}}
\le (1-\min_{i\in[b]} \{N_i \gamma_i m_i\} + r_{\bfgamma,\bfrho,\bfN})^{n-1}\|\Zc_{1}-\Zb_{1}\|_{\B{D}_{\bfN\bfgamma}^{-1}}\eqsp.
\end{equation}
Define the process $((\Ycr_{k}^{(0)},\Ybr_{k}^{(0)})= \{\Ycr_{k}^{(i,0)}, \Ybr_{k}^{(i,0)}\}_{i=1}^b)_{k\in\N}$ with values in $\R^p\times\R^p$ defined for any $i\in[b]$, $k\ge 0$ by
\begin{align*}%\label{eq:def:discrete_stoch_approx_N}
&\Ycr_k^{(i,0)}=\Yc_{\min(k, N_i)}^{(i,0)}\eqsp,
&\Ybr_k^{(i, 0)}=\Yb_{\min(k, N_i)}^{(i,0)}\eqsp.
\end{align*}
By \eqref{eq:cont_coupling}, it follows that for any $i \in [b]$, $(\Zc_1^i,\Zb_1^i) = (\Yc_{N_i}^{(i, 0)},\Yb_{N_i}^{(i, 0)})$ where $(\Yc_{0}^{(i, 0)},\Yb_{0}^{(i, 0)}) = (\Zc^i_0,\Zb^i_0)$. We get by \eqref{eq:coupling_process_Y} for $k\ge 0$, 
\begin{align*}
\Ycr_{k+1}^{(0)} - \Ybr_{k+1}^{(0)}
=& (\B{I}_{p}-\B{C}_{k}^{(0)}) (\Ycr_{k}^{(0)} - \Ybr_{k}^{(0)})
+ \B{J}(k)\B{D}_{\bfgamma/\sqrt{\bfrho}}\B{B}_{0} (\tilde{\theta}_0- \theta_0)\eqsp.
\end{align*}
Hence, for $k \ge 0$, we obtain
\begin{align*}
\B{M}_{k+1}^{(0)}\B{D}_{\bfN\bfgamma}^{-\half} (\Ycr_{k+1}^{(0)} - \Ybr_{k+1}^{(0)})
= \B{M}_{k}^{(0)} \B{D}_{\bfN\bfgamma}^{-\half}(\Ycr_{k}^{(0)} - \Ybr_{k}^{(0)})
+ \B{M}_{k+1}^{(0)}\B{J}(k)\B{D}_{\bfN}^{-\half}\B{D}_{\bfgamma/\bfrho}^{\half}\B{B}_{0} (\tilde{\theta}_0- \theta_0)\eqsp.
\end{align*}
Summing the previous equality gives
\begin{align*}
\B{M}_{\infty}^{(0)} \B{D}_{\bfN\bfgamma}^{-\half}(\Ycr_{\bfN}^{(0)} - \Ybr_{\bfN}^{(0)})
&= \B{M}_{0}^{(0)}\B{D}_{\bfN\bfgamma}^{-\half}(\Ycr_{0}^{(0)} - \Ybr_{0}^{(0)})
+ \sum_{k=0}^{\infty} \B{M}_{k+1}^{(0)}\B{J}(k)\B{D}_{\bfN}^{-\half}\B{D}_{\bfgamma/\bfrho}^{\half}\B{B}_{0} (\tilde{\theta}_0- \theta_0)\eqsp.
\end{align*}
Multiplying by $[\B{M}_{\infty}^{(0)}]^{-1}$ and using the fact that $(\theta_0, \Ybr_{0}^{(0)}) = (\btheta, \bz)$, $(\tilde{\theta}_0, \Ycr_{0}^{(0)}) = (\tilde{\btheta}, \tilde{\bz})$, we get
\begin{align*}
\B{D}_{\bfN\bfgamma}^{-\half}(\Zc_{1}-\Zb_{1})
=& [\B{M}_{\infty}^{(0)}]^{-1}\B{D}_{\bfN\bfgamma}^{-\half} (\tilde{\bz}-\bz) + \sum_{k=0}^{\infty} [\B{M}_{\infty}^{(0)}]^{-1}\B{M}_{k+1}^{(0)}\B{J}(k)\B{D}_{\bfN}^{-\half}\B{D}_{\bfgamma/\bfrho}^{\half}\B{B}_{0}  (\tilde{\btheta}- \btheta)\eqsp.
\end{align*}
Plugging the result in \eqref{eq:bound:induction_z} implies for any $n\ge 1$,
%\maxime{Metrre la bonne ref}
\begin{multline}\label{eq:bound:diff_zc_zb}
\txts\|\Zc_{n}-\Zb_{n}\|_{\B{D}_{\bfN\bfgamma}^{-1}}
\le \txts(1-\min_{i\in[b]}\{N_i \gamma_i m_i\} + r_{\bfgamma,\bfrho,\bfN})^{n-1}\left[\|[\B{M}_{\infty}^{(0)}]^{-1}\|\|\tilde{\bz}-\bz\|_{\B{D}_{\bfN\bfgamma}^{-1}}\right.\\
+ \|\sum_{k=0}^{\infty}[\B{M}_{\infty}^{(0)}]^{-1}\B{M}_{k+1}^{(0)}\B{J}(k)\B{D}_{\bfN}^{-\half}\B{D}_{\bfgamma/\bfrho}^{\half}\B{B}_{0}\|\|\tilde{\btheta}- \btheta\|\bigg]\eqsp.
\end{multline}
By \Cref{ass:supp_fort_convex}-\ref{ass:2} and the definitions of $\B{C}_{l}^{(0)}, \B{M}_{k}^{(0)}$ given in \eqref{eq:def:C}, \eqref{eq:def:M}, we have $\|\B{I}_{d_i}-\B{C}_{l}^{(i, 0)}\|\le 1-\gamma_i\tilde{m}_i$. As a result and since $([\B{M}_{\infty}^{(0)}]^{-1}\B{M}_{k}^{(0)})^i=\prod_{l=0}^{k-1}(\B{I}_{d_i}-\B{C}_{l}^{(i, 0)})$, the triangle inequality implies
%\maxime{Crochet pour inverse de matrice}
\begin{align*}
\normbigg{\sum_{k=0}^{\infty} [\B{M}_{\infty}^{(0)}]^{-1}\B{M}_{k+1}^{(0)}\B{J}(k)\B{D}_{\bfN}^{-\half}\B{D}_{\bfgamma/\bfrho}^{\half}\B{B}_{0}}
&\le \sum_{i\in[b]}\sqrt{\gamma_i/N_i}(\|\B{A}_i\|/\rho_i)\sum_{k=1}^{N_i}\|([\B{M}_{\infty}^{(0)}]^{-1}\B{M}_{k}^{(0)})^i\|\\
&\le \sum_{i\in[b]}\sqrt{\gamma_i/N_i}(\|\B{A}_i\|/\rho_i)\sum_{k=0}^{N_i-1}(1-\gamma_i\tilde{m}_i)^{k}\\
&\le \sum_{i\in[b]}\|\B{A}_i\|\sqrt{N_i\gamma_i}/\rho_i\eqsp.
\end{align*}
Plugging this result in \eqref{eq:bound:diff_zc_zb}, we get
\begin{multline*}
\|\Zc_{n}-\Zb_{n}\|_{\B{D}_{\bfN\bfgamma}^{-1}}
\le (1-\min_{i\in[b]}\{N_i \gamma_i m_i\} + r_{\bfgamma,\bfrho,\bfN})^{n-1}
\Big[\|[\B{M}_{\infty}^{(0)}]^{-1}\|\|\tilde{\bz}-\bz\|_{\B{D}_{\bfN\bfgamma}^{-1}}\\
+ \prBig{\sum_{i\in[b]}\|\B{A}_i\|\sqrt{N_i\gamma_i}/\rho_i}\|\tilde{\btheta}- \btheta\|\bigg]\eqsp.
\end{multline*}
Finally, \Cref{lem:geo_decr_cont1} gives
\begin{multline*}
\|\tilde{X}_n - X_n\|^2
\le (1-\min_{i\in[b]}\{N_i \gamma_i m_i\} + r_{\bfgamma,\bfrho,\bfN})^{2n-2}
\cdot (1+\|\bar{\B{B}}_0^{-1}\B{B}_0^{\top}\B{\tilde{D}}_{0}^{\half}\|^2) \frac{\max_{i\in[b]}\{N_i\gamma_i\}}{\min_{i\in[b]}\{N_i\gamma_i\}} \\
\times \brbigg{\|[\B{M}_{\infty}^{(0)}]^{-1}\|\|\tilde{\bz}-\bz\|
+ \prBig{\sum_{i\in[b]}\|\B{A}_i\|/\rho_i}\|\tilde{\btheta}- \btheta\|}^2 \eqsp.
\end{multline*}
Plugging this result into \eqref{eq:W2X_def2} concludes the proof.
\end{proof}
The following result gives a condition on $\max_{i\in[b]}\{N_i\gamma_i\}$ to simplify the contrating term in \Cref{lem:W2_dirac_contraction} to $1 - \min_{i\in[b]} \{N_i\gamma_i m_i\}/2$. 
To this end, define
\begin{align*}
	A_0 &= \max_{i\in [b]}\{\tilde{M}_i\}\max_{i\in [b]}\{1/\rho_i\}/2 + 4\max_{i\in [b]}\{\tilde{M}_i\}^2\eqsp, \\ %\label{eq:A0} \\
	A_1 &= \max_{i\in [b]}\{\tilde{M}_i\}^2\max_{i\in [b]}\{1/\rho_i\}\eqsp. %\label{eq:A1}
\end{align*}
\begin{lemma}
	\label{lem:contrac}
	Assume \Cref{ass:well_defined_density}-\Cref{ass:supp_fort_convex} and let $c\in\R_+^*, \bfN\in(\N^*)^{b}, \gammabf \in (\rset_+^*)^b$ such that
	\begin{equation}
			\label{eq:Ngamma}
	\begin{aligned}
		&\min_{i\in[b]}\{N_i\gamma_i\}/\max_{i\in[b]}\{N_i\gamma_i\} \ge c\eqsp, \\
		&\max_{i\in[b]}\{N_i\gamma_i\} \le \frac{c\min_{i\in[b]}\{m_i\}}{2A_0+\sqrt{2A_1c\min_{i\in[b]}\{m_i\}}} \wedge \frac{2}{\max_{i\in[b]}\{m_i + M_i + 1/\rho_i\}}\eqsp.
	\end{aligned}
	\end{equation}
	Then, $1 - \min_{i\in[b]} \{N_i\gamma_i m_i\}+ r_{\bfgamma,\bfrho,\bfN} < 1 - \min_{i\in[b]} \{N_i\gamma_i m_i\}/2 < 1$, where $r_{\bfgamma,\bfrho,\bfN}$ is defined in \eqref{eq:def:r}.
\end{lemma}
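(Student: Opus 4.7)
The plan is to reduce the inequality $r_{\bfgamma,\bfrho,\bfN} < \min_{i\in[b]}\{N_i\gamma_i m_i\}/2$ to a scalar quadratic inequality in the single variable $G := \max_{i\in[b]}\{N_i\gamma_i\}$, and then show that the stated bound on $G$ is precisely a convenient sufficient condition for this quadratic inequality to hold.

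First I would bound the three maxima appearing in the definition \eqref{eq:def:r} of $r_{\bfgamma,\bfrho,\bfN}$ by pulling $G$ out of each one: $\max_i\{N_i\gamma_i/\rho_i\} \le G\max_i\{1/\rho_i\}$ and $\max_i\{N_i\gamma_i\tilde M_i\} \le G\max_i\{\tilde M_i\}$. Collecting terms one gets
\begin{equation*}
r_{\bfgamma,\bfrho,\bfN} \le A_0\, G^2 + A_1\, G^3\eqsp,
\end{equation*}
with $A_0,A_1$ as defined just before the lemma. On the other side, the ratio condition $\min_i\{N_i\gamma_i\}/G \ge c$ together with positivity of $m_i$ gives
\begin{equation*}
\min_{i\in[b]}\{N_i\gamma_i m_i\}/2 \;\ge\; cG\min_{i\in[b]}\{m_i\}/2\eqsp.
\end{equation*}
Dividing by $G>0$, it thus suffices to prove the scalar inequality
\begin{equation*}
A_1 G^2 + A_0 G \;<\; \tfrac12 c \min_{i\in[b]}\{m_i\} \;=:\; Y/2\eqsp.
\end{equation*}

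Next I would identify the positive root $G_\star$ of $A_1 G^2 + A_0 G - Y/2 = 0$, namely $G_\star = (-A_0 + \sqrt{A_0^2 + 2A_1 Y})/(2A_1)$, and rewrite it after rationalising the numerator as $G_\star = Y/(A_0 + \sqrt{A_0^2 + 2A_1 Y})$. Using the elementary bound $\sqrt{a+b} \le \sqrt{a}+\sqrt{b}$ with $a=A_0^2$ and $b=2A_1Y$, one obtains $A_0 + \sqrt{A_0^2+2A_1Y} \le 2A_0 + \sqrt{2A_1 Y}$, hence
\begin{equation*}
\frac{Y}{2A_0 + \sqrt{2A_1 Y}} \;\le\; G_\star\eqsp.
\end{equation*}
The hypothesis \eqref{eq:Ngamma} therefore forces $G \le Y/(2A_0 + \sqrt{2A_1 Y}) \le G_\star$. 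A short direct check (setting $s=\sqrt{2A_1 Y}$ and expanding $A_1 G_\star^2 + A_0 G_\star$) even shows that $A_1 G_\star^2 + A_0 G_\star < Y/2$ strictly whenever $A_0,s>0$, which holds here since all the $\tilde M_i, \rho_i^{-1}, m_i$ and $c$ are positive. Monotonicity of $G \mapsto A_1 G^2 + A_0 G$ on $\R_+$ then yields $A_1 G^2 + A_0 G < Y/2$, which is exactly the required inequality and gives $r_{\bfgamma,\bfrho,\bfN} < \min_i\{N_i\gamma_i m_i\}/2$.

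The second stated inequality $1 - \min_i\{N_i\gamma_i m_i\}/2 < 1$ is immediate from $\min_i\{N_i\gamma_i m_i\}>0$. The only genuinely delicate point is the algebraic step relating the ``nice'' bound $Y/(2A_0+\sqrt{2A_1 Y})$ to the exact root of the quadratic; everything else is a matter of monotone substitution. The extra assumption $\max_i\{N_i\gamma_i\} \le 2/\max_i\{m_i+M_i+1/\rho_i\}$ is not needed for the quadratic argument per se; it is carried along because it is the precondition required to invoke \Cref{lem:W2_dirac_contraction} in subsequent use of the lemma (and it also ensures $\gamma_i < \tilde M_i^{-1}$).
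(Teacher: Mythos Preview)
Your argument is correct and follows essentially the same route as the paper: reduce the claim $r_{\bfgamma,\bfrho,\bfN} < \tfrac12\min_i\{N_i\gamma_i m_i\}$ to the quadratic inequality $A_1G^2+A_0G<Y/2$ in $G=\max_i\{N_i\gamma_i\}$, and then check that the hypothesis bound $Y/(2A_0+\sqrt{2A_1Y})$ lies below the positive root. The paper phrases the key algebraic step as the scalar inequality $a+\tfrac{b^2}{2a+b}\le\sqrt{a^2+b^2}$ (with $a=A_0$, $b=\sqrt{2A_1Y}$), which after rearranging is exactly your bound $Y/(2A_0+\sqrt{2A_1Y})\le G_\star$; your use of $\sqrt{a^2+b^2}\le a+b$ after rationalising is the same inequality read from the other side.

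One wording slip: you write ``$A_1G_\star^2+A_0G_\star<Y/2$ strictly'', which is literally false since $G_\star$ is by definition the root where this equals $Y/2$. What your direct check with $s=\sqrt{2A_1Y}$ actually shows (and what you need) is that the quadratic evaluated at the \emph{hypothesis bound} $G_0:=Y/(2A_0+\sqrt{2A_1Y})$ is strictly below $Y/2$ when $A_0,s>0$; equivalently, $G_0<G_\star$ strictly. Fix the symbol there and the proof is clean.
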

\begin{proof}
	The proof is straightforward solving a second order polynomial inequality and using for any $a,{b}\in\R_+^*$, ${a}+\frac{b^2}{2{a}+{b}}\le \sqrt{{a}^2+b^2}$.
\end{proof}
\subsubsection{Proof of \Cref{prop:convergence_rho_gamma}}
The next proposition quantifies the convergence of $\updelta_{\B{x}} P_{\bfrho, \bfgamma,\bfN}^{n}$ towards $\Pi_{\bfrho,\bfgamma}$ in $(\mathcal{P}_2(\Rd), \wasserstein{})$, where $\Pi_{\bfrho,\bfgamma}$ is the stationnary distribution derived in \Cref{prop:convergence_N_1}. In addition, it generalises and gives a more formal statement than \Cref{prop:convergence_rho_gamma}.
% The conditions of \Cref{lem:contrac} can be chosen to show the contractivity of $(1-\min_{i\in[b]}\{N_i \gamma_i m_i\} + r_{\bfgamma,\bfrho,\bfN})$.
%
\begin{proposition}\label{cor:convergence_rho_gamma}
    Assume \Cref{ass:well_defined_density}-\Cref{ass:supp_fort_convex} and let $c>0$ and $\bfgamma = \{\gamma_i\}_{i=1}^b$, $\bfN\in(\N^*)^{b}$ such that \eqref{eq:Ngamma} is satisfied, for any $i\in[b]$, $N_i\gamma_i < 2/\txts\max_{i\in[b]}\{m_i + \tilde{M}_i\}$ and $\gamma_i<\tilde{M}_i^{-1}$.  Then, $P_{\bfrho,\bfgamma,\bfN}$  defined in \eqref{eq:P_rho_gamma_N} admits a unique invariant probability measure $\Pi_{\rhobf,\bfgamma,\bfN}$. In addition, for any $\bx=(\bz^{\top},\btheta^{\top})^{\top}$ whith $(\btheta,\bz) \in \mathbb{R}^d \times\mathbb{R}^p$, any integer $n\ge 1$, we have
\begin{multline*}
\wasserstein{}^2(\updelta_{\B{x}} P_{\bfrho, \bfgamma,\bfN}^{n}, \Pi_{\bfrho,\bfgamma})
\le (1-\min_{i\in[b]}\{N_i \gamma_i m_i\}/2)^{2n-2}
\cdot (1+\|\bar{\B{B}}_0^{-1}\B{B}_0^{\top}\B{\tilde{D}}_{0}^{\half}\|^2) \frac{\max_{i\in[b]}\{N_i\gamma_i\}}{\min_{i\in[b]}\{N_i\gamma_i\}} \\
\times \int_{\R^d\times\R^p}\brbigg{\|[\B{M}_{\infty}^{(0)}]^{-1}\|\|\tilde{\bz}-\bz\|
+ \prBig{\sum_{i\in[b]}\|\B{A}_i\|/\rho_i}\|\tilde{\btheta}- \btheta\|}^2\dd \Pi_{\bfrho,\bfgamma}(\tilde{\bx}) \eqsp,
\end{multline*}
where $\B{B}_0, \bar{\B{B}}_0, \B{M}_{\infty}^{(0)}$ are defined in \eqref{eq:def_B_bar_B}, \eqref{eq:def:M}, respectively.

Finally, if $\bfN= N (1,\ldots,1) = N \bfOne_b$ for $N \geq 1$, then  $\Pi_{\rhobf,\bfgamma,\bfN} =  \Pi_{\rhobf,\bfgamma,\bfOne_b}$.

\end{proposition}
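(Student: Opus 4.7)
The proof has two independent parts: the existence/uniqueness of the invariant measure together with the explicit geometric $W_2$ bound, and the identification $\Pi_{\rhobf,\bfgamma,N\bfOne_b}=\Pi_{\rhobf,\bfgamma,\bfOne_b}$ in the symmetric case.

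For the main part, my plan is to first set up the synchronous coupling $(X_n,\tilde X_n)_{n\ge 0}$ from \eqref{eq:cont_coupling}--\eqref{eq:coupling_process_Y} in which both chains share the Gaussian innovations $(\xi_n,\eta^{(i,n)}_k)$ and pick up $\theta_n,\tilde\theta_n$ through the same block-linear map built from $\bar{\B{B}}_0,\B{B}_0,\B{\tilde{D}}_0$. By \Cref{lem:geo_decr_cont1}, $\|X_n-\tilde X_n\|^2$ is dominated by a multiple of $\|Z_n-\tilde Z_n\|^2$, reducing the problem to the $Z$-marginal. The telescoping identity of \Cref{lem:contraction_V0} then expresses $Z_{n+1}-\tilde Z_{n+1}$ linearly in $Z_n-\tilde Z_n$ through a matrix whose operator norm is bounded by $1-\min_{i}\{N_i\gamma_i m_i\}+r_{\bfgamma,\bfrho,\bfN}$ via the chain \Cref{lem:bound_T1_approx}--\Cref{lem:bound:S}--\Cref{lem:bound:cont_T1}. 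Under the size condition on $\max_i\{N_i\gamma_i\}$ assumed in the proposition, \Cref{lem:contrac} guarantees $r_{\bfgamma,\bfrho,\bfN}\le \min_i\{N_i\gamma_i m_i\}/2$, collapsing the per-step rate to $1-\min_i\{N_i\gamma_i m_i\}/2<1$. Iterating \Cref{lem:W2_dirac_contraction} yields the Dirac-to-Dirac geometric bound
\begin{equation*}
\wasserstein{}(\updelta_{\B{v}} P_{\bfrho,\bfgamma,\bfN}^{n},\updelta_{\tilde{\B{v}}} P_{\bfrho,\bfgamma,\bfN}^{n})\le C\cdot(1-\min_i\{N_i\gamma_i m_i\}/2)^{n-1}\cdot\|\B{v}-\tilde{\B{v}}\|\eqsp,
\end{equation*}
for an explicit constant $C$ gathering $\|[\B{M}_\infty^{(0)}]^{-1}\|$, $\sum_i\|\B{A}_i\|/\rho_i$, and the prefactor $(1+\|\bar{\B{B}}_0^{-1}\B{B}_0^{\top}\B{\tilde{D}}_0^{1/2}\|^2)\max_i\{N_i\gamma_i\}/\min_i\{N_i\gamma_i\}$.

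Existence of an invariant $\Pi_{\bfrho,\bfgamma,\bfN}\in\mathcal{P}_2(\R^{d+p})$ then follows by the standard argument that the above contraction turns $(\updelta_{\B{v}} P_{\bfrho,\bfgamma,\bfN}^{n})_{n\ge 0}$ into a $\wasserstein{}$-Cauchy sequence in the complete metric space $(\mathcal{P}_2,\wasserstein{})$; the limit is invariant by $\wasserstein{}$-continuity of $\mu\mapsto\mu P_{\bfrho,\bfgamma,\bfN}$ (itself a direct consequence of the same coupling bound), and uniqueness is immediate from the contraction. The Dirac-to-invariant bound stated in the proposition is then obtained by integrating the Dirac-to-Dirac bound against $\Pi_{\bfrho,\bfgamma,\bfN}$ and applying Jensen's inequality.

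The final identification $\Pi_{\rhobf,\bfgamma,N\bfOne_b}=\Pi_{\rhobf,\bfgamma,\bfOne_b}$ is the delicate step. By the uniqueness just established, it suffices to exhibit a probability measure simultaneously invariant for both kernels. Any invariant measure must factorise as $\mu(\rmd\bz)\otimes\Pi_\bfrho(\rmd\btheta\mid\bz)$ because the final step of both kernels samples $\btheta$ exactly from $\Pi_\bfrho(\cdot\mid\bz)$. My plan would then be to show that the $\bz$-marginal $\mu$ is pinned down by a fixed-point equation whose solution, when $\bfN=N\bfOne_b$, does not depend on $N$. The main obstacle lies precisely in this $N$-independence: holding $\btheta$ fixed for $N$ successive LMC steps is genuinely different from resampling $\btheta$ between each step, so the equations $\mu K_1=\mu$ and $\mu K_N=\mu$ (with $K_N$ the $\bz$-kernel that draws $\btheta\sim\Pi_\bfrho(\cdot\mid\bz)$ once and then runs $N$ LMC updates) are not tautologically equivalent, and reconciling them will require exploiting the exact symmetric structure $\bfN=N\bfOne_b$ rather than just the contraction estimates used in the first part.
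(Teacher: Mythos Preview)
Your treatment of existence, uniqueness, and the explicit $W_2$ bound follows the paper's argument essentially step for step: the same synchronous coupling \eqref{eq:cont_coupling}--\eqref{eq:coupling_process_Y}, the same chain of lemmata (\Cref{lem:geo_decr_cont1}, \Cref{lem:contraction_V0}, \Cref{lem:bound_T1_approx}--\Cref{lem:bound:S}--\Cref{lem:bound:cont_T1}, \Cref{lem:contrac}, \Cref{lem:W2_dirac_contraction}), and the same contraction-implies-existence conclusion. The paper packages the last step by citing \citet[Lemma~20.3.2, Theorem~20.3.4]{douc2018markov} rather than spelling out the Cauchy argument in $(\mathcal{P}_2,W_2)$, but the content is identical, and integrating the Dirac-to-Dirac bound against the invariant law is exactly how the paper obtains the stated inequality.

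The gap is in the final identification $\Pi_{\rhobf,\bfgamma,N\bfOne_b}=\Pi_{\rhobf,\bfgamma,\bfOne_b}$. You reduce correctly: by uniqueness it suffices to show that $\Pi_{\bfrho,\bfgamma}=\Pi_{\bfrho,\bfgamma,\bfOne_b}$ is invariant for $P_{\bfrho,\bfgamma,N\bfOne_b}$, and you correctly diagnose why this is not automatic (holding $\btheta$ fixed for $N$ LMC steps is not the same kernel as resampling $\btheta$ between steps). But you stop at the obstacle without resolving it. The paper's device is a peeling induction that stays on the joint $(\btheta,\bz)$ space rather than the $\bz$-marginal. One writes $\prod_i R_{\rho_i,\gamma_i}^{N}=\bigl(\prod_i R_{\rho_i,\gamma_i}\bigr)\circ\bigl(\prod_i R_{\rho_i,\gamma_i}^{N-1}\bigr)$ and, after the first LMC step has produced $\tilde{\bz}^{(1)}$, inserts the harmless factor $\int\Pi_{\bfrho}(\rmd\tilde\btheta^{(1)}\mid\tilde{\bz}^{(1)})=1$. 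The inner block
\[
\Pi_{\bfrho,\bfgamma}(\rmd\tilde\btheta,\rmd\tilde\bz)\,\prod_{i} R_{\rho_i,\gamma_i}(\tilde\bz_i,\rmd\tilde\bz_i^{(1)}\mid\tilde\btheta)\,\Pi_{\bfrho}(\rmd\tilde\btheta^{(1)}\mid\tilde{\bz}^{(1)})
\]
is precisely one application of $P_{\bfrho,\bfgamma}$ to $\Pi_{\bfrho,\bfgamma}$, hence returns $\Pi_{\bfrho,\bfgamma}(\rmd\tilde\btheta^{(1)},\rmd\tilde{\bz}^{(1)})$; iterating reduces $N$ to $1$. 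This insertion-of-a-dummy-$\btheta$-resample is the specific mechanism your proposal is missing. Your plan to attack the problem via a fixed-point equation for the $\bz$-marginal is a harder route, because at the marginal level you lose exactly the structure (the explicit conditional $\Pi_{\bfrho}(\cdot\mid\bz)$) that makes the peeling trick work.
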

\begin{proof}
Note that under the conditions on $\bfgamma$ and $\bfN$ stated in \Cref{cor:convergence_rho_gamma}, \Cref{lem:contrac} ensures that $1-\min_{i\in[b]}\{N_i \gamma_i m_i\}/2 < 1$. 
		Then, from \Cref{lem:W2_dirac_contraction} and \citet[Lemma 20.3.2, Theorem 20.3.4]{douc2018markov}, we deduce the existence and unicity of a stationary distribution $\Pi_{\bfrho,\bfgamma,\bfN}$ for $P_{\bfrho,\bfgamma,\bfN}$.
		The proof is concluded by using the upper bound given in \Cref{lem:W2_dirac_contraction}.

We now show the last statement and assume that $\bfN = N\bfOne_b$, for $N \geq1 $.                 By \Cref{prop:convergence_N_1}, we have the existence and unicity of a stationary distribution $\Pi_{\bfrho,\bfgamma,\bfOne_b}$ which is invariant for $P_{\bfrho,\bfgamma}$ defined in \eqref{eq:def:prop2:P_rho_gamma}. For ease of notation, we simply denote $\Pi_{\bfrho,\bfgamma,\bfOne_b}$ by $\Pi_{\bfrho,\bfgamma}$
We now show that $\Pi_{\bfrho,\bfgamma}$ is also  invariant for $P_{\bfrho,\bfgamma,\bfN}$ defined in \eqref{eq:P_rho_gamma_N}.
Using the fact that $P_{\bfrho,\bfgamma}$ defined in \eqref{eq:def:prop2:P_rho_gamma} leaves $\Pi_{\bfrho,\bfgamma}$ invariant from \Cref{prop:convergence_N_1} and Fubini's theorem, we get for any $\msa \in \mathcal{B}(\Rd)$ and $\msb \in \mathcal{B}(\R^p)$,
    \begin{align}
& \Pi_{\bfrho,\bfgamma}P_{\bfrho,\bfgamma,\bfN}(\msa \times \msb)\\
      & = \int_{\msa \times \mathsf{B}} \int_{\mathbb{R}^d \times \mathbb{R}^p} \Pi_{\bfrho,\bfgamma}(\dd \tilde{\btheta},\dd \tilde{\bz})P_{\bfrho,\bfgamma,\bfN}((\tilde{\btheta},\tilde{\bz}),(\dd \btheta,\dd \bz)) \nonumber\\
        &= \int_{\msa \times \mathsf{B}} \int_{\mathbb{R}^d \times \mathbb{R}^p} \Pi_{\bfrho,\bfgamma}(\dd \tilde{\btheta},\dd \tilde{\bz})Q_{\bfrho,\bfgamma,\bfN}(\tilde{\bz},\dd \bz | \tilde{\btheta})\Pi_{\bfrho}(\dd\btheta|\bz) \nonumber\\
        &= \int_{\msa \times \mathsf{B}} \int_{\mathbb{R}^d \times \mathbb{R}^p} \Pi_{\bfrho,\bfgamma}(\dd \tilde{\btheta},\dd \tilde{\bz})\br{\prod_{i=1}^b R_{\rho_i,\gamma_i}^{N_i}(\tilde{\bz}_{i},\dd \bz_i|\tilde{\btheta})}\Pi_{\bfrho}(\dd\btheta|\bz) \nonumber \\
        &= \int_{\msa \times \mathsf{B}} \int_{\mathbb{R}^d \times \mathbb{R}^p} \Pi_{\bfrho,\bfgamma}(\dd \tilde{\btheta},\dd \tilde{\bz})\int_{\mathbb{R}^p}\br{\prod_{i=1}^b R_{\rho_i,\gamma_i}(\tilde{\bz}_{i},\dd \tilde{\bz}_{i}^{(1)}|\tilde{\btheta})}\br{\prod_{i=1}^b R_{\rho_i,\gamma_i}^{N_i-1}(\tilde{\bz}_{i}^{(1)},\dd \bz_i|\tilde{\btheta})}\Pi_{\bfrho}(\dd\btheta|\bz) \nonumber \\
        &= \int_{\msa \times \mathsf{B}} \int_{\mathbb{R}^d \times \mathbb{R}^p} \br{\int_{\mathbb{R}^d \times \mathbb{R}^p} \Pi_{\bfrho,\bfgamma}(\dd \tilde{\btheta},\dd \tilde{\bz})\br{\prod_{i=1}^b R_{\rho_i,\gamma_i}(\tilde{\bz}_{i},\dd \tilde{\bz}_{i}^{(1)}|\tilde{\btheta})} \Pi_{\bfrho}(\dd \tilde{\btheta}^{(1)}|\tilde{\bz}_{i}^{(1)})} \nonumber\\
        &\times \br{\prod_{i=1}^b R_{\rho_i,\gamma_i}^{N_i-1}(\tilde{\bz}_{i}^{(1)},\dd \bz_i|\tilde{\btheta})}\Pi_{\bfrho}(\dd\btheta|\bz) \nonumber \\
        &= \int_{\msa \times \mathsf{B}} \int_{\mathbb{R}^d \times \mathbb{R}^p} \Pi_{\bfrho,\bfgamma}(\dd \tilde{\btheta}^{(1)},\dd \tilde{\bz}^{(1)}) \br{\prod_{i=1}^b R_{\rho_i,\gamma_i}^{N_i-1}(\tilde{\bz}_{i}^{(1)},\dd \bz_i|\tilde{\btheta}^{(1)})}\Pi_{\bfrho}(\dd\btheta|\bz) \nonumber \eqsp.
    \end{align}
    Using a straightforward induction, we finally get 
    \begin{align*}
        \int_{\msa \times \mathsf{B}} \int_{\mathbb{R}^d \times \mathbb{R}^p} \Pi_{\bfrho,\bfgamma}(\dd \tilde{\btheta},\dd \tilde{\bz})P_{\bfrho,\bfgamma,\bfN}((\tilde{\btheta},\tilde{\bz}),(\dd \btheta,\dd \bz)) = \int_{\msa \times \mathsf{B}} \Pi_{\bfrho,\bfgamma}(\dd \btheta,\dd \bz)\eqsp,
    \end{align*}
    which shows that $P_{\bfrho,\bfgamma,\bfN}$ leaves $\Pi_{\bfrho,\bfgamma}$ invariant.
		Since this stationary distribution is unique, we conclude that $\Pi_{\bfrho,\bfgamma,\bfN} = \Pi_{\bfrho,\bfgamma}$.
\end{proof}

We specify our result to the case where we take  a specific initial distribution. To define it, consider
\begin{equation}
  \label{eq:def_x_star_2}
  \text{
    $\B{x}^{\star} = ([\btheta^{\star}]^{\top},[\bz^{\star}]^{\top})^{\top}$, where $\btheta^{\star} = \argmin\{-\log\pi\}$ and $\bz^{\star} = ([\B{A}_1\btheta^{\star}]^{\top} , \cdots,[\B{A}_b\btheta^{\star}]^{\top})^{\top}$} \eqsp.
\end{equation}
We define the probability measure
\begin{equation}
\label{eq:_def_mu_star_supp_1} 
\mu_{\rhobf}^{\star} =   \updelta_{\B{z}^{\star}}  \otimes \Pi_{\rhobf}(\cdot|\bfz^{\star}) \eqsp.
\end{equation}
Note that sampling from $\mu_{\rhobf}^{\star}$ is straightforward  and simply consists in setting $\zbf_0 = \bfz^{\star}$ and $\bftheta_0= \bar{\B{B}}_0^{-1}\B{B}_0^{\top}\B{\tilde{D}}_{0}^{\half} \B{z}_0 + \bar{\B{B}}_0^{-\half} \xi $, where $\xi$ is a $d$-dimensional standard Gaussian random variable.
% In \Cref{cor:convergence_rho_gamma}, we showed a convergence result which holds under any initialisation $\B{x} = (\btheta^{\top},\B{z}^{\top})^{\top}$.
% Regarding our specific initialisation $\bx^{\star}$ and using the same proof techniques, we can derive a similar convergence result which is easier to analyse explicitly. 
We now specify our result when using $\mu_{\rhobf}^{\star}$ as an initial distribution. 
Define the $\bz$-marginal under $\Pi_{\bfrho,\bfgamma}$ by
\begin{equation}
  \label{eq:def_pi_bz}
  \pi^{\bz}_{\bfrho,\bfgamma} = \int_{\Rd}\Pi_{\bfrho,\bfgamma}(\dd\btheta,\bz) \eqsp,
\end{equation}
and the transition kernel of the Markov chain $\{Z_{n}\}_{n \ge 0}$, for all $\bz \in \mathbb{R}^p$ and $\msb \in \mathcal{B}( \mathbb{R}^p)$, by
\begin{equation}
\label{eq:def:trans_kernel_z1}
    P_{\bfrho,\bfgamma,\bfN}^{\bz}(\bz,\msb) = \int_{\Rd} Q_{\bfrho,\bfgamma,\bfN}(\bz,\msb|\btheta)\Pi_{\rho}(\dd \btheta|\bz)\eqsp,
\end{equation}
where $\Pi_{\rho}(\cdot|\cdot)$ and $Q_{\bfrho,\bfgamma,\bfN}$ are defined in \eqref{eq:def:Pi_rho_cond} and \eqref{eq:Q_rho_gamma_N}, respectively.

\begin{proposition}\label{cor:convergence_rho_gamma_star_v2}
    Assume \Cref{ass:well_defined_density}-\Cref{ass:supp_fort_convex} and let $c>0$ and $\bfgamma = \{\gamma_i\}_{i=1}^b$, $\bfN\in(\N^*)^{b}$ such that \eqref{eq:Ngamma} is satisfied, for any $i\in[b]$, $N_i\gamma_i < 2/\txts\max_{i\in[b]}\{m_i + \tilde{M}_i\}$ and $\gamma_i<\tilde{M}_i^{-1}$. 
Then, for any integer $n\ge 1$, we have
\begin{multline*}
\wasserstein{}(\mu_{\rhobf}^{\star} P_{\bfrho,\bfgamma,\bfN}^{n}, \Pi_{\bfrho,\bfgamma})
\le 2^{\half}(1 - \min_{i\in[b]} \{N_i\gamma_i m_i\}/2)^{n-1}
\cdot (1+\|\bar{\B{B}}_0^{-1}\B{B}_0^{\top}\B{\tilde{D}}_{0}^{\half}\|^2)^{\half}\max_{i\in[b]}\{N_i\gamma_i\}^{\half}\\
\times  \defEns{\int_{\rset^d} \|\bfz_1 - \B{z}^{\star}\|_{\B{D}_{\bfN\bfgamma}^{-1}}^2 \pi^{\bz}_{\bfrho,\bfgamma}(\rmd\bz_1) + \int_{\rset^d} \|\bfz_1 - \B{z}^{\star}\|_{\B{D}_{\bfN\bfgamma}^{-1}}^2 P_{\bfrho,\bfgamma,\bfN}^{\bfz}(\bfz^{\star}, \rmd \bfz_1)}^{\half}\eqsp,
\end{multline*}
where $\bar{\B{B}}_0,\B{B}_0,\B{\tilde{D}}_{0}$ are defined in \eqref{eq:def_B_bar_B}-\eqref{eq:def_projection}.
\end{proposition}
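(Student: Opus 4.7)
The plan is to reuse, almost verbatim, the coupling and deterministic contraction estimates developed in the proof of \Cref{lem:W2_dirac_contraction}, while carrying the weighted norm $\|\cdot\|_{\B{D}_{\bfN\bfgamma}^{-1}}$ on the $\bfz$-block throughout and exploiting the specific structure of the initial law $\mu_{\bfrho}^{\star}$.

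First I would choose an arbitrary coupling of $\mu_{\bfrho}^{\star}$ with $\Pi_{\bfrho,\bfgamma}$ (the product coupling is the simplest choice), draw $(X_0,\tilde X_0)$ from it, and then propagate both processes with the common Gaussian increments $(\xi_n)_{n\ge 1}$ and $\{(\eta_n^i)_{n\ge 1}: i\in[b]\}$ of the synchronous scheme \eqref{eq:cont_coupling}--\eqref{eq:coupling_process_Y}. Since \Cref{cor:convergence_rho_gamma} ensures that $\Pi_{\bfrho,\bfgamma}$ is $P_{\bfrho,\bfgamma,\bfN}$-invariant, one has $X_n\sim\mu_{\bfrho}^{\star} P_{\bfrho,\bfgamma,\bfN}^{n}$ and $\tilde X_n\sim\Pi_{\bfrho,\bfgamma}$ for every $n\ge 0$, yielding $\wasserstein{}^{2}(\mu_{\bfrho}^{\star} P_{\bfrho,\bfgamma,\bfN}^{n},\Pi_{\bfrho,\bfgamma})\le\mathbb{E}[\|X_n-\tilde X_n\|^{2}]$.

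Next I would bound $\mathbb{E}[\|X_n - \tilde X_n\|^2]$ by chaining three deterministic inequalities already isolated in \Cref{sec:proof-proposition-2-1}. Combining \Cref{lem:geo_decr_cont1} with $\|\B{u}\|^{2}\le \max_{i\in[b]}\{N_i\gamma_i\}\,\|\B{u}\|_{\B{D}_{\bfN\bfgamma}^{-1}}^{2}$ gives
\[
\|X_n-\tilde X_n\|^{2}\le\prbig{1+\|\bar{\B{B}}_0^{-1}\B{B}_0^{\top}\B{\tilde{D}}_{0}^{\half}\|^{2}}\max_{i\in[b]}\{N_i\gamma_i\}\,\|Z_n-\tilde Z_n\|_{\B{D}_{\bfN\bfgamma}^{-1}}^{2},
\]
while \eqref{eq:bound:induction_z} together with \Cref{lem:contrac}---applicable thanks to assumption \eqref{eq:Ngamma}---yields the contraction $\|Z_n-\tilde Z_n\|_{\B{D}_{\bfN\bfgamma}^{-1}}\le(1-\min_{i\in[b]}\{N_i\gamma_im_i\}/2)^{n-1}\|Z_1-\tilde Z_1\|_{\B{D}_{\bfN\bfgamma}^{-1}}$. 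This reduces the problem to an upper bound on $\mathbb{E}[\|Z_1-\tilde Z_1\|_{\B{D}_{\bfN\bfgamma}^{-1}}^{2}]$.

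For this final step, I would apply the triangle-type inequality $\|\B{u}-\B{v}\|_{\B{D}_{\bfN\bfgamma}^{-1}}^{2}\le 2\|\B{u}-\bfz^{\star}\|_{\B{D}_{\bfN\bfgamma}^{-1}}^{2}+2\|\B{v}-\bfz^{\star}\|_{\B{D}_{\bfN\bfgamma}^{-1}}^{2}$ to $(Z_1,\tilde Z_1)$ and then identify the two marginal laws. Under $\mu_{\bfrho}^{\star}$, $\bfz_0=\bfz^{\star}$ almost surely and $\btheta_0\sim\Pi_{\bfrho}(\cdot\mid\bfz^{\star})$, so marginalising $\btheta_0$ in $Q_{\bfrho,\bfgamma,\bfN}(\bfz^{\star},\cdot\mid\btheta_0)$ produces, by definition \eqref{eq:def:trans_kernel_z1}, $Z_1\sim P_{\bfrho,\bfgamma,\bfN}^{\bfz}(\bfz^{\star},\cdot)$; invariance of $\Pi_{\bfrho,\bfgamma}$ in turn forces $\tilde Z_1$ to follow the $\bfz$-marginal $\pi_{\bfrho,\bfgamma}^{\bfz}$ of \eqref{eq:def_pi_bz}. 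Taking expectations and then a square root delivers the announced bound, the factor $\sqrt{2}$ being exactly the one brought in by the triangle-type inequality. The only real subtlety will be to carry the weighted norm $\|\cdot\|_{\B{D}_{\bfN\bfgamma}^{-1}}$ all the way down to time $1$ instead of immediately switching to the Euclidean norm, so that the final prefactor is $\max_{i\in[b]}\{N_i\gamma_i\}^{1/2}$ rather than the ratio $\sqrt{\max_{i}\{N_i\gamma_i\}/\min_{i}\{N_i\gamma_i\}}$ appearing in \Cref{lem:W2_dirac_contraction}.
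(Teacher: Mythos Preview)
Your proposal is correct and follows essentially the same route as the paper's proof: synchronous coupling from $(\mu_{\bfrho}^{\star},\Pi_{\bfrho,\bfgamma})$, reduction to the $\bfz$-block via \Cref{lem:geo_decr_cont1}, contraction in the $\|\cdot\|_{\B{D}_{\bfN\bfgamma}^{-1}}$ norm via \Cref{lem:contraction_V0}, \Cref{lem:bound:cont_T1} and \Cref{lem:contrac}, then the triangle split around $\bfz^{\star}$ at time $1$ and identification of the two marginals as $P_{\bfrho,\bfgamma,\bfN}^{\bfz}(\bfz^{\star},\cdot)$ and $\pi_{\bfrho,\bfgamma}^{\bfz}$. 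Your final remark about keeping the weighted norm down to time $1$ to obtain $\max_{i}\{N_i\gamma_i\}^{1/2}$ instead of the ratio is exactly the point.
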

\begin{proof}
Consider for $n \in\nsets$,  $X_{n}=(\theta_{n}^{\top}, Z_{n}^{\top})^{\top}, \tilde{X}_{n}=(\tilde{\theta}_{n}^{\top}, \Zc_{n}^{\top})^{\top}$ defined in \eqref{eq:cont_coupling} with $X_0$ distributed according to $\mu_{\rhobf}^{\star}$ and $\tilde{X}_{0}$ distributed according to $\Pi_{\rhobf,\gammabf}$.
Combining \Cref{lem:contraction_V0}, \Cref{lem:bound:cont_T1} and \Cref{lem:contrac}, we have for $n \ge 1$,
\begin{align*}
\|\Zc_{n+1}-\Zb_{n+1}\|_{\B{D}_{\bfN\bfgamma}^{-1}}
&\le  (1-\min_{i\in[b]} \{N_i \gamma_i m_i\}/2)\|\Zc_{n}-\Zb_{n}\|_{\B{D}_{\bfN\bfgamma}^{-1}}\eqsp.
\end{align*}
Thereby, for any $n\ge 1$, we obtain by induction
\begin{equation}\label{eq:bound:induction_z}
\|\Zc_{n}-\Zb_{n}\|_{\B{D}_{\bfN\bfgamma}^{-1}}
\le (1-\min_{i\in[b]} \{N_i \gamma_i m_i\}/2)^{n-1}\|\Zc_{1}-\Zb_{1}\|_{\B{D}_{\bfN\bfgamma}^{-1}}\eqsp.
\end{equation}
Using $\normn{\tilde{Z}_1-Z_1}^2_{\B{D}_{\bfN\bfgamma}^{-1}}\le 2\normn{\tilde{Z}_1-\bz^{\star}}_{\B{D}_{\bfN\bfgamma}^{-\half}}^{2}
	+2\normn{Z_{1}-\bz^{\star}}_{\B{D}_{\bfN\bfgamma}^{-\half}}^2$ combined with the definition of the Wasserstein distance and \Cref{lem:geo_decr_cont1} give
\begin{align}
	\nonumber
	\wasserstein{}(\mu_{\rhobf}^{\star} P_{\bfrho,\bfgamma,\bfN}^{n}, \Pi_{\bfrho,\bfgamma})
	&\le \E\br{\normn{\tilde{X}_n-X_n}^2}^{\half}\\ 
	\nonumber
	&\le (1+\|\bar{\B{B}}_0^{-1}\B{B}_0^{\top}\B{\tilde{D}}_{0}^{\half}\|^2)^{\half}\max_{i\in[b]}\{N_i\gamma_i\}^{\half}
	\E\brbigg{\normn{\tilde{Z}_n-Z_n}_{\B{D}_{\bfN\bfgamma}^{-\half}}^2}^{\half}\\ 
	\nonumber
	&\le  2^{\half}(1 - \min_{i\in[b]} \{N_i\gamma_i m_i\}/2)^{n-1}
	(1+\|\bar{\B{B}}_0^{-1}\B{B}_0^{\top}\B{\tilde{D}}_{0}^{\half}\|^2)^{\half}\max_{i\in[b]}\{N_i\gamma_i\}^{\half}\\
	\label{eq:bound:endproof}
	&\times \E\brbigg{\normn{\tilde{Z}_1-\bz^{\star}}_{\B{D}_{\bfN\bfgamma}^{-\half}}^{2}
	+\normn{Z_{1}-\bz^{\star}}_{\B{D}_{\bfN\bfgamma}^{-\half}}^2}^{\half}\eqsp.
\end{align}
Since $\tilde{X}_{0}$ is distributed according to the stationnary distribution $\Pi_{\rhobf,\gammabf}$,  $\tilde{X}_{1}$ also  and therefore $\Zc_1$ is distributed according to $\pi_{\rhobf,\bgamma}^{\bfz}$. Finally, by definition $\Zb_{1}$ has distribution $P_{\bfrho,\bfgamma,\bfN}^{\bfz}(\bfz^{\star},\cdot)$, therefore \eqref{eq:bound:endproof} completes the proof.
\end{proof}

\section{Proof of \Cref{prop:bound_bias_rho}}\label{sec:proof_prop1}

%!TEX root = main.tex

The proof of Proposition 3 stands for a generalization of \citet[Proposition 6]{Vono_Paulin_Doucet_2019} which only considered the specific case $\rho_i = \rho^2$ for $i \in [b]$.
This section is divided into two parts, the first gathers lemmas which allow us to upper bound the $\xi^2$-divergence between $\pi_{\bfrho}$ and $\pi$. Then, in the second subsection, we combine these results to control the Wasserstein distance $\wasserstein{}(\pi_{\bfrho},\pi)$ by showing that it is smaller than $\chi^2(\pi_{\bfrho}|\pi)$.
For any $\btheta \in \Rd$ and $\bfrho \in (\rset_+^*)^b$, define
\begin{align}
  \label{eq:def_u_i_rho_i}
    &U_i^{\rho_i}(\B{A}_i\btheta)=-\log\prbigg{\int_{\bz_i\in\R^d}\exp\acn{-U_i(\bz_i)-\|\bz_i-\B{A}_i\btheta\|^2/(2\rho_i)}\,\dd\bz_i/(2\pi\rho_i)^{d_i/2}}\eqsp, \\
  \label{eq:def_Ubar}
  &\overline{B}(\boldsymbol{\theta}) = \sum_{i=1}^{b}\rho_i\|\nabla U_i(\B{A}_i\boldsymbol{\theta})\|^2/2\eqsp,\\
  &\underline{B}(\boldsymbol{\theta})= \sum_{i=1}^{b}\ac{\rho_i\|\nabla U_i(\B{A}_i\boldsymbol{\theta})\|^2/[2(1+\rho_iM_i)] - d_i\log(1+\rho_iM_i)/2}
\end{align}
and consider
\begin{align*}
  &U(\btheta)=\sum_{i\in[b]}U_i(\B{A}_i\btheta)\eqsp,
  &U^{\bfrho}(\btheta)=\sum_{i\in[b]}U_i^{\rho_i}(\B{A}_i\btheta)\eqsp.
\end{align*}
\subsection{Technical lemmata}\label{prop1:technical_lemmata}
We start this subsection by \Cref{lem:UrhoUratiobnd} which allow us to bound the ratio between the integrals defined by $\txts\int_{\R^d}\exp\acn{-\sum_{i\in[b]}U_i^{\rho_i}(\B{A}_i\btheta)}$ and $\txts\int_{\R^d}\exp\acn{-\sum_{i\in[b]}U_i(\B{A}_i\btheta)}\,\dd\btheta$.
\begin{lemma}\label{lem:UrhoUratiobnd}
Assume \Cref{ass:well_defined_density}-\Cref{ass:supp_fort_convex}-\ref{ass:1} and let $\bfrho \in (\rset_+^*)^b$. Then, we have $\underline{B}(\btheta)\le U(\btheta)-U^{\bfrho}(\btheta)$, for any $\btheta \in \Rd$. 
If we assume in addition that for any $i \in [b]$, $U_i$ is convex, we have $U(\btheta)-U^{\bfrho}(\btheta)\le \overline{B}(\btheta)$, for any $\btheta \in \Rd$.
\end{lemma}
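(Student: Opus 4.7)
The plan is to reduce both inequalities to pointwise bounds on a single summand $U_i(\B{A}_i\btheta) - U_i^{\rho_i}(\B{A}_i\btheta)$ for each $i\in[b]$, since $U-U^{\bfrho}$, $\overline{B}$, and $\underline{B}$ all decompose as sums over $i$. Fix $i\in[b]$ and write $\by = \B{A}_i\btheta$. Rearranging the definition of $U_i^{\rho_i}$ in \eqref{eq:def_u_i_rho_i},
\begin{equation*}
U_i(\by) - U_i^{\rho_i}(\by) = \log\!\int_{\R^{d_i}}\exp\!\big(U_i(\by)-U_i(\bz_i) - \|\bz_i-\by\|^2/(2\rho_i)\big)\frac{\dd\bz_i}{(2\uppi\rho_i)^{d_i/2}}\eqsp,
\end{equation*}
so the problem becomes one of bounding an integrand of the form $\exp(\text{quadratic in }\bz_i-\by)$ and then evaluating the resulting Gaussian integral.

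For the upper bound, I would use the convexity assumption on $U_i$: by the first-order inequality $U_i(\bz_i) \ge U_i(\by) + \langle\nabla U_i(\by),\bz_i-\by\rangle$, so the integrand is dominated by $\exp(-\langle\nabla U_i(\by),\bz_i-\by\rangle - \|\bz_i-\by\|^2/(2\rho_i))$. After the change of variable $\bu = \bz_i-\by$ and completing the square, the Gaussian integral evaluates exactly to $\exp(\rho_i\|\nabla U_i(\by)\|^2/2)$, which gives $U_i(\by)-U_i^{\rho_i}(\by)\le \rho_i\|\nabla U_i(\by)\|^2/2$. Summing over $i$ yields $U-U^{\bfrho}\le \overline{B}$.

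For the lower bound, smoothness from \Cref{ass:supp_fort_convex}\ref{ass:1} gives $U_i(\bz_i)\le U_i(\by)+\langle\nabla U_i(\by),\bz_i-\by\rangle + M_i\|\bz_i-\by\|^2/2$, so the integrand is lower bounded by $\exp(-\langle\nabla U_i(\by),\bz_i-\by\rangle - (M_i+1/\rho_i)\|\bz_i-\by\|^2/2)$. Setting $\alpha_i = M_i + 1/\rho_i$, the same change of variable and completion-of-the-square computation produces the Gaussian integral
\begin{equation*}
(1+\rho_iM_i)^{-d_i/2}\exp\!\big(\|\nabla U_i(\by)\|^2/(2\alpha_i)\big) = (1+\rho_iM_i)^{-d_i/2}\exp\!\big(\rho_i\|\nabla U_i(\by)\|^2/[2(1+\rho_iM_i)]\big)\eqsp,
\end{equation*}
and taking logarithms gives exactly the $i$-th summand of $\underline{B}(\btheta)$. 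Summing over $i$ completes the proof.

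No step is really delicate: the only thing to watch is the normalisation factor $(2\uppi\rho_i)^{-d_i/2}$ in the definition of $U_i^{\rho_i}$, which produces the $-(d_i/2)\log(1+\rho_iM_i)$ term in $\underline{B}$ after combining with $(2\uppi/\alpha_i)^{d_i/2}$ from the Gaussian integral. The convexity hypothesis is only needed for the upper bound; the lower bound uses only the Lipschitz-gradient part of \Cref{ass:supp_fort_convex}.
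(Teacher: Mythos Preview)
Your proposal is correct and follows essentially the same approach as the paper's proof: both write $\exp(U_i(\B{A}_i\btheta)-U_i^{\rho_i}(\B{A}_i\btheta))$ as a Gaussian-weighted integral, apply the first-order convexity inequality for the upper bound and the second-order smoothness inequality from \Cref{ass:supp_fort_convex}\ref{ass:1} for the lower bound, complete the square, and evaluate the resulting Gaussian integral. The only cosmetic difference is that you take logarithms first and work summand-by-summand, whereas the paper keeps the exponential and combines the product over $i$ at the end.
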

\begin{proof}
  The proof follows from the same lines as in \citet[Lemma 14]{Vono_Paulin_Doucet_2019}. In what follows, we give it for the sake of completeness. 
  First, note for any $\btheta\in \R^d$ and $i\in [b]$,
  \begin{equation}\label{eq:Uirhodiff}
    \exp\ac{U_i(\B{A}_i\btheta)-U_i^{\rho_i}(\B{A}_i\btheta)}
    = \int_{\mathbb{R}^{d_i}}\exp\pr{U_i(\B{A}_i\btheta)-U_i(\B{z}_i)-\|\B{z}_i-\B{A}_i\btheta\|^2/(2\rho_i)}\frac{\dd \B{z}_i}{(2\uppi \rho_i)^{d_i/2}}\eqsp.
  \end{equation}
Using \Cref{ass:supp_fort_convex}-\ref{ass:1}, and a second order Taylor expansion, for any $\btheta\in\R^d, i \in [b], \bz_i\in\R^{d_i}$, we have
\begin{align*}
      U_i(\B{A}_i\boldsymbol{\theta}) - U_i(\B{z}_i)
      &\ge \nabla U_i(\B{A}_i\boldsymbol{\theta})^{\top}(\B{A}_i\boldsymbol{\theta}-\B{z}_i)- M_i\|\B{A}_i\boldsymbol{\theta}-\B{z}_i\|^2/2 \eqsp.
  \end{align*}
  Hence, using \eqref{eq:Uirhodiff}, we have for any $\btheta\in \R^d$ and $i\in [b]$,
  \begin{align*}
      \exp\prbigg{\sum_{i=1}^{b} U_i(\B{A}_i\btheta)- U_i^{\rho_i}(\B{A}_i\btheta)}
      &\ge \prod_{i=1}^b\exp\prBig{\dfrac{\rho_i}{2(1+\rho_iM_i)}\norm{\nabla U_i(\B{A}_i\boldsymbol{\theta})}^2}\pr{1+\rho_iM_i}^{-d_i /2} \\
      &= \exp(\underline{B}(\boldsymbol{\theta})) \eqsp. 
  \end{align*}  
  Similarly, under the assumption that for any $i \in [b]$, $U_i$ is convex, the proof for the upper bound follows from the same lines using, for any $i \in [b]$, $\bftheta \in \rset^d$ and $\B{z}_i \in \rset^{d_i}$, that
 \begin{align*}
      U_i(\B{A}_i\boldsymbol{\theta}) - U_i(\B{z}_i) \le \nabla U_i(\B{A}_i\boldsymbol{\theta})^{\top}(\B{A}_i\boldsymbol{\theta}-\B{z}_i)\eqsp.
  \end{align*}
\end{proof}
\begin{lemma}
\label{lem:U_stronglyconvex}
Assume \Cref{ass:well_defined_density}-\Cref{ass:supp_fort_convex}.
Then, $U$ is $m_U$-strongly convex with $m_U = \lambda_{\min}(\sum_{i=1}^b m_i \B{A}_i^{\top} \B{A}_i)$.
\end{lemma}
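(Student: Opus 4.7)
The statement is a direct computation at the Hessian level, so the plan is very short: differentiate twice, invoke strong convexity of each $U_i$, and identify the resulting quadratic form with the matrix $\sum_{i=1}^b m_i \B{A}_i^\top \B{A}_i$.

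More precisely, I would first observe that under \Cref{ass:supp_fort_convex}-\ref{ass:1}, each $U_i$ is twice continuously differentiable, hence so is $U: \btheta \mapsto \sum_{i=1}^b U_i(\B{A}_i\btheta)$, and the chain rule yields
\begin{equation*}
\nabla^2 U(\btheta) \;=\; \sum_{i=1}^b \B{A}_i^\top \,\nabla^2 U_i(\B{A}_i\btheta)\, \B{A}_i \eqsp.
\end{equation*}

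Then, for any $\B{v} \in \R^d$, I would apply \Cref{ass:supp_fort_convex}-\ref{ass:2}, which gives $\nabla^2 U_i(\B{A}_i\btheta) \succeq m_i \B{I}_{d_i}$, to obtain
\begin{equation*}
\B{v}^\top \B{A}_i^\top \nabla^2 U_i(\B{A}_i\btheta) \B{A}_i \B{v} \;=\; (\B{A}_i\B{v})^\top \nabla^2 U_i(\B{A}_i\btheta) (\B{A}_i\B{v}) \;\ge\; m_i \,\B{v}^\top \B{A}_i^\top \B{A}_i \B{v} \eqsp.
\end{equation*}
Summing over $i \in [b]$ gives $\nabla^2 U(\btheta) \succeq \sum_{i=1}^b m_i \B{A}_i^\top \B{A}_i \succeq m_U \B{I}_d$ where $m_U = \lambda_{\min}(\sum_{i=1}^b m_i \B{A}_i^\top \B{A}_i)$, uniformly in $\btheta$. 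This Hessian lower bound is equivalent to $m_U$-strong convexity of $U$, concluding the proof. There is no real obstacle here; the only thing worth flagging is that $m_U$ is nonnegative as a sum of positive semidefinite matrices scaled by the positive constants $m_i$, and it is strictly positive as soon as $\sum_{i=1}^b \B{A}_i^\top \B{A}_i$ is invertible (which holds under \Cref{ass:well_defined_density} since the subsum $\sum_{j=b'+1}^b \B{A}_j^\top \B{A}_j$ already is).
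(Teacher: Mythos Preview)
Your proof is correct and essentially identical to the paper's: both compute $\nabla^2 U(\btheta)=\sum_{i=1}^b \B{A}_i^\top \nabla^2 U_i(\B{A}_i\btheta)\B{A}_i$, apply $\nabla^2 U_i\succeq m_i\B{I}_{d_i}$, and conclude $\nabla^2 U(\btheta)\succeq\sum_{i=1}^b m_i\B{A}_i^\top\B{A}_i\succeq m_U\B{I}_d$. Your additional remark on strict positivity of $m_U$ under \Cref{ass:well_defined_density} is a nice bonus not spelled out in the paper.
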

\begin{proof}
Using by \Cref{ass:supp_fort_convex}-\ref{ass:1} that for any $i \in [b]$, $U_i$ is twice differentiable and by \Cref{ass:supp_fort_convex}-\ref{ass:2} the fact that for any $i \in [b]$, $U_i$ is $m_i$-strongly convex, we have for any $\btheta\in\Rd$ 
\begin{equation*}
  \grad^2 U(\btheta)=\sum_{i=1}^b \B{A}_i^{\top} \grad^2 U_i(\B{A}_i\btheta) \B{A}_i\succeq \sum_{i=1}^b m_i \B{A}_i^{\top} \B{A}_i\succeq \lambda_{\min}\prbigg{\sum_{i=1}^b m_i \B{A}_i^{\top} \B{A}_i} \B{I}_d=m_U \B{I}_d \eqsp.
\end{equation*}
\end{proof}

For any $\btheta \in \Rd$, define
\begin{equation}
  \label{eq:def_beta}
 \beta(\btheta)=\prbigg{\sum_{i=1}^{b}\rho_i\normbigg{\nabla U_i(\B{A}_i\boldsymbol{\theta})}^2}^{\half} \eqsp.
\end{equation}

\begin{lemma}
\label{lem:betaLipschitz}

Assume \Cref{ass:supp_fort_convex}-\ref{ass:1} and let $\bfrho \in (\rset_+^*)^b$.
Then $\beta$ is a Lipschitz function w.r.t. $\norm{\cdot}$, with Lipschitz constant
\begin{equation}\label{eq:Lbeta}
  L_{\beta}= \lambda_{\max}\prbigg{\sum_{i=1}^b\rho_i M_i^2 \B{A}_i^{\top} \B{A}_i}^{\half}\eqsp. % \|\B{A}^{\top} \B{A}\|^{\half} \max_{i \in [b]} \{M_i \rho_i^{\half}\}\eqsp.
\end{equation}
\end{lemma}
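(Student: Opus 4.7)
The plan is to view $\beta(\btheta)$ as the Euclidean norm of a stacked gradient vector and then exploit the Lipschitz continuity of each $\nabla U_i$ together with the reverse triangle inequality.

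More precisely, I would introduce the map $v:\rset^d \to \rset^{\sum_{i=1}^b d_i}$ defined by stacking the blocks $\sqrt{\rho_i}\,\nabla U_i(\B{A}_i \btheta)$ for $i \in [b]$, so that $\beta(\btheta) = \|v(\btheta)\|$. The reverse triangle inequality then gives $|\beta(\btheta) - \beta(\btheta')| \le \|v(\btheta) - v(\btheta')\|$, which reduces the problem to controlling $\|v(\btheta) - v(\btheta')\|^2 = \sum_{i=1}^b \rho_i \|\nabla U_i(\B{A}_i\btheta) - \nabla U_i(\B{A}_i\btheta')\|^2$.

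Next, \Cref{ass:supp_fort_convex}\ref{ass:1} implies that each $\nabla U_i$ is $M_i$-Lipschitz, hence
\[
\|\nabla U_i(\B{A}_i\btheta) - \nabla U_i(\B{A}_i\btheta')\|^2 \le M_i^2 \,\|\B{A}_i(\btheta-\btheta')\|^2 = M_i^2 (\btheta-\btheta')^\top \B{A}_i^\top \B{A}_i (\btheta-\btheta').
\]
Summing over $i$ with weights $\rho_i$ yields
\[
\|v(\btheta) - v(\btheta')\|^2 \le (\btheta-\btheta')^\top \Big(\sum_{i=1}^b \rho_i M_i^2 \B{A}_i^\top \B{A}_i\Big)(\btheta-\btheta') \le \lambda_{\max}\Big(\sum_{i=1}^b \rho_i M_i^2 \B{A}_i^\top \B{A}_i\Big)\,\|\btheta-\btheta'\|^2.
\]
Taking square roots and combining with the reverse triangle inequality gives the stated bound with $L_\beta$ as in \eqref{eq:Lbeta}.

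There is no real obstacle here: the argument is a two-line chain (reverse triangle inequality $\to$ Lipschitz bound on each $\nabla U_i$ $\to$ Rayleigh quotient bound), and everything needed is already in \Cref{ass:supp_fort_convex}\ref{ass:1}. The only point to be careful about is to use the reverse triangle inequality on the norm $\beta = \|v(\cdot)\|$ rather than trying to differentiate $\beta$ directly (which would be awkward at points where $v$ vanishes).
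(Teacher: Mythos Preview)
Your proof is correct and follows essentially the same approach as the paper: the paper invokes the inequality $|(\sum a_i^2)^{1/2} - (\sum b_i^2)^{1/2}| \le (\sum (a_i-b_i)^2)^{1/2}$ (which is precisely your reverse triangle inequality on the stacked vector $v$), then applies the $M_i$-Lipschitz bound on each $\nabla U_i$, and finishes with the Rayleigh-quotient bound you spelled out explicitly.
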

\begin{proof}
For any $\btheta_1, \btheta_2\in\R^{d}$, we have using
$
| (\sum_{i=1}^b a_i^2)^{\half} - (\sum_{i=1}^b b_i^2)^{\half} |
\le (\sum_{i=1}^b( a_i-b_i)^2)^{\half},
$
that
\begin{align*}
|\beta(\btheta_1) - \beta(\btheta_2)|
\le \prbigg{\sum_{i=1}^b \rho_i \| \nabla U_i(\B{A}_i \btheta_1) - \nabla U_i (\B{A}_i\btheta_2) \|^2}^{\half}
\le \prbigg{\sum_{i=1}^b \rho_i M_i^2 \| \B{A}_i(\btheta_1-\btheta_2) \|^2}^{\half},
\end{align*}
which completes the proof.
\end{proof}
Suppose \Cref{ass:supp_fort_convex}-\ref{ass:2} and for any $i\in [b]$, denote $\btheta_i^\star$ a minimiser of $\btheta \mapsto U_i(\B{A}_i\btheta)$.
\begin{lemma}
\label{lem:momgenbeta2}
Assume \Cref{ass:well_defined_density}-\Cref{ass:supp_fort_convex} and let  $\bfrho \in (\rset_+^*)^b$.
Then for any $s< m_U/(12L_{\beta}^2)$, where $L_{\beta}$ is defined in \eqref{eq:Lbeta}, we have
\begin{equation}
  \label{eq:lem:momgenbeta2}
\log \pi\brbig{\mathrm{e}^{s\{\beta^{2}-\pi[\beta^2]\}}}
\le
 8 s^{2} L_\beta^{4}/m_U^{2} + 4 s^{2}\{\pi[\beta]\}^{2} L_{\beta}^{2}/m_U \eqsp. 
\end{equation}
In addition,
\begin{equation}
  \label{eq:bound_epe_beta_square}
\pi(\beta^2) \le 2d L_\beta^2 /m_U+ 2 \sum_{i=1}^{b} \rho_i M_i^2\|\B{A}_{i}(\bthetastar-\btheta^\star_i)\|^2   \eqsp. 
\end{equation}
\end{lemma}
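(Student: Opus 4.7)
The plan is to combine two classical tools for strongly log-concave measures -- the Bakry--\'Emery log-Sobolev inequality together with Herbst's argument for Lipschitz observables -- with a Bernstein-type control of centered squares of subgaussian variables. By \Cref{lem:U_stronglyconvex}, $U$ is $m_U$-strongly convex, so $\pi\propto\mathrm{e}^{-U}$ satisfies the log-Sobolev inequality $\operatorname{Ent}_\pi(g^2)\le (2/m_U)\pi[\|\nabla g\|^2]$. Combined with the $L_\beta$-Lipschitz bound from \Cref{lem:betaLipschitz}, Herbst's argument gives the subgaussian estimate
\[
\log\pi[\mathrm{e}^{t(\beta-\pi[\beta])}]\le t^2 L_\beta^2/(2m_U),\qquad t\in\R,
\]
which is the single ingredient driving both terms of \eqref{eq:lem:momgenbeta2}.

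Writing $\xi=\beta-\pi[\beta]$, a direct expansion gives the identity $\beta^2-\pi[\beta^2]=(\xi^2-\pi[\xi^2])+2\pi[\beta]\xi$, and Cauchy--Schwarz produces
\[
\pi[\mathrm{e}^{s(\beta^2-\pi[\beta^2])}]\le \pi[\mathrm{e}^{2s(\xi^2-\pi[\xi^2])}]^{1/2}\cdot \pi[\mathrm{e}^{4s\pi[\beta]\xi}]^{1/2}.
\]
The second factor is immediately bounded using the Herbst estimate above with $t=4s\pi[\beta]$, giving $\tfrac12\log\pi[\mathrm{e}^{4s\pi[\beta]\xi}]\le 4s^2(\pi[\beta])^2 L_\beta^2/m_U$, which is exactly the second term in the announced bound. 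The remaining quadratic factor is the heart of the proof: $\xi$ is subgaussian with variance proxy $\sigma^2:=L_\beta^2/m_U$, so $\xi^2-\pi[\xi^2]$ is sub-exponential. I would establish, for $|\lambda|\sigma^2\le 1/6$, the estimate $\log\pi[\mathrm{e}^{\lambda(\xi^2-\pi[\xi^2])}]\le 4\lambda^2\sigma^4$, either via a term-by-term expansion of the MGF using the Gaussian moment bound $\pi[\xi^{2k}]\le (2k)!\sigma^{2k}/(2^k k!)$ inherited from subgaussianity, or by applying the log-Sobolev inequality to $g=\mathrm{e}^{\lambda\xi^2/2}$ and integrating the resulting differential inequality for $F(\lambda)=\log\pi[\mathrm{e}^{\lambda\xi^2}]$. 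Choosing $\lambda=2s$, the assumption $s<m_U/(12L_\beta^2)$ ensures $2s\sigma^2\le 1/6$, which together with the Cauchy--Schwarz factor $\tfrac12$ yields the first term $8s^2L_\beta^4/m_U^2$. Tracking the constants $8$ and $1/12$ throughout this sub-exponential computation is the main bookkeeping obstacle; everything else is a direct application of Herbst.

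For the second claim \eqref{eq:bound_epe_beta_square}, I would use the Lipschitz estimate $\beta(\btheta)^2\le 2\beta(\btheta^\star)^2+2L_\beta^2\|\btheta-\btheta^\star\|^2$ and integrate against $\pi$. The classical moment bound $\pi[\|\btheta-\btheta^\star\|^2]\le d/m_U$ for $m_U$-strongly log-concave measures centered at their mode (obtainable from the strong convexity of $U$ via comparison with the Gaussian $\mathcal{N}(\btheta^\star,m_U^{-1}\B{I}_d)$, or by a Brascamp--Lieb argument together with a mode-vs-mean estimate) yields the first summand $2dL_\beta^2/m_U$. For $\beta(\btheta^\star)^2=\sum_{i=1}^b\rho_i\|\nabla U_i(\B{A}_i\btheta^\star)\|^2$, the first-order optimality condition at $\btheta_i^\star$ together with the $M_i$-Lipschitz gradient of $U_i$ gives $\|\nabla U_i(\B{A}_i\btheta^\star)\|\le M_i\|\B{A}_i(\btheta^\star-\btheta_i^\star)\|$, and summing with weights $\rho_i$ produces the second summand $2\sum_i\rho_i M_i^2\|\B{A}_i(\btheta^\star-\btheta_i^\star)\|^2$, completing the proof.
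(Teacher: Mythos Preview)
Your proposal is correct and follows essentially the same route as the paper. The paper uses the identical decomposition (written as $\beta^2-\{\pi[\beta]\}^2=(\beta-\pi[\beta])^2+2\pi[\beta](\beta-\pi[\beta])$ and then shifted by $\pi[\bar\beta^2]$ at the end, whereas you center by $\pi[\beta^2]$ from the start), the same Cauchy--Schwarz split, the same Herbst/Bakry--\'Emery subgaussian bound for the linear factor (the paper cites \citet[Proposition~5.4.1]{bakry2013analysis}), and for the quadratic factor it invokes \citet[Lemma~16]{Vono_Paulin_Doucet_2019}, which is precisely the sub-exponential estimate $\log\pi[\mathrm{e}^{\lambda(\bar\beta^2-\pi[\bar\beta^2])}]\le 4\lambda^2\sigma^4$ for $\lambda\sigma^2\le 1/6$ that you propose to rederive. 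For \eqref{eq:bound_epe_beta_square} the paper applies Young's inequality directly to each $\|\nabla U_i(\B{A}_i\btheta)\|^2$ (splitting around $\B{A}_i\btheta^\star$ and then using $\nabla U_i(\B{A}_i\btheta_i^\star)=0$) together with $\pi[\|\btheta-\btheta^\star\|^2]\le d/m_U$ from \citet[Proposition~1(ii)]{durmus2018high}; your route via $\beta(\btheta)^2\le 2\beta(\btheta^\star)^2+2L_\beta^2\|\btheta-\btheta^\star\|^2$ is equivalent and gives the same constants.
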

\begin{proof}
Using the decomposition
\[
\beta^{2}(\boldsymbol{\theta})-\defEnsLigne{\pi[\beta]}^2=(\beta(\boldsymbol{\theta})-\pi[\beta])^{2}+2 \pi[\beta](\beta(\boldsymbol{\theta})-\pi[\beta])
\]
and the Cauchy-Schwarz inequality imply, for any $s>0$,
\begin{equation}\label{eq:bound:expect_expo}
\pi\brbig{\mathrm{e}^{s\{\beta^{2}-\defEnsLigne{\pi[\beta]}^2\}}}
\le \acbig{\pi[\mathrm{e}^{2 s\{\beta-\pi[\beta]\}^{2}}]}^{\half}\cdot\acbig{{\pi}[\mathrm{e}^{4 s \pi[\beta] \{\beta-\pi[\beta]\}}]}^{\half}\eqsp.
\end{equation}
The proof consists in bounding the two terms in the right-hand sided. 
Since $\beta:\Rd\to\R$ is $L_\beta$-Lipschitz by \Cref{lem:betaLipschitz}, for any $0 \le s \le m_U/(12 L_{\beta}^2)$, using \citet[Lemma 16]{Vono_Paulin_Doucet_2019} and \Cref{lem:U_stronglyconvex} gives setting $\bar{\beta} = \beta-\pi[\beta]$, that
\begin{equation}
  \label{eq:1:proof:lem:momgenbeta2}
\pi\brbig{\exp\defEnsLigne{2 s\parentheseLigne{\bar{\beta}^{2}-\pi[\bar{\beta}^2]}}}
\le \exp\parentheseLigne{16 s^{2} L_\beta^{4}/m_U^{2}} \eqsp.
\end{equation}
In addition, using \citet[Proposition 5.4.1]{bakry2013analysis}, \Cref{lem:betaLipschitz} and \Cref{lem:U_stronglyconvex}, we get for any $s \ge 0$,
\begin{equation*}
  \label{eq:3}
\pi\brbig{\mathrm{e}^{4 s \pi[\beta](\beta-\pi[\beta])}} \le \mathrm{e}^{8 s^{2}\{\pi[\beta]\}^{2} L_{\beta}^{2}/m_U}\eqsp.
\end{equation*}
Plugging this result and \eqref{eq:1:proof:lem:momgenbeta2} into \eqref{eq:bound:expect_expo}, we get 
\[
\pi\brbig{\mathrm{e}^{s\{\beta^{2}-\defEnsLigne{\pi[\beta]}^2\}}}
\le
\exp\parentheseLigne{s \pi(\bar{\beta}^2)+ 8 s^{2} L_\beta^{4}/m_U^{2} + 4 s^{2}\{\pi[\beta]\}^{2} L_{\beta}^{2}/m_U} \eqsp.
\]
The proof of \eqref{eq:lem:momgenbeta2} follows using $\pi(\bar{\beta}^2) = \pi(\beta^2) - [\pi(\beta)]^2$ and rearranging terms. 
 
Using the Young inequality, \Cref{ass:supp_fort_convex}-\ref{ass:1},$\nabla U_i (\B{A}_i \thetabf_i^{\star}) = 0$, $\nabla U(\bftheta^{\star}) = 0$, we have
\begin{align*}
\pi(\beta^2) & = \int_{\Rd} \prbigg{\sum_{i=1}^{b} \rho_i \|\grad U_{i}(\B{A}_{i}\btheta)\|^2} \pi(\btheta)\,\dd \btheta \\
    & \le 2 \int_{\Rd}\prbigg{\sum_{i=1}^{b} \rho_i M_i^2\|\B{A}_{i}(\btheta-\bthetastar)\|^2} \pi(\btheta)\,\dd \btheta + 2 \sum_{i=1}^{b} \rho_i M_i^2\|\B{A}_{i}(\bthetastar-\btheta^\star_i)\|^2\\
    & \le 2\lambda_{\max}\prbigg{\sum_{i=1}^b\rho_i M_i^2 \B{A}_i^{\top} \B{A}_i} \int_{\Rd} \|\btheta-\bthetastar\|^2 \pi(\btheta) \,\dd \btheta+ 2 \sum_{i=1}^{b} \rho_i M_i^2\|\B{A}_{i}(\bthetastar-\btheta^\star_i)\|^2\\
    & \le 2d L_\beta^2 /m_U+ 2 \sum_{i=1}^{b} \rho_i M_i^2\|\B{A}_{i}(\bthetastar-\btheta^\star_i)\|^2 \eqsp,
\end{align*}
where we have used $\pi[\|\btheta-\bthetastar\|^2]\le d/m_U$ by \citet[Proposition 1 (ii)]{durmus2018high} and \Cref{lem:U_stronglyconvex}.
\end{proof}
\Cref{prop:pi_rho_proper} shows that $\pi_{\bfrho}(\cdot) = \int_{\rset^p} \Pi_{\bfrho}(\cdot,\bfz) \dd \bfz$ is well-defined and as such admits a finite normalising constant.
These two quantities are defined by
\begin{align}\label{eq:def_Z_pi_rho}
  &\rmZ_{\pi_{\bfrho}} = \int_{\R^d}\exp\acbigg{-\sum_{i\in[b]}U_i^{\rho_i}(\B{A}_i\btheta)}\,\dd\btheta\eqsp,
  &\pi_{\bfrho}(\cdot) = \exp\acbigg{-\sum_{i\in[b]}U_i^{\rho_i}(\B{A}_i\cdot)}/ \rmZ_{\pi_{\bfrho}} \eqsp. 
\end{align}
Finally, note that the following quantity $\rmZ_{\pi}$ is a normalising constant of $\pi$ associated with the potential $U$, \ie~$\pi = \rme^{-U}/\rmZ_{\pi}$,
\begin{equation}
  \label{eq:Z_pi}
\rmZ_{\pi} = \int_{\R^d} \exp\acbigg{-\sum_{i\in[b]}U_i(\B{A}_i\btheta)}\,\dd\btheta\eqsp.
\end{equation}
\begin{lemma}
\label{lem:ratioofnormalizingconstants}
Assume \Cref{ass:well_defined_density}-\Cref{ass:supp_fort_convex} and let $\bfrho \in (\rset_+^*)^b$. Suppose in addition that   $6 L_\beta^2\le m_U$ where $L_{\beta}$ is given in \eqref{eq:Lbeta}.
Then, we have
\[
 \log\pr{\mathrm{Z}_{\pi_{\bfrho}}/\mathrm{Z}_{\pi} }\le \acbigg{d L_\beta^2 /m_U+  \sum_{i=1}^{b} \rho_i M_i^2\|\B{A}_{i}(\bthetastar-\btheta^\star_i)\|^2}(1+2 L_{\beta}^{2}/m_U  ) + 2  L_\beta^{4}/m_U^{2}    \eqsp.
\]
\end{lemma}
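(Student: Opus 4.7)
The plan is to express the ratio of normalising constants as an expectation under $\pi$ and then use the two inequalities of \Cref{lem:UrhoUratiobnd} and \Cref{lem:momgenbeta2} in sequence to control it.

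First, by definitions \eqref{eq:def_Z_pi_rho} and \eqref{eq:Z_pi}, I would write
\[
\mathrm{Z}_{\pi_{\bfrho}}/\mathrm{Z}_{\pi}
= \int_{\R^d} \mathrm{e}^{U(\btheta)-U^{\bfrho}(\btheta)}\,\pi(\btheta)\,\dd\btheta
= \pi\bigl[\mathrm{e}^{U - U^{\bfrho}}\bigr].
\]
Then I would apply the upper bound of \Cref{lem:UrhoUratiobnd} (which requires only convexity of the $U_i$, ensured by \Cref{ass:supp_fort_convex}-\ref{ass:2}) to get $U - U^{\bfrho} \le \overline{B} = \beta^2/2$, where $\beta$ is defined in \eqref{eq:def_beta}. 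Taking logarithms yields
\[
\log\bigl(\mathrm{Z}_{\pi_{\bfrho}}/\mathrm{Z}_{\pi}\bigr) \le \log \pi\bigl[\mathrm{e}^{\beta^2/2}\bigr].
\]

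Next, I would invoke \Cref{lem:momgenbeta2} with $s=1/2$; the admissibility condition $s<m_U/(12L_\beta^2)$ is exactly guaranteed by the standing hypothesis $6L_\beta^2 \le m_U$. Writing $\pi[\mathrm{e}^{\beta^2/2}] = \mathrm{e}^{\pi[\beta^2]/2}\,\pi[\mathrm{e}^{(\beta^2-\pi[\beta^2])/2}]$ and applying \eqref{eq:lem:momgenbeta2} gives
\[
\log \pi\bigl[\mathrm{e}^{\beta^2/2}\bigr]
\le \tfrac{1}{2}\pi[\beta^2] + 2 L_\beta^4/m_U^2 + \{\pi[\beta]\}^2 L_\beta^2/m_U.
\]
By Jensen's inequality, $\{\pi[\beta]\}^2 \le \pi[\beta^2]$, so the right-hand side is at most
\[
\pi[\beta^2]\bigl(\tfrac{1}{2} + L_\beta^2/m_U\bigr) + 2 L_\beta^4/m_U^2.
\]

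Finally, I would plug in the bound \eqref{eq:bound_epe_beta_square} on $\pi[\beta^2]$ from \Cref{lem:momgenbeta2}, namely
\[
\pi[\beta^2]/2 \le d L_\beta^2/m_U + \sum_{i=1}^b \rho_i M_i^2 \|\B{A}_i(\bthetastar - \btheta^\star_i)\|^2,
\]
and factor out $(1 + 2L_\beta^2/m_U)$ from the resulting expression to recover the announced bound. No step is really delicate: the main thing to get right is the choice $s=1/2$ and the clean bookkeeping when combining the bound on $\pi[\beta^2]$ with the multiplier $(1/2 + L_\beta^2/m_U)$ so that the factor $(1+2L_\beta^2/m_U)$ appears exactly as claimed.
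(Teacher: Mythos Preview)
Your proposal is correct and follows essentially the same route as the paper's proof: express the ratio as $\pi[\mathrm{e}^{U-U^{\bfrho}}]$, bound it by $\pi[\mathrm{e}^{\beta^2/2}]$ via \Cref{lem:UrhoUratiobnd}, apply \Cref{lem:momgenbeta2} with $s=1/2$, and finish with $\{\pi[\beta]\}^2\le\pi[\beta^2]$ together with \eqref{eq:bound_epe_beta_square}. Your write-up is in fact slightly cleaner, making the centering $\pi[\mathrm{e}^{\beta^2/2}] = \mathrm{e}^{\pi[\beta^2]/2}\,\pi[\mathrm{e}^{(\beta^2-\pi[\beta^2])/2}]$ explicit before invoking \eqref{eq:lem:momgenbeta2}.
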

\begin{proof}
From the definitions \eqref{eq:def_Z_pi_rho} and \eqref{eq:Z_pi}, we have
$
\mathrm{Z}_{\pi_{\bfrho}}/\mathrm{Z}_{\pi} = \int_{\mathbb{R}^d} \pi(\boldsymbol{\theta}) \exp\{\sum_{i=1}^{b} U_i(\B{A}_i\btheta)- U_i^{\rho_i}(\B{A}_i\btheta)\}\,\dd\boldsymbol{\theta}.
$
By \Cref{lem:UrhoUratiobnd}, we obtain
\begin{equation*}
    % \label{eq:ratioZ}
    \mathrm{Z}_{\pi_{\bfrho}}/\mathrm{Z}_{\pi}
    \le 
    \int_{\mathbb{R}^d} 
    \pi(\btheta)\exp(\overline{B}(\btheta))
   \,\dd\boldsymbol{\theta} \eqsp.
\end{equation*}
Note that  $\overline{B} = \beta^2/2$ by \eqref{eq:def_Ubar}-\eqref{eq:def_beta},
hence using that $6 L_\beta^2\le m_U$, Lemma~\ref{lem:momgenbeta2} applied with $s=1/2$ shows that
\begin{equation*}
    \log \left(\int_{\mathbb{R}^d} \pi(\btheta)\exp(\overline{B}(\btheta))\,\dd\boldsymbol{\theta}\right)
    \le \pi[\beta^2]/2 + 2  L_\beta^{4}/m_U^{2} +  \{\pi[\beta]\}^{2} L_{\beta}^{2}/m_U  \eqsp.
\end{equation*}
Using \Cref{lem:momgenbeta2}-\eqref{eq:bound_epe_beta_square} and $\pi[\beta] \le \pi[\beta^2]$ concludes the proof.
\end{proof}
\subsection{Proof of \Cref{prop:bound_bias_rho}}
Based on the technical lemmas derived in \Cref{prop1:technical_lemmata}, we are now ready to bound the Wasserstein distance of order 2 between $\pi$ and $\pi_{\bfrho}$.
\begin{proof}[Proof of Proposition \Cref{prop:bound_bias_rho}]
  Let $\bfrho \in (\rset_+^*)^b$ such that $\max_{i \in [b]} \rho_i = \bar{\rho} \le \sigma_U^2/12$, where $\sigma_U^2=\|\B{A}^{\top}\B{A}\|\max_{i \in [b]}\{M_i^2\}/m_U$. Then, by definition of $L_{\beta}$ \eqref{eq:Lbeta}, we get 
  \begin{equation}
    \label{eq:proof_bias_pi_rho_condition_L_beta}
    12 L_{\beta}^2 \le m_U \eqsp.
  \end{equation}
  and \Cref{lem:momgenbeta2} can be applied for $s=1$ and \Cref{lem:ratioofnormalizingconstants} too.
By \Cref{lem:U_stronglyconvex}, $U=-\log \pi$ is $m_U$-strongly convex therefore $\pi$ satisfies a log-Sobolev inequality with constant $m_U$ \citep[Theorem 5.2]{Ledouxconcentrationofmeasure}.
Finally, \citet[Theorem 1]{Otto_Villani_2000} shows that $\pi$ satisfies for any $\nu \in \mathcal{P}_2(\Rd)$:
\begin{equation}
  \label{eq:proof_bias_pi_rho_bound_talagrand}
\wasserstein{}(\nu,\pi) \le \sqrt{(2/m_U)\mathrm{KL}(\nu|\pi)} \le \sqrt{(2/m_U)\chi^2(\pi_{\bfrho}|\pi)} \eqsp,
\end{equation}
where $\chi^2$ is the chi-square divergence and where we have used for the last inequality that $\mathrm{KL}(\pi_{\bfrho}|\pi) \le \chi^2(\pi_{\bfrho}|\pi)$ since 
for any $t >0$, $\log(t) \le t-1$. We now bound $\chi^2(\pi_{\bfrho}|\pi)$.
By \eqref{eq:def_Z_pi_rho} and \eqref{eq:Z_pi}, for any $\btheta \in \Rd$, consider the decomposition given by
\begin{align} \pi_{\bfrho}(\boldsymbol{\theta})/\pi(\boldsymbol{\theta}) - 1 &=
    (\Zrm_{\pi}/\Zrm_{\pi_{\bfrho}}) \exp\prbigg{\sum_{i=1}^{b} 
    \prbig{U_i(\B{A}_i\btheta)- U_i^{\rho_i}(\B{A}_i\btheta)}} - 1 \eqsp.
    \label{eq:ratio_pi_def}
\end{align}
In the sequel, we will both lower and upper bound \eqref{eq:ratio_pi_def} in order to upper bound $|1 - \pi_{\bfrho}(\boldsymbol{\theta})/\pi(\boldsymbol{\theta})|$.
Using the fact that for all $x \in \mathbb{R}, \exp(x) - 1 \ge x$, Lemmas \ref{lem:UrhoUratiobnd} and \ref{lem:ratioofnormalizingconstants} yield
\begin{align}\label{eq:ratio_lower_bound}
    &\pi_{\bfrho}(\boldsymbol{\theta})/\pi(\boldsymbol{\theta}) - 1 
    \ge \log\pr{\Zrm_{\pi}/\Zrm_{\pi_{\bfrho}}} + \sum_{i=1}^{b} \pr{U_i(\B{A}_i\btheta)- U_i^{\rho_i}(\B{A}_i\btheta)}\\
	\nonumber
    &\ge -\acbigg{d L_\beta^2 /m_U+  \sum_{i=1}^{b} \rho_i M_i^2\|\B{A}_{i}(\bthetastar-\btheta^\star_i)\|^2}(1+2 L_{\beta}^{2}/m_U  ) - 2  L_\beta^{4}/m_U^{2}  + \underline{B}(\btheta) \ge -A_1\eqsp,
\end{align}
where
\begin{multline*}
    A_1 = \acbigg{d L_\beta^2 /m_U+  \sum_{i=1}^{b} \rho_i M_i^2\|\B{A}_{i}(\bthetastar-\btheta^\star_i)\|^2}(1+2 L_{\beta}^{2}/m_U) \\
    + 2 L_\beta^{4}/m_U^{2}
    + \sum_{i=1}^{b}(d_i/2) \log(1+\rho_iM_i)\eqsp,
\end{multline*}
where we have used in the last inequality that $\underline{B}(\btheta) \ge -\sum_{i=1}^{b}(d_1/2) \log(1+\rho_iM_i)   $ by \eqref{eq:def_Ubar}.
In addition,
by \eqref{eq:def_Z_pi_rho} and  \eqref{eq:Z_pi}
$
\mathrm{Z}_{\pi_{\bfrho}}/\mathrm{Z}_{\pi} = \int_{\mathbb{R}^d} \pi(\boldsymbol{\theta}) \exp\{\sum_{i=1}^{b} U_i(\B{A}_i\btheta)- U_i^{\rho_i}(\B{A}_i\btheta)\}\,\dd\boldsymbol{\theta}.
$
, which implies by \Cref{lem:UrhoUratiobnd} and Jensen inequality
\begin{equation*}
    \mathrm{Z}_{\pi_{\bfrho}}/\mathrm{Z}_{\pi}
    \ge 
    \int_{\mathbb{R}^d} 
    \pi(\btheta)\exp(\underline{B}(\btheta))
   \,\dd\boldsymbol{\theta}  \ge \exp(\pi[\underline{B}]) \eqsp.
\end{equation*}
It follows by \eqref{eq:ratio_pi_def} that $\pi_{\bfrho}(\boldsymbol{\theta})/\pi(\boldsymbol{\theta}) - 1 \le
    \exp\parentheseLigne{\overline{B}(\btheta) -\pi\pr{\underline{B}}} - 1$.
Combining this result and  \eqref{eq:ratio_lower_bound}, it follows that the Pearson $\chi^2$-divergence between $\pi$ and $\pi_{\bfrho}$ can be upper bounded as  where
\begin{equation}
    \label{eq:proof_bias_pi_rho_bound_chi_2_1}
   \chi^2(\pi_{\bfrho}|\pi) \le \max(A_1^2,A_2) \eqsp, \qquad  A_2 = \int_{\Rd}\prbig{\exp\prn{\overline{B}(\btheta) -\pi\pr{\underline{B}}} - 1}^2\pi(\btheta)\,\dd \btheta \eqsp. \nonumber
\end{equation} 
We now provide an explicit bound for $A_2$. First by Jensen inequality, we have $\pi(\exp(\overline{B}))\ge\exp(\pi(\overline{B}))$ which implies that $\exp\prbig{-\pi\prn{\underline{B}}}\pi\brbig{\exp\prn{\overline{B}}}\ge\prod_{i=1}^b (1+\rho_iM_i)^{d_i/2}$ by \eqref{eq:def_Ubar}. Therefore, using  that $\overline{B} = \beta^2/2$ by \eqref{eq:def_Ubar}-\eqref{eq:def_beta} and \Cref{lem:momgenbeta2} with $s=1$ since \eqref{eq:proof_bias_pi_rho_condition_L_beta} holds, we get by \eqref{eq:def_Ubar},
\begin{align}
    &A_2 = \int_{\Rd}\prbig{\exp\prbig{\overline{B}(\btheta) -\pi\pr{\underline{B}}} - 1}^2\pi(\btheta)\,\dd \btheta \nonumber \\
    &= \exp\pr{-2\pi\pr{\underline{B}}}\pi\brbig{\exp\prbig{2\overline{B}}} \nonumber
    - 2\exp\prbig{-\pi\prbig{\underline{B}}}\pi\brbig{\exp\prbig{\overline{B}}}+ 1 \nonumber\\
    &\le  \prod_{i=1}^b (1+\rho_i M_i)^{d_i} \cdot \exp\parentheseLigne{-\pi\defEnsLigne{\sum_{i=1}^{b}(\rho_i/(1+\rho_i M_i))\normLigne{\nabla U_i(\B{A}_i\cdot)}^2}}\pi\brbig{\exp\prn{\beta^2}} \nonumber \\
    &\qquad - 2\prod_{i=1}^b (1+\rho_i M_i)^{d_i/2} + 1 \nonumber\\
    &\le \prod_{i=1}^b (1+\rho_i M_i)^{d_i} \cdot \exp\parentheseLigne{\pi\defEnsLigne{\sum_{i=1}^{b}(\rho_i^2M_i/(1+\rho_i M_i))\normLigne{\nabla U_i(\B{A}_i\cdot)}^2}} \nonumber \\
    &\times\exp\prbigg{8 L_\beta^{4}/m_U^{2} +4 \{2d L_\beta^2 /m_U+ 2 \sum_{i=1}^{b} \rho_i M_i^2\|\B{A}_{i}(\bthetastar-\btheta^\star_i)\|^2\} L_{\beta}^{2}/m_U } \\
    &\qquad- 2\prod_{i=1}^b (1+\rho_i M_i)^{d_i/2} + 1 \eqsp, \label{eq:proof_bias_pi_rho_A_2_1}
\end{align}
where we have used for the last inequality that for $\bftheta \in \rset^d$, $\beta(\bftheta)^2 -\sum_{i=1}^{b}(\rho_i/(1+\rho_i M_i))\normLigne{\nabla U_i(\B{A}_i\bftheta)}^2 =\sum_{i=1}^{b}(\rho_i^2M_i/(1+\rho_i M_i))\normLigne{\nabla U_i(\B{A}_i\bftheta)}^2$, $\pi[\beta]^2 \le \pi[\beta^2]$ by the Cauchy-Schwartz inequality and  \Cref{lem:momgenbeta2}-\eqref{eq:bound_epe_beta_square}.
Similarly to the proof of \Cref{lem:momgenbeta2}-\eqref{eq:bound_epe_beta_square}, by \Cref{ass:supp_fort_convex}-\ref{ass:1}, $\nabla U_i (\B{A}_i \thetabf_i^{\star}) = 0$, $\nabla U(\bftheta^{\star}) = 0$, \citet[Proposition 1 (ii)]{durmus2018high} and \Cref{lem:U_stronglyconvex}, we have
\begin{align*}
  \pi\br{\sum_{i=1}^{b}(\rho_i^2M_i/(1+\rho_i M_i))\normLigne{\nabla U_i(\B{A}_i\cdot)}^2}
  &\le\pi\br{\sum_{i=1}^{b}\rho_i^2M_i\normLigne{\nabla U_i(\B{A}_i\cdot)}^2}\\
  \le 2d\lambda_{\max}\prBig{\sum_{i=1}^b\rho_i^2 M_i^3 \B{A}_i^{\top} \B{A}_i} & / m_U+ 2 \sum_{i=1}^{b} \rho_i^2 M_i^3\|\B{A}_{i}(\bthetastar-\btheta^\star_i)\|^2\eqsp. 
\end{align*}
Therefore, we get by \eqref{eq:proof_bias_pi_rho_A_2_1}
\begin{align}
& A_2\le A_3=\prod_{i=1}^b (1+\rho_i M_i)^{d_i} \exp\prbigg{2d  \lambda_{\max}\prBig{\sum_{i=1}^b\rho_i^2 M_i^3 \B{A}_i^{\top} \B{A}_i} /m_U + 2 \sum_{i=1}^{b} \rho_i^2 M_i^3\|\B{A}_{i}(\bthetastar-\btheta^\star_i)\|^2
} \nonumber \\
  &\exp\prbigg{8 L_\beta^{4}/m_U^{2} +8 \brBig{d L_\beta^2 /m_U+ \sum_{i=1}^{b} \rho_i M_i^2\|\B{A}_{i}(\bthetastar-\btheta^\star_i)\|^2} L_{\beta}^{2}/m_U} - 2\prod_{i=1}^b (1+\rho_i M_i)^{d_i/2} + 1 \eqsp. \label{eq:proof_bias_pi_rho_A_2_2}
\end{align}
It follows by \eqref{eq:proof_bias_pi_rho_bound_chi_2_1} and  \eqref{eq:proof_bias_pi_rho_bound_talagrand}  that 
\begin{align}
    \label{eq:bound_bias}
    \wasserstein{}(\pi_{\bfrho},\pi) &\le \sqrt{(2/m_U)\max(A_1^2,A_3)}\eqsp,
\end{align}
where $A_1$ and $A_3$ are given by \eqref{eq:ratio_lower_bound} and \eqref{eq:proof_bias_pi_rho_A_2_2} respectively. Using that $L_{\beta}^2 = \bigO(\bar{\rho})$ and an expansion of the bound as $\bar{\rho} \to 0$ completes the proof. 
\end{proof}

\section{Proof of \Cref{prop:bias_gamma} and \Cref{prop:bias_gamma_bis}}\label{sec:prop3_proof}

% !TEX root = main.tex

As in \Cref{sec:proof-proposition-2-1}, we assume in all this section that $\bfrho \in (\rset_+^*)^b$ is fixed.
For any $\bfgamma = (\gamma_1,\ldots,\gamma_b)\in(\R_+^*)^b$, we establish in this section explicit bounds on $\wasserstein{}(\pi_{\bfrho,\bfgamma,\bfN}, \pi_{\bfrho})$ where $\pi_{\bfrho}$ is given in \eqref{eq:target_density} and $\pi_{\bfrho,\bfgamma,\bfN}$ is the marginal distribution defined by 
\begin{equation*}
  %\label{eq:def_pi_rho_gamma}
  \pi_{\bfrho,\bfgamma,\bfN}(\msa) = \Pi_{\bfrho,\bfgamma,\bfN}(\msa\times \rset^p)\eqsp, \qquad \msa \in \mcbb(\rset^d)\eqsp,
\end{equation*}
of the stationary probability measure $\Pi_{\bfrho,\bfgamma,\bfN}$ associated with the Markov chain  $(Z_n,\theta_{n})_{n\ge 0}$ defined in \Cref{algo:ULAwSG}. Note that in the case $\bfN = N(1,\ldots,1)$, this distribution is independent of $N$, see \Cref{cor:convergence_rho_gamma}. 
To this purpose, we define an ``ideal'' dynamics from which we cannot sample but which converges geometrically towards $\Pi_{\bfrho}$ under appropriate conditions.
The corresponding ideal process will play the same role as the Langevin dynamics for the study of the unadjusted Langevin algorithm \citep{durmus2018high}.
This dynamics is defined as follows.
Consider first for any $\btheta\in \Rd$, $i\in[b]$, the stochastic differential equation (SDE) defined by
\begin{equation}\label{eq:def:cont_process}
\dd \Yc_t^{i, \btheta} = - \nabla \tildeU_{i}(\Yc^{i, \btheta}_t)\,\dd t - \rho_{i}^{-1} \B{A}_{i} \btheta + \sqrt{2}\,\dd B_t^{i}\eqsp,
\end{equation}
where $(B_t^i)_{t\ge 0}$ is a $d_{i}$-dimensional Brownian motion and $\tildeU_{i}$ is defined in \eqref{eq:def:cont_Utilde}.
Note that under \Cref{ass:supp_fort_convex}-\ref{ass:1}, this SDE admits a unique strong solution \citep[Theorem (2.1) in Chapter IX]{revuz2013continuous}. 
Denote for any $i\in [b]$, the Markov semi-group associated to \eqref{eq:def:cont_process} by $(\tR_{\rho_{i},t}^{i})_{t\ge 0}$ defined for any $\tilde{\by}_0^i\in\R^{d_{i}}$, $t\ge 0$ and $\msb_{i}\in\mcb{\R^{d_{i}}}$ by
\begin{equation*}
  \label{eq:semigroup_langevin_{i}}
  \tR_{\rho_{i},t}^{i}(\tilde{\by}_0^i,\msb_{i}|\btheta)=\PP(\Yc^{i, \btheta,\tilde{\by}_0^i}_t\in \msb_{i})\eqsp,
\end{equation*}
where $(\Yc_t^{i, \btheta,\tilde{\by}_0^i})_{t\ge 0}$ is a solution of \eqref{eq:def:cont_process} with $\Yc_0^{i,\btheta,\tilde{\by}_0^i}=\tilde{\by}_0^i$. For any bounded measurable function $f_{i}:\R^{d_{i}}\to \R_+$, \Cref{lem:cont_measurability} shows the measurability of the function $(\btheta,\tilde{\by}_0^i)\mapsto \PE [f_{i}(\Yc_t^{i,\btheta,\tilde{\by}_0^i})]$ on $\Rd\times\R^{d_{i}}$ and therefore $\tR_{\rho_{i},t}^i$ is a conditional Markov kernel.
\begin{lemma}\label{lem:cont_measurability}
For any bounded measurable function $f_{i}:\R^{d_{i}}\to \R_+$ and function $f_{i}$ satisfying \Cref{ass:supp_fort_convex}-\ref{ass:1}, the mapping $ (\tilde{\btheta}_{0},\tilde{\by}_{0}^{i})\mapsto \PE [f_{i}(\Yc_t^{i,\tilde{\btheta}_{0},\tilde{\by}_0^i})]$ is  Borel measurable.
\end{lemma}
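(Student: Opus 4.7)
My plan is to establish the claimed Borel measurability in two steps: first produce a version of the SDE solution $(\omega, \tilde{\btheta}_0, \tilde{\by}_0^i) \mapsto \Yc_t^{i, \tilde{\btheta}_0, \tilde{\by}_0^i}(\omega)$ that is jointly Borel measurable, and then conclude by Fubini-Tonelli applied to the bounded measurable integrand $f_i \circ \Yc_t$.

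For the first step, I would rely on the classical Picard-Lindelöf construction of strong solutions. Under \Cref{ass:supp_fort_convex}-\ref{ass:1}, $\|\nabla^2 U_i\| \le M_i$, hence $\nabla \tildeU_i = \nabla U_i + \rho_i^{-1}\mathrm{Id}$ is Lipschitz with constant $\tilde{M}_i = M_i + \rho_i^{-1}$; consequently, the drift $b(\btheta, y) = -\nabla \tildeU_i(y) - \rho_i^{-1} \B{A}_i \btheta$ is jointly Lipschitz on $\R^d \times \R^{d_i}$. Set $\Yc_t^{(0), \btheta, \by_0} = \by_0 + \sqrt{2}\, B_t^i$ and, for $n \ge 0$, $\Yc_t^{(n+1), \btheta, \by_0} = \by_0 + \int_0^t b(\btheta, \Yc_s^{(n), \btheta, \by_0})\,\dd s + \sqrt{2}\, B_t^i$. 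An induction in $n$ shows that each iterate is jointly Borel measurable in $(\omega, \btheta, \by_0)$ and continuous in $(\btheta, \by_0)$ for $\PP$-a.e.\ $\omega$. The joint Lipschitz property combined with Gronwall's lemma yields a geometric upper bound on $\PE[\sup_{s \le t}\|\Yc_s^{(n+1), \btheta, \by_0} - \Yc_s^{(n), \btheta, \by_0}\|^2]$ that is uniform on compact subsets of $\R^d \times \R^{d_i}$. Therefore $\Yc_t^{(n), \btheta, \by_0}$ converges in $L^2(\PP)$, and the limit, which by strong uniqueness coincides with $\Yc_t^{i, \btheta, \by_0}$, inherits joint Borel measurability in $(\omega, \btheta, \by_0)$ after selecting a suitable modification: convergence in probability of a sequence of jointly measurable maps always produces a jointly measurable limit.

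The second step is then immediate. Since $f_i$ is bounded Borel measurable, the composition $(\omega, \btheta, \by_0) \mapsto f_i(\Yc_t^{i, \btheta, \by_0}(\omega))$ is bounded and jointly measurable on $\Omega \times \R^d \times \R^{d_i}$, so Fubini-Tonelli yields that $(\btheta, \by_0) \mapsto \PE[f_i(\Yc_t^{i, \btheta, \by_0})]$ is Borel measurable, which is the desired claim. The only nonroutine point --- and thus the main obstacle --- is ensuring joint measurability of the limiting solution in $(\omega, \btheta, \by_0)$ rather than merely in $\omega$ for each fixed $(\btheta, \by_0)$; this is resolved cleanly by the uniform-on-compacts $L^2$ convergence of the Picard scheme, which allows the joint measurability of the iterates to pass to the limit up to a standard modification argument.
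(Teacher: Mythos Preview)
Your argument is correct, but the paper takes a different and shorter route. Instead of running the Picard scheme by hand and tracking joint measurability through the limit, the paper absorbs the parameter $\btheta$ into the state: it considers the enlarged SDE on $\R^d\times\R^{d_i}$ with $\dd\tilde\theta_t=\mathbf{0}_d$ and $\dd\tilde Y_t^i=-\nabla\tildeU_i(\tilde Y_t^i)\,\dd t-\rho_i^{-1}\B{A}_i\tilde\theta_t+\sqrt{2}\,\dd B_t^i$. Under \Cref{ass:supp_fort_convex}-\ref{ass:1} this system has globally Lipschitz coefficients, so by \citet[Theorem~(2.4), Chapter~IX]{revuz2013continuous} it has a unique strong solution, and by \citet[Theorem~(1.9), Chapter~IX]{revuz2013continuous} the associated semigroup is measurable in the full initial condition $(\tilde\btheta_0,\tilde\by_0^i)$. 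Since the $\tilde Y$-component of this enlarged solution coincides with $\Yc_t^{i,\tilde\btheta_0,\tilde\by_0^i}$, measurability follows immediately. The trade-off: the paper's trick is cleaner and outsources all analytic work to a standard reference, but requires knowing that specific result; your Picard argument is self-contained and makes explicit where the Lipschitz bound enters, at the cost of the somewhat delicate modification step needed to pass joint measurability to the $L^2$ limit.
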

\begin{proof}
Consider the following stochastic differential equation 
\begin{equation*}%\label{eq:def:stoch_pro}
\begin{cases}
\dd \tilde{\theta}_{t}=\mathbf{0}_{d}\eqsp, \\
\dd \tilde{Y}_t^{i} = - \nabla \tildeU_{i}(\tilde{Y}^{i}_t)\,\dd t - \rho_{i}^{-1} \B{A}_{i}\tilde{\theta}_t + \sqrt{2}\,\dd B_t^{i}\eqsp.
\end{cases}
\end{equation*}
Using \citet[Theorem (2.4) in Chapter IX]{revuz2013continuous}, since $U_{i}$ satisfies \Cref{ass:supp_fort_convex}-\ref{ass:1}, there exists a unique  solution $(\tilde{X}_{t}^{\tilde{\bx}})_{t\ge 0}=(\tilde{\theta}_{t},\tilde{Y}_t^{i})_{t\ge 0}$ with initial condition $\tilde{\bx}=(\tilde{\btheta}_0^{\top},(\tilde{\by}_{0}^{i})^{\top})^{\top}\in\R^p$.
Then, the proof follows from \citet[Theorem (1.9) in Chapter IX]{revuz2013continuous} and the fact that $\tilde{Y}_t^{i}$ is the unique solution of \eqref{eq:def:cont_process} with $\btheta = \btheta_0$.
\end{proof}

Define
for any $\btheta \in \Rd$, $\bz = (\bz_{1}^{\top},\cdots,\bz_{b}^{\top})^{\top} \in \mathbb{R}^{p}$, and for $i \in [b]$, $\mathsf{B}_{i} \in \mathcal{B}(\mathbb{R}^{d_{i}})$, 
\begin{align*}
  \tQ_{\bfrho, \bfgamma}\pr{\bz,\mathsf{B}_1\times\cdots\times\mathsf{B}_b|\btheta} = \prod_{i=1}^b \tR_{\rho_{i}, N_{i}\gamma_{i}}^{i}(\bz_{i},\mathsf{B}_{i}|\btheta)\eqsp,
\end{align*}
and consider the Markov kernel defined, for any $\bx^{\top} = (\btheta^{\top},\bz^{\top})$ and $\msa \in \mcbb(\rset^d)$, $\msb \in \mathcal{B}( \mathbb{R}^p)$, by
\begin{equation}\label{eq:def:P_tilde}
  \tP_{\bfrho, \bfgamma}(\bx,\msa \times \msb) = \int_{\msb} \tQ_{\bfrho,\bfgamma}\pr{\bz,\dd {\tbfz}|\thetabf}\int_{\msa}\Pi_{\bfrho}(\dd \tbtheta|\tbfz)\,\eqsp,
\end{equation}
where $\Pi_{\bfrho}(\cdot | \tbfz)$ is defined in \eqref{eq:def:Pi_rho_cond}. 
Note that $P_{\bfrho, \bfgamma, \bfN}$ can be interpreted as a discretised version of $\tP_{\bfrho, \bfgamma}$ using the Euler-Maruyama scheme.

In the sequel, we first derive technical lemmata in \Cref{subsec:tech_lemmata_cont} that are used to prove both \Cref{prop:bias_gamma} and \Cref{prop:bias_gamma_bis}.
Based on these lemmata, we then prove each proposition in a dedicated section, namely \Cref{subsec:proof_cont1} and \Cref{subsec:proof_cont2}.

\subsection{Synchronous coupling and a first estimate}
\label{subsec:tech_lemmata_cont}

The main idea to prove \Cref{prop:bias_gamma} and \Cref{prop:bias_gamma_bis} is to define $(X_n,\tilde{X}_n)_{n \in \N}$ such that for any $n \in \N$, $(X_n,\tilde{X}_n)$ is a coupling between $\updelta_{\B{x}}P^n_{\bfrho,\bfgamma,\bfN}$ defined in \eqref{eq:P_rho_gamma_N} and $\updelta_{\B{\tilde{x}}}\tP^n_{\bfrho,\bfgamma}$, and satisfies
$$
\mathbb{E}\br{\norm{X_n-\tilde{X}_n}^2} \le c_1(\B{x},\tilde{\B{x}}) \mathrm{e}^{-c_2\min_{i \in [b]}\{\gamma_i m_i\}} + c_3 \gamma^{\alpha}\eqsp,
$$
where $c_2,c_3 > 0$ and $\alpha \in \{1,2\}$ depending if \Cref{ass:hessian_lipschitz} holds or not.
Conditioning with respect to $(X_0,\tilde{X}_0)$ with distribution $\updelta_{\bx} \otimes \Pi_{\bfrho}$, using the definition of the Wasserstein distance of order 2 and taking $n \rightarrow \infty$, we obtain
$$
W_2(\pi_{\bfrho},\pi_{\bfrho,\bfgamma,\bfN}) \le W_2(\Pi_{\bfrho},\Pi_{\bfrho,\bfgamma,\bfN}) \le \tilde{c}_3 \gamma^{\alpha} \eqsp,
$$
where $\tilde{c}_3 > 0$.
We now provide the rigourous construction of $(X_n,\tilde{X}_n)_{n \in \N}$.

Let $\{(B^{(i,n)}_t)_{t\ge 0}:i\in[b], n\in \N\}$ be independent random variables such that for any $i\in[b]$, the sequences $\{(B^{(i,n)}_{t})_{t\ge 0}:n\in\N\}$ are i.i.d. $d_{i}$-dimensional Brownian motions and let $(\xi_n)_{n\ge 0}$ be a sequence of i.i.d. standard $d$-dimensional Gaussian random variables independent of $\{(B_t^{(i, n)})_{t\ge 0}:i\in[b], n\in \N\}$.
Consider the stochastic process $(\tilde{X}_n)_{n\ge 0}$ on $\R^d \times \R^p$ starting from $\tilde{X}_{0}$ distributed according to $\Pi_{\bfrho}$ and defined by the recursion: for $n\in\N$, $i\in [b]$,
\begin{equation}\label{eq:def:cont_variables}
\tilde{X}_{n+1}=(\tilde{\theta}_{n+1}^{\top}, \Zc_{n+1}^{\top})^{\top}\eqsp,
\quad \Zc_{n+1}^i = \Yc^{(i, n)}_{N_{i} \gamma_{i}}\eqsp,
\quad \tilde{\theta}_{n+1} = \bar{\B{B}}_0^{-1}\B{B}_0^{\top}\B{\tilde{D}}_{0}^{\half} \Zc_{n+1} + \bar{\B{B}}_0^{-\half} \xi_{n+1}\eqsp,
\end{equation}
where $(\Yc_t^{(i, n)})_{t\ge 0}$, is a solution of \eqref{eq:def:cont_process} starting from $\Zc_n^i$ with parameter $\btheta\leftarrow \btheta_n$.
Similarly to the process $(X_n)_{n\in\nset}$ defined in \Cref{algo:ULAwSG}, the process $(\tX_n)_{n \in\nset}$ defines a homogeneous Markov chain.  Indeed, it is easy to show that for any $n \in \nset$ and measurable function $f : \rset^{p} \to \rset_+$, $\PE[f(\tZ_{n+1})|\tX_n] = \int_{\rset^p} f(\tilde{\bz})\tQ_{\bfrho,\gammabf}(\tZ_{n},\dd \zbf|\tilde{\theta}_n)$ and therefore $(\tX_n)_{n \in\nset}$ is associated with \eqref{eq:def:P_tilde}.
\begin{proposition}\label{prop:cont_X_law}
Assume \Cref{ass:well_defined_density}-\Cref{ass:supp_fort_convex}-\ref{ass:1}, and let $\bfN\in(\N^*)^{b}, \gammabf \in (\rset_+^*)^b$.
Then, the Markov kernel $\tP_{\bfrho, \bfgamma}$ defined in \eqref{eq:def:P_tilde} admits $\Pi_{\bfrho}$ as an invariant probability measure. 
\end{proposition}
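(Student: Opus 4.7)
The plan is to show invariance by exploiting the fact that the SDE \eqref{eq:def:cont_process} is nothing more than the overdamped Langevin diffusion targeting the conditional $\Pi_{\bfrho}(\cdot \mid \btheta)$, and then combining this worker-wise invariance with a Fubini-type computation involving the conditional Markov kernel $\Pi_{\bfrho}(\cdot \mid \bz_{1:b})$ used in the ``$\theta$-step''.

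First I would observe that $\nabla_{\bz_i}\{U_i(\bz_i) + \norm{\bz_i-\B{A}_i\btheta}^2/(2\rho_i)\} = \nabla \tildeU_i(\bz_i) - \rho_i^{-1}\B{A}_i\btheta$, so the drift of \eqref{eq:def:cont_process} is exactly minus the gradient (in $\bz_i$) of the log-density $-\log \Pi_{\bfrho}(\bz_i \mid \btheta)$ given in \eqref{eq:conditional_zi_given_theta}. Under \Cref{ass:supp_fort_convex}-\ref{ass:1}-\ref{ass:2}, the potential $\bz_i \mapsto U_i(\bz_i) + \norm{\bz_i-\B{A}_i\btheta}^2/(2\rho_i)$ is smooth and strongly convex, so standard results on Langevin diffusions (e.g.\ \citet[Proposition 1]{durmus2018high} or a direct application of It\^o's formula together with integration by parts) ensure that $\Pi_{\bfrho}(\cdot \mid \btheta)$ is an invariant probability measure for the semigroup $\tR^i_{\rho_i,t}(\cdot,\cdot\mid\btheta)$ for every $t\ge 0$ and every $\btheta\in\R^d$. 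In particular, for every $\btheta\in\R^d$ and every measurable $\msb_i \subset \R^{d_i}$, $\int_{\R^{d_i}} \tR^i_{\rho_i,N_i\gamma_i}(\bz_i,\msb_i\mid\btheta)\,\Pi_{\bfrho}(\dd\bz_i\mid\btheta) = \Pi_{\bfrho}(\msb_i\mid\btheta)$.

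Next, since conditionally on $\btheta$ the coordinates $(\bz_i)_{i\in[b]}$ are independent under $\Pi_{\bfrho}$, the product structure of $\tQ_{\bfrho,\bfgamma}$ immediately yields, for any product set $\msb = \msb_1\times\cdots\times\msb_b$,
\begin{equation*}
\int_{\R^p}\tQ_{\bfrho,\bfgamma}(\bz,\msb\mid\btheta)\,\Pi_{\bfrho}(\dd\bz\mid\btheta) = \prod_{i=1}^b \Pi_{\bfrho}(\msb_i\mid\btheta) = \Pi_{\bfrho}(\msb\mid\btheta)\eqsp,
\end{equation*}
and this extends to arbitrary $\msb\in\mcbb(\R^p)$ by a monotone class argument.

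To conclude, I would simply compute $\Pi_{\bfrho}\tP_{\bfrho,\bfgamma}(\msa\times\msb)$ for $\msa\in\mcbb(\R^d)$, $\msb\in\mcbb(\R^p)$, using the disintegration $\Pi_{\bfrho}(\dd\btheta,\dd\bz) = \Pi_{\bfrho}^{\btheta}(\dd\btheta)\,\Pi_{\bfrho}(\dd\bz\mid\btheta)$, where $\Pi_{\bfrho}^{\btheta}$ denotes the $\btheta$-marginal of $\Pi_{\bfrho}$. By Fubini (the measurability hypotheses being supplied by \Cref{lem:cont_measurability}) and the invariance established above,
\begin{align*}
\Pi_{\bfrho}\tP_{\bfrho,\bfgamma}(\msa\times\msb)
&= \int_{\R^d}\Pi_{\bfrho}^{\btheta}(\dd\btheta)\int_{\R^p}\Pi_{\bfrho}(\dd\bz\mid\btheta)\int_{\msb}\tQ_{\bfrho,\bfgamma}(\bz,\dd\tbfz\mid\btheta)\int_{\msa}\Pi_{\bfrho}(\dd\tbtheta\mid\tbfz)\\
&= \int_{\R^d}\Pi_{\bfrho}^{\btheta}(\dd\btheta)\int_{\msb}\Pi_{\bfrho}(\dd\tbfz\mid\btheta)\int_{\msa}\Pi_{\bfrho}(\dd\tbtheta\mid\tbfz)\\
&= \int_{\msb}\Pi_{\bfrho}^{\bfz}(\dd\tbfz)\int_{\msa}\Pi_{\bfrho}(\dd\tbtheta\mid\tbfz) = \Pi_{\bfrho}(\msa\times\msb)\eqsp,
\end{align*}
where $\Pi_{\bfrho}^{\bfz}$ is the $\bfz$-marginal of $\Pi_{\bfrho}$ and the last equality comes from the disintegration $\Pi_{\bfrho}(\dd\tbtheta,\dd\tbfz) = \Pi_{\bfrho}^{\bfz}(\dd\tbfz)\Pi_{\bfrho}(\dd\tbtheta\mid\tbfz)$. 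Extending to all rectangles and then to all of $\mcbb(\R^d\times\R^p)$ by a monotone class argument finishes the proof. The only genuine ``work'' is verifying the first step (invariance of the conditional under the Langevin semigroup), which is classical; everything afterwards is bookkeeping using the conditional structure of $\Pi_{\bfrho}$.
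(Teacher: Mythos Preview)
Your proof is correct and follows essentially the same route as the paper: both establish that $\tQ_{\bfrho,\bfgamma}(\cdot\mid\btheta)$ leaves $\Pi_{\bfrho}(\cdot\mid\btheta)$ invariant (the classical Langevin property), then push this through a Fubini computation using the disintegration of $\Pi_{\bfrho}$ into its $\btheta$-marginal and the conditional $\Pi_{\bfrho}(\cdot\mid\btheta)$. One small remark: you invoke \Cref{ass:supp_fort_convex}-\ref{ass:2} for the invariance step, but the proposition only assumes \Cref{ass:supp_fort_convex}-\ref{ass:1}, and indeed strong convexity is not needed there---the paper simply cites \cite{Roberts1996} and \cite{kent:1978} for this standard fact.
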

\begin{proof}
By property of the Langevin diffusion defined in \eqref{eq:def:cont_process}, for all $\btheta_0\in\R^d$, the Markov kernel $\tQ_{\bfrho,\bfgamma}(\cdot|\btheta_0)$ admits $\Pi_{\bfrho}(\cdot|\btheta_0)$ as invariant measure, see \eg~\cite{Roberts1996} or \cite{kent:1978}. Thus, for any $\btheta_0\in\R^d$ and $\mathsf{B} \in \mathcal{B}(\R^{p})$, we have
\begin{equation}\label{eq:eq:int_pi_rho}
\int_{\mathsf{B}}\Pi_{\bfrho}(\bz_{1}|\btheta_0)\,\dd\bz_{1}
= \int_{\bz_0\in\R^p}\tQ_{\bfrho,\bfgamma}(\bz_0,\mathsf{B}|\btheta_0)\Pi_{\bfrho}(\bz_0|\btheta_0)\,\dd\bz_0\eqsp.
\end{equation}
Denote by $\pi_{\bfrho}^{\btheta},\pi_{\bfrho}^{\bz}$ the marginals under $\Pi_{\bfrho}$: $\pi_{\bfrho}^{\btheta}(\msa) = \Pi_{\bfrho}(\msa \times \rset^p)$, $\pi_{\bfrho}^{\bz}(\msb) = \Pi_{\bfrho}(\rset^d \times \msb)$, for $\msa \in \mcbb(\rset^d)$ and $\msb \in \mcbb(\rset^p)$, and
consider the Markov chain $(\tX_n)_{n\in\N}$ defined in \eqref{eq:def:cont_variables}. For any measurable function $f:\R^{d+p}\to\R_+$, the Fubini-Tonelli theorem gives 
\begin{align}
  \nonumber
\E[f(\tX_1)]
&=\int_{\R^{d+p}}\int_{\R^{d+p}}f(\bx_{1})\Pi_{\bfrho}(\btheta_1|\bz_1)\,\dd\btheta_1 \tQ_{\bfrho,\bfgamma}(\bz_0,\dd \bz_1|\btheta_0)\Pi_{\bfrho}(\btheta_0,\bz_0)\,\dd\btheta_0\,\dd\bz_0\\
  \nonumber
&=\int_{\R^{d}}\int_{\R^{p}}f(\bx_{1}) \Pi_{\bfrho}(\btheta_1|\bz_1)\int_{\R^{d}}\brbigg{\int_{\R^{p}}
\tQ_{\bfrho,\bfgamma}(\bz_0,\dd \bz_1|\btheta_0)\Pi_{\bfrho}(\bz_0|\btheta_0)\,\dd\bz_0}\pi_{\bfrho}^{\btheta}(\btheta_0)\,\dd\btheta_0\,\dd\btheta_1\\
\label{eq:eq:int_pi_rho_2}
&=\int_{\R^{d}}\int_{\R^{p}}f(\bx_{1}) \Pi_{\bfrho}(\btheta_1|\bz_1)\brbigg{\int_{\btheta_0\in\R^{d}}\Pi_{\bfrho}(\bz_{1}|\btheta_0)\pi_{\bfrho}^{\btheta}(\btheta_0)\,\dd\btheta_0}\,\dd\bz_{1}\,\dd\btheta_1\\
    \nonumber
&=\int_{\R^{d}}\int_{\R^{p}}f(\bx_{1}) \Pi_{\bfrho}(\btheta_1|\bz_1)\pi_{\bfrho}^{\bz}(\bz_1)\,\dd\bz_1\dd\btheta_1\\
       \nonumber
&=\int_{\R^{d+p}}f(\bx_1)\Pi_{\bfrho}(\btheta_1,\bz_1)\,\dd\bz_1\dd\btheta_1
= \E[f(\tX_0)]\eqsp,
\end{align}
where we have used \eqref{eq:eq:int_pi_rho} in \eqref{eq:eq:int_pi_rho_2}.
Therefore, $X_1$ has distribution $\Pi_{\bfrho}$ and the Markov kernel $\tP_{\bfrho,\bfgamma}$ admits $\Pi_{\bfrho}$ as a stationary distribution, which completes the proof.
\end{proof}
Define by induction the synchronous coupling $(X_n=(\theta_n, Z_n))_{n \ge 0}, (\tilde{X}_n =(\tilde{\theta}_n, \Zc_n))_{n \ge 0}$, starting from $(\theta_0, \Zb_0) = (\btheta, \bz)$, $(\tilde{\theta}_0, \tilde{\Zb}_0)$ distributed according to $\Pi_{\bfrho}$, for any $i\in [b]$ and $n\ge 0$, as
\begin{align}
  \label{eq:cont_coupling_2}
  &\Zc_{n+1}^{i} = \Yc_{N_{i}\gamma_{i}}^{(i, n)}\eqsp,
  &\tilde{\theta}_{n+1} = \bar{\B{B}}_0^{-1}\B{B}_0^{\top}\B{\tilde{D}}_{0}^{\half} \Zc_{n+1} + \bar{\B{B}}_0^{-\half} \xi_{n+1}\eqsp, \\
  \nonumber
  &\Zb_{n+1}^{i} = \Yb_{N_{i}\gamma_{i}}^{(i, n)}\eqsp,
  &\theta_{n+1} = \bar{\B{B}}_0^{-1}\B{B}_0^{\top}\B{\tilde{D}}_{0}^{\half} \Zb_{n+1} + \bar{\B{B}}_0^{-\half} \xi_{n+1}\eqsp,
\end{align}
where we consider for any $i\in[b], k\in\N$, for $t\in[k\gamma_{i},(k+1)\gamma_{i})$
\begin{equation}
  \label{eq:cont_coupling_Y_N}
\begin{aligned}%\label{eq:cont_coupling_Y_N}
&\Yc_{t}^{(i, n)} = \Yc_{k \gamma_{i}}^{(i, n)} - \int_{k\gamma_{i}}^{t} \nabla \tildeU_{i}(\Yc^{(i, n)}_{l})\,\dd l +(t-k\gamma_{i})(\rho_{i})^{-1} \B{A}_{i} \tilde{\theta}_{n} +2^{\half}(B_{t}^{(i, n)}-B_{k\gamma_{i}}^{(i, n)})\eqsp,\\
&\Yb_{t}^{(i, n)} = \Yb_{k \gamma_{i}}^{(i, n)}  - (t-k\gamma_{i})\nabla \tildeU_{i}(\Yb_{k \gamma_{i}}^{(i, n)}) + (t-k\gamma_{i})(\rho_{i})^{-1} \B{A}_{i} \theta_{n} + 2^{\half}(B_{t}^{(i, n)}-B_{k\gamma_{i}}^{(i, n)})\eqsp.
\end{aligned}
\end{equation}
Let $\mathcal{G}_0 = \sigma(Z_0,\tilde{Z}_0,\theta_0,\tilde{\theta}_0)$, for any $n \in \mathbb{N}^{*}$, let 
\begin{equation}\label{eq:filtrationG}
\mathcal{G}_{n} = \sigma\{(Z_0,\tilde{Z}_0,\theta_0,\tilde{\theta}_0), (B^{(i,k)}_t)_{t\ge 0}:i\in[b], k \le n\}\eqsp,
\end{equation}
and for any $t\ge0$, let $\mathcal{H}_t^{(n)} = \sigma(\{(B^{(i,n)}_s)_{s\le t}:i\in[b]\})$, and
\begin{equation}
\label{eq:filtrationF}
\mathcal{F}_t^{(n)} \text{ the } \sigma\text{-field generated by } \mathcal{H}_t^{(n)} \text{ and } \mathcal{G}_{n-1}\eqsp.
\end{equation}

Note that $X_{n}$ and $\tX_{n}$ are distributed according to $\Pi_{\bfrho}\tP_{\bfrho}^{n}$ and $\updelta_{\tbfx}P_{\bfrho,\bfgamma,\bfN}^{n}$, respectively.
Hence, by definition of the Wasserstein distance of order 2, it follows since $\Pi_{\bfrho}\tP_{\bfrho}^{n} = \Pi_{\bfrho}$ by \Cref{prop:cont_X_law} that 
\begin{equation}\label{eq:W2X_def}
    \wasserstein{}(\Pi_{\bfrho},\updelta_{\bfx}P_{\bfrho,\bfgamma,\bfN}^{n}) \le \mathbb{E}\br{\|X_{n}-\tX_{n}\|^{2}}^{\half}\eqsp.
\end{equation}

We start this section by a first estimate on $\mathbb{E}\parentheseDeuxLigne{\|X_{n}-\tX_{n}\|^{2}}^{\half}$ and some technical results needed for the proof of \Cref{prop:bias_gamma} and \Cref{prop:bias_gamma_bis}.
The following result holds regarding the process $(\Yc_t^{(i,n)})_{t \in \R_+}$ defined, for any $i \in [b]$ and $n \in \N$, in \eqref{eq:cont_coupling_Y_N}.

\begin{lemma} \label{lem:bound_expec_z_x_cont}
Assume \Cref{ass:well_defined_density}-\Cref{ass:supp_fort_convex}.
For $i\in [b], n \in \N$, denote by $\bz^{i}_{n,\star}$ the unique minimiser of $\bz_{i}\in\R^{d_{i}} \mapsto U_{i}(\bz_{i}) + \|\bz_{i} - \B{A}_{i} \tilde{\theta}_n \|/(2\rho_{i})$. 
Then, for any $i\in[b], k \in \N$ and $n \in \N$,
\begin{align}\label{eq:bound:cont_sum_expec1}
\E^{\mathcal{G}_{n}}\brbig{\| \Yc_{k\gamma_{i}}^{(i, n)} - \bz_{n,\star}^{i} \|^2}
&\le\nofrac{d_{i}}{\tilde{m}_{i}}\eqsp.
\end{align}
where $\tilde{m}_{i}$ is defined in \eqref{eq:def_tilde_m_M}.
\end{lemma}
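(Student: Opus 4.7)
The plan is to combine the invariance of $\Pi_{\bfrho}$ under $\tilde{P}_{\bfrho,\bfgamma}$, established in \Cref{prop:cont_X_law}, with the standard second moment bound for strongly log-concave distributions \citep[Proposition~1(ii)]{durmus2018high}, which was already used earlier in the appendix.

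First, I would observe that the function $V_i^{(n)}(\bz_i) := U_i(\bz_i) + \normLigne{\bz_i - \B{A}_i\tilde{\theta}_n}^2/(2\rho_i)$, viewed as a function of $\bz_i \in \R^{d_i}$ with $\tilde{\theta}_n$ frozen, is $\tilde{m}_i$-strongly convex. Indeed, $U_i$ is $m_i$-strongly convex by \Cref{ass:supp_fort_convex}-\ref{ass:2}, while the quadratic penalty contributes an additional $1/\rho_i$, so $V_i^{(n)}$ has strong convexity modulus $\tilde{m}_i = m_i + 1/\rho_i$ as defined in \eqref{eq:def_tilde_m_M}. By construction, $\bz_{n,\star}^i$ is its unique minimiser, and up to a $\tilde{\theta}_n$-dependent normalising constant, $\exp(-V_i^{(n)})$ is precisely the conditional density $\Pi_{\bfrho}(\cdot \mid \tilde{\theta}_n)$ from \eqref{eq:conditional_zi_given_theta}.

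Next, I would recognise that the process $(\Yc_t^{(i,n)})_{t\ge 0}$ defined in \eqref{eq:cont_coupling_Y_N} solves the Langevin SDE
\begin{equation*}
\dd \Yc_t^{(i,n)} = -\nabla V_i^{(n)}\bigl(\Yc_t^{(i,n)}\bigr)\, \dd t + \sqrt{2}\, \dd B_t^{(i,n)}\eqsp,
\end{equation*}
initialised at $\tilde{Z}_n^i$ and driven by a Brownian motion $B^{(i,n)}$ independent of $\mathcal{G}_{n-1}$, hence independent of $(\tilde{\theta}_n, \tilde{Z}_n^i)$. Since $\Pi_{\bfrho}$ is invariant for $\tilde{P}_{\bfrho,\bfgamma}$ by \Cref{prop:cont_X_law} and $\tilde{X}_0 \sim \Pi_{\bfrho}$ in the coupling \eqref{eq:cont_coupling_2}, we have $\tilde{X}_n \sim \Pi_{\bfrho}$ for every $n \ge 0$, so conditionally on $\tilde{\theta}_n$ the marginal of $\tilde{Z}_n^i$ is $\Pi_{\bfrho}(\cdot \mid \tilde{\theta}_n)$ by the product structure \eqref{eq:joint_density_AXDA} of $\Pi_{\bfrho}$. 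This is exactly the invariant distribution of the SDE above, so the process is strictly stationary in the conditional sense: for any $t \ge 0$, and in particular $t = k\gamma_i$, the conditional law of $\Yc_t^{(i,n)}$ given $\mathcal{G}_n$ coincides with $\Pi_{\bfrho}(\cdot \mid \tilde{\theta}_n)$.

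Finally, I would conclude by invoking \citet[Proposition~1(ii)]{durmus2018high}: since $\Pi_{\bfrho}(\cdot \mid \tilde{\theta}_n)$ is $\tilde{m}_i$-strongly log-concave with mode $\bz_{n,\star}^i$, its second moment around the mode is bounded by $d_i/\tilde{m}_i$. Taking the conditional expectation then gives $\E^{\mathcal{G}_n}[\normLigne{\Yc_{k\gamma_i}^{(i,n)} - \bz_{n,\star}^i}^2] \le d_i/\tilde{m}_i$, which is the desired estimate. The main delicate point will be the measurability bookkeeping: I must check that conditioning on $\mathcal{G}_n$ indeed fixes $(\tilde{\theta}_n, \tilde{Z}_n^i)$ while preserving the conditional independence of $B^{(i,n)}$ from this pair, so that the stationarity argument for the SDE translates into a bound that is uniform in $k$. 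Everything else reduces to the two ingredients above.
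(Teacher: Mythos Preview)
Your proposal is correct and uses essentially the same ingredients as the paper: invariance of $\Pi_{\bfrho}$ under $\tilde P_{\bfrho,\bfgamma}$ (\Cref{prop:cont_X_law}), which gives $\tilde Z_n^i\mid\tilde\theta_n\sim\Pi_{\bfrho}(\cdot\mid\tilde\theta_n)$, combined with the $\tilde m_i$-strong log-concavity of that conditional and \citet[Proposition~1(ii)]{durmus2018high}. The only stylistic difference is that the paper first writes the Langevin contraction bound $\E^{\mathcal F_{k\gamma_i}^{(n)}}\|\Yc_{k\gamma_i}^{(i,n)}-\bz_{n,\star}^i\|^2\le\|\tilde Z_n^i-\bz_{n,\star}^i\|^2\rme^{-2k\gamma_i\tilde m_i}+(d_i/\tilde m_i)(1-\rme^{-2k\gamma_i\tilde m_i})$ from \citet[Proposition~1]{durmus2018high} and then takes expectation, whereas you bypass this intermediate step by arguing directly that the conditional process is stationary; both routes collapse to the same $d_i/\tilde m_i$ once you feed in the second-moment bound for the initial point.
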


\begin{proof}
  Let $n \in \N$.
  By \citet[Proposition 1]{durmus2018high}, for $i\in[b]$ and $k \in \N$, we have
  \begin{equation}\label{eq:bound:cont_expec}
  \E^{\mathcal{F}_{k\gamma_{i}}^{(n)}}\| \Yc_{k\gamma_{i}}^{(i, n)} - \bz_{n,\star}^{i} \|^2
  \le \| \Zc_{n}^{i} -  \bz_{n,\star}^{i} \|^2 \mathrm{e}^{-2 k \gamma_{i} \tilde{m}_{i}} + (\nofrac{d_{i}}{\tilde{m}_{i}}) ( 1 - \mathrm{e}^{-2 k\gamma_{i} \tilde{m}_{i}})\eqsp.
  \end{equation}

  By \eqref{eq:cont_coupling_Y_N}, using \Cref{prop:cont_X_law} we get that $\tilde{X}_n$ has distribution $\Pi_{\bfrho}$, therefore given $\tilde{\theta}_n$, $\tilde{Z}_n$ has distribution $\Pi_{\bfrho}(\cdot |\tilde{\theta}_n)$. Then, using \eqref{eq:bound:cont_expec}, \citet[Proposition 1(ii)]{durmus2018high} combined with \Cref{ass:supp_fort_convex}, and since $(\tilde{Z}_n^1,\ldots,\tilde{Z}_n^{b})$ are independent given $\tilde{\theta}_n$, we get the stated result.
\end{proof}

\begin{lemma}\label{lem:geo_decr_cont}
Assume \Cref{ass:well_defined_density} and let $\bfN\in(\N^*)^{b}, \gammabf \in (\rset_+^*)^b$.
Then, for any $n \in \nset$, the random variables $X_{n}=(\theta_{n}^{\top}, \Zb_{n}^{\top})^{\top}, \tilde{X}_{n}=(\tilde{\theta}_{n}^{\top}, \Zc_{n}^{\top})^{\top}$ defined in \eqref{eq:cont_coupling_2} sastify
\[
\|\tilde{X}_{n+1}-X_{n+1}\|^2 \le (1 + \| \bar{\B{B}}_0^{-1}\B{B}_0^{\top}\B{\tilde{D}}_{0}^{\half} \|^2) \|\Zc_{n+1} - \Zb_{n+1} \|^2\eqsp,
\]
where $\bar{\B{B}}_0,\B{B}_0,\B{\tilde{D}}_{0}$ are defined in \eqref{eq:def_B_bar_B}-\eqref{eq:def_projection}.
\end{lemma}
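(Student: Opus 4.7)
The plan is a direct transcription of the argument used for \Cref{lem:geo_decr} and \Cref{lem:geo_decr_cont1}: the key observation is that the synchronous coupling in \eqref{eq:cont_coupling_2} uses the \emph{same} Gaussian increment $\xi_{n+1}$ to update both $\theta_{n+1}$ and $\tilde{\theta}_{n+1}$ from $\Zb_{n+1}$ and $\Zc_{n+1}$ respectively, so the noise term cancels in the difference $\theta_{n+1} - \tilde{\theta}_{n+1}$. This reduces the control of the $X$-component to that of the $Z$-component via a deterministic linear map.

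The first step is to subtract the two defining equations in \eqref{eq:cont_coupling_2} to obtain the identity
\[
\theta_{n+1} - \tilde{\theta}_{n+1} = \bar{\B{B}}_0^{-1}\B{B}_0^{\top}\B{\tilde{D}}_{0}^{\half} (\Zb_{n+1} - \Zc_{n+1}) \eqsp,
\]
the $\bar{\B{B}}_0^{-\half}\xi_{n+1}$ contributions dropping out. Applying the operator norm bound then yields
\[
\|\theta_{n+1} - \tilde{\theta}_{n+1}\|^2 \le \| \bar{\B{B}}_0^{-1}\B{B}_0^{\top}\B{\tilde{D}}_{0}^{\half}\|^2 \cdot \|\Zb_{n+1} - \Zc_{n+1}\|^2 \eqsp.
\]
Note that under \Cref{ass:well_defined_density}, $\bar{\B{B}}_0$ is invertible (as established at the start of \Cref{sec:proof-proposition-2-1}), so this expression is well-defined.

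The second and final step is to decompose the total squared norm using the block structure $X_{n+1} = (\theta_{n+1}^\top, \Zb_{n+1}^\top)^\top$ and $\tilde{X}_{n+1} = (\tilde{\theta}_{n+1}^\top, \Zc_{n+1}^\top)^\top$:
\[
\|\tilde{X}_{n+1} - X_{n+1}\|^2 = \|\tilde{\theta}_{n+1} - \theta_{n+1}\|^2 + \|\Zc_{n+1} - \Zb_{n+1}\|^2 \eqsp,
\]
and plug in the bound from the previous step to obtain the claim. There is no real obstacle here: the result is purely algebraic and relies only on the synchronous coupling design for $\theta_{n+1}$ and $\tilde{\theta}_{n+1}$; neither the Langevin dynamics of $\Yc_t^{(i,n)}$ nor the Euler--Maruyama discretisation defining $\Yb_t^{(i,n)}$ enters the proof. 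The non-trivial quantitative work on $\|\Zc_{n+1} - \Zb_{n+1}\|$ is deferred to the subsequent lemmata of \Cref{subsec:tech_lemmata_cont}.
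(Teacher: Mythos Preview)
Your proposal is correct and follows exactly the same approach as the paper: the paper omits the proof of \Cref{lem:geo_decr_cont}, referring to \Cref{lem:geo_decr}, whose proof is precisely the two-step argument you give (cancel the shared noise $\xi_{n+1}$ to express $\theta_{n+1}-\tilde{\theta}_{n+1}$ linearly in $\Zb_{n+1}-\Zc_{n+1}$, then decompose $\|\tilde{X}_{n+1}-X_{n+1}\|^2$ blockwise).
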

\begin{proof}
The proof is similar to the proof of \Cref{lem:geo_decr} and is omitted.
\end{proof}
For any $k,n\in\N, s\in\R_+$ consider the $p\times p$ matrices defined by
\begin{align}
  \label{eq:def:cont_J}
&\B{J}(k,s) = \mathrm{diag}\pr{\1_{[N_1]}(k+1) \1_{[0,\gamma_1]}(s) \cdot\B{I}_{d_1}, \cdots,  \1_{[N_b]}(k+1) \1_{[0,\gamma_b]}(s) \cdot\B{I}_{d_b}}\eqsp, \\
  \label{eq:def:cont_H}
&\B{H}_{U,k}^{(n)} = \mathrm{diag}\Big(\gamma_1
\int_0^1 \nabla^2 U_1((1-s) \Yb_{k\gamma_1}^{(1, n)} + s \Yc_{k\gamma_1}^{(1, n)} )\,\dd s,&\\
  \nonumber
&\qquad\qquad\qquad\qquad\qquad\qquad\qquad\qquad\qquad\hdots, \gamma_b \int_0^1 \nabla^2 U_b ((1-s) \Yb_{k\gamma_b}^{(b, n)} +s\Yc_{k\gamma_b}^{(b, n)} )\,\dd s\Big)\eqsp, \\
  \label{eq:def:cont_C}
&\B{C}_{k}^{(n)} = \B{J}(k,0)(\B{D}_{\bfgamma/\bfrho} + \B{H}_{U,k}^{(n)})\eqsp,\\
  \label{eq:def:cont_M}
&\B{M}_{k+1}^{(n)} = (\B{I}_{p} - \B{C}_{0}^{(n)} )^{-1} \ldots (\B{I}_{p} - \B{C}_{k}^{(n)} )^{-1}\eqsp, \qquad \text{ with } \B{M}_0^{(n)} = \B{I}_p \eqsp.
\end{align}

Similarly to \eqref{eq:coupling_process_Y}, for $n,k\in\N$ and $i \in [b]$, consider $\B{C}_{k}^{(i, n)}$ corresponding to the $i$-th diagonal block of $\B{C}_{k}^{(n)}$ defined in \eqref{eq:def:cont_C}, \emph{i.e.}
\begin{equation}
  \label{eq:def_B_C_i_cont}
  \B{C}_{k}^{(i, n)}= \1_{[N_i]}(k+1)\gamma_i\ac{\rho_i^{-1}\B{I}_{d_i}+\int_0^1 \nabla^2 U_i((1-s) \Yb_{k\gamma_i}^{(i, n)} + s \Yc_{k\gamma_i}^{(i, n)} )\,\dd s}\in\R^{d_i\times d_i} \eqsp,
\end{equation}
where, for any $n \in \N$ and $i \in [b]$, $(\Yb_{k \gamma_i}^{(i, n)},\Yc_{k \gamma_i}^{(i, n)})_{k \in\N}$ is defined in \eqref{eq:cont_coupling_Y_N}.
\begin{lemma}\label{lem:C_{i}nvertible}
  Assume \Cref{ass:well_defined_density}-\Cref{ass:supp_fort_convex} and let $\bfgamma\in(\R_+^*)^b$ such that, for any $i\in[b], \gamma_{i}<1/\tilde{M}_{i}$.
  Then, for any $n,k \in \N$, the matrix $(\B{I}_p-\B{C}_{k}^{(n)})$ is invertible and in addition, for any $i\in[b]$, we have
  \[\|\B{I}_{d_{i}}-\B{C}^{(i,n)}_{k}\|\le 1-\gamma_{i} \tilde{m_{i}}\eqsp,\]
\end{lemma}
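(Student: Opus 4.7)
The plan is to exploit the block diagonal structure of $\B{C}_{k}^{(n)}$ together with the bracketing of $\nabla^2 U_i$ provided by \Cref{ass:supp_fort_convex} to get a two-sided spectral estimate on each $\B{C}_k^{(i,n)}$, from which invertibility and the operator norm bound follow directly.

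First, inspecting the definitions \eqref{eq:def:cont_C}, \eqref{eq:def:cont_J} and \eqref{eq:def_B_C_i_cont}, the matrix $\B{C}_k^{(n)}$ is block-diagonal:
\[
\B{C}_k^{(n)} = \mathrm{diag}\bigl(\B{C}_k^{(1,n)},\ldots,\B{C}_k^{(b,n)}\bigr),
\]
so $\B{I}_p-\B{C}_k^{(n)} = \mathrm{diag}(\B{I}_{d_1}-\B{C}_k^{(1,n)},\ldots,\B{I}_{d_b}-\B{C}_k^{(b,n)})$, and it suffices to prove that each block $\B{I}_{d_i}-\B{C}_k^{(i,n)}$ is invertible with the claimed operator-norm bound.

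Next, I would fix $i\in[b]$ and distinguish two cases. If $k+1\notin [N_i]$, then by definition $\B{C}_k^{(i,n)} = \B{0}_{d_i}$, so $\B{I}_{d_i}-\B{C}_k^{(i,n)}= \B{I}_{d_i}$ is trivially invertible (this is the degenerate case; the nontrivial operator-norm bound $1-\gamma_i\tilde m_i$ applies in the ``active'' regime below, which is the only one actually used in \Cref{lem:W2_dirac_contraction} and \Cref{prop:cont_X_law}). If $k+1\in[N_i]$, then
\[
\B{C}_k^{(i,n)} = \gamma_i\Bigl\{\rho_i^{-1}\B{I}_{d_i}+\int_0^1 \nabla^2 U_i\bigl((1-s)\Yb_{k\gamma_i}^{(i,n)}+s\Yc_{k\gamma_i}^{(i,n)}\bigr)\,\dd s\Bigr\},
\]
and the two-sided Hessian bound $m_i\B{I}_{d_i}\preceq \nabla^2 U_i\preceq M_i\B{I}_{d_i}$ from \Cref{ass:supp_fort_convex} yields, after integrating in $s$ and adding $\gamma_i/\rho_i\cdot\B{I}_{d_i}$, the spectral sandwich
\[
\gamma_i\tilde m_i\,\B{I}_{d_i}\;\preceq\;\B{C}_k^{(i,n)}\;\preceq\;\gamma_i\tilde M_i\,\B{I}_{d_i},
\]
with the notation \eqref{eq:def_tilde_m_M}.

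Finally, the hypothesis $\gamma_i<1/\tilde M_i$ gives $\gamma_i\tilde M_i<1$, so every eigenvalue of $\B{C}_k^{(i,n)}$ lies in $[\gamma_i\tilde m_i,\gamma_i\tilde M_i]\subset(0,1)$. Consequently $\B{I}_{d_i}-\B{C}_k^{(i,n)}$ is symmetric with eigenvalues in $[1-\gamma_i\tilde M_i,\,1-\gamma_i\tilde m_i]\subset(0,1)$, hence invertible, and its operator norm (the largest absolute eigenvalue) equals $1-\gamma_i\tilde m_i$. Assembling the blocks gives invertibility of $\B{I}_p-\B{C}_k^{(n)}$ and the stated per-block bound. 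The argument is essentially a linear-algebraic unpacking of the definitions; the only subtlety is the indicator $\1_{[N_i]}(k+1)$, for which one notes that the norm bound $1-\gamma_i\tilde m_i$ is only meaningful (and needed) in the active regime $k+1\le N_i$.
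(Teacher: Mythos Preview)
Your proof is correct and follows essentially the same route as the paper's: both use the two-sided Hessian bound from \Cref{ass:supp_fort_convex} to sandwich the eigenvalues of $\B{C}_k^{(i,n)}$ in $[\gamma_i\tilde m_i,\gamma_i\tilde M_i]$, then read off invertibility and the operator-norm bound $\max\{|1-\gamma_i\tilde m_i|,|1-\gamma_i\tilde M_i|\}=1-\gamma_i\tilde m_i$. You are in fact slightly more careful than the paper in explicitly flagging the inactive case $k+1\notin[N_i]$, where $\B{C}_k^{(i,n)}=\B{0}_{d_i}$ and the stated norm bound fails literally (but invertibility holds and the bound is never invoked there downstream).
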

where $\B{C}^{(i,n)}_{k}$ is defined in \eqref{eq:def_B_C_i_cont}.
\begin{proof}
Let $i\in[b],n,k\in\N$. 
By \Cref{ass:supp_fort_convex}, we have $\|\nabla^2 U_{i}\|\le M_{i}$ which implies by \eqref{eq:def_B_C_i_cont} that $\|\B{C}^{(i,n)}_{k}\|\le \gamma_{i}\tilde{M}_{i}$.
Since $\gamma_{i}<1/\tilde{M}_{i}$, the matrix $\B{I}_p-\B{C}_{k}^{(i,n)}$ is invertible and so is $\B{I}_p-\B{C}_{k}^{(n)}$. 
In addition, following the same lines as the proof of  \Cref{lem:bound:S} implies $\|\B{I}_{d_{i}}-\B{C}^{(i,n)}_{k}\|\le \max\{\absn{1-\gamma_{i} \tilde{m_{i}}},\absn{1-\gamma_{i} \tilde{M}_{i}}\}=1-\gamma_{i} \tilde{m_{i}}$.
\end{proof}
For any $n,k\in\N, i\in[b]$, if $\gamma_{i}\in(0, 1/\tilde{M}_{i})$, \Cref{lem:C_{i}nvertible} shows the invertibility of the matrices $\B{I}_p-\B{C}_{k}^{(n)}$.
Therefore, $\B{M}_{\infty}^{(n)}$ is invertible and we can define
\begin{align}\label{eq:def:T}
 &\B{\mathrm{T}}_1^{(n)}
= [\B{M}_{\infty}^{(n)}]^{-1}
+ \sum_{k=0}^{\infty}[\B{M}_{\infty}^{(n)}]^{-1} \B{M}_{k+1}^{(n)}\B{J}(k,0)\B{D}_{\bfN}^{-\half}\B{D}_{\bfgamma/\bfrho}^{\half} \B{P}_{0} \B{D}_{\bfgamma/\bfrho}^{\half}\B{D}_{\bfN}^{\half}\eqsp, \\
\label{eq:def:T2}
& \B{\mathrm{T}}_{2}^{(n)}
=  \sum_{k=0}^{\infty}\acBig{[\B{M}_{\infty}^{(n)}]^{-1}\B{M}_{k+1}^{(n)} \B{D}_{\bfN\bfgamma}^{-\half} \int_0^{\plusinfty} \B{J}(k,l) [ \nabla \tildeU (\Yc_{k\bfgamma+l}^{(n)}) - \nabla \tildeU (\Yc_{k\bfgamma}^{(n)})]\,\dd l}\eqsp.
\end{align}
Using these matrices, we have the following result.
\begin{lemma}\label{lem:bound:cont_Z_{i}nduction1}
Assume \Cref{ass:well_defined_density}-\Cref{ass:supp_fort_convex} and let $\bfN\in(\N^*)^{b}, \gammabf \in (\rset_+^*)^b$ such that, for any $i\in[b], \gamma_{i}<1/\tilde{M}_{i}$.
Then, for any $n \ge 1$,
\begin{align}\label{eq:eq:T2_new_bound}
\B{D}_{\bfN\bfgamma}^{-\half}(\Zc_{n+1}-\Zb_{n+1}) = \B{\mathrm{T}}_{1}^{(n)}(\tilde{Z}_n - Z_n) - \B{\mathrm{T}}_{2}^{(n)}\eqsp,
\end{align}
where $(\Zb_{n}, \Zc_{n})_{n \in \N}$ is defined in \eqref{eq:cont_coupling_2} and $\B{D}_{\bfN\bfgamma}=\mathrm{diag}(N_1\gamma_1\B{I}_{d_1},\ldots,N_b\gamma_b\B{I}_{d_b})\in\R^{p\times p}$.
\end{lemma}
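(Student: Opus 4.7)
The plan is to derive the identity by running a synchronous coupling of the idealised process $\Yc^{(i,n)}$ and its Euler--Maruyama discretisation $\Yb^{(i,n)}$ within each of the $N_i$ local substeps, and then to telescope the resulting stacked recursion through the invertible matrices $\B{M}_k^{(n)}$ of \eqref{eq:def:cont_M}.

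First, for $i\in[b]$, $n\in\N$ and $k\in\{0,\ldots,N_i-1\}$, I would subtract the two equations of \eqref{eq:cont_coupling_Y_N} at $t=(k+1)\gamma_i$. Because both processes share the same Brownian motion the noise cancels, leaving the deterministic drift. Splitting it as
\[
\gamma_i\brbig{\nabla\tildeU_i(\Yb_{k\gamma_i}^{(i,n)})-\nabla\tildeU_i(\Yc_{k\gamma_i}^{(i,n)})} - \int_0^{\gamma_i}\brbig{\nabla\tildeU_i(\Yc_{k\gamma_i+s}^{(i,n)})-\nabla\tildeU_i(\Yc_{k\gamma_i}^{(i,n)})}\,\dd s,
\]
a first-order Taylor expansion of the first bracket, combined with the identity $\nabla^2\tildeU_i=\nabla^2 U_i+\rho_i^{-1}\B{I}_{d_i}$, exposes the matrix $\B{C}_k^{(i,n)}$ of \eqref{eq:def_B_C_i_cont}. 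Substituting the coupling relation $\tilde\theta_n-\theta_n=\bar{\B{B}}_0^{-1}\B{B}_0^{\top}\B{\tilde{D}}_0^{\half}(\Zc_n-\Zb_n)$ from \eqref{eq:cont_coupling_2} and using $\B{B}_0\bar{\B{B}}_0^{-1}\B{B}_0^{\top}=\B{P}_0$ rewrites the cross-term purely through $\Zc_n-\Zb_n$. Stacking the $b$ blockwise recursions into $\R^p$ and introducing the indicator matrix $\B{J}(k,s)$ of \eqref{eq:def:cont_J} to absorb the inhomogeneities in $\gamma_i$ and $N_i$ produces the single stacked identity
\begin{multline*}
\tilde{\mathcal{Y}}_{k+1}^{(n)}-\bar{\mathcal{Y}}_{k+1}^{(n)} = (\B{I}_p-\B{C}_k^{(n)})(\tilde{\mathcal{Y}}_k^{(n)}-\bar{\mathcal{Y}}_k^{(n)})+\B{J}(k,0)\B{D}_{\bfgamma/\sqrt{\bfrho}}\B{P}_0\B{\tilde{D}}_0^{\half}(\Zc_n-\Zb_n)\\
-\int_0^{\infty}\B{J}(k,s)\brbig{\nabla\tildeU(\Yc_{k\bfgamma+s}^{(n)})-\nabla\tildeU(\Yc_{k\bfgamma}^{(n)})}\,\dd s,
\end{multline*}
where $\tilde{\mathcal{Y}}_k^{(n)}, \bar{\mathcal{Y}}_k^{(n)}\in\R^p$ freeze at $k=N_i$ in their $i$-th block. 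Invertibility of $\B{I}_p-\B{C}_k^{(n)}$ under $\gamma_i<1/\tilde{M}_i$ is supplied by \Cref{lem:C_{i}nvertible}.

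Multiplying both sides by $\B{M}_{k+1}^{(n)}$ turns the leading term into $\B{M}_k^{(n)}(\tilde{\mathcal{Y}}_k^{(n)}-\bar{\mathcal{Y}}_k^{(n)})$, and summing over $k\ge 0$ gives a telescoping identity. The sum is finite because $\B{C}_k^{(n)}$ and $\B{J}(k,\cdot)$ vanish for $k\ge\max_{i\in[b]}N_i$, at which index $\B{M}_{k+1}^{(n)}$ stabilises at $\B{M}_\infty^{(n)}$ and $(\tilde{\mathcal{Y}}_{k+1}^{(n)},\bar{\mathcal{Y}}_{k+1}^{(n)})$ at $(\Zc_{n+1},\Zb_{n+1})$. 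Left-multiplying the telescoped identity by $[\B{M}_\infty^{(n)}]^{-1}\B{D}_{\bfN\bfgamma}^{-\half}$ and rewriting the block-diagonal scalings through the blockwise identity
\[
\B{D}_{\bfN\bfgamma}^{-\half}\B{D}_{\bfgamma/\sqrt{\bfrho}}\B{P}_0\B{\tilde{D}}_0^{\half}=\B{D}_{\bfN}^{-\half}\B{D}_{\bfgamma/\bfrho}^{\half}\B{P}_0\B{D}_{\bfgamma/\bfrho}^{\half}\B{D}_{\bfN}^{\half}\B{D}_{\bfN\bfgamma}^{-\half},
\]
which is verified entry-by-entry from $\B{\tilde{D}}_0=\B{D}_{\bfgamma/\bfrho}\B{D}_{\bfN}\B{D}_{\bfN\bfgamma}^{-1}$, exposes precisely the factorisations defining $\B{\mathrm{T}}_1^{(n)}$ and $\B{\mathrm{T}}_2^{(n)}$ in \eqref{eq:def:T}--\eqref{eq:def:T2}, yielding the claimed identity.

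The only real obstacle is purely notational: because $N_i$ and $\gamma_i$ depend on $i$, the $b$ sub-trajectories reach their terminal state at different indices, and one cannot telescope them in parallel without a bookkeeping device. The matrices $\B{J}(k,s)$, $\B{C}_k^{(n)}$ and $\B{M}_k^{(n)}$ are engineered for exactly this purpose, and once the stacked recursion is in place the argument reduces to a one-step telescoping plus a commutation of block-diagonal factors.
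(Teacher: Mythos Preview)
Your proposal is correct and follows essentially the same route as the paper: derive the one-step blockwise recursion for $\Yc_{(k+1)\gamma_i}^{(i,n)}-\Yb_{(k+1)\gamma_i}^{(i,n)}$ via the Taylor expansion of $\nabla\tildeU_i$, stack it through the frozen process and the matrices $\B{J}(k,\cdot)$, $\B{C}_k^{(n)}$, multiply by $\B{M}_{k+1}^{(n)}$, telescope, and invert. The only cosmetic difference is that the paper multiplies by $\B{M}_{k+1}^{(n)}\B{D}_{\bfN\bfgamma}^{-\half}$ \emph{before} summing, whereas you multiply by $[\B{M}_\infty^{(n)}]^{-1}\B{D}_{\bfN\bfgamma}^{-\half}$ \emph{after}; since $\B{D}_{\bfN\bfgamma}^{-\half}$ commutes with all the block-diagonal factors, the two orderings are equivalent.
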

\begin{proof}
Let $i\in[b]$ and $n \ge 1$. 
Recall that $\tildeU_{i}$ is defined in \eqref{eq:def:cont_Utilde} and for $\bz\in\R^p$, denote $\tildeU(\bz)=\sum_{i=1}^b \tildeU_{i}(\bz_{i})$. For any $k\in\N$, we have
\[
 \nabla \tildeU_{i}(\Yc_{k\gamma_{i}}^{(i, n)}) - \nabla \tildeU_{i}(\Yb_{k\gamma_{i}}^{(i, n)})
= \brbigg{\int_0^1 \nabla^2 \tildeU_{i}((1-s) \Yb_{k\gamma_{i}}^{(i,n)} + s \Yc_{k\gamma_{i}}^{(i,n)})\,\dd s} (\Yc_{k\gamma_{i}}^{(i,n)} - \Yb_{k\gamma_{i}}^{(i,n)})\eqsp.
\]
For $k\ge 0$, it follows from \eqref{eq:cont_coupling_Y_N} that
\begin{equation}
  \label{eq:rec_Y_proof_cont_1}
\begin{aligned}
\Yc_{(k+1)\gamma_{i}}^{(i, n)} - \Yb_{(k+1)\gamma_{i}}^{(i, n)}
&= \prbigg{\B{I}_{d_{i}} - \gamma_{i} \int_0^1 \nabla^2 \tildeU_{i}((1-s) \Yb_{k\gamma_{i}}^{(i, n)} + s \Yc_{k\gamma_{i}}^{(i, n)})\,\dd s} (\Yc_{k\gamma_{i}}^{(i, n)} - \Yb_{k\gamma_{i}}^{(i, n)}) \\
&- \int_0^{\gamma_{i}}\big[\nabla \tildeU_{i} (\Yc_{k\gamma_{i}+l}^{(i, n)}) - \nabla \tildeU_{i} (\Yc_{k\gamma_{i}}^{(i, n)})\big]\,\dd l
+ (\gamma_{i}/\rho_{i}) \B{A}_{i} (\tilde{\theta}_{n} - \theta_{n})\eqsp.
\end{aligned}
\end{equation}
Consider the process $(\Ycr_{t}^{(n)}, \Ybr_{t}^{(n)})_{t\in\R_+}$ valued in $\R^p\times\R^p$ and defined for any $t\ge 0$ by
\begin{align}\label{eq:def:stoch_approx}
&\Ycr_t^{(n)}=\Yc_{\min(t, N_{i}\gamma_{i})}^{(n)}\eqsp,
&\Ybr_t^{(n)}=\Yb_{\min(t, N_{i}\gamma_{i})}^{(n)}\eqsp.
\end{align}
The process \eqref{eq:def:stoch_approx} is continuous with respect to $t$ and defined so that its component  $(\Ycr_t^{(i,n)},\Ybr_t^{(i,n)})$ equals $(\Yc_{t}^{i}, \Yb_{t}^{i})$ for $t\le N_{i}\gamma_{i}$ and is constant for $t > N_{i}\gamma_{i}$.
For $l\ge 0$, we write $(\Ycr_{k\bfgamma +l}^{(n)},\Ybr_{k\bfgamma +l}^{(n)}) = (\Ycr_{k\gamma_{i} + l}^{(i, n)}, \Ybr_{k\gamma_{i} + l}^{(i, n)})_{i \in [b]}\in\R^p\times\R^p$. 
%By \eqref{eq:cont_coupling_2} we have $\B{A}_{i}(\tilde{\theta}_{n}-\theta_{n})=\B{A}_{i}\bar{\B{B}}_0^{-1}\B{B}_0^{\top}\B{\tilde{D}}_{0}^{\half}(\Zc_{n}-\Zb_{n})$.  
Using the matrices defined in \eqref{eq:def:cont_M}, for $k\in\N$, we obtain
\begin{multline}\label{eq:cont_by_rec_N}
\Ycr_{(k+1)\bfgamma}^{(n)} - \Ybr_{(k+1)\bfgamma}^{(n)}
= (\B{I}_{p}-\B{C}_{k}^{(n)} ) (\Ycr_{k\bfgamma}^{(n)} - \Ybr_{k\bfgamma}^{(n)})
- \txts\int_0^{\infty} \B{J}(k,l) \big[ \nabla \tildeU (\Ycr_{k\bfgamma+ l}^{(n)}) - \nabla \tildeU (\Ycr_{k\bfgamma}^{(n)})\big]\,\dd l \\
+ \B{J}(k,0)\B{D}_{\bfgamma/\sqrt{\bfrho}} \B{P}_{0}\B{\tilde{D}}_{0}^{\half} (\Ycr_{0}^{(n)} - \Ybr_{0}^{(n)})\eqsp,
\end{multline}
where $\B{P}_0$ is defined in \eqref{eq:def_projection}.
Recall the matrix $\B{M}_{k}^{(n)}$ defined in \eqref{eq:def:cont_M} with $\B{M}_0^{(n)} = \B{I}_p$ and for $k\ge 1$,  
$
\B{M}_{k}^{(n)} = (\B{I}_{p} - \B{C}_{0}^{(n)} )^{-1} \ldots (\B{I}_{p} - \B{C}_{k-1}^{(n)} )^{-1}
$.
By multiplying \eqref{eq:cont_by_rec_N} by $\B{M}_{k+1}^{(n)}\B{D}_{\bfN\bfgamma}^{-\half}$, we have
\begin{align*}
\B{M}_{k+1}^{(n)}\B{D}_{\bfN\bfgamma}^{-\half} (\Ycr_{(k+1)\bfgamma}^{(n)} - \Ybr_{(k+1)\bfgamma}^{(n)})
&= \B{M}_{k}^{(n)} \B{D}_{\bfN\bfgamma}^{-\half}(\Ycr_{k\bfgamma}^{(n)} - \Ybr_{k\bfgamma}^{(n)}) \\
&- \B{M}_{k+1}^{(n)} \B{D}_{\bfN\bfgamma}^{-\half} \int_0^{\infty} \B{J}(k,l) \big[ \nabla \tildeU (\Yc_{k\bfgamma+ l}^{(n)}) - \nabla \tildeU (\Yc_{k\bfgamma}^{(n)})\big]\,\dd l \\
&+ \B{M}_{k+1}^{(n)}\B{J}(k,0)\B{D}_{\bfN}^{-\half}\B{D}_{\bfgamma/\bfrho}^{\half} \B{P}_{0}\B{\tilde{D}}_{0}^{\half} (\Ycr_{0}^{(n)} - \Ybr_{0}^{(n)})\eqsp.
\end{align*}
By  \eqref{eq:def:stoch_approx} and \eqref{eq:cont_coupling_2}, we have for $t \ge \max_{i\in[b]} \{\gamma_{i} N_{i}\}$, $(\tilde{Z}_{n+1},Z_{n+1}) = (\Ycr_{t},\Ybr_{t})$. Therefore, summing the previous expression over $k$, we get 
\begin{align*}
  \B{M}_{\infty}^{(n)} &\B{D}_{\bfN\bfgamma}^{-\half}(\Zc_{n+1}-\Zb_{n+1})
  = - \sum_{k=0}^{\infty} \B{M}_{k+1}^{(n)} \B{D}_{\bfN\bfgamma}^{-\half} \int_0^{\infty} \B{J}(k,l) [ \nabla \tildeU (\Yc_{k\bfgamma+ l}^{(n)}) - \nabla \tildeU (\Yc_{k\bfgamma}^{(n)})]\,\dd l \\
  &+\brbigg{\B{M}_0^{(n)} + \sum_{k=0}^{\infty} \B{M}_{k+1}^{(n)}\B{J}(k,0)\B{D}_{\bfN}^{-\half}\B{D}_{\bfgamma/\bfrho}^{\half}\B{P}_{0}\B{D}_{\bfgamma/\bfrho}^{\half}\B{D}_{\bfN}^{\half}}
  \B{D}_{\bfN\bfgamma}^{-\half}\cdot(\Zc_{n}-\Zb_{n})\eqsp.
\end{align*}
By \Cref{lem:C_{i}nvertible}, $\B{M}_{\infty}^{(n)}$ is invertible and the proof is concluded by multiplying the previous equality by $[\B{M}_{\infty}^{(n)}]^{-1}$. 
\end{proof}

Based on \Cref{lem:bound:cont_Z_{i}nduction1}, we have the following relation between $\| \Zc_{n+1} - \Zb_{n+1} \|^2$ and $\| \Zc_{n} - \Zb_{n} \|^2$.

\begin{lemma}\label{lem:bound:cont_Z_{i}nduction}
Assume \Cref{ass:well_defined_density}-\Cref{ass:supp_fort_convex} and let $\bfN\in(\N^*)^{b}, \gammabf \in (\rset_+^*)^b$ such that, for any $i\in[b], \gamma_{i}<1/\tilde{M}_{i}$.
Then, for any $\epsilon >0$ and $n \ge 1$,
\[
\| \Zc_{n+1} - \Zb_{n+1} \|_{\B{D}_{\bfN\bfgamma}^{-1}}^2
\le (1+2\epsilon) \normn{\B{\mathrm{T}}_1^{(n)}}^2 \| \Zc_{n} - \Zb_{n} \|_{\B{D}_{\bfN\bfgamma}^{-1}}^2
+ (1+1/\{2\epsilon\}) \normn{\B{\mathrm{T}}_{2}^{(n)}}^2\eqsp.
\]
where $(\Zb_{n}, \Zc_{n})_{n \in \N}$ is defined in \eqref{eq:cont_coupling_2} and $\B{D}_{\bfN\bfgamma}=\mathrm{diag}(N_1\gamma_1\B{I}_{d_1},\ldots,N_b\gamma_b\B{I}_{d_b})\in\R^{p\times p}$.
\end{lemma}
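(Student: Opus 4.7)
The plan is very short because this lemma is essentially a Young-type inequality applied to the identity produced by the previous lemma. Specifically, \Cref{lem:bound:cont_Z_{i}nduction1} furnishes the decomposition
\[
\B{D}_{\bfN\bfgamma}^{-\half}(\Zc_{n+1}-\Zb_{n+1}) \;=\; \B{\mathrm{T}}_{1}^{(n)} \B{D}_{\bfN\bfgamma}^{-\half}(\tilde{Z}_n - Z_n) \;-\; \B{\mathrm{T}}_{2}^{(n)},
\]
which realises $\B{D}_{\bfN\bfgamma}^{-\half}(\Zc_{n+1}-\Zb_{n+1})$ as a difference of two explicit vectors $a := \B{\mathrm{T}}_{1}^{(n)} \B{D}_{\bfN\bfgamma}^{-\half}(\tilde{Z}_n - Z_n)$ and $b := \B{\mathrm{T}}_{2}^{(n)}$. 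So the whole task reduces to controlling $\|a-b\|^{2}$ in terms of $\|a\|^{2}$ and $\|b\|^{2}$.

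I will proceed in two steps. First, I will recall the elementary Young inequality: for any vectors $a,b$ in a Euclidean space and any $\epsilon>0$, the Cauchy--Schwarz inequality gives $2\lvert\langle a,b\rangle\rvert\le 2\epsilon\|a\|^{2}+\|b\|^{2}/(2\epsilon)$, and hence
\[
\|a-b\|^{2} \;\le\; (1+2\epsilon)\|a\|^{2} + \bigl(1+1/(2\epsilon)\bigr)\|b\|^{2}.
\]
Second, I will substitute the expressions for $a$ and $b$, use the operator-norm bound $\|\B{\mathrm{T}}_{1}^{(n)} \B{D}_{\bfN\bfgamma}^{-\half}(\tilde{Z}_n - Z_n)\|^{2}\le\|\B{\mathrm{T}}_{1}^{(n)}\|^{2}\|\tilde{Z}_n-Z_n\|^{2}_{\B{D}_{\bfN\bfgamma}^{-1}}$, and rewrite the left-hand side as $\|\Zc_{n+1}-\Zb_{n+1}\|^{2}_{\B{D}_{\bfN\bfgamma}^{-1}}$. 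This directly yields the announced inequality.

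There is no real obstacle here: the only subtlety is to keep the weighted norm $\|\cdot\|_{\B{D}_{\bfN\bfgamma}^{-1}}$ visible throughout, which is automatic since $\|\B{D}_{\bfN\bfgamma}^{-\half}v\|=\|v\|_{\B{D}_{\bfN\bfgamma}^{-1}}$. The lemma is then proved by a single display of the Young inequality followed by the operator-norm bound on $\B{\mathrm{T}}_{1}^{(n)}$. The real mathematical content has already been absorbed into the previous two lemmas (\Cref{lem:bound:cont_Z_{i}nduction1} and \Cref{lem:C_{i}nvertible}); this statement is a bookkeeping step that isolates a contracting term $(1+2\epsilon)\|\B{\mathrm{T}}_{1}^{(n)}\|^{2}$ (to be shown strictly less than $1$ for small $\epsilon$ under the assumptions on $\bfgamma,\bfN$) and a residual term $(1+1/(2\epsilon))\|\B{\mathrm{T}}_{2}^{(n)}\|^{2}$ capturing the discretisation error, which will then be bounded separately in the subsequent proofs of \Cref{prop:bias_gamma} and \Cref{prop:bias_gamma_bis}.
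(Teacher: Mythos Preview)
Your proposal is correct and follows exactly the same approach as the paper: invoke the identity from \Cref{lem:bound:cont_Z_{i}nduction1} and apply the Young inequality $2\langle a,b\rangle\le 2\epsilon\|a\|^{2}+\|b\|^{2}/(2\epsilon)$, then bound $\|\B{\mathrm{T}}_{1}^{(n)}\B{D}_{\bfN\bfgamma}^{-\half}(\tilde{Z}_n-Z_n)\|$ by the operator norm. You also correctly insert the factor $\B{D}_{\bfN\bfgamma}^{-\half}$ in front of $(\tilde{Z}_n-Z_n)$ in the identity, which is what the proof of \Cref{lem:bound:cont_Z_{i}nduction1} actually delivers (the paper's displayed statement of that lemma omits it, but the weighted norm on the right-hand side of the present lemma confirms your reading).
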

\begin{proof}
  The proof follows from \Cref{lem:bound:cont_Z_{i}nduction1} and by using the fact that for $\B{a}, \B{b} \in \R^{p}, \epsilon > 0$ we have $2\langle \B{a}, \B{b}\rangle \le 2\epsilon\|\B{a}\|^2 + (1/\acn{2\epsilon})\|\B{b}\|^2$.
\end{proof}

Similarly to \Cref{lem:bound:cont_T1}, we have the following result regarding the contracting term.

\begin{lemma}
\label{lem:bound:cont_T1_bis}
Assume \Cref{ass:well_defined_density}-\Cref{ass:supp_fort_convex} and let $\bfN\in(\N^*)^{b}, \gammabf \in (\rset_+^*)^b$ such that, for any $i \in [b]$, $\gamma_{i}<1/\tilde{M}_{i}$ and $N_{i}\gamma_{i}\le 2/(m_{i} + \tilde{M}_{i})$. Then, for any $n\ge0$, we have
\begin{align*}
\normn{\B{\mathrm{T}}_1^{(n)}} &\le  1 - \min_{i\in[b]} \{N_{i} \gamma_{i} m_{i}\} + r_{\bfgamma, \bfrho, \bfN}\eqsp,
\end{align*}
where $\B{\mathrm{T}}_1^{(n)}$ and $r_{\bfgamma, \bfrho, \bfN}$ are defined in \eqref{eq:def:T} and \eqref{eq:def:r}, respectively.
\end{lemma}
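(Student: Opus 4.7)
The plan is to mirror the proof of \Cref{lem:bound:cont_T1} in the continuous-time setting, since $\mathbf{T}_1^{(n)}$ has exactly the same algebraic form, with $\mathbf{J}(k,0)$ in place of $\mathbf{J}(k)$. Note that by \eqref{eq:def:cont_J} we have $\mathbf{J}(k,0)=\mathrm{diag}(\mathds{1}_{[N_1]}(k{+}1)\mathbf{I}_{d_1},\ldots,\mathds{1}_{[N_b]}(k{+}1)\mathbf{I}_{d_b})$ which coincides with the $\mathbf{J}(k)$ of \eqref{eq:def:J}, and the block-diagonal matrix $\mathbf{C}_k^{(n)}$ of \eqref{eq:def:cont_C} still satisfies, under \Cref{ass:supp_fort_convex} and the step-size hypothesis $\gamma_i<1/\tilde M_i$, the bound $\|\mathbf{C}_l^{(i,n)}\|\le \gamma_i\tilde M_i\mathds{1}_{[N_i]}(l+1)$ (via \Cref{lem:C_{i}nvertible}). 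Since \Cref{lem:bound_T1_approx} and \Cref{lem:bound:S} depend only on these two structural facts and on \Cref{ass:supp_fort_convex}, they will be reusable verbatim.

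The main algebraic step is to substitute $[\mathbf{M}_\infty^{(n)}]^{-1}\mathbf{M}_k^{(n)}=\prod_{l\ge k}(\mathbf{I}_p-\mathbf{C}_l^{(n)})=\mathbf{I}_p-\sum_{l\ge k}\mathbf{C}_l^{(n)}+\mathbf{R}_k^{(n)}$ with $\mathbf{R}_k^{(n)}$ as in \eqref{eq:def:cont_R}. Using $\mathbf{C}_l^{(n)}=\mathbf{J}(l,0)(\mathbf{D}_{\bfgamma/\bfrho}+\mathbf{H}_{U,l}^{(n)})$, $\mathbf{D}_{\bfN}=\sum_k\mathbf{J}(k,0)$, the commutation $\mathbf{D}_{\bfN}\mathbf{C}_l^{(n)}=\mathbf{C}_l^{(n)}\mathbf{D}_{\bfN}$, and the fact that $\mathbf{J}(k,0)\mathbf{R}_{k+1}^{(n)}=\mathbf{R}_{k+1}^{(n)}$, I will collect terms exactly as in the proof of \Cref{lem:bound:cont_T1} to obtain the decomposition
\[
\mathbf{T}_1^{(n)} = \mathbf{S}_2 \;-\; \sum_{l\ge 1}\Big(\sum_{k=0}^{l-1}\mathbf{J}(k,0)\Big)\mathbf{D}_{\bfN}^{-1}\mathbf{C}_l^{(n)}(\mathbf{D}_{\bfN}\mathbf{D}_{\bfgamma/\bfrho})^{\half}\mathbf{P}_0(\mathbf{D}_{\bfgamma/\bfrho}\mathbf{D}_{\bfN})^{\half} \;+\; \mathbf{R}_0^{(n)} \;+\; \sum_{k\ge 1}\mathbf{D}_{\bfN}^{-1}\mathbf{R}_k^{(n)}(\mathbf{D}_{\bfN}\mathbf{D}_{\bfgamma/\bfrho})^{\half}\mathbf{P}_0(\mathbf{D}_{\bfgamma/\bfrho}\mathbf{D}_{\bfN})^{\half},
\]
with $\mathbf{S}_2$ as in \eqref{eq:def:S}.

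Then I will take operator norms term by term. First, the condition $N_i\gamma_i\le 2/(m_i+\tilde M_i)$ together with \Cref{lem:bound:S} gives $\|\mathbf{S}_2\|\le 1-\min_{i\in[b]}\{N_i\gamma_i m_i\}$. Second, \Cref{lem:bound_T1_approx} applied at $k=0$ together with the elementary inequality $\rme^{t}-1-t\le t^2\rme^{t}/2$ yields $\|\mathbf{R}_0^{(n)}\|\le 4\max_{i}\{(N_i-1)\gamma_i\tilde M_i\}^2$. Third, for the sum of $\mathbf{R}_k^{(n)}$ I will bound $\|\sum_{k\ge 1}\mathbf{D}_{\bfN}^{-1}\mathbf{R}_k^{(n)}\|\le\max_i\{N_i^{-1}\sum_{k=1}^{N_i-1}\|\mathbf{R}_k^{(i,n)}\|\}$ and apply \Cref{lem:bound_T1_approx} blockwise, then convert the sum into an integral bound $(N_i\gamma_i\tilde M_i)^{-1}\int_0^{N_i\gamma_i\tilde M_i}(\rme^t-1-t)\,\dd t\le (N_i\gamma_i\tilde M_i)^2(\rme^{N_i\gamma_i\tilde M_i}+1)/12$, and use $\rme^2+1\le 12$ plus $\|\mathbf{P}_0\|\le 1$ and $\|(\mathbf{D}_{\bfN}\mathbf{D}_{\bfgamma/\bfrho})^{\half}\|^2\le\max_i\{N_i\gamma_i/\rho_i\}$. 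Fourth, for the cross term involving $\mathbf{C}_l^{(n)}$, I will bound $\|\sum_{l\ge 1}(\sum_{k<l}\mathbf{J}(k,0))\mathbf{D}_{\bfN}^{-1}\mathbf{C}_l^{(n)}\|\le\max_i\{N_i^{-1}\sum_{l\ge 1}l\|\mathbf{C}_l^{(i,n)}\|\}\le \max_i\{N_i\gamma_i\tilde M_i\}/2$ and again use $\|\mathbf{P}_0\|\le 1$ with the $\mathbf{D}_{\bfN}\mathbf{D}_{\bfgamma/\bfrho}$ factor. Summing these four contributions reproduces exactly the constant $r_{\bfgamma,\bfrho,\bfN}$ defined in \eqref{eq:def:r}, yielding the claimed inequality.

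There is no real obstacle here; the only points requiring care are checking that each auxiliary lemma from the discrete-time section applies to the continuous-time objects — specifically, verifying that the block bound $\|\mathbf{C}_l^{(i,n)}\|\le \gamma_i\tilde M_i\mathds{1}_{[N_i]}(l+1)$ (which drives both \Cref{lem:bound_T1_approx} and \Cref{lem:bound:S}) remains valid when $\mathbf{H}_{U,k}^{(n)}$ is built from the continuous Langevin iterates $\Yb_{k\gamma_i}^{(i,n)},\Yc_{k\gamma_i}^{(i,n)}$ instead of their discrete analogs. This is immediate from \Cref{ass:supp_fort_convex}-\ref{ass:1} since $\nabla^2 U_i$ is bounded regardless of the point of evaluation.
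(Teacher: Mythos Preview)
Your proposal is correct and takes essentially the same approach as the paper, which simply states that the proof is similar to that of \Cref{lem:bound:cont_T1} and omits the details. You have accurately reconstructed those details and correctly identified the only point requiring verification, namely that the block bound $\|\mathbf{C}_l^{(i,n)}\|\le \gamma_i\tilde M_i\mathds{1}_{[N_i]}(l+1)$ (and hence \Cref{lem:bound_T1_approx} and \Cref{lem:bound:S}) carries over unchanged to the continuous-time objects.
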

\begin{proof}
The proof is similar to the proof of \Cref{lem:bound:cont_T1} and therefore is omitted.
\end{proof}

In the next lemma, we upper bound the coefficient $r_{\bfgamma,\bfrho,\bfN}$ defined in \eqref{eq:def:r}. For this, we explicit a choice of $\bfN$ that we denote $\bfN^\star=(N_1^\star(\gamma_1),\ldots,N_b^\star(\gamma_b))\in(\N^*)^b$ defined for any $i\in[b]$, any $\gamma_{i}>0$, by
\begin{equation}\label{eq:eq:good_choice_N}
  N_{i}^{\star}(\gamma_{i}) = \big\lfloor m_{i}\min_{i\in[b]}\acn{m_{i}/\tilde{M}_{i}}^2/\prbig{20\gamma_{i}\tilde{M}_{i}^2\max_{i\in[b]}\acn{m_{i}/\tilde{M}_{i}}^2}\big\rfloor\eqsp,
\end{equation}
where $\tilde{M}_{i}=M_{i}+1/\rho_{i}$.
\begin{lemma}\label{lem:choice_N_condition_gamma}
Assume \Cref{ass:well_defined_density}-\Cref{ass:supp_fort_convex} and let $\bfgamma\in(\R_+^*)^b$ such that, for any $i\in[b]$,
\[
\gamma_{i}\le \frac{m_{i}}{40\tilde{M}_{i}^2}\pr{\frac{\min_{i\in[b]}\acn{\nofrac{m_{i}}{\tilde{M}_{i}}}}{\max_{i\in[b]}\acn{\nofrac{m_{i}}{\tilde{M}_{i}}}}}^2\eqsp.
\]
Then, for any $i\in[b]$, we have $N_{i}^\star(\gamma_{i})\in\N^*$ and 
\[
r_{\bfgamma,\bfrho,\bfN^\star} < \min_{i\in[b]}\{N_{i}^\star(\gamma_{i})\gamma_{i}m_{i}\}/2\eqsp,
\]
where $r_{\bfgamma,\bfrho,\bfN^\star}$ is defined in \eqref{eq:def:r}.
\end{lemma}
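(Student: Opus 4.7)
The plan is to exploit the explicit form of $N_i^\star(\gamma_i)$ to derive three quantitative estimates: (a) $N_i^\star(\gamma_i)\ge 2$, so the definition is nontrivial; (b) an explicit upper bound on $\max_{i\in[b]}\{N_i^\star(\gamma_i)\gamma_i\tilde M_i\}$, which will control $r_{\bfgamma,\bfrho,\bfN^\star}$ through its definition \eqref{eq:def:r}; and (c) a matching lower bound on $\min_{i\in[b]}\{N_i^\star(\gamma_i)\gamma_i m_i\}$. Writing $\mu=\min_{i\in[b]}\{m_i/\tilde M_i\}$ and $M=\max_{i\in[b]}\{m_i/\tilde M_i\}$ for short (both at most $1$ since $m_i\le M_i\le\tilde M_i$), the hypothesis on $\gamma_i$ can be rewritten as $m_i/(\gamma_i\tilde M_i^2)\ge 40(M/\mu)^2$.

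For step (a), substituting this lower bound into the argument of the floor in \eqref{eq:eq:good_choice_N} yields $m_i\mu^2/(20\gamma_i\tilde M_i^2 M^2)\ge 2$, hence $N_i^\star(\gamma_i)\ge 2$. For step (b), dropping the floor gives the pointwise estimate
\[
N_i^\star(\gamma_i)\gamma_i\tilde M_i\le \frac{m_i}{\tilde M_i}\cdot\frac{\mu^2}{20 M^2}\le \frac{\mu^2}{20 M}\le \frac{1}{20}.
\]
Plugging this into \eqref{eq:def:r}, and using $N_i\gamma_i/\rho_i\le N_i\gamma_i\tilde M_i$ (since $1/\rho_i\le\tilde M_i$), so that a single scale controls both products in \eqref{eq:def:r}, I obtain
\[
r_{\bfgamma,\bfrho,\bfN^\star}\le \Bigl(4+\tfrac{1}{2}+\tfrac{1}{20}\Bigr)\max_{i\in[b]}\{N_i^\star(\gamma_i)\gamma_i\tilde M_i\}^2 < 5\cdot \frac{\mu^4}{400 M^2}=\frac{\mu^4}{80 M^2}.
\]

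For step (c), since the argument of the floor is at least $2$, the elementary inequality $\lfloor x\rfloor\ge x/2$ for $x\ge 2$ gives $N_i^\star(\gamma_i)\gamma_i m_i\ge m_i^2\mu^2/(40\tilde M_i^2 M^2)\ge \mu^4/(40 M^2)$, whence $\min_{i\in[b]}\{N_i^\star(\gamma_i)\gamma_i m_i\}/2\ge \mu^4/(80 M^2)$. Comparing this with the bound on $r_{\bfgamma,\bfrho,\bfN^\star}$ obtained in step (b) closes the argument. No real obstacle is expected: this is essentially a bookkeeping exercise, and the constant $20$ in \eqref{eq:eq:good_choice_N} is tuned precisely so that the common scale $\mu^4/(80 M^2)$ emerges on both sides of the inequality; the only care needed is to systematically dominate $1/\rho_i$ by $\tilde M_i$ to unify all the different products appearing in \eqref{eq:def:r}.
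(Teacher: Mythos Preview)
Your proof is correct and follows essentially the same route as the paper: both arguments pivot on the common scale $\mu^4/(80M^2)$ (with $\mu=\min_i\{m_i/\tilde M_i\}$, $M=\max_i\{m_i/\tilde M_i\}$), bounding $r_{\bfgamma,\bfrho,\bfN^\star}$ strictly below it via $r<5\max_i\{N_i^\star\gamma_i\tilde M_i\}^2$ and bounding $\min_i\{N_i^\star\gamma_i m_i\}/2$ at or above it. The only cosmetic difference is in step~(c): you invoke $\lfloor x\rfloor\ge x/2$ for $x\ge2$ directly, whereas the paper writes the equivalent estimate $N_i^\star\gamma_i\tilde M_i^2/m_i>\tfrac{1}{20}(\mu/M)^2-\gamma_i\tilde M_i^2/m_i\ge\tfrac{1}{40}(\mu/M)^2$ using $\lfloor x\rfloor>x-1$ and the hypothesis on $\gamma_i$.
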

\begin{proof}
The assumption on $\gamma_{i}$ combined with the definition~\eqref{eq:eq:good_choice_N} of $N_{i}^\star(\gamma_{i})$ imply $N_{i}^\star(\gamma_{i})\ge 2$, using in addition $m_i \le M_i$, $\max_{i\in[b]}\{N_{i}^\star(\gamma_{i})\gamma_{i}\tilde{M}_{i}\1_{N_{i}^\star(\gamma_{i})>1}\} \le 1/20$  and 
\begin{align}
  \nonumber
  \frac{1}{20}\prbigg{\frac{\min_{i\in[b]}\acn{m_{i}/\tilde{M}_{i}}}{\max_{i\in[b]}\acn{m_{i}/\tilde{M}_{i}}}}^2
  \ge \frac{N_{i}^\star(\gamma_{i})\gamma_{i}\tilde{M}_{i}^2}{m_{i}}
  &> \frac{1}{20}\prbigg{\frac{\min_{i\in[b]}\acn{m_{i}/\tilde{M}_{i}}}{\max_{i\in[b]}\acn{m_{i}/\tilde{M}_{i}}}}^2-\frac{\gamma_{i}\tilde{M}_{i}^2}{m_{i}}\\
  &\ge \frac{1}{40}\prbigg{\frac{\min_{i\in[b]}\acn{m_{i}/\tilde{M}_{i}}}{\max_{i\in[b]}\acn{m_{i}/\tilde{M}_{i}}}}^2\eqsp.\label{eq:eq:key}
\end{align}
Using the definition~\eqref{eq:def:r} of $r_{\bfgamma,\bfrho,\bfN}$, we have
$
r_{\bfgamma,\bfrho,\bfN} < 5 \max_{i\in[b]}\{N_{i}^\star(\gamma_{i})\gamma_{i}\tilde{M}_{i}\1_{N_{i}^\star(\gamma_{i})>1}\}^2.
$
Thus, plugging \eqref{eq:eq:key} in the previous inequality gives
\begin{equation}\label{eq:eq:key2}
r_{\bfgamma,\bfrho,\bfN} \le \max_{i \in[b]} \{m_i/ \tilde{M}_i\}^2 \max_{i\in[b]} \defEns{\frac{N_{i}^\star(\gamma_{i})\gamma_{i}\tilde{M}_{i}^2}{m_{i}}} < \frac{\min_{i\in[b]}\acn{m_{i}/\tilde{M}_{i}}^4}{80\max_{i\in[b]}\acn{m_{i}/\tilde{M}_{i}}^2}\eqsp.
\end{equation}
In addition,  \eqref{eq:eq:key} also shows that
\begin{equation}\label{eq:eq:key3}
\frac{1}{40}\prbigg{\frac{\min_{i\in[b]}\acn{m_{i}/\tilde{M}_{i}}}{\max_{i\in[b]}\acn{m_{i}/\tilde{M}_{i}}}}^2 \prbigg{\frac{m_{i}}{\tilde{M}_{i}}}^2
\le N_{i}^\star(\gamma_{i})\gamma_{i}m_{i}\eqsp.
\end{equation}
Therefore, combining \eqref{eq:eq:key2} and \eqref{eq:eq:key3} completes the proof.
\end{proof}

\subsection{Proof of \Cref{prop:bias_gamma}}
\label{subsec:proof_cont1}

We first give the formal statement of \Cref{prop:bias_gamma}.

\begin{proposition}\label{cor:bias_pi_rho_pirho_gamma}
Assume \Cref{ass:well_defined_density}-\Cref{ass:supp_fort_convex} and let $\gammabf\in (\rset_+^*)^b$, $\bfN \in (\N^*)^b$ such that for any $i \in [b]$, 
$
\txts\gamma_{i}\le \nofrac{m_{i}}{40\tilde{M}_{i}^2}(\min_{i\in[b]}\{m_{i}/\tilde{M}_{i}\}/\max_{i\in[b]}\{m_{i}/\tilde{M}_{i}\})^2$ and $N_{i} = \lfloor m_{i}\min_{i\in[b]}\acn{m_{i}/\tilde{M}_{i}}^2/(20\gamma_{i}\tilde{M}_{i}^2\max_{i\in[b]}\acn{m_{i}/\tilde{M}_{i}^2})\rfloor$.
Then, we have
\begin{multline*}
\wasserstein{}^{2}\prn{\Pi_{\bfrho,\bfgamma,\bfN},\Pi_{\bfrho}}
\le \frac{4(1+\|\bar{\B{B}}_0^{-1}\B{B}_0^{\top}\B{\tilde{D}}_{0}^{\half}\|^2) \max_{i\in[b]}\acn{\nofrac{m_{i}}{\tilde{M}_{i}^2}}}{5\min_{i\in[b]}\acn{\nofrac{m_{i}}{\tilde{M}_{i}}}^{2}\max_{i\in[b]}\acn{\nofrac{m_{i}}{\tilde{M}_{i}}}^2}\\
\times\sum_{i=1}^b d_{i}\gamma_{i}m_{i}(1 + \nofrac{\gamma_{i}^2\tilde{M}_{i}^2}{12}+ \gamma_{i}\tilde{M}_{i}^2/(2\tilde{m}_{i}))\eqsp,
\end{multline*}
where $\bar{\B{B}}_0,\B{B}_0,\B{\tilde{D}}_{0}$ are defined in \eqref{eq:def_B_bar_B}-\eqref{eq:def_projection}, and for any $i \in [b]$, $\tilde{m}_i$, $\tilde{M}_i$ are defined in \eqref{eq:def_tilde_m_M}.
\end{proposition}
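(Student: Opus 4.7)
The plan is to bound $W_2^2(\Pi_{\bfrho,\bfgamma,\bfN}, \Pi_{\bfrho})$ via the synchronous coupling $(X_n, \tilde{X}_n)_{n\ge 0}$ constructed in~\eqref{eq:cont_coupling_2}, in which $\tilde{X}_0$ is drawn from $\Pi_{\bfrho}$, so that by \Cref{prop:cont_X_law} the process $(\tilde{X}_n)_{n\ge 0}$ remains $\Pi_{\bfrho}$-stationary while $X_n$ is distributed as $\updelta_{\bx} P_{\bfrho,\bfgamma,\bfN}^n$. Together with the geometric convergence of $\updelta_{\bx} P_{\bfrho,\bfgamma,\bfN}^n$ to $\Pi_{\bfrho,\bfgamma,\bfN}$ given by \Cref{cor:convergence_rho_gamma}, the triangle inequality for $W_2$ reduces the target bound to controlling $\limsup_{n\to\infty} \E[\|X_n - \tilde{X}_n\|^2]$. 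An application of \Cref{lem:geo_decr_cont} together with the elementary inequality $\|\cdot\|^2 \le \max_i\{N_i \gamma_i\} \|\cdot\|_{\B{D}_{\bfN\bfgamma}^{-1}}^2$ then reduces the problem further to a uniform bound on $\E\|\tilde{Z}_n - Z_n\|^2_{\B{D}_{\bfN\bfgamma}^{-1}}$.

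For the latter, I iterate the one-step recursion established just above the statement, selecting the slack parameter $\epsilon$ of order $\min_i\{N_i\gamma_i m_i\}$ so that $(1+2\epsilon)\|\B{T}_1^{(n)}\|^2 \le 1 - \min_i\{N_i\gamma_i m_i\}/2$ uniformly in $n$; here \Cref{lem:bound:cont_T1_bis} controls $\|\B{T}_1^{(n)}\|$ and \Cref{lem:choice_N_condition_gamma} absorbs the residual $r_{\bfgamma,\bfrho,\bfN}$ into the contraction, which is the step where the specific form of $N_i^\star(\gamma_i)$ is essential. Summing the resulting geometric recursion yields
\[
\limsup_n \E\|\tilde{Z}_n - Z_n\|^2_{\B{D}_{\bfN\bfgamma}^{-1}} \lesssim \sup_n \E\|\B{T}_2^{(n)}\|^2 \big/ \min_i\{N_i\gamma_i m_i\}\eqsp.
\]

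The crux of the proof is therefore a uniform estimate on $\E\|\B{T}_2^{(n)}\|^2$. Since $[\B{M}_\infty^{(n)}]^{-1}\B{M}_{k+1}^{(n)}$ is block diagonal with $i$-th block $\prod_{\ell=k+1}^{N_i - 1}(\B{I}_{d_i} - \B{C}_\ell^{(i,n)})$ of operator norm at most $(1 - \gamma_i \tilde{m}_i)^{N_i - k - 1}$ (from the blockwise bound $\|\B{I}_{d_i} - \B{C}_k^{(i,n)}\| \le 1 - \gamma_i \tilde{m}_i$ established earlier), I decompose $\B{T}_2^{(n)}$ blockwise and apply Cauchy--Schwarz in the sum over $k$ together with the $\tilde{M}_i$-Lipschitz continuity of $\nabla \tilde{U}_i$ (consequence of \Cref{ass:supp_fort_convex}). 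The problem reduces to bounding, for each $i \in [b]$, the quantity
\[
\sum_{k=0}^{N_i - 1} \int_0^{\gamma_i} \E\|\Yc_{k\gamma_i + l}^{(i,n)} - \Yc_{k\gamma_i}^{(i,n)}\|^2 \,\dd l\eqsp,
\]
namely the $L^2$ increment of the Langevin diffusion~\eqref{eq:def:cont_process} over a time step $l \le \gamma_i$. Splitting this increment via the SDE into its drift and Brownian pieces, the Brownian piece contributes $2 l d_i$; for the drift, subtracting $\nabla\tilde{U}_i(\bz_{n,\star}^i) = \B{A}_i\tilde{\theta}_n/\rho_i$ (a consequence of the first-order optimality condition defining $\bz_{n,\star}^i$) and invoking \Cref{lem:bound_expec_z_x_cont} (which yields $\E\|\Yc_{k\gamma_i}^{(i,n)} - \bz_{n,\star}^i\|^2 \le d_i / \tilde{m}_i$) produces an increment bound of order $l d_i + l^2 \tilde{M}_i^2 d_i / \tilde{m}_i$. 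Integrating in $l$ and summing over $k$ then generates the three summands $1$, $\gamma_i^2 \tilde{M}_i^2/12$ and $\gamma_i \tilde{M}_i^2/(2\tilde{m}_i)$ appearing in the statement.

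The main obstacle will be the precise bookkeeping: tracking exact constants through the Cauchy--Schwarz steps, the geometric sums in $(1 - \gamma_i \tilde{m}_i)$, the ratio $\max_i\{m_i/\tilde{M}_i\}^2/\min_i\{m_i/\tilde{M}_i\}^2$ inherited from \Cref{lem:choice_N_condition_gamma}, and the factor $\max_i\{N_i\gamma_i\}/\min_i\{N_i\gamma_i\}$ arising in the passage between the Euclidean and $\B{D}_{\bfN\bfgamma}^{-1}$-weighted norms, so that the final prefactor matches the advertised $4(1+\|\bar{\B{B}}_0^{-1}\B{B}_0^{\top}\B{\tilde{D}}_0^{1/2}\|^2)\max_i\{m_i/\tilde{M}_i^2\}/[5\min_i\{m_i/\tilde{M}_i\}^2\max_i\{m_i/\tilde{M}_i\}^2]$. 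Once $n \to \infty$ has been taken, lower semi-continuity of $W_2$ with respect to weak convergence of laws yields the claimed inequality; the analysis itself is essentially a blockwise, heterogeneous generalisation of the Euler--Maruyama error estimate of \citet{durmus2018high}.
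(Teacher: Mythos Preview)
Your proposal follows essentially the same route as the paper: the synchronous coupling \eqref{eq:cont_coupling_2} with $\tilde X_0\sim\Pi_{\bfrho}$, the contraction/error split $\B{\mathrm T}_1^{(n)},\B{\mathrm T}_2^{(n)}$ from \Cref{lem:bound:cont_Z_{i}nduction}, the bound on $\|\B{\mathrm T}_1^{(n)}\|$ via \Cref{lem:bound:cont_T1_bis} with \Cref{lem:choice_N_condition_gamma} absorbing $r_{\bfgamma,\bfrho,\bfN}$, and the limit $n\to\infty$ together with \Cref{cor:convergence_rho_gamma}. The paper packages the recursion and its limit as \Cref{lem:bound:cont_expec_norm_spe} and \Cref{lem:bound:cont_wass}, choosing $\epsilon=(1-[1-\upkappa]^2)/(4[1-\upkappa]^2)$ rather than ``of order $\min_i\{N_i\gamma_i m_i\}$'', but this is cosmetic.

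One point of imprecision: your sketch for $\E\|\B{\mathrm T}_2^{(n)}\|^2$ bounds $\|\nabla\tilde U_i(\Yc_{k\gamma_i+l})-\nabla\tilde U_i(\Yc_{k\gamma_i})\|^2$ by $\tilde M_i^2\|\Yc_{k\gamma_i+l}-\Yc_{k\gamma_i}\|^2$ and then splits the raw increment into drift and Brownian parts. That argument yields the right order but not the exact summands $1,\ \gamma_i^2\tilde M_i^2/12,\ \gamma_i\tilde M_i^2/(2\tilde m_i)$ you claim; in particular the $\gamma_i^2\tilde M_i^2/12$ term does not arise from ``$l d_i + l^2\tilde M_i^2 d_i/\tilde m_i$'' after integration. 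The paper obtains these precise constants by invoking \citet[Lemma~21]{durmus2018high} directly on $\nabla V_i^{\theta_n}$ (see \Cref{lem:bound:cont_expec_T2}), which is sharper than first passing through the Lipschitz bound on $\nabla\tilde U_i$. If you want the stated prefactor verbatim, cite that lemma rather than the cruder splitting; otherwise your argument gives a bound with the same dependence on $d_i,\gamma_i,\tilde M_i,\tilde m_i$ but slightly worse absolute constants.
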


By \Cref{lem:geo_decr_cont} and \Cref{lem:bound:cont_Z_{i}nduction}, we can note that the proof of \Cref{cor:bias_pi_rho_pirho_gamma} boils down to derive an upper bound on $\normn{\B{\mathrm{T}}_2^{(n)}}^2$ defined in \eqref{eq:def:T2} for $n \in \N$.
The following lemma provides such a bound.

\begin{lemma}\label{lem:bound:cont_expec_T2}
  Assume \Cref{ass:well_defined_density}-\Cref{ass:supp_fort_convex} and let $\bfN\in(\N^*)^{b}, \gammabf  \in (\rset_+^*)^b$ such that, for any $i\in[b]$, $\gamma_{i}< 1/\tilde{M}_{i}$. Then, for any $n \in \N$, we have
\begin{align*}
  {\E\br{\normn{\B{\mathrm{T}}_2^{(n)}}^2}}
&\le{\sum_{i=1}^b} d_{i}N_{i}\gamma_{i}^2\tilde{M}_{i}^2 \br{1 + \nofrac{\gamma_{i}^2\tilde{M}_{i}^2}{12}+ \gamma_{i}\tilde{M}_{i}^2/(2\tilde{m}_{i})}\eqsp,
\end{align*}
where $\tilde{m}_{i}, \tilde{M}_{i}, \mathrm{T}_{2}^{(n)}$ are defined in \eqref{eq:def_tilde_m_M} and \eqref{eq:def:T2}, respectively.
\end{lemma}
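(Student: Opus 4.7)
The plan is to exploit the block-diagonal structure of the matrices appearing in $\mathbf{T}_2^{(n)}$ and reduce the estimate to controlling, for each worker $i\in[b]$ and each local step $k\in\{0,\dots,N_i-1\}$, the squared norm of the drift approximation error $\mathbf{v}_k^{(i,n)}=\int_0^{\gamma_i}[\nabla\tildeU_i(\Yc_{k\gamma_i+l}^{(i,n)})-\nabla\tildeU_i(\Yc_{k\gamma_i}^{(i,n)})]\,\dd l$. Since $\B{J}(k,l)$ is supported on $k+1\in[N_i]$, $l\in[0,\gamma_i]$ in the $i$-th block, and the $i$-th block of $[\B{M}_\infty^{(n)}]^{-1}\B{M}_{k+1}^{(n)}$ equals $\prod_{\ell=k+1}^{N_i-1}(\B{I}_{d_i}-\B{C}_\ell^{(i,n)})$, \Cref{lem:C_{i}nvertible} (which uses $\gamma_i<1/\tilde M_i$) gives operator norm $\le (1-\gamma_i\tilde m_i)^{N_i-1-k}\le 1$. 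Writing $\|\mathbf{T}_2^{(n)}\|^2=\sum_{i=1}^b\|T_2^{(i,n)}\|^2$ block by block, this reduces the statement to bounding $\E\|T_2^{(i,n)}\|^2$ for each $i$.

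The second step is to pull the sum over $k$ out of the norm. I would apply Cauchy–Schwarz in the form $\|\sum_{k=0}^{N_i-1}\mathbf{A}_k\mathbf{v}_k\|^2\le N_i\sum_{k=0}^{N_i-1}\|\mathbf{A}_k\|^2\|\mathbf{v}_k\|^2$, so that, after absorbing the $(N_i\gamma_i)^{-1}$ factor from $\B{D}_{\bfN\bfgamma}^{-1/2}$, one gets $\E\|T_2^{(i,n)}\|^2\le \gamma_i^{-1}\sum_{k=0}^{N_i-1}\E\|\mathbf{v}_k^{(i,n)}\|^2\le (N_i/\gamma_i)\,\max_k\E\|\mathbf{v}_k^{(i,n)}\|^2$. (Note that using the geometric bound $(1-\gamma_i\tilde m_i)^{N_i-1-k}$ would be \emph{worse} here because under the hypotheses of the proposition $N_i\gamma_i\tilde m_i\ll1$.) A further application of Jensen's inequality combined with the $\tilde M_i$-Lipschitz property of $\nabla\tildeU_i$ (valid under \Cref{ass:supp_fort_convex}) gives $\E\|\mathbf{v}_k^{(i,n)}\|^2\le \gamma_i\tilde M_i^2\int_0^{\gamma_i}\E\|\Yc^{(i,n)}_{k\gamma_i+l}-\Yc^{(i,n)}_{k\gamma_i}\|^2\,\dd l$, which reduces the work to a single-SDE moment estimate.

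The remaining and main analytical step is to control $\E\|\Yc^{(i,n)}_{k\gamma_i+l}-\Yc^{(i,n)}_{k\gamma_i}\|^2$ sharply enough to recover the exact bracket $1+\gamma_i^2\tilde M_i^2/12+\gamma_i\tilde M_i^2/(2\tilde m_i)$. My approach is to rewrite the diffusion~\eqref{eq:def:cont_process} with drift $-\nabla\hat U_i$, where $\hat U_i(\bz)=U_i(\bz)+\|\bz-\B{A}_i\tilde\theta_n\|^2/(2\rho_i)$, whose minimizer is the point $\bz^i_{n,\star}$ from \Cref{lem:bound_expec_z_x_cont}; then $\Yc^{(i,n)}_{k\gamma_i+l}-\Yc^{(i,n)}_{k\gamma_i}=-\int_0^l\nabla\hat U_i(\Yc^{(i,n)}_{k\gamma_i+s})\,\dd s+\sqrt2\,(B^{(i,n)}_{k\gamma_i+l}-B^{(i,n)}_{k\gamma_i})$. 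The Brownian increment contributes exactly $2ld_i$; the drift integral is controlled, by Jensen in $s$ and Lipschitzness of $\nabla\hat U_i$ ($\nabla\hat U_i(\bz^i_{n,\star})=0$), by $l\int_0^l\tilde M_i^2\,\E\|\Yc^{(i,n)}_{k\gamma_i+s}-\bz^i_{n,\star}\|^2\,\dd s\le l^2\tilde M_i^2 d_i/\tilde m_i$, invoking \Cref{lem:bound_expec_z_x_cont}; the drift–noise cross term is handled via a conditioning argument / Itô's isometry on the filtration $\mathcal F^{(n)}_t$ of \eqref{eq:filtrationF} to reorganise constants. After double integration in $l\in[0,\gamma_i]$ one obtains an estimate of the form $\gamma_i\tilde M_i^2\cdot(2\gamma_i^2 d_i+\gamma_i^3\tilde M_i^2 d_i/(\alpha\tilde m_i)+\cdots)$ whose careful book-keeping produces precisely the claimed bracket.

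The principal obstacle is the third step: getting the numerical constants in the bracket (in particular the $\gamma_i^2\tilde M_i^2/12$ factor, which reflects the $\int_0^{\gamma_i}l^2\,\dd l/\gamma_i^3=1/3$ and further integrations, and the $1/(2\tilde m_i)$ rather than $1/(3\tilde m_i)$) right rather than a merely $\mathcal O$-optimal bound; this requires not using the crude $2$-multiplier $\E\|a+b\|^2\le 2\E\|a\|^2+2\E\|b\|^2$ for the drift/noise split, but instead expanding $\E\|V_l\|^2$ exactly, treating the cross term with Fubini and the martingale property of the Brownian increment, and being careful with the integrations over $l$.
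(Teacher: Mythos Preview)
Your first two steps---the block-diagonal decomposition of $\mathbf{T}_2^{(n)}$, the Cauchy--Schwarz step turning $\|\sum_k\cdot\|^2$ into $N_i\sum_k\|\cdot\|^2$, and the Jensen step on the integral---match the paper exactly (its equations \eqref{eq:integral_bound_cont}--\eqref{eq:bound:cont_T2}). The only real divergence is your third step. At that point the paper does not rederive the displacement estimate from scratch; it simply invokes \citet[Lemma~21]{durmus2018high} applied to the potential $V_i^{\tilde\theta_n}(\by)=U_i(\by)+\|\by-\B{A}_i\tilde\theta_n\|^2/(2\rho_i)$, which gives directly
\[
\int_0^{\gamma_i}\E^{\mathcal F_{k\gamma_i}^{(n)}}\bigl\|\nabla V_i^{\tilde\theta_n}(\Yc_{k\gamma_i+l}^{(i,n)})-\nabla V_i^{\tilde\theta_n}(\Yc_{k\gamma_i}^{(i,n)})\bigr\|^2\,\dd l
\le \gamma_i^2\tilde M_i^2\Bigl[d_i+\tfrac{d_i\gamma_i^2\tilde M_i^2}{12}+\tfrac{\gamma_i\tilde M_i^2}{2}\|\Yc_{k\gamma_i}^{(i,n)}-\bz_{n,\star}^i\|^2\Bigr],
\]
and then takes the outer expectation using \Cref{lem:bound_expec_z_x_cont} to replace $\|\Yc_{k\gamma_i}^{(i,n)}-\bz_{n,\star}^i\|^2$ by $d_i/\tilde m_i$. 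Summing over $k$ (bounding the geometric factor by $1$) and over $i$ gives exactly the stated bracket.

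Your proposed hand-computation of $\E\|\Yc_{k\gamma_i+l}-\Yc_{k\gamma_i}\|^2$ via the drift/noise split is essentially what that cited lemma encapsulates, so you are not taking a different route but rather re-proving a known ingredient. Your worry about the drift--noise cross term and about recovering the exact constants $1/12$ and $1/2$ is precisely the content of that lemma; the cross term is indeed not zero and needs the more careful treatment you allude to. So there is no gap in your plan, but you can short-circuit the ``principal obstacle'' entirely by citing \citet[Lemma~21]{durmus2018high} rather than redoing its proof.
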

\begin{proof}
Let $n \in \mathbb{N}$.
Using \eqref{eq:def:cont_J}, we can write, for any $l \in \R_+$ and $k \in \N$, $\B{J}(k,l)$ as a block-diagonal matrix $\mathrm{diag}(\B{J}^1(k,l),\ldots,\B{J}^b(k,l))$ with $\B{J}^i(k,l) = \1_{[N_i]}(k+1) \1_{[0,\gamma_i]}(s) \cdot\B{I}_{d_i}$ for any $i \in [b]$.
By \eqref{eq:def:cont_M} and using for any $k \in \N$, that $[\B{M}_{\infty}^{(n)}]^{-1}\B{M}_{k+1}^{(n)} = \prod_{l=k+1}^{\infty}(\B{I}_{d_{i}}-\B{C}^{(i,n)}_l)$ is finite by \eqref{eq:def:cont_C}, we have
\begin{align}
\nonumber
\normn{\B{\mathrm{T}}_2^{(n)}}^2
&= \normBig{\sum_{k=0}^{\infty} [\B{M}_{\infty}^{(n)}]^{-1}\B{M}_{k+1}^{(n)} \B{D}_{\bfN\bfgamma}^{-\half} \int_0^{\infty} \B{J}(k,l) \big[\nabla \tildeU (\Yc_{k\bfgamma+l}^{(n)}) - \nabla \tildeU (\Yc_{k\bfgamma}^{(n)})\big]\,\dd l}^2 \\
&= \sum_{i=1}^b \frac{1}{N_{i}\gamma_{i}} \normBig{\sum_{k=0}^{\infty}\prod_{l=k+1}^{\infty}(\B{I}_{d_{i}}-\B{C}^{(i,n)}_l)\int_0^{\gamma_{i}} \B{J}^{i}(k,0) \big[\nabla \tildeU_{i} (\Yc_{k\gamma_{i}+l}^{(i, n)}) - \nabla \tildeU_{i} (\Yc_{k\gamma_{i}}^{(i, n)})\big]\,\dd l}^2 \label{eq:integral_bound_cont}\eqsp.
\end{align}
Since for any $i\in [b]$, $k\ge N_{i}$ we have $\B{J}^{i}(k,0)=\B{C}^{(i,n)}_l=\B{0}_{d_{i}\times d_{i}}$, \eqref{eq:integral_bound_cont} can be rewritten as
\begin{equation*}
\normn{\B{\mathrm{T}}_2^{(n)}}^2
= \sum_{i=1}^b \frac{1}{N_{i}\gamma_{i}} \normBig{\sum_{k=0}^{N_{i}-1}\prod_{l=k+1}^{N_{i}-1}(\B{I}_{d_{i}}-\B{C}^{(i,n)}_l)\int_0^{\gamma_{i}} \B{J}^{i}(k,0) \big[\nabla \tildeU_{i} (\Yc_{k\gamma_{i}+l}^{(i, n)}) - \nabla \tildeU_{i} (\Yc_{k\gamma_{i}}^{(i, n)})\big]\,\dd l}^2\eqsp,
\end{equation*}
and the Cauchy-Schwarz inequality gives
\begin{equation}\label{eq:bound:T2_after_CS}
\normn{\B{\mathrm{T}}_2^{(n)}}^2
\le \sum_{i=1}^b \frac{1}{\gamma_{i}}\pr{\sum_{k=0}^{N_{i}-1} \normBig{\prod_{l=k+1}^{N_{i}-1}(\B{I}_{d_{i}}-\B{C}^{(i,n)}_l)}^2 \normBig{\int_0^{\gamma_{i}} \br{\nabla \tildeU_{i} (\Yc_{k\gamma_{i}+l}^{(i, n)}) - \nabla \tildeU_{i} (\Yc_{k\gamma_{i}}^{(i, n)})}\,\dd l}^2}\eqsp.
\end{equation}
Since, for any $i\in[b]$, $\gamma_{i}\tilde{M}_{i}<1$, we get using \Cref{lem:C_{i}nvertible}, 
\begin{align*}
\norm{\prod_{l=k+1}^{N_{i}-1}(\B{I}_{d_{i}}-\B{C}^{(i,n)}_l)}^2
\le \{1-\gamma_{i} \tilde{m}_{i}\}^{2(N_{i}-k-1)}\eqsp.
\end{align*}
By combining \eqref{eq:bound:T2_after_CS} with the previous result and the Jensen inequality, we have
\begin{equation}\label{eq:bound:cont_T2}
\normn{\B{\mathrm{T}}_2^{(n)}}^2
\le \sum_{i=1}^b\sum_{k=0}^{N_{i}-1}\{1-\gamma_{i} \tilde{m}_{i}\}^{2(N_{i}-k-1)}\int_0^{\gamma_{i}} \norm{\nabla \tildeU_{i} (\Yc_{k\gamma_{i}+l}^{(i, n)}) - \nabla \tildeU_{i} (\Yc_{k\gamma_{i}}^{(i, n)})}^2\,\dd l\eqsp.
\end{equation}
For $i\in[b]$, using \citet[Lemma 21]{durmus2018high} applied to the potential $V_{i}^{\theta} : \by^{i} \mapsto U_{i}(\by^{i}) + \| \by^{i} - \B{A}_{i} \theta\|^2/(2\rho_{i})$ yields
\begin{align}\nonumber
\int_0^{\gamma_{i}} \E^{\mathcal{F}_{k\gamma_{i}}^{(n)}}\normn{\nabla \tildeU_{i} (\Yc_{k\gamma_{i}+l}^{(i, n)}) - \nabla \tildeU_{i} (\Yc_{k\gamma_{i}}^{(i, n)})}^2\,\dd l
= \int_0^{\gamma_{i}} \E^{\mathcal{F}_{k\gamma_{i}}^{(n)}}\normn{\nabla V_{i}^{\tilde{\theta}_n} (\Yc_{k\gamma_{i}+l}^{(i, n)}) - \nabla V_{i}^{\tilde{\theta}_n} (\Yc_{k\gamma_{i}}^{(i, n)})}^2\,\dd l \\
\label{eq:bound:cont_{i}nt}
\le \gamma_{i}^2\tilde{M}_{i}^2\br{d_{i} + \nofrac{d_{i}\gamma_{i}^2\tilde{M}_{i}^2}{12}+ (\nofrac{\gamma_{i}\tilde{M}_{i}^2}{2}) \| \Yc_{k\gamma_{i}}^{(i, n)} - \bz_{n,\star}^{i} \|^2}\eqsp,
\end{align}
where $\bz_{n,\star}^{i} = \arg\min_{\bz_i\in \R^{d_{i}}} V_{i}^{\tilde{\theta}_n}(\bz_i)$.

By \eqref{eq:bound:cont_{i}nt}, \eqref{eq:bound:cont_sum_expec1}, \Cref{lem:bound_expec_z_x_cont} and since $\max_{i\in[b]}\gamma_{i}\tilde{m}_{i}<1$, we get
\begin{multline*}
\sum_{i=1}^b \sum_{k=0}^{N_{i}-1}\{1-\gamma_{i} \tilde{m}_{i}\}^{2(N_{i}-k-1)}\int_0^{\gamma_{i}} \E\| \nabla \tildeU_{i} (\Yc_{k\gamma_{i}+l}^{(i, n)}) - \nabla \tildeU_{i} (\Yc_{k\gamma_{i}}^{(i, n)})\|^2\,\dd l\\
\le \sum_{i=1}^b d_{i}N_{i}\gamma_{i}^2\tilde{M}_{i}^2[1 + \nofrac{\gamma_{i}^2\tilde{M}_{i}^2}{12}+ \nofrac{\gamma_{i}\tilde{M}_{i}^2}{(2\tilde{m}_{i})}] \eqsp.
\end{multline*}
Combining this result with  \eqref{eq:bound:cont_T2} completes the proof.
\end{proof}

We can now combine \Cref{lem:bound:cont_expec_T2} and \Cref{lem:bound:cont_T1_bis} with \Cref{lem:bound:cont_Z_{i}nduction} to get the following bound.

\begin{lemma}\label{lem:bound:cont_expec_norm_spe}
Assume \Cref{ass:well_defined_density}-\Cref{ass:supp_fort_convex} and let $\bfN\in(\N^*)^{b}, \gammabf \in (\rset_+^*)^b$ such that, for any $i\in[b]$, $\gamma_{i}<1/\tilde{M}_{i}$, $N_{i}\gamma_{i}\le 2/(m_{i} + \tilde{M}_{i})$.
Suppose in addition $\upkappa_{\bfgamma,\bfrho,\bfN} = \min_{i\in[b]} \{N_{i}\gamma_{i} m_{i}\}-r_{\bfgamma,\bfrho,\bfN} \in \ooint{0,1}$, where $r_{\bfgamma,\bfrho,\bfN}$ is defined in \eqref{eq:def:r}.
Then, for $n \ge 1$, we have
\begin{multline*}\label{eq:cont_boud_expec_Z_final}
\E\brBig{\normn{\Zc_{n} - \Zb_{n}}_{\B{D}_{\bfN\bfgamma}^{-1}}^2}
\le (1-\upkappa_{\bfgamma,\bfrho,\bfN}+\upkappa_{\bfgamma,\bfrho,\bfN}^2/2)^{2(n-1)} \E\brBig{\normn{\Zc_{1} - \Zb_{1}}_{\B{D}_{\bfN\bfgamma}^{-1}}^2}\\
+ 2\upkappa_{\bfgamma,\bfrho,\bfN}^{-2}
\sum_{i=1}^b d_{i}N_{i}\gamma_{i}^2\tilde{M}_{i}^2 \prbigg{1 + \frac{\gamma_{i}^2\tilde{M}_{i}^2}{12}+ \frac{\gamma_{i}\tilde{M}_{i}^2}{2\tilde{m}_{i}}}\eqsp,
\end{multline*}
where, for any $i \in [b]$, $\tilde{M}_i$ and $\tilde{m}_i$ are defined in \eqref{eq:def_tilde_m_M}.
\end{lemma}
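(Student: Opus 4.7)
The strategy is a discrete Gronwall-type argument that turns the one-step pathwise estimate of \Cref{lem:bound:cont_Z_{i}nduction} into a scalar affine recursion on
\begin{equation*}
a_n \defeq \E\br{\normn{\tilde Z_n - Z_n}_{\B{D}_{\bfN\bfgamma}^{-1}}^2}\eqsp,
\end{equation*}
and then iterates. Under the current hypotheses, both \Cref{lem:bound:cont_T1_bis} and \Cref{lem:bound:cont_expec_T2} apply: the former provides the deterministic bound $\normn{\B{\mathrm{T}}_1^{(n)}} \le 1-\upkappa_{\bfgamma,\bfrho,\bfN}$, while the latter yields the uniform-in-$n$ estimate $\E\brn{\normn{\B{\mathrm{T}}_2^{(n)}}^2}\le B_{\bfgamma,\bfrho,\bfN}$ with
\begin{equation*}
B_{\bfgamma,\bfrho,\bfN} = \sum_{i=1}^b d_i N_i \gamma_i^2 \tilde M_i^2 \pr{1 + \gamma_i^2\tilde M_i^2/12 + \gamma_i\tilde M_i^2/(2\tilde m_i)}\eqsp,
\end{equation*}
the uniformity being a direct consequence of the stationarity of $(\tilde X_n)_{n\in\N}$ under $\Pi_{\bfrho}$ shown in \Cref{prop:cont_X_law}, which makes the law of the iterates $\tilde Y^{(i,n)}_{k\gamma_i}$ time-homogeneous.

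Taking expectations in the one-step bound of \Cref{lem:bound:cont_Z_{i}nduction} then reduces the proof to iterating the scalar affine inequality
\begin{equation*}
a_{n+1} \le (1+2\epsilon)(1-\upkappa_{\bfgamma,\bfrho,\bfN})^2 \, a_n + (1+1/(2\epsilon))\, B_{\bfgamma,\bfrho,\bfN}\eqsp,
\end{equation*}
valid for any $\epsilon>0$. The natural calibration is to pick $\epsilon$ so that $(1+2\epsilon)(1-\upkappa_{\bfgamma,\bfrho,\bfN})^2 \le (1-\upkappa_{\bfgamma,\bfrho,\bfN}+\upkappa_{\bfgamma,\bfrho,\bfN}^2/2)^2$, which is a closed-form condition on $\epsilon$ as a function of $\upkappa_{\bfgamma,\bfrho,\bfN}$; unrolling the recursion from $a_1$ over $n-1$ steps and summing the resulting geometric series then yields
\begin{equation*}
a_n \le \prBig{1-\upkappa_{\bfgamma,\bfrho,\bfN}+\upkappa_{\bfgamma,\bfrho,\bfN}^2/2}^{2(n-1)} a_1 + \frac{(1+1/(2\epsilon))\, B_{\bfgamma,\bfrho,\bfN}}{1-\prn{1-\upkappa_{\bfgamma,\bfrho,\bfN}+\upkappa_{\bfgamma,\bfrho,\bfN}^2/2}^2}\eqsp,
\end{equation*}
so that the announced geometric factor in $n-1$ appears automatically.

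The remaining and main difficulty is to match the stated coefficient $2\upkappa_{\bfgamma,\bfrho,\bfN}^{-2}$ in front of $B_{\bfgamma,\bfrho,\bfN}$. A direct calibration of $\epsilon$ combined with the factorisation $1-(1-\upkappa+\upkappa^2/2)^2 = \upkappa(1-\upkappa/2)(2-\upkappa+\upkappa^2/2)$ and the identity $(1-\upkappa/2)^2 = 1-\upkappa+\upkappa^2/4$ (writing $\upkappa$ for $\upkappa_{\bfgamma,\bfrho,\bfN}$) already extracts one factor of $\upkappa^{-1}$, but producing the clean $2\upkappa^{-2}$ constant requires either a careful optimisation of $\epsilon$ across the admissible range, or a refinement of the one-step inequality that treats the cross term $\langle \B{\mathrm{T}}_1^{(n)}(\tilde Z_n - Z_n),\B{\mathrm{T}}_2^{(n)}\rangle$ more tightly than the generic Young's inequality used in \Cref{lem:bound:cont_Z_{i}nduction}. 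A natural alternative route is a Duhamel-type expansion of $\tilde Z_n - Z_n$ into a main term involving $\prod_{k=1}^{n-1}\B{\mathrm{T}}_1^{(k)}(\tilde Z_1 - Z_1)$ and a martingale-like sum over the $\B{\mathrm{T}}_2^{(k)}$, which, combined with the independence across $n$ of the Brownian increments driving $\B{\mathrm{T}}_2^{(n)}$, allows one to avoid the $1+1/(2\epsilon)$ blow-up by cross-term cancellations in expectation. Once this reduction is settled the rest is purely mechanical; I expect this calibration step, rather than the recursion itself, to be the main obstacle.
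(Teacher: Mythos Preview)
Your overall strategy matches the paper's exactly: take expectation in \Cref{lem:bound:cont_Z_{i}nduction}, use the almost-sure bound $\normn{\B{\mathrm{T}}_1^{(n)}}\le 1-\upkappa$ from \Cref{lem:bound:cont_T1_bis} and the uniform bound $\E\brn{\normn{\B{\mathrm{T}}_2^{(n)}}^2}\le B$ from \Cref{lem:bound:cont_expec_T2}, then unroll the resulting scalar affine recursion from $a_1$.

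Where you take a harder road than the paper is in the calibration of $\epsilon$ and the extraction of the $2\upkappa^{-2}$ constant. The paper simply picks
\[
\epsilon = \frac{1-(1-\upkappa)^2}{4(1-\upkappa)^2}\eqsp,
\]
which gives the per-step factor $(1+2\epsilon)(1-\upkappa)^2 = \tfrac{1+(1-\upkappa)^2}{2} = 1-\upkappa+\upkappa^2/2$ and $1+1/(2\epsilon) = \tfrac{1+(1-\upkappa)^2}{1-(1-\upkappa)^2}$. After summing the geometric series the additive term is
\[
2\,\frac{1+(1-\upkappa)^2}{\bigl(1-(1-\upkappa)^2\bigr)^2}\,B
= 2\,\frac{1+(1-\upkappa)^2}{\upkappa^2(2-\upkappa)^2}\,B\eqsp,
\]
and the elementary inequality $1+(1-\upkappa)^2 = 2-2\upkappa+\upkappa^2 \le (2-\upkappa)^2$, valid for $\upkappa\le 1$, yields $\le 2\upkappa^{-2}B$ at once. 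No optimisation over $\epsilon$ and no refinement of the cross term are needed.

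Two further remarks. First, the paper's argument actually produces the geometric factor $(1-\upkappa+\upkappa^2/2)^{n-1}$, not $(1-\upkappa+\upkappa^2/2)^{2(n-1)}$; by calibrating so that the \emph{per-step} factor equals $(1-\upkappa+\upkappa^2/2)^2$ you are chasing what appears to be a typographical slip in the displayed exponent, and this is precisely why your constant-matching step looks harder than it is. Second, the Duhamel/martingale route you float is a red herring here: $\B{\mathrm{T}}_2^{(n)}$ is built from the drift increments $\nabla\tilde U(\tilde Y^{(n)}_{k\bfgamma+l})-\nabla\tilde U(\tilde Y^{(n)}_{k\bfgamma})$ and is not centred conditionally on $\mathcal G_n$, so no cross-term cancellation in expectation is available; a martingale decomposition only enters later, under the extra \Cref{ass:hessian_lipschitz}, via It\^o's formula.
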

\begin{proof}
Taking expectation in \Cref{lem:bound:cont_Z_{i}nduction}, we get for any $n\in\N, \epsilon>0$ that 
\begin{equation*}%\label{eq:cont_expec_Z_rec}
\E\br{\norm{\Zc_{n+1} - \Zb_{n+1}}_{\B{D}_{\bfN\bfgamma}^{-1}}^2}
\le (1+2\epsilon) \E\br{\normn{\B{\mathrm{T}}_1^{(n)}}^2 \normn{\Zc_{n} - \Zb_{n}}_{\B{D}_{\bfN\bfgamma}^{-1}}^2}
+ (1+1/\{2\epsilon\}) \E\br{\normn{\B{\mathrm{T}}_2^{(n)}}^2}\eqsp,
\end{equation*}
where $\B{\mathrm{T}}_1^{(n)}$ and $\B{\mathrm{T}}_2^{(n)}$ are defined in \eqref{eq:def:T} and \eqref{eq:def:T2}, respectively.
To ease notation, denote $\mathrm{B} = \sum_{i=1}^b d_{i}N_{i}\gamma_{i}^2\tilde{M}_{i}^2 (1 + \nofrac{\gamma_{i}^2\tilde{M}_{i}^2}{12}+ \gamma_{i}\tilde{M}_{i}^2/(2\tilde{m}_{i}))$. 
Using \Cref{lem:bound:cont_expec_T2}, we obtain for any $n\in\N, \epsilon>0$
\begin{equation}
\label{eq:cont_expec_Z_rec}
\E\br{\normn{\Zc_{n+1} - \Zb_{n+1}}_{\B{D}_{\bfN\bfgamma}^{-1}}^2}
\le (1+2\epsilon)\E\br{\normn{\mathrm{T}_1^{(n)}}^2\normn{\Zc_{n} - \Zb_{n}}_{\B{D}_{\bfN\bfgamma}^{-1}}^2}
+ (1+1/\{2\epsilon\}) \mathrm{B}\eqsp.
\end{equation}
In addition, \Cref{lem:bound:cont_T1_bis} implies that 
$\normn{\B{\mathrm{T}}_1^{(n)}}^2\le (1 -\upkappa_{\bfgamma,\bfrho,\bfN})^2$ almost surely. 
Therefore, taking  $\epsilon = (1-[1-\upkappa_{\bfgamma,\bfrho,\bfN}]^2)/(4 [1-\upkappa_{\bfgamma,\bfrho,\bfN}]^2)$, \eqref{eq:cont_expec_Z_rec} yields for any $n\ge 0$, 
\begin{equation*}%\label{eq:bound:expec_rec_Z}
\E\br{\norm{\Zc_{n+1} - \Zb_{n+1}}_{\B{D}_{\bfN\bfgamma}^{-1}}^2}
\le \frac{1+(1-\upkappa_{\bfgamma,\bfrho,\bfN})^2}{2} \E\br{\norm{\Zc_{n} - \Zb_{n}}_{\B{D}_{\bfN\bfgamma}^{-1}}^2}
+ \frac{1+(1-\upkappa_{\bfgamma,\bfrho,\bfN})^2}{1-(1-\upkappa_{\bfgamma,\bfrho,\bfN})^2}\mathrm{B}\eqsp.
\end{equation*}
An easy induction implies for any $n \ge 1$, 
\begin{equation}\label{eq:bound:expec_rec_Z}
\E\brbig{\| \Zc_{n} - \Zb_{n} \|_{\B{D}_{\bfN\bfgamma}^{-1}}^2}
\le\prbigg{\frac{1+(1-\upkappa_{\bfgamma,\bfrho,\bfN})^2}{2}}^{n-1} \E\brbig{\| \Zc_{1} - \Zb_{1} \|_{\B{D}_{\bfN\bfgamma}^{-1}}^2}
+ 2\frac{1+(1-\upkappa_{\bfgamma,\bfrho,\bfN})^2}{(1-(1-\upkappa_{\bfgamma,\bfrho,\bfN})^2)^2}\mathrm{B}\eqsp.
\end{equation}
Since $\upkappa_{\bfgamma,\bfrho,\bfN}^2 = (\min_{i\in[b]}\{N_{i}\gamma_{i} m_{i}\}+r_{\bfgamma,\bfrho,\bfN})^{2}$ and using $\upkappa_{\bfgamma,\bfrho,\bfN}^2 \le 1$, we obtain
\begin{align*}
&(1+(1-\upkappa_{\bfgamma,\bfrho,\bfN})^2)/2
= 1-\upkappa_{\bfgamma,\bfrho,\bfN}+ \upkappa_{\bfgamma,\bfrho,\bfN}^2/2\eqsp,\\
&(1+(1-\upkappa_{\bfgamma,\bfrho,\bfN})^2)/(1-(1-\upkappa_{\bfgamma,\bfrho,\bfN})^2)^2
\le \upkappa_{\bfgamma,\bfrho,\bfN}^{-2} \eqsp. 
\end{align*}
Combining these inequalities with \eqref{eq:bound:expec_rec_Z} and \eqref{eq:cont_expec_Z_rec} completes the proof. 
\end{proof}
\begin{lemma}\label{lem:bound:cont_wass}
Assume \Cref{ass:well_defined_density}-\Cref{ass:supp_fort_convex} and let $\bfN\in(\N^*)^{b}, \gammabf\in (\rset_+^*)^b$ such that, for any $i\in[b]$, $\gamma_{i}<1/\tilde{M}_{i}, N_{i}\gamma_{i}\le 2/(m_{i} + \tilde{M}_{i})$ and $\upkappa_{\bfgamma,\bfrho,\bfN} = \min_{i\in[b]} \{N_{i}\gamma_{i} m_{i}\}-r_{\bfgamma,\bfrho,\bfN} \in \ooint{0,1}$, where $r_{\bfgamma,\bfrho,\bfN}$ is defined in \eqref{eq:def:r}. Then, for any $\bx \in \mathbb{R}^{d+p}$ and $n \ge 1$, we have
\begin{align*}
&\wasserstein{}^{2} (\updelta_{\bx} P_{\bfrho, \bfgamma,\bfN}^{n},\Pi_{\bfrho})\\
&\le (1- \upkappa_{\bfgamma,\bfrho,\bfN}+ \upkappa_{\bfgamma,\bfrho,\bfN}^2/2)^{2(n-1)} (1 + \|\bar{\B{B}}_0^{-1}\B{B}_0^{\top}\B{\tilde{D}}_{0}^{\half}\|^2)\max_{i\in[b]}\{N_{i}\gamma_{i}\} \E\brbig{\| \Zc_{1} - \Zb_{1} \|_{\B{D}_{\bfN\bfgamma}^{-1}}^2}\\
&\quad+ \frac{2(1 + \|\bar{\B{B}}_0^{-1}\B{B}_0^{\top}\B{\tilde{D}}_{0}^{\half}\|^2) \max_{i\in[b]} \{N_{i}\gamma_{i}\}}{\upkappa_{\bfgamma,\bfrho,\bfN}^{2}}
\sum_{i=1}^b d_{i}N_{i}\gamma_{i}^2\tilde{M}_{i}^2 [1 + \nofrac{\gamma_{i}^2\tilde{M}_{i}^2}{12}+ \gamma_{i}\tilde{M}_{i}^2/(2\tilde{m}_{i})]\eqsp,
\end{align*}
where $\bar{\B{B}}_0,\B{B}_0,\B{\tilde{D}}_{0}$ are defined in \eqref{eq:def_B_bar_B}-\eqref{eq:def_projection}, $P_{\bfrho, \bfgamma,\bfN}$ is defined in \eqref{eq:P_rho_gamma_N}, $(\tilde{Z}_n,Z_n)_{n \in \N}$ is defined in \eqref{eq:cont_coupling_2} and for any $i \in [b]$, $\tilde{M}_i$, $\tilde{m}_i$ are defined in \eqref{eq:def_tilde_m_M}.
\end{lemma}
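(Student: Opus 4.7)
The plan is to combine the three technical pieces that have already been assembled, namely the reduction to the marginal $Z$-chain (\Cref{lem:geo_decr_cont}), the comparison of the Euclidean norm with the weighted norm $\|\cdot\|_{\B{D}_{\bfN\bfgamma}^{-1}}$, and the recursive estimate in \Cref{lem:bound:cont_expec_norm_spe}. Concretely, I would start from inequality \eqref{eq:W2X_def}, which bounds $\wasserstein{}^2(\updelta_{\bx}P_{\bfrho,\bfgamma,\bfN}^n, \Pi_{\bfrho})$ by $\mathbb{E}[\|X_n-\tilde{X}_n\|^2]$, where $(X_n,\tilde{X}_n)_{n\in\nset}$ is the synchronous coupling constructed in \eqref{eq:cont_coupling_2}--\eqref{eq:cont_coupling_Y_N} with $X_0 = \bx$ and $\tilde{X}_0$ drawn from the invariant measure $\Pi_{\bfrho}$ (this uses \Cref{prop:cont_X_law} to ensure $\tilde{X}_n\sim\Pi_{\bfrho}$ for every $n$).

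Next, I would apply \Cref{lem:geo_decr_cont} to replace the full-state discrepancy $\|X_n-\tilde{X}_n\|^2$ by $(1+\|\bar{\B{B}}_0^{-1}\B{B}_0^{\top}\B{\tilde{D}}_0^{\half}\|^2)\|\tilde{Z}_n-Z_n\|^2$; this exploits the fact that once the $Z$-components are coupled, the conditional Gaussian step for $\btheta$ under the shared noise $\xi_{n+1}$ contracts onto an affine image of $Z_{n+1}-\tilde{Z}_{n+1}$. Then the trivial inequality $\|\tilde{Z}_n - Z_n\|^2 \le \max_{i\in[b]}\{N_i\gamma_i\}\,\|\tilde{Z}_n-Z_n\|_{\B{D}_{\bfN\bfgamma}^{-1}}^2$, which follows from the definition of the block-diagonal matrix $\B{D}_{\bfN\bfgamma}$, brings us into the setting of \Cref{lem:bound:cont_expec_norm_spe}.

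Finally, I would invoke \Cref{lem:bound:cont_expec_norm_spe} verbatim to bound $\mathbb{E}[\|\tilde{Z}_n - Z_n\|_{\B{D}_{\bfN\bfgamma}^{-1}}^2]$ by a geometric contraction with rate $1-\upkappa_{\bfgamma,\bfrho,\bfN} + \upkappa_{\bfgamma,\bfrho,\bfN}^2/2$ applied to the one-step bias $\mathbb{E}[\|\tilde{Z}_1-Z_1\|_{\B{D}_{\bfN\bfgamma}^{-1}}^2]$, plus the residual term $2\upkappa_{\bfgamma,\bfrho,\bfN}^{-2}\sum_{i=1}^b d_i N_i \gamma_i^2 \tilde{M}_i^2[1+\gamma_i^2\tilde{M}_i^2/12+\gamma_i\tilde{M}_i^2/(2\tilde{m}_i)]$. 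Multiplying this estimate by $(1+\|\bar{\B{B}}_0^{-1}\B{B}_0^{\top}\B{\tilde{D}}_0^{\half}\|^2)\max_{i\in[b]}\{N_i\gamma_i\}$ yields the stated bound.

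There is essentially no obstacle: the assumptions $\gamma_i<1/\tilde{M}_i$, $N_i\gamma_i\le 2/(m_i+\tilde{M}_i)$, and $\upkappa_{\bfgamma,\bfrho,\bfN}\in(0,1)$ imposed in the statement are exactly the hypotheses required by \Cref{lem:geo_decr_cont} and \Cref{lem:bound:cont_expec_norm_spe}, so the proof is a two-line assembly of these two lemmas together with the weighted-norm comparison and the coupling bound \eqref{eq:W2X_def}. The only thing to be careful about is keeping the one-step bias term $\mathbb{E}[\|\tilde{Z}_1-Z_1\|_{\B{D}_{\bfN\bfgamma}^{-1}}^2]$ explicit rather than further upper-bounded, since it plays a role in the subsequent specialization to the initial distribution $\mu_{\rhobf}^{\star}$.
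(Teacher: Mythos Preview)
Your proposal is correct and follows essentially the same approach as the paper: start from the coupling bound \eqref{eq:W2X_def}, apply \Cref{lem:geo_decr_cont} to reduce to $\|\tilde{Z}_n-Z_n\|^2$, compare with the weighted norm via the factor $\max_{i\in[b]}\{N_i\gamma_i\}$, and plug in \Cref{lem:bound:cont_expec_norm_spe}. The paper's proof is exactly this assembly.
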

\begin{proof}
% Consider $(\tilde{\theta}_{0}, \Zc_{0})$ with distribution $\Pi_{\bfrho}$ which is the invariant distribution of $\tP_{\bfrho,\bfgamma}$ defined in \eqref{eq:def:P_tilde}. Therefore, the sequence $(\tilde{\theta}_{n}, \Zc_{n})_{n \in \N}$ defined in \eqref{eq:cont_coupling_2} is stationary and for any $n \in \N$, $(\tilde{\theta}_{n}, \Zc_{n})$ follows the distribution $\Pi_{\bfrho}$.
By \Cref{lem:bound:cont_expec_norm_spe}, we have the following upper bound for $n \ge 1$,
\begin{multline*}
\E\brBig{\normn{\Zc_{n} - \Zb_{n}}_{\B{D}_{\bfN\bfgamma}^{-1}}^2}
\le (1-\upkappa_{\bfgamma,\bfrho,\bfN}+\upkappa_{\bfgamma,\bfrho,\bfN}^2/2)^{2(n-1)} \E\brBig{\normn{\Zc_{1} - \Zb_{1}}_{\B{D}_{\bfN\bfgamma}^{-1}}^2}\\
+ 2\upkappa_{\bfgamma,\bfrho,\bfN}^{-2}
\sum_{i=1}^b d_{i}N_{i}\gamma_{i}^2\tilde{M}_{i}^2 \prbigg{1 + \frac{\gamma_{i}^2\tilde{M}_{i}^2}{12}+ \frac{\gamma_{i}\tilde{M}_{i}^2}{2\tilde{m}_{i}}}\eqsp.
\end{multline*}
Using \eqref{eq:cont_coupling_2}, \Cref{lem:geo_decr_cont}, combined with the previous inequality, we get for any $n\ge 1, \bx\in\R^{d+p}$,
\begin{align*}
&\wasserstein{}^{2} (\Pi_{\bfrho}, \updelta_{\bx}P_{\bfrho, \bfgamma,\bfN}^{n})\\
&\le (1 + \|\bar{\B{B}}_0^{-1}\B{B}_0^{\top}\B{\tilde{D}}_{0}^{\half}\|^2) \E\brbig{\normn{\Zc_{n} - \Zb_{n}}^2} \\
&\le (1 + \|\bar{\B{B}}_0^{-1}\B{B}_0^{\top}\B{\tilde{D}}_{0}^{\half}\|^2) \max_{i\in[b]}\{N_{i}\gamma_{i}\} \E\brBig{\normn{\Zc_{n} - \Zb_{n}}_{\B{D}_{\bfN\bfgamma}^{-1}}^2} \\
&\le (1-\upkappa_{\bfgamma,\bfrho,\bfN}+\upkappa_{\bfgamma,\bfrho,\bfN}^2/2)^{2(n-1)} (1 + \|\bar{\B{B}}_0^{-1}\B{B}_0^{\top}\B{\tilde{D}}_{0}^{\half}\|^2)\max_{i\in[b]}\{N_{i}\gamma_{i}\}\E\brBig{\normn{\Zc_{1} - \Zb_{1}}_{\B{D}_{\bfN\bfgamma}^{-1}}^2} \\
&+ \frac{2(1 + \|\bar{\B{B}}_0^{-1}\B{B}_0^{\top}\B{\tilde{D}}_{0}^{\half}\|^2) \max_{i\in[b]}\{N_{i}\gamma_{i}\}}{\upkappa_{\bfgamma,\bfrho,\bfN}^2}
\sum_{i=1}^b d_{i}N_{i}\gamma_{i}^2\tilde{M}_{i}^2 \prBig{1 + \frac{\gamma_{i}^2\tilde{M}_{i}^2}{12}+ \frac{\gamma_{i}\tilde{M}_{i}^2}{2\tilde{m}_{i}}}\eqsp.
\end{align*}
  Hence the stated result.
\end{proof}

\paragraph{Proof of \Cref{prop:bias_gamma}/\Cref{cor:bias_pi_rho_pirho_gamma}.}

\begin{proof}

  Since for any $i\in[b]$, $\txts\gamma_{i}\le \nofrac{m_{i}}{40\tilde{M}_{i}^2}(\min_{i\in[b]}\{m_{i}/\tilde{M}_{i}\}/\max_{i\in[b]}\{m_{i}/\tilde{M}_{i}\})^2$, setting
  \begin{equation*}
  N_{i}^{\star}(\gamma_i) = \big\lfloor m_{i}\min_{i\in[b]}\acn{m_{i}/\tilde{M}_{i}}^2/\prbig{20\gamma_{i}\tilde{M}_{i}^2\max_{i\in[b]}\acn{m_{i}/\tilde{M}_{i}}^2}\big\rfloor
  \end{equation*}
implies $\upkappa_{\bfgamma,\bfrho,\bfN^{\star}} \in (0,1)$ by \Cref{lem:choice_N_condition_gamma}.
Thereby, letting $n$ tend towards infinity in \Cref{lem:bound:cont_wass} and using \Cref{cor:convergence_rho_gamma} conclude the proof.
\end{proof}

\subsection{Proof of \Cref{prop:bias_gamma_bis}}
\label{subsec:proof_cont2}

We first give the formal statement of \Cref{prop:bias_gamma_bis}.

\begin{proposition}\label{cor:bias_pi_rho_pirho_gamma_alternative}
Assume \Cref{ass:well_defined_density}-\Cref{ass:supp_fort_convex}-\Cref{ass:hessian_lipschitz} and let $\gammabf\in (\rset_+^*)^b$, $\bfN \in (\N^*)^b$ such that for any $i \in [b]$, 
$
\txts\gamma_{i}\le \nofrac{m_{i}}{40\tilde{M}_{i}^2}(\min_{i\in[b]}\{m_{i}/\tilde{M}_{i}\}/\max_{i\in[b]}\{m_{i}/\tilde{M}_{i}\})^2$ and $N_{i} = \lfloor m_{i}\min_{i\in[b]}\acn{m_{i}/\tilde{M}_{i}}^2/(20\gamma_{i}\tilde{M}_{i}^2\max_{i\in[b]}\acn{m_{i}/\tilde{M}_{i}^2})\rfloor$.
Then, we have
\begin{equation*}
\wasserstein{}^2\prn{\Pi_{\bfrho,\bfgamma,\bfN},\Pi_{\bfrho}}\le 4\prn{1+\normn{\bar{\B{B}}_0^{-1}\B{B}_0^{\top}\B{\tilde{D}}_{0}^{\half}}^2}
\frac{\max_{i\in[b]}\acn{m_{i}/\tilde{M}_{i}^2}}{\min_{i\in[b]}\acn{m_{i}/\tilde{M}_{i}}^2}
% \sum_{i\in[b]}d_{i}\gamma_{i}^2\tilde{M}_{i}^2\prbigg{4+\frac{d_{i}L_{i}^2m_{i}}{20\tilde{M}_{i}^4}}
\mathscr{R}^{\star}(\bfgamma)\eqsp,
\end{equation*}
where setting $\mathfrak{f}_i = m_i/(20\tilde{M}_i)$,
\begin{align}
  \label{eq:1}
  \mathscr{R}^{\star}(\bfgamma)& = \sum_{i=1}^b \defEns{d_i \gamma_i^2\tilde{M}_i^2 + \frac{d_i\gamma_i^2 \mathfrak{f}_i}{\tilde{M}_i}\parenthese{d_iL^2_i + \frac{\tilde{M}_i^4}{\tilde{m}_i}} + d_i \gamma_i \tilde{M}_i \mathfrak{f}_i^3(1+\mathfrak{f}_i+\mathfrak{f}_{i}^2)}\eqsp,
\end{align}
$\bar{\B{B}}_0,\B{B}_0,\B{\tilde{D}}_{0}$ are defined in \eqref{eq:def_B_bar_B}-\eqref{eq:def_projection}, and for any $i \in [b]$, $\tilde{m}_i$, $\tilde{M}_i$ are defined in \eqref{eq:def_tilde_m_M}.
\end{proposition}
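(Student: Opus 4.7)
The plan is to follow the same architecture as the proof of \Cref{cor:bias_pi_rho_pirho_gamma}, namely the synchronous coupling in \eqref{eq:cont_coupling_2}, the decomposition $\B{D}_{\bfN\bfgamma}^{-\half}(\tilde{Z}_{n+1}-Z_{n+1}) = \B{T}_1^{(n)}(\tilde Z_n - Z_n) - \B{T}_2^{(n)}$ from \Cref{lem:bound:cont_Z_{i}nduction1}, and the contraction bound $\|\B{T}_1^{(n)}\|\le 1-\min_i\{N_i\gamma_im_i\}+r_{\bfgamma,\bfrho,\bfN}$ from \Cref{lem:bound:cont_T1_bis}. The only piece that must be refined is the second moment of $\B{T}_2^{(n)}$; everything downstream (the quadratic-mean recursion of \Cref{lem:bound:cont_expec_norm_spe}, the passage to the Wasserstein bound of \Cref{lem:bound:cont_wass}, and the choice of $N_i^{\star}(\gamma_i)$ via \Cref{lem:choice_N_condition_gamma}) is reused verbatim. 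Letting $n\to\infty$ and invoking \Cref{cor:convergence_rho_gamma} to identify the invariant measures then yields the stated bound on $\wasserstein{}^2(\Pi_{\bfrho,\bfgamma,\bfN},\Pi_{\bfrho})$.

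The main technical step is to replace \Cref{lem:bound:cont_expec_T2}, whose proof bounded the local discretisation error by brute Lipschitzness of $\nabla\tildeU_i$ and paid an $O(\gamma_i^2\tilde{M}_i^2)$ factor. Under \Cref{ass:hessian_lipschitz}, I will apply Itô's formula to $\nabla\tildeU_i(\Yc_t^{(i,n)})$ along the SDE \eqref{eq:def:cont_process}, giving, for $0\le s\le t$,
\[
\nabla\tildeU_i(\Yc_t^{(i,n)}) - \nabla\tildeU_i(\Yc_s^{(i,n)})
= \int_s^{t}\!\!\Phi_i(\Yc_r^{(i,n)},\tilde\theta_n)\,\dd r + \sqrt{2}\int_s^{t}\!\!\nabla^2\tildeU_i(\Yc_r^{(i,n)})\,\dd B_r^{(i,n)},
\]
where $\Phi_i(\by,\theta) = -\nabla^2\tildeU_i(\by)[\nabla\tildeU_i(\by)+\rho_i^{-1}\B{A}_i\theta] + \vec{\Delta}\nabla\tildeU_i(\by)$. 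Substituting this into the inner integral of \eqref{eq:bound:cont_T2}, applying the stochastic Fubini theorem to recast $\int_0^{\gamma_i}\int_{k\gamma_i}^{k\gamma_i+l}\nabla^2\tildeU_i(\Yc_r^{(i,n)})\,\dd B_r^{(i,n)}\,\dd l$ as a single Itô integral with weight $(k\gamma_i+\gamma_i-r)$, and using the Itô isometry, the martingale contribution yields
\[
\E\normBig{\sqrt{2}\int_{k\gamma_i}^{k\gamma_i+\gamma_i}(k\gamma_i+\gamma_i-r)\nabla^2\tildeU_i(\Yc_r^{(i,n)})\,\dd B_r^{(i,n)}}^2
\le \tfrac{2}{3}d_i\tilde{M}_i^2\gamma_i^3,
\]
so that after dividing by $\gamma_i$ in \eqref{eq:bound:T2_after_CS} this part contributes $O(d_i\gamma_i^2\tilde{M}_i^2)$; crucially the sum over $k$ then gets damped by the geometric factor $(1-\gamma_i\tilde m_i)^{2(N_i-k-1)}$, producing a term of order $d_i\gamma_i/\tilde m_i$, which is the source of the $1/\tilde{M}_i^2$ improvement once the prescribed $N_i=N_i^{\star}(\gamma_i)$ is plugged in. The drift part is handled by bounding $\|\Phi_i(\by,\theta)\|$ via $\tilde{M}_i$, $L_i$ (for $\vec{\Delta}\nabla\tildeU_i$) and the quadratic form $\|\nabla\tildeU_i(\by)\|$, and then using the exponential moment and second-moment bounds of $\Yc^{(i,n)}_{k\gamma_i}-\bz^i_{n,\star}$ from \Cref{lem:bound_expec_z_x_cont}; this generates the $d_iL_i^2$ and $\tilde{M}_i^4/\tilde m_i$ contributions that populate $\mathscr{R}^{\star}(\bfgamma)$.

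The hard part will be assembling these Itô estimates cleanly: one has to (i) verify that the local martingale $\int_s^t\nabla^2\tildeU_i(\Yc_r)\,\dd B_r^{(i,n)}$ is a genuine square-integrable martingale, which follows from the strong convexity of $\tildeU_i$ and the bound $\|\nabla^2\tildeU_i\|\le\tilde{M}_i$, (ii) track the coefficients carefully so that the geometric damping by $(1-\gamma_i\tilde m_i)^{2(N_i-k-1)}$ yields a $1/(\gamma_i\tilde m_i)$ factor rather than $N_i$, which is what converts the raw $O(\gamma^3)$ local error into the $O(\gamma/\tilde{M}_i^2)$ behaviour of the propagated error, and (iii) absorb the three resulting contributions (second-order LMC bias, Hessian-Lipschitz remainder, and Brownian-in-drift cross term) into the single compact quantity $\mathscr{R}^{\star}(\bfgamma)$ with the factors $\mathfrak{f}_i=m_i/(20\tilde{M}_i)$ that encode the prescribed choice of $N_i^{\star}$. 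Once this refined estimate replaces \Cref{lem:bound:cont_expec_T2} inside \Cref{lem:bound:cont_expec_norm_spe} and \Cref{lem:bound:cont_wass}, the same computation as in the proof of \Cref{cor:bias_pi_rho_pirho_gamma} delivers the claim with the announced multiplicative constant $4(1+\|\bar{\B{B}}_0^{-1}\B{B}_0^{\top}\B{\tilde{D}}_{0}^{\half}\|^2)\max_i\{m_i/\tilde{M}_i^2\}/\min_i\{m_i/\tilde{M}_i\}^2$.
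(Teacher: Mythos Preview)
Your plan to refine the bound on $\E[\|\B{\mathrm{T}}_2^{(n)}\|^2]$ via It\^o's formula is the right idea, and your decomposition into drift and martingale parts matches the paper's split into $E_1,E_2,E_3$. The drift analysis you sketch (for the $\nabla^2\tildeU_i\nabla V_i^{\theta_n}$ and $\vec\Delta(\nabla V_i^{\theta_n})$ terms) is essentially what the paper does and yields the $d_iN_i\gamma_i^3(d_iL_i^2+\tilde M_i^4/\tilde m_i)$ contribution.

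The gap is in your treatment of the martingale term. You propose to start from the Cauchy--Schwarz bound \eqref{eq:bound:T2_after_CS}, apply It\^o isometry to each $k$-block separately, and then sum against the geometric weight $(1-\gamma_i\tilde m_i)^{2(N_i-k-1)}$. That computation gives $\sum_k(1-\gamma_i\tilde m_i)^{2(N_i-k-1)}\cdot O(d_i\gamma_i^2\tilde M_i^2)=O(d_i\gamma_i\tilde M_i^2/\tilde m_i)$, not $O(d_i\gamma_i/\tilde m_i)$ as you wrote, and in particular it is only $O(\gamma_i)$. This cannot reproduce the leading $d_i\gamma_i^2\tilde M_i^2$ term in $\mathscr{R}^\star(\bfgamma)$; the resulting $W_2^2$ bound would carry an extra factor of order $(\tilde M_i/m_i)^2$ that is not present in the statement. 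The Cauchy--Schwarz step destroys the cancellation available across $k$.

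The paper avoids this loss by \emph{not} applying \eqref{eq:bound:T2_after_CS} for $E_3$. Instead it keeps the full sum $\sum_k[\B{M}_\infty^{(n)}]^{-1}\B{M}_{k+1}^{(n)}\Delta_{3,k}^{(i,n)}$, expands $[\B{M}_\infty^{(n)}]^{-1}\B{M}_{k+1}^{(n)}=\B{I}_{d_i}-\sum_{l>k}\B{C}_l^{(i,n)}+\B{R}_{k+1}^{(i,n)}$, and then exploits two structural facts: (i) the martingale increments $\Delta_{3,k}^{(i,n)}$ are orthogonal across $k$, so $\E\|\sum_k\Delta_{3,k}^{(i,n)}\|^2=\sum_k\E\|\Delta_{3,k}^{(i,n)}\|^2=O(d_iN_i\gamma_i^3\tilde M_i^2)$, which after dividing by $N_i\gamma_i$ gives the $d_i\gamma_i^2\tilde M_i^2$ term; and (ii) the cross-term with $\sum_{l>k}\B{C}_l^{(i,n)}$ has a sign and can be dropped. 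The remaining pieces involving $\B{R}_k^{(i,n)}$ and $\B{C}_l^{(i,n)}$ produce the higher-order $N_i^3\gamma_i^4\tilde M_i^4$ and $N_i^4\gamma_i^5\tilde M_i^5$ corrections, which become the $\mathfrak{f}_i^3$ terms after substituting $N_i\gamma_i\tilde M_i\le\mathfrak{f}_i$. A secondary difference: downstream the paper uses a Minkowski recursion (\Cref{lem:bound:cont_expec_norm_spe_alternative}) rather than the quadratic recursion of \Cref{lem:bound:cont_expec_norm_spe} you cite, though this is less critical than the orthogonality argument.
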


We provide the proof of \Cref{prop:bias_gamma_bis} in what follows.
Similarly to \Cref{lem:bound:cont_Z_{i}nduction} for the proof of \Cref{prop:bias_gamma}, we derive an explicit relation between $\normn{\Zc_{n+1}-\Zb_{n+1}}$ and $\normn{\Zc_{n}-\Zb_{n}}$.

\begin{lemma}\label{lem:bound:cont_expec_T2_better}
  Assume \Cref{ass:well_defined_density}-\Cref{ass:supp_fort_convex}-\Cref{ass:hessian_lipschitz} and let $\bfN\in(\N^*)^{b}, \gammabf \in (\rset_+^*)^b$ such that for any $i \in [b]$, $N_{i}\gamma_{i}\le 2/(m_{i} + \tilde{M}_{i})$ and $\gamma_{i}<1/\tilde{M}_{i}$. 
  Then, for $n \ge 1$, we have
  \begin{equation*}
  \nosqrt{\E\brbig{\normn{\Zc_{n+1}-\Zb_{n+1}}_{\B{D}_{\bfN\bfgamma}^{-1}}^2}}
  \le \prbig{1 - \min_{i\in[b]} \{N_{i} \gamma_{i} m_{i}\} + r_{\bfgamma, \bfrho, \bfN}}\E\brbig{\normn{\Zc_{n}-\Zb_{n}}_{\B{D}_{\bfN\bfgamma}^{-1}}^2}^{\half}+ \mathscr{R}(\bfgamma,\bfN)^\half \eqsp,
\end{equation*}
where
\begin{equation}
    \label{eq:def_scrR}
\begin{aligned}
\mathscr{R}(\bfgamma,\bfN) &=   \sum_{i=1}^{b} d_iN_{i}\gamma_i^3( d_{i} L_{i}^{2}+ \tilde{M}_{i}^4/\tilde{m}_{i} ) + \sum_{i =1}^b \pr{d_{i}\gamma_{i}^2\tilde{M}_{i}^2 + d_{i}N_i^3\gamma_{i}^4\tilde{M}_{i}^4} \\
  & \qquad \qquad + \sum_{i=1}^b d_i N_i^4 \gamma_i^5 \tilde{M}^5_i (1+N_i \gamma_i \tilde{M}_i) \eqsp,
\end{aligned}
\end{equation}
$(\tilde{Z}_n,Z_n)_{n \in \N}$ is defined in \eqref{eq:cont_coupling_2}, $r_{\bfgamma, \bfrho, \bfN}$ in \eqref{eq:def:r} and for any $i \in [b]$, $\tilde{m}_i$, $\tilde{M}_i$ are defined in \eqref{eq:def_tilde_m_M}.
\end{lemma}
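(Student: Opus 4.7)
The plan follows the same strategy as the proof of \Cref{lem:bound:cont_expec_norm_spe}, but replaces the crude drift bound on $\B{\mathrm{T}}_2^{(n)}$ with a sharper one-step estimate that exploits the Lipschitz-Hessian assumption \Cref{ass:hessian_lipschitz}.

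First I would start from the identity provided by \Cref{lem:bound:cont_Z_{i}nduction1},
$$\B{D}_{\bfN\bfgamma}^{-1/2}(\Zc_{n+1}-\Zb_{n+1}) = \B{\mathrm{T}}_1^{(n)}(\Zc_n - \Zb_n) - \B{\mathrm{T}}_2^{(n)},$$
apply the triangle inequality together with the almost-sure operator-norm bound $\normn{\B{\mathrm{T}}_1^{(n)}} \le 1 - \min_{i\in[b]}\{N_i\gamma_i m_i\} + r_{\bfgamma,\bfrho,\bfN}$ from \Cref{lem:bound:cont_T1_bis}, and then invoke Minkowski's inequality in $L^2(\PP)$. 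This yields
\begin{equation*}
\sqrt{\E\brbig{\normn{\Zc_{n+1}-\Zb_{n+1}}_{\B{D}_{\bfN\bfgamma}^{-1}}^2}} \le \prbig{1 - \min_{i\in[b]}\{N_i\gamma_i m_i\} + r_{\bfgamma,\bfrho,\bfN}}\sqrt{\E\brbig{\normn{\Zc_{n}-\Zb_{n}}_{\B{D}_{\bfN\bfgamma}^{-1}}^2}} + \sqrt{\E\brbig{\normn{\B{\mathrm{T}}_2^{(n)}}^2}},
\end{equation*}
so the remaining task reduces to proving $\E\brn{\normn{\B{\mathrm{T}}_2^{(n)}}^2}\le\mathscr{R}(\bfgamma,\bfN)$.

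The heart of the argument is a refined estimate of the drift discrepancy $\nabla \tildeU_i(\Yc_{k\gamma_i+l}^{(i,n)}) - \nabla \tildeU_i(\Yc_{k\gamma_i}^{(i,n)})$ that improves on the crude Lipschitz bound used in \Cref{lem:bound:cont_expec_T2}. Applying Itô's formula to $t\mapsto \nabla \tildeU_i(\Yc_t^{(i,n)})$ along the SDE \eqref{eq:def:cont_process} would decompose this increment into (i) a finite-variation drift $-\int_{k\gamma_i}^{k\gamma_i+l}\nabla^2\tildeU_i(\Yc_s^{(i,n)})[\nabla\tildeU_i(\Yc_s^{(i,n)})+\rho_i^{-1}\B{A}_i\tilde\theta_n]\,\rmd s$, (ii) a finite-variation term involving the vector Laplacian $\int_{k\gamma_i}^{k\gamma_i+l}\vec{\Delta}(\nabla\tildeU_i)(\Yc_s^{(i,n)})\,\rmd s$, whose integrand \Cref{ass:hessian_lipschitz} controls componentwise by a constant of order $d_i^{1/2}L_i$, and (iii) the martingale $\sqrt{2}\int_{k\gamma_i}^{k\gamma_i+l}\nabla^2\tildeU_i(\Yc_s^{(i,n)})\,\rmd B_s^{(i,n)}$. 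Integrating over $l\in[0,\gamma_i]$ and applying the Itô isometry for the stochastic part produces an extra factor of $\gamma_i$ compared with a brute-force Lipschitz estimate: the squared norm of the stochastic contribution scales like $d_i\gamma_i^3 \tilde{M}_i^2$ per local step, rather than $d_i\gamma_i^2\tilde{M}_i^2$. The two finite-variation pieces are handled by bounding their integrands via \Cref{ass:supp_fort_convex} and invoking \Cref{lem:bound_expec_z_x_cont}, which supplies $\E\brn{\normn{\Yc_{k\gamma_i}^{(i,n)}-\bz_{n,\star}^i}^2} \le d_i/\tilde m_i$ thanks to the $\Pi_{\bfrho}$-stationarity of $(\tilde X_n)_{n \ge 0}$ shown in \Cref{prop:cont_X_law}.

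Plugging these sharper per-step bounds into the expression \eqref{eq:def:T2} defining $\B{\mathrm{T}}_2^{(n)}$ and using the blockwise contraction $\normn{\prod_{l=k+1}^{N_i-1}(\B{I}_{d_i}-\B{C}_l^{(i,n)})}\le (1-\gamma_i\tilde m_i)^{N_i-k-1}$ from \Cref{lem:C_{i}nvertible}, the sum over $k\in\{0,\ldots,N_i-1\}$ may then be controlled by a Cauchy--Schwarz trade-off between the $N_i$ inner iterations and the corresponding squared norms. The main obstacle will be the bookkeeping: Itô's decomposition produces cross-terms of mixed order in $\gamma_i$ (and mixed powers of $N_i$ once the drift contribution is summed through the $N_i$ local steps), and they must be grouped carefully to match exactly the four contributions appearing in $\mathscr{R}(\bfgamma,\bfN)$; in particular, the higher-order $d_i N_i^4 \gamma_i^5 \tilde M_i^5(1+N_i\gamma_i\tilde M_i)$ term is expected to capture the compounded effect of iterating $N_i$ local LMC steps between two consecutive synchronisations with the master.
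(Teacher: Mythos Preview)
Your high-level strategy is exactly the paper's: start from \Cref{lem:bound:cont_Z_{i}nduction1}, use Minkowski plus the almost-sure bound on $\normn{\B{\mathrm{T}}_1^{(n)}}$ from \Cref{lem:bound:cont_T1_bis}, and then bound $\E\brn{\normn{\B{\mathrm{T}}_2^{(n)}}^2}$ by applying It\^o's formula to $\nabla V_i^{\tilde\theta_n}(\Yc_t^{(i,n)})$ and splitting into the three pieces you list. Your treatment of the two finite-variation pieces (i) and (ii) via Cauchy--Schwarz over $k\in\{0,\ldots,N_i-1\}$ together with \Cref{lem:bound_expec_z_x_cont} is exactly how the paper handles its terms $E_1$ and $E_2$, and produces the contribution $\sum_i d_iN_i\gamma_i^3(d_iL_i^2+\tilde M_i^4/\tilde m_i)$ in $\mathscr{R}(\bfgamma,\bfN)$.

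There is, however, a genuine gap in your treatment of the martingale piece (iii). If you apply Cauchy--Schwarz to the sum over $k$ as you propose, the stochastic contribution to $\E\brn{\normn{\B{\mathrm{T}}_2^{(n)}}^2}$ is of order $\sum_i d_i N_i\gamma_i^2\tilde M_i^2$, with an extra factor $N_i$ that is \emph{not} present in the leading term $\sum_i d_i\gamma_i^2\tilde M_i^2$ of $\mathscr{R}(\bfgamma,\bfN)$. To reach the stated bound, the paper does not Cauchy--Schwarz the martingale sum. Instead it writes $\Delta_{3,k}^{(i,n)}=\int_0^{\gamma_i}\int_{k\gamma_i}^{k\gamma_i+l}\nabla^2 V_i^{\tilde\theta_n}(\Yc_u^{(i,n)})\,\rmd B_u^i\,\rmd l$, expands the product matrix as $[\B{M}_\infty^{(i,n)}]^{-1}\B{M}_{k+1}^{(i,n)}=\B{I}_{d_i}-\sum_{l\ge k+1}\B{C}_l^{(i,n)}+\B{R}_k^{(i,n)}$, and develops $\normn{\sum_k a_{3,k}^{(i,n)}}^2$ into six cross-terms. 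On the identity part $\sum_k\Delta_{3,k}^{(i,n)}$, martingale orthogonality (the $\Delta_{3,k}^{(i,n)}$ are increments over disjoint time intervals, so $\E\psLigne{\Delta_{3,k_1}^{(i,n)}}{\Delta_{3,k_2}^{(i,n)}}=0$ for $k_1\ne k_2$) gives $\E\normn{\sum_k\Delta_{3,k}^{(i,n)}}^2=\sum_k\E\normn{\Delta_{3,k}^{(i,n)}}^2\le N_i d_i\gamma_i^3\tilde M_i^2/2$, which after the $1/(N_i\gamma_i)$ normalisation yields precisely $d_i\gamma_i^2\tilde M_i^2$. The cross-terms involving $\sum_l\B{C}_l^{(i,n)}$ and $\B{R}_k^{(i,n)}$, controlled via \Cref{lem:bound_T1_approx} and the It\^o-isometry estimate $\E\normn{\Delta_{3,k}^{(i,n)}}^2\le d_i\gamma_i^3\tilde M_i^2/2$, are what produce the higher-order pieces $d_iN_i^3\gamma_i^4\tilde M_i^4$ and $d_iN_i^4\gamma_i^5\tilde M_i^5(1+N_i\gamma_i\tilde M_i)$; they are not ``bookkeeping'' residuals of a Cauchy--Schwarz step but the price of the linear expansion that makes the orthogonality argument available.
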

\begin{proof}
Let $n\in\N$. 
For any $k\in\N$, recall that $\B{M}_{k}^{(n)}$ is defined in \eqref{eq:def:cont_M} and invertible by \Cref{lem:C_{i}nvertible}. 
Define
\begin{align*}
    &w_n=\B{D}_{\bfN\bfgamma}^{-\half}\prn{\tilde{Z}_n-Z_n}\eqsp.
\end{align*}
Under this notation, the result given in \Cref{lem:bound:cont_Z_{i}nduction1} can be rewritten as
\[
w_{n+1}=\B{\mathrm{T}}_1^{(n)} w_n - \B{\mathrm{T}}_2^{(n)} \eqsp,
\]
where $\B{\mathrm{T}}_1^{(n)}$ and $\B{\mathrm{T}}_2^{(n)}$ are defined in \eqref{eq:def:T} and \eqref{eq:def:T2}, respectively.
By the Minkowsky inequality and using \eqref{eq:filtrationG}, we have
\begin{equation}\label{eq:bound:u_n_rec}
\nosqrt{\E^{\mathcal{G}_{n}}\brbig{\normn{w_{n+1}}^2}}\le\nosqrt{\E^{\mathcal{G}_{n}}\brbig{\normn{\B{\mathrm{T}}_1^{(n)} w_{n}}^2}}+\nosqrt{\E^{\mathcal{G}_{n}}\brbig{\normn{\B{\mathrm{T}}_2^{(n)}}^2}}\eqsp.
\end{equation}
Since by \Cref{lem:bound:cont_T1_bis},
\begin{equation}\label{eq:def:prop3:K}
  \normn{\B{\mathrm{T}}_1^{(n)}}
  \le 1 - \min_{i\in[b]} \{N_{i} \gamma_{i} m_{i}\} + r_{\bfgamma, \bfrho, \bfN}\eqsp,
\end{equation}
it remains to bound $\E^{\mathcal{G}_{n}}[\|\B{\mathrm{T}}_2^{(n)}\|^2]$ to complete the proof.

For any $i\in[b]$, recall the function $V_{i}^{\theta_n}:\R^{d_{i}}\to\R$ defined for any $\by^i \in \R^{d_{i}}$ by $V_{i}^{\theta_n}(\by^i)=U_{i}(\by^i)+\normn{\by^i-\B{A}_{i}\theta_n}^2/(2\rho_{i})$. For any $i\in[b], k\in\N$, using the It\^{o}  formula, we have for $l \in[k\gamma_{i}, (k+1)\gamma_{i})$, 
\begin{multline}\label{eq:eq:ito}
\nabla V_{i} (\Yc_{k\gamma_{i}+l}^{(i, n)}) - \nabla V_{i} (\Yc_{k\gamma_{i}}^{(i, n)})
= \int_{k\gamma_{i}}^{k\gamma_{i}+l}\acbig{\nabla^{2} V_{i}^{\theta_n}(\Yc_{u}^{(i, n)}) \nabla V_{i}^{\theta_n} (\Yc_{u})+\vec{\Delta}(\nabla V_{i}^{\theta_n})(\Yc_{u}^{(i, n)})}\,\dd u\\
+ \sqrt{2} \int_{k\gamma_{i}}^{k\gamma_{i}+l} \nabla^{2} V_{i}^{\theta_n}(\Yc_{u}^{(i, n)})\,\dd B_{u}^i\eqsp.
\end{multline}
For any $i\in[b], k\in\N$, define
\begin{align*}
    & a_{1,k}^{(i,n)}=\1_{[N_{i}]}(k+1)\brn{\B{M}_{\infty}^{(i,n)}}^{-1}\B{M}_{k+1}^{(i,n)} \int_0^{\gamma_{i}}\int_{k\gamma_{i}}^{k\gamma_{i}+l}\nabla^{2} V_{i}^{\theta_n}(\Yc_{u}^{(i,n)}) \nabla V_{i}^{\theta_n}(\Yc_{u}^{(i,n)})\,\dd u\,\dd l\eqsp,\\
    & a_{2,k}^{(i,n)}=\1_{[N_{i}]}(k+1)\brn{\B{M}_{\infty}^{(i,n)}}^{-1}\B{M}_{k+1}^{(i,n)}\int_0^{\gamma_{i}}\int_{k\gamma_{i}}^{k\gamma_{i}+l} \vec{\Delta}(\nabla V_{i}^{\theta_n})(\Yc_{u}^{(i,n)})\,\dd u\,\dd l\eqsp,\\
    & a_{3,k}^{(i,n)}=\sqrt{2}\1_{[N_{i}]}(k+1)\brn{\B{M}_{\infty}^{(i,n)}}^{-1}\B{M}_{k+1}^{(i,n)}\int_0^{\gamma_{i}}\int_{k\gamma_{i}}^{k\gamma_{i}+l} \nabla^{2} V_{i}^{\theta_n}(\Yc_{u}^{(i,n)}) \,\dd B_{u}^{i}\,\dd l\eqsp.
\end{align*}
With these notation and by \eqref{eq:eq:ito}, we have
\begin{align}
\normn{T_2^{(n)}}^2 
&= \sum_{i\in[b]}\frac{1}{N_{i}\gamma_{i}}\normBig{\sum_{k\in\N}\acn{a_{1,k}^{(i,n)}+a_{2,k}^{(i,n)}+a_{3,k}^{(i,n)}}}^2\nonumber \\
&\le E_1 + E_2 + E_3\eqsp,\label{eq:bound:B_{n}_upper_bound}
\end{align}
where for any $j \in [3]$, $E_j = 3\sum_{i\in[b]}\|\sum_{k=0}^{N_{i}-1}a_{j,k}^{(i,n)}\|^2 / (N_i\gamma_{i}) $.
We now bound $\{E_j\}_{j \in [3]}$.

\paragraph{Upper bound on $E_1$.}

For any $i\in[b], k\in\N$, recall that we have $\brbig{\B{M}_{\infty}^{(i,n)}}^{-1}\B{M}_{k+1}^{(i,n)}=\prod_{l=k+1}^{\infty}\prn{\B{I}_{d_{i}}+\B{C}_l^{(i,n)}}$ where $\B{C}_l^{(i,n)}$ is defined in \eqref{eq:def:cont_C}. In addition, since we suppose for any $i\in[b]$, that $\gamma_{i}\tilde{M}_{i}<1$, \Cref{lem:C_{i}nvertible} implies 
\begin{align*}
\normBig{\prod_{l=k+1}^{N_{i}-1}(\B{I}_{d_{i}}-\B{C}^{(i,n)}_l)}^2
\le \ac{1-\gamma_{i} \tilde{m}_{i}}^{2(N_{i}-k-1)}\eqsp.
\end{align*}
Combining this result with the Cauchy-Schwarz inequality, we obtain
\begin{equation}
  \label{eq:contt1}
  \frac{1}{N_{i}}\normbigg{\sum_{k=0}^{N_{i}-1}a_{1,k}^{(i,n)}}^2 \le \sum_{k=0}^{N_{i}-1}\normBig{\int_{0}^{\gamma_{i}}\int_{k\gamma_{i}}^{k\gamma_{i}+l}\nabla^{2} V_{i}^{\theta_n}(\Yc_{u}^{(i,n)}) \nabla V_{i}^{\theta_n} (\Yc_{u}^{(i,n)})\,\dd u\,\dd l}^2\eqsp.
\end{equation}
For $i\in[b]$, using the definition of $\bz_{n,\star}^{i}=\arg\min_{\by^i\in \R^{d_{i}}} V_{i}^{\theta_n}(\by^i)\in\R^{d_{i}}$, we have $\nabla V_{i}^{\theta_n}(\bz_{n,\star}^{i})=\B{0}_{d_i}$. Therefore, for $i\in[b], k\in\N$, conditioning with respect to $\mathcal{F}_{k\gamma_{i}}^{(n)}$ defined in \eqref{eq:filtrationF} and using the $\tilde{M}_{i}$-Lipschitz property of $V_{i}^{\theta_n}$ by \Cref{ass:supp_fort_convex} gives
\begin{align*}
\mathbb{E}^{\mathcal{F}_{k\gamma_{i}}^{(n)}}\brbig{\|\nabla^{2} V_{i}^{\theta_n}(\Yc_{u}^{(i, n)}) \nabla V_{i}^{\theta_n}(\Yc_{u}^{(i, n)})\|^{2}}
&\le \tilde{M}_{i}^2 \mathbb{E}^{\mathcal{F}_{k\gamma_{i}}^{(n)}}\brbig{\|\nabla V_{i}^{\theta_n}(\Yc_{u}^{(i, n)})-\nabla V_{i}^{\theta_n}(\bz_{n,\star}^{i})\|^{2}}\\
&\le \tilde{M}_{i}^4 \mathbb{E}^{\mathcal{F}_{k\gamma_{i}}^{(n)}}\brbig{\|\Yc_{u}^{(i, n)}-\bz_{n,\star}^{i}\|^2}\eqsp.
\end{align*}
For any $i\in[b], k\in\N$, combining this result with the Jensen inequality yields
\begin{align}\label{eq:bound:expec_first_term}
&\nonumber\E^{\mathcal{F}_{k\gamma_{i}}^{(n)}}\br{\normbigg{\int_{0}^{\gamma_{i}}\int_{k\gamma_{i}}^{k\gamma_{i}+l}\nabla^{2} V_{i}^{\theta_n}(\Yc_{u}^{(i, n)}) \nabla V_{i}^{\theta_n} (\Yc_{u}^{(i, n)})\,\dd u\,\dd l}}^2\\
&\le \gamma_{i}\int_{0}^{\gamma_{i}} l \int_{k\gamma_{i}}^{k\gamma_{i}+l} \mathbb{E}^{\mathcal{F}_{k\gamma_{i}}^{(n)}}\brbig{\|\nabla^{2} V_{i}^{\theta_n}(\Yc_{u}^{(i, n)}) \nabla V_{i}^{\theta_n}(\Yc_{u}^{(i, n)})\|^{2}}\,\dd u\,\dd l\nonumber\\
&\le \gamma_{i}\tilde{M}_{i}^{4} \int_{0}^{\gamma_{i}} l \int_{k\gamma_{i}}^{k\gamma_{i}+l} \mathbb{E}^{\mathcal{F}_{k\gamma_{i}}^{(n)}}\brbig{\|\Yc_{u}^{(i, n)}-\bz_{n,\star}^{i}\|^{2}}\,\dd u\,\dd l\eqsp.
\end{align}
By \Cref{lem:bound_expec_z_x_cont}, we have for any $i \in [b]$, $u \in \R_+$,
\begin{align}\label{eq:bound:cont_sum_expec}
\E^{\mathcal{G}_{n}}\brBig{\| \Yc_{u}^{(i, n)} - \bz_{n,\star}^{i} \|^2}
&\le\nofrac{d_{i}}{\tilde{m}_{i}}\eqsp.
\end{align}
Injecting this result in \eqref{eq:bound:expec_first_term} yields
\[
\E\br{\int_{0}^{\gamma_{i}} l \int_{k\gamma_{i}}^{k\gamma_{i}+l} \mathbb{E}^{\mathcal{F}_{k\gamma_{i}}^{(n)}}\brBig{\|\Yc_{u}^{(i, n)}-\bz_{n,\star}^{i}\|^{2}}\,\dd u\,\dd l}
\le d_{i} \gamma_{i}^3/(3 \tilde{m}_{i})\eqsp.
\]
Finally, this inequality, \eqref{eq:bound:expec_first_term} and \eqref{eq:contt1}, we get
\begin{equation}\label{eq:prop3:bound1}
  \E\br{E_1}
  \le \sum_{i=1}^{b}d_{i} N_{i}\gamma_{i}^3\tilde{M}_{i}^4/\tilde{m}_{i}\eqsp.
\end{equation}

\paragraph{Upper bound on $E_2$.}

Using the Cauchy-Schwarz inequality, we have
\begin{equation*}
  \frac{1}{N_{i}}\normbigg{\sum_{k=0}^{N_{i}-1}a_{2,k}^{(i,n)}}^2
  \le\sum_{k=0}^{N_{i}-1}\normBig{\int_{0}^{\gamma_{i}}\int_{k\gamma_{i}}^{k\gamma_{i}+l}\vec{\Delta}(\nabla V^{\theta_n}_{i})(\Yc_{u}^{(i,n)})\,\dd u\,\dd l}^2\eqsp.
\end{equation*}
By \Cref{ass:hessian_lipschitz}, we have for any $\bz_i\in\R^{d_{i}}$, $\|\vec{\Delta}(\nabla V^{\theta_n}_{i})(\bz_i)\|^{2} \le d_{i}^{2} L_{i}^{2}$.
Therefore, we obtain
\begin{align*}
  \norm{\int_{0}^{\gamma_{i}}\int_{k\gamma_{i}}^{k\gamma_{i}+l}\vec{\Delta}(\nabla V^{\theta_n}_{i})(\Yc_{u}^{(i,n)})\,\dd u\,\dd l}^2
  &\le\gamma_{i}\int_{0}^{\gamma_{i}} l \int_{k\gamma_{i}}^{k\gamma_{i}+l}\|\vec{\Delta}(\nabla V^{\theta_n}_{i})(\Yc_{u}^{(i, n)})\|^{2}\,\dd u\,\dd l\\
  &\le d_{i}^{2} \gamma_{i}^{4} L_{i}^{2}/3 \eqsp.
\end{align*}
Thus, we get
\begin{equation}\label{eq:prop3:bound2}
\expe{  E_2} \le \sum_{i=1}^{b} d_{i}^{2}N_{i}\gamma_{i}^{3}L_{i}^{2} \eqsp.
\end{equation}

\paragraph{Upper bound on $E_3$.}

For any $i\in[b], k\in\N$, define 
\begin{align*}
  & \Delta_{3,k}^{(i,n)}=\int_{0}^{\gamma_{i}}\int_{k\gamma_{i}}^{k\gamma_{i}+l}\nabla^2V^{\theta_n}_{i}\prn{\Yc_{u}^{(i,n)}}\,\dd B_u^{i}\,\dd l\eqsp.
\end{align*}
Using for any $i \in [b],k \in \N$, $[\B{M}_{\infty}^{(i,n)}]^{-1} \B{M}_{k+1}^{(i,n)} = \B{I}_{d_i} - \sum_{l=k+1}^{\infty} \B{C}_{l}^{(i,n)} + \B{R}_{k}^{(i,n)}$ where $\B{R}_{k}^{(i,n)}$ is defined in \eqref{eq:def:cont_R}, we have, for any $i\in[b], k\in\N$, 
\begin{align}
    \normbigg{\sum_{k=0}^{N_{i}-1}a_{3,k}^{(i,n)}}^2
    &= \norm{\sqrt{2}\sum_{k=0}^{N_i-1} \prod_{l=k+1}^{N_i}\br{\B{I}_{d_i} - \B{C}_l^{(i,n)}}\Delta_{3,k}^{(i,n)}}^2 \nonumber\\
    &= 2 \sum_{k_1,k_2=0}^{N_i-1}\langle\B{R}_{k_1}^{(i,n)}\Delta_{3,k_1}^{(i,n)},\B{R}_{k_2}^{(i,n)}\Delta_{3,k_2}^{(i,n)}\rangle + 2 \sum_{k_1,k_2=0}^{N_i-1}\langle\Delta_{3,k_1}^{(i,n)},\Delta_{3,k_2}^{(i,n)}\rangle \nonumber\\
    &+ 2 \sum_{k_1,k_2=0}^{N_i-1}\langle \sum_{l=k_1+1}^{N_i}\B{C}_l^{(i,n)}\Delta_{3,k_1}^{(i,n)},\sum_{l=k_2+1}^{N_i}\B{C}_l^{(i,n)}\Delta_{3,k_2}^{(i,n)}\rangle \nonumber\\
    &- 4 \sum_{k_1,k_2=0}^{N_i-1}\langle \sum_{l=k_1+1}^{N_i}\B{C}_l^{(i,n)}\Delta_{3,k_1}^{(i,n)},\Delta_{3,k_2}^{(i,n)}\rangle\nonumber
    + 4 \sum_{k_1,k_2=0}^{N_i-1}\langle \B{R}_{k_1}^{(i,n)}\Delta_{3,k_1}^{(i,n)},\Delta_{3,k_2}^{(i,n)}\rangle\\
    &- 4 \sum_{k_1,k_2=0}^{N_i-1}\langle \B{R}_{k_1}^{(i,n)}\Delta_{3,k_1}^{(i,n)},\sum_{l=k_2+1}^{N_i}\B{C}_l^{(i,n)}\Delta_{3,k_2}^{(i,n)}\rangle\eqsp. \label{eq:dev_E3}
\end{align}
We now control the quantities which appear in \eqref{eq:dev_E3}.
First, by \Cref{ass:supp_fort_convex}, for any $i\in[b], \bx^i, \by^i \in \mathbb{R}^{d_{i}}$, note that we have
\[
\|\nabla^{2} V_{i}^{\theta_n}(\bx^i) \by^i\| \le \tilde{M}_{i}\|\by^i\|\eqsp.
\]
By the Jensen inequality and the It\^{o} isometry, for any $k\in\N$, we get
\begin{align}
  \nonumber
\E^{\mathcal{F}_{k\gamma_{i}}^{(n)}}\br{\normn{\Delta_{3,k}^{(i,n)}}^2}
&=\E^{\mathcal{F}_{k\gamma_{i}}^{(n)}}\brbigg{\normBig{\int_0^{\gamma_{i}}\int_{k\gamma_{i}}^{k\gamma_{i}+l}\nabla^2V^{\theta_n}_{i}\prn{\Yc_{u}^{(i,n)}}\,\dd B_{u}^{i}\,\dd l}^2}\\
&\le \gamma_{i}\tilde{M}_{i}^2 \int_{0}^{\gamma_{i}}\E^{\mathcal{F}_{k\gamma_{i}}^{(n)}}\brbigg{\normBig{\int_{k\gamma_{i}}^{k\gamma_{i}+l}\,\dd B_{u}^i}^2}\,\dd l = d_{i} \gamma_{i}^3\tilde{M}_{i}^2/2 \eqsp.\label{eq:bound:b_3_norm}
\end{align}
In addition, since for $i\in[b]$, $(\int_{0}^{t} \nabla^{2} V_{i}^{\theta_n}(\Yc_{u}^{(i, n)})\,\dd B_{u}^i)_{t \ge 0}$ is a $(\mathcal{F}_{t}^{(n)})_{t \ge 0}$-martingale, for $ (k_1,k_2)\in\acn{0,\ldots,N_{i}-1}^2$ such that $k_1<k_2$, we obtain
\[
  \E^{\mathcal{G}_{n}}\brBig{\brbig{\Delta_{3,k_1}^{(i,n)}}^{\top}\Delta_{3,k_2}^{(i,n)}}
  =\E^{\mathcal{G}_{n}}\brBig{\E^{\mathcal{F}_{k_2\gamma_{i}}^{(n)}}\brbig{\Delta_{3,k_1}^{(i,n)\top}\Delta_{3,k_2}^{(i,n)}}}
  =0\eqsp.
\]
Therefore, 
$$
\sum_{k_1,k_2=0}^{N_i-1}\E^{\mathcal{G}_{n}}\br{\langle\Delta_{3,k_1}^{(i,n)},\Delta_{3,k_2}^{(i,n)}\rangle} = d_{i}N_i\gamma_{i}^3\tilde{M}_{i}^2/2\eqsp. 
$$
Second, since for any $i\in[b], l\in\N, \B{C}_{l}^{(i,n)}\in\R^{d_{i}\times d_{i}}$ is symmetric positive semi-definite, we have
\begin{align*}
\sum_{k_1,k_2=0}^{N_i-1}\langle \sum_{l=k_1+1}^{N_i}\B{C}_l^{(i,n)}\Delta_{3,k_1}^{(i,n)},\Delta_{3,k_2}^{(i,n)}\rangle
&= \ps{\defEns{\sum_{l=1}^N C_l }\sum_{k_1 = 0}^{l-1} \Delta_{3,k_1}}{\sum_{k_1 = 0}^{l-1} \Delta_{3,k_2}}\ge 0 \eqsp. 
\end{align*}
Third, using for any $i\in[b], l\in\N$, using $\normn{\B{C}_{l}^{(i,n)}}\le \gamma_i \tilde{M}_i$ by definition   \eqref{eq:def:cont_C} and \Cref{ass:supp_fort_convex} and combining the Cauchy-Schwarz inequality with \eqref{eq:bound:b_3_norm}, for any $i\in[b], (k_1,k_2)\in\acn{0,\ldots,N_{i}-1}^2$, we get
$$
\sum_{k_1,k_2=0}^{N_i-1}\E^{\mathcal{G}_{n}}\brBig{\langle \sum_{l=k_1+1}^{N_i}\B{C}_l^{(i,n)}\Delta_{3,k_1}^{(i,n)},\sum_{l=k_2+1}^{N_i}\B{C}_l^{(i,n)}\Delta_{3,k_2}^{(i,n)}\rangle} \le d_i N_i^4\gamma_i^5\tilde{M}_i^4/8\eqsp.
$$
Using \eqref{eq:bound:b_3_norm} again  and \Cref{lem:bound_T1_approx}, for $i\in[b]$, we obtain
\begin{align*}
  &\sum_{k_1,k_2=0}^{N_i-1}\PE^{\mathcal{G}_{n}}\brBig{\langle\B{R}_{k_1}^{(i,n)}\Delta_{3,k_1}^{(i,n)},\B{R}_{k_2}^{(i,n)}\Delta_{3,k_2}^{(i,n)}\rangle}
  \le (d_i \gamma_i^3 \tilde{M}_i^2/2) \sum_{k_1,k_2=0}^{N_i-1}\expe{\normn{\B{R}_{k_1}^{(i,n)}}\normn{\B{R}_{k_2}^{(i,n)}}}
  \\
  & \le ( d_i \gamma_i^3 \tilde{M}_i^2/2) \defEns{\sum_{k=0}^{N_i-1} (\exp[(N_i-k)\gamma_i \tilde{M}_i] - 1 -[(N_i-k)\gamma_i \tilde{M}_i]) }^2\\
  & \le ( d_i \gamma_i^3 \tilde{M}_i^2/2) \defEns{(\tilde{M}_i \gamma_i)^{-1} \int_0^{N_i\gamma_i\tilde{M}_i} \{\rme^t - 1 -t \} \rmd t }^2\\
  & \le \frac{(\rme^{N_{i}\gamma_{i}\tilde{M}_{i}}+1)^2}{288} d_iN_i^6 \gamma_i^7 \tilde{M}^6_i \eqsp.
\end{align*}
Similarly, we get
Moreover, using the Cauchy-Schwarz inequality, for any $i\in[b]$ we get
\begin{align*}
  \sum_{k_1,k_2=0}^{N_i-1}\PE[\langle \Delta_{k_1}^{(i,n)}, \B{R}_{k_2}^{(i,n)}\Delta_{k_2}^{(i,n)}\rangle]
  &\le \sum_{k_1,k_2=0}^{N_i-1}\expe{\normn{\Delta_{k_1}^{(i,n)}}\normn{\Delta_{k_2}^{(i,n)}}\normn{\B{R}_{k_2}^{(i,n)}}}\\
  &\le \frac{d_i N_i\gamma_i^3\tilde{M}_i^2}{24}(\rme^{N_i\gamma_i\tilde{M}_i}+1)N_i^3\gamma_i^2\tilde{M}_i^2\\
  &\le d_iN_i^4\gamma_i^5\tilde{M}_i^4\frac{\rme^{N_i\gamma_i\tilde{M}_i}+1}{24}\eqsp.
\end{align*}
In addition, for any $i\in[b]$, we have also
\begin{equation}\label{eq:6}
  \sum_{k_1,k_2=0}^{N_i-1}\expe{\langle \B{R}_{k_1}^{(i,n)}\Delta_{3,k_1}^{(i,n)},\sum_{l=k_2+1}^{N_i}\B{C}_l^{(i,n)}\Delta_{3,k_2}^{(i,n)}\rangle}
\le d_iN_i^5\gamma_i^6\tilde{M}_i^5\frac{\rme^{N_i\gamma_i\tilde{M}_i}+1}{24}\eqsp.
\end{equation}
For any $i\in[b], k\in\N$, regrouping the previous results and using that $N_i\gamma_i\tilde{M}_i\le 2$ give
\begin{align}
  \E[E_3] &\le \sum_{i =1}^b \{d_{i}N_i\gamma_{i}^2\tilde{M}_{i}^2 + d_{i}N_i^3\gamma_{i}^4\tilde{M}_{i}^4\}
  + \sum_{i=1}^b d_i N_i^4 \gamma_i^5 \tilde{M}^5_i (1+N_i \gamma_i \tilde{M}_i)\eqsp.  \label{eq:prop3:bound3}
\end{align}

\paragraph{Combination of our previous results.}

Injecting the three upper bounds \eqref{eq:prop3:bound1}, \eqref{eq:prop3:bound2}, \eqref{eq:prop3:bound3} in \eqref{eq:bound:B_{n}_upper_bound}, we get
\begin{multline}\label{eq:eq:prop3:bound:bn}
  \E\br{\normn{T_2^{(n)}}^2 }
  \le \sum_{i=1}^{b} d_iN_{i}\gamma_i^3( d_{i} L_{i}^{2} + \tilde{M}_{i}^4/\tilde{m}_{i} )
  + \sum_{i =1}^b \{d_{i}\gamma_{i}^2\tilde{M}_{i}^2 + d_{i}N_i^3\gamma_{i}^4\tilde{M}_{i}^4\} \\
  + \sum_{i=1}^b d_i N_i^4 \gamma_i^5 \tilde{M}^5_i (1+ N_i \gamma_i \tilde{M}_i)\eqsp.
\end{multline}
Using the recursion defined in \eqref{eq:bound:u_n_rec}, and combining the upper bounds derived in \eqref{eq:def:prop3:K} and \eqref{eq:eq:prop3:bound:bn} completes the proof.
\end{proof}
\begin{lemma}\label{lem:bound:cont_expec_norm_spe_alternative}
  Assume \Cref{ass:well_defined_density}-\Cref{ass:supp_fort_convex}-\Cref{ass:hessian_lipschitz} and let $\bfN\in(\N^*)^{b}, \gammabf \in (\rset_+^*)^b$ such that for any $i \in [b]$, $N_{i}\gamma_{i}\le 2/(m_{i} + \tilde{M}_{i})$, $\gamma_{i}<1/\tilde{M}_{i}$ and $\upkappa_{\bfgamma,\bfrho,\bfN} = \min_{i\in[b]}\{N_{i}\gamma_{i} m_{i}\}-r_{\bfgamma,\bfrho,\bfN} \in \left(0, 1\right)$, where $r_{\bfgamma,\bfrho,\bfN}$ is defined in \eqref{eq:def:r}.
  Then, for $n \ge 1$, we have
  \begin{multline*}
  \nosqrt{\E\brbig{\normn{\Zc_{n+1}-\Zb_{n+1}}_{\B{D}_{\bfN\bfgamma}^{-1}}^2}}
  \le (1-\upkappa_{\bfgamma, \bfrho, \bfN})^{n-1}\nosqrt{\E\brbig{\normn{\Zc_{1}-\Zb_{1}}_{\B{D}_{\bfN\bfgamma}^{-1}}^2}} 
  + \acn{\upkappa_{\bfgamma, \bfrho, \bfN}}^{-1} \mathscr{R}(\bfgamma,\bfN) \eqsp,
\end{multline*}
where $\mathscr{R}(\bfgamma,\bfN)$ is given in \eqref{eq:def_scrR}. 
\end{lemma}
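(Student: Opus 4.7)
The plan is to apply the one-step estimate established in the previous Lemma S (\Cref{lem:bound:cont_expec_T2_better}) as a scalar recursion and iterate. Define
\[
u_n = \sqrt{\E\brbig{\normn{\tilde{Z}_n-Z_n}_{\B{D}_{\bfN\bfgamma}^{-1}}^2}}\eqsp, \qquad n \ge 0\eqsp.
\]
The previous lemma gives, under exactly the same hypotheses on $\bfgamma,\bfN$ that we are assuming here, the scalar inequality
\[
u_{n+1} \le (1-\upkappa_{\bfgamma,\bfrho,\bfN})\,u_n + \mathscr{R}(\bfgamma,\bfN)^{1/2}\eqsp, \qquad n \ge 1\eqsp,
\]
since $1 - \min_{i\in[b]}\{N_i\gamma_i m_i\} + r_{\bfgamma,\bfrho,\bfN} = 1-\upkappa_{\bfgamma,\bfrho,\bfN}$ by definition.

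Because the hypothesis $\upkappa_{\bfgamma,\bfrho,\bfN} \in (0,1)$ implies that the multiplicative coefficient $1-\upkappa_{\bfgamma,\bfrho,\bfN}$ lies in $(0,1)$, a routine induction on $n$ starting from $u_1$ yields
\[
u_{n+1} \le (1-\upkappa_{\bfgamma,\bfrho,\bfN})^{n}\,u_1 + \mathscr{R}(\bfgamma,\bfN)^{1/2}\sum_{k=0}^{n-1}(1-\upkappa_{\bfgamma,\bfrho,\bfN})^{k}\eqsp.
\]
Bounding the geometric series by $\sum_{k=0}^{n-1}(1-\upkappa_{\bfgamma,\bfrho,\bfN})^{k} \le \sum_{k=0}^{\infty}(1-\upkappa_{\bfgamma,\bfrho,\bfN})^{k} = \upkappa_{\bfgamma,\bfrho,\bfN}^{-1}$ and observing that $(1-\upkappa_{\bfgamma,\bfrho,\bfN})^{n}\le (1-\upkappa_{\bfgamma,\bfrho,\bfN})^{n-1}$ for $n\ge 1$ delivers the stated inequality.

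There is no real obstacle here: the work was already done in proving the previous lemma, which isolated both the contractive factor $1-\upkappa_{\bfgamma,\bfrho,\bfN}$ coming from $\normn{\B{\mathrm{T}}_1^{(n)}}$ via \Cref{lem:bound:cont_T1_bis} and the additive error $\mathscr{R}(\bfgamma,\bfN)^{1/2}$ coming from the time-discretisation residual $\B{\mathrm{T}}_2^{(n)}$ under the extra \Cref{ass:hessian_lipschitz}. The only things to double-check when writing up are (i) that the assumptions of \Cref{lem:bound:cont_expec_T2_better} are implied by the assumptions of the present lemma (they are literally the same), and (ii) the elementary bound on the geometric series which requires $\upkappa_{\bfgamma,\bfrho,\bfN}\in(0,1)$, also already assumed.
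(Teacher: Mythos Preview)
Your proof is correct and follows exactly the paper's approach: the paper's own proof is the single sentence ``The proof follows from \Cref{lem:bound:cont_expec_T2_better} combined with a straightforward induction,'' and you have simply spelled out that induction and the geometric-series bound. One small remark: the iteration you carry out yields the additive term $\upkappa_{\bfgamma,\bfrho,\bfN}^{-1}\,\mathscr{R}(\bfgamma,\bfN)^{1/2}$, which is indeed what you wrote, whereas the displayed statement shows $\upkappa_{\bfgamma,\bfrho,\bfN}^{-1}\,\mathscr{R}(\bfgamma,\bfN)$; your version is the one that actually follows from the one-step estimate in \Cref{lem:bound:cont_expec_T2_better}, so the missing square root in the statement appears to be a typographical slip rather than a gap in your argument.
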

\begin{proof}
The proof follows from \Cref{lem:bound:cont_expec_T2_better} combined with a straightforward induction.
\end{proof}

\paragraph{Proof of \Cref{prop:bias_gamma_bis}/\Cref{cor:bias_pi_rho_pirho_gamma_alternative}.}

\begin{proof}[Proof of \Cref{prop:bias_gamma_bis}/\Cref{cor:bias_pi_rho_pirho_gamma_alternative}.]
  For any $i\in[b]$, consider 
  \begin{equation*}
  N_{i}^{\star}(\gamma_i) = \big\lfloor m_{i}\min_{i\in[b]}\acn{m_{i}/\tilde{M}_{i}}^2/\prbig{20\gamma_{i}\tilde{M}_{i}^2\max_{i\in[b]}\acn{m_{i}/\tilde{M}_{i}}^2}\big\rfloor\eqsp.
\end{equation*}
By  \Cref{cor:convergence_rho_gamma} and \Cref{lem:choice_N_condition_gamma}, $P_{\bfrho, \bfgamma,\bfN}$ converges in $W_2$ to $\Pi_{\rhobf,\bfgamma}$. Therefore, using \eqref{eq:W2X_def}, \Cref{lem:geo_decr_cont} and  \Cref{lem:bound:cont_expec_norm_spe_alternative} and taking $n \to \plusinfty$,  we obtain
  \begin{equation}\label{eq:bound:cor_bias_alternative_v2}
    \wasserstein{}^2\prn{\Pi_{\bfrho,\bfgamma,\bfN},\Pi_{\bfrho}}\le 4\prn{1+\normn{\bar{\B{B}}_0^{-1}\B{B}_0^{\top}\B{\tilde{D}}_{0}^{\half}}^2}
\frac{\max_{i\in[b]}\acn{N_{i}^{\star}(\gamma_i) \gamma_i}}{\min_{i\in [b]}\acn{N_{i}^{\star}(\gamma_i) \gamma_i m_i}}\mathscr{R}(\bfgamma,\bfN^{\star}(\bfgamma))
\eqsp.
\end{equation}
%\alaini{mettre à jour ensuite}
  By definition of $N_{i}^{\star}(\gamma_i)$, we have $  \gamma_i \tilde{M}_i N_{i}^{\star}(\gamma_i) \leq \mathfrak{f}_i = m_i/(20\tilde{M}_i)$ which completes the proof upon using it in  \eqref{eq:bound:cor_bias_alternative_v2}.% \leq the three following upper bounds
  % \begin{align*}
  % &\frac{d_{i}N_{i}^{\star}(\gamma_i)\gamma_{i}L_{i}^2}{\tilde{M}_{i}^2}\le \frac{d_{i}L_{i}^2m_{i}}{20\tilde{M}_{i}^4}\eqsp,&
  % &N_{i}^{\star}(\gamma_i)\gamma_{i}\tilde{M}_{i}\le \frac{m_{i}}{20\tilde{M}_{i}}\eqsp,&
  % &\frac{N_{i}^{\star}(\gamma_i)\gamma_{i}\tilde{M}_{i}^2}{\tilde{m}_{i}}\le \frac{m_{i}}{20\tilde{m}_{i}}\eqsp,
  % \end{align*}
  % Injecting these results in \eqref{eq:bound:cor_bias_alternative}, gives
  % \begin{multline*}
  % \frac{d_{i}N_{i}^{\star}(\gamma_i)\gamma_{i}L_{i}^2}{\tilde{M}_{i}^2}+\frac{N_{i}^{\star}(\gamma_i)\gamma_{i}\tilde{M}_{i}^2}{\tilde{m}_{i}}+2\prn{\rme^{2N_{i}^{\star}(\gamma_i)\gamma_{i}\tilde{M}_{i}}+1}\prn{N_{i}^{\star}(\gamma_i)\gamma_{i}\tilde{M}_{i}}^3\\
  % \le \frac{d_{i}L_{i}^2m_{i}}{20\tilde{M}_{i}^4}+\frac{m_{i}}{20\tilde{m}_{i}}+\frac{2}{\gamma_i}\prbigg{\frac{m_{i}}{20\tilde{M}_{i}}}^3\prbigg{\exp\prbigg{\frac{m_{i}}{10\tilde{M}_{i}}}+1}\eqsp.
  % \end{multline*}
  % Since $m_{i}\le M_{i}<\tilde{M}_{i}$, the last term can be upper bounded by $1+\nofrac{d_{i}L_{i}^2m_{i}}{(20\tilde{M}_{i}^4)}$, which completes the proof.
\end{proof}

%%% Local Variables:
%%% mode: latex
%%% TeX-master: "main"
%%% End:

\section{Explicit mixing times}\label{sec:mixing_times}

% !TEX root = main.tex

This section aims at providing mixing times for DG-LMC with explicit dependencies w.r.t. the dimension $d$ and the prescribed precision $\varepsilon$.
We specify our result to the case where for any $i \in [b]$, $m_i = m$, $M_i = M$, $L_i=L$, $\rho_i = \rho$, $\gamma_i = \gamma$, $N_i = N$ and for the specific initial distribution
\begin{equation}
\label{eq:_def_mu_star_supp_2} 
\mu_{\rhobf}^{\star} =   \updelta_{\B{z}^{\star}}  \otimes \Pi_{\rhobf}(\cdot|\bfz^{\star}) \eqsp,
\end{equation}
where
\begin{equation}
  \label{eq:def_x_star_2}
  \text{
    $\B{x}^{\star} = ([\btheta^{\star}]^{\top},[\bz^{\star}]^{\top})^{\top}$, where $\btheta^{\star} = \argmin\{-\log\pi\}$ and $\bz^{\star} = ([\B{A}_1\btheta^{\star}]^{\top} , \cdots,[\B{A}_b\btheta^{\star}]^{\top})^{\top}$} \eqsp.
\end{equation}
Note that sampling from $\mu_{\rhobf}^{\star}$ is straightforward  and simply consists in setting $\zbf_0 = \bfz^{\star}$ and $\bftheta_0= \bar{\B{B}}_0^{-1}\B{B}_0^{\top}\B{\tilde{D}}_{0}^{\half} \B{z}_0 + \bar{\B{B}}_0^{-\half} \xi $, where $\xi$ is a $d$-dimensional standard Gaussian random variable.
Starting from this initialisation, we consider the marginal law of $\theta_n$ for $n \ge 1$ and denote it $\Gamma_{\B{x}^\star}^n$.
By \Cref{cor:convergence_rho_gamma}, since for any $i \in [b]$, $N_i=N$, the stationary distribution associated to $P_{\bfrho,\bfgamma,\bfN}$ is $\Pi_{\bfrho,\bfgamma} = \Pi_{\bfrho,\bfgamma,\mathbf{1}_b}$.
We build upon the natural decomposition of the bias:
\begin{align*}
\label{eq:error_decomp_supp}
\wasserstein{}(\Gamma_{\B{x}^{\star}}^n,\pi) \le \wasserstein{}(\mu^{\star}_{\rhobf} P_{\bfrho,\bfgamma,\bfN}^{n}, \Pi_{\bfrho,\bfgamma}) + \wasserstein{} (\Pi_{\bfrho,\bfgamma},\Pi_{\bfrho}) + \wasserstein{}(\pi_{\bfrho},\pi)\eqsp,
\end{align*}
where $\Pi_{\bfrho,\bfgamma}$, $\Pi_{\bfrho}$ and $\pi_{\bfrho}$ are defined in \Cref{prop:convergence_rho_gamma}, \eqref{eq:joint_density_AXDA} and \eqref{eq:theta_marginal}, respectively.
The following subsections focus on deriving conditions on $n_{\varepsilon}$, $\gamma_{\varepsilon}$, $N_{\varepsilon}$ and $\rho_{\varepsilon}$ to satisfy $\wasserstein{}(\Gamma_{\B{x}^{\star}}^{n_{\varepsilon}},\pi) \le \varepsilon$, where $\varepsilon > 0$.

\subsection{Lower bound on the number of iterations $n_{\varepsilon}$}

In this section, we derive a lower bound on $n_{\varepsilon}$ such that $\wasserstein{}(\mu^{\star}_{\rhobf} P_{\bfrho,\bfgamma,\bfN}^{n_{\varepsilon}}, \Pi_{\bfrho,\bfgamma}) \le \varepsilon/3$ following the result provided in
\Cref{cor:convergence_rho_gamma_star_v2}.
Recall that we define the $\bz$-marginal under $\Pi_{\bfrho,\bfgamma}$ by
\begin{equation}
  \label{eq:def_pi_bz_supp_2}
  \pi^{\bz}_{\bfrho,\bfgamma} = \int_{\Rd}\Pi_{\bfrho,\bfgamma}(\btheta,\bz)\,\dd \btheta \eqsp,
\end{equation}
and the transition kernel of the Markov chain $\{Z_{n}\}_{n \ge 0}$, for all $\bz \in \mathbb{R}^p$ and $\msb \in \mathcal{B}( \mathbb{R}^p)$, by
\begin{equation}
\label{eq:def:trans_kernel_z1_2}
    P_{\bfrho,\bfgamma,\bfN}^{\bz}(\bz,\msb) = \int_{\Rd} Q_{\bfrho,\bfgamma,\bfN}(\bz,\msb|\btheta)\Pi_{\rho}(\btheta|\bz)\,\dd \btheta \eqsp,
\end{equation}
where $\Pi_{\rho}(\cdot|\bz)$ and $Q_{\bfrho,\bfgamma,\bfN}$ are defined in \eqref{eq:def:Pi_rho_cond} and \eqref{eq:Q_rho_gamma_N}, respectively. 
In the case $\bfN = \mathbf{1}_b$, we simply denote  $P_{\bfrho,\bfgamma,\bfN}^{\bz}$ by $P_{\bfrho,\bfgamma}^{\bz}$.
We need to bound  in \Cref{cor:convergence_rho_gamma_star_v2} the factor
\begin{equation}
  \label{eq:5}
   \defEns{\int_{\rset^d} \|\bfz_1 - \B{z}^{\star}\|_{\B{D}_{\bfN\bfgamma}^{-1}}^2 \pi^{\bz}_{\bfrho,\bfgamma}(\dd \bz_1) + \int_{\rset^d} \|\bfz_1 - \B{z}^{\star}\|_{\B{D}_{\bfN\bfgamma}^{-1}}^2 P_{\bfrho,\bfgamma,\bfN}^{\bfz}(\bfz^{\star}, \rmd \bfz_1)}^{\half} \eqsp.
\end{equation}
Our next results provide such bounds. 

\begin{lemma}\label{lem:invariant}
    Assume \Cref{ass:well_defined_density}. Then, the transition kernel $P_{\bfrho,\bfgamma}^{\bz}$ leaves $\pi_{\bfrho,\bfgamma}^{\bz}$ invariant, that is $\pi_{\bfrho,\bfgamma}^{\bz}P_{\bfrho,\bfgamma}^{\bz} = \pi_{\bfrho,\bfgamma}^{\bz}$, where $\pi_{\bfrho,\bfgamma}^{\bz}$ is defined by \eqref{eq:def_pi_bz}. 
\end{lemma}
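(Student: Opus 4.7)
The plan is to deduce the invariance of $\pi_{\bfrho,\bfgamma}^{\bz}$ under $P_{\bfrho,\bfgamma}^{\bz}$ from the already-established invariance of $\Pi_{\bfrho,\bfgamma}$ under $P_{\bfrho,\bfgamma}$ (given in \Cref{cor:convergence_rho_gamma} with $\bfN=\bfOne_b$), combined with the key structural observation that the conditional of $\btheta$ given $\bz$ under $\Pi_{\bfrho,\bfgamma}$ coincides with $\Pi_{\bfrho}(\cdot|\bz)$. This structural identity is the crux of the argument, since it is what glues together the definition of $P_{\bfrho,\bfgamma}^{\bz}$ in \eqref{eq:def:trans_kernel_z1_2} and the marginalisation of $\Pi_{\bfrho,\bfgamma}$.

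The first step is to establish this conditional identity. By construction of \Cref{algo:ULAwSG}, at any iteration $n$ the variable $\theta_{n+1}$ is drawn from $\Pi_{\bfrho}(\cdot|\tilde{\bz}_{n+1})$ given $Z_{n+1}=\tilde{\bz}_{n+1}$, so if $(\theta_{n},Z_{n}) \sim \Pi_{\bfrho,\bfgamma}$ the invariance stated in \Cref{cor:convergence_rho_gamma} gives $(\theta_{n+1},Z_{n+1}) \sim \Pi_{\bfrho,\bfgamma}$, and reading off the regular conditional distribution of the first component given the second yields $\Pi_{\bfrho,\bfgamma}(\rmd\btheta|\bz) = \Pi_{\bfrho}(\rmd\btheta|\bz)$ for $\pi_{\bfrho,\bfgamma}^{\bz}$-a.e.\ $\bz$. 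This step is essentially bookkeeping on top of the Markov kernel definition \eqref{eq:def:prop2:P_rho_gamma}, but care is needed to justify the identification of conditionals from the one-step recursion.

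The second step marginalises. Applying $\Pi_{\bfrho,\bfgamma} P_{\bfrho,\bfgamma} = \Pi_{\bfrho,\bfgamma}$ to sets of the form $\R^d \times \msb$ with $\msb \in \mcbb(\R^p)$ and using \eqref{eq:def:prop2:P_rho_gamma} with $\msa = \R^d$ collapses the $\int_{\R^d}\Pi_{\bfrho}(\rmd\tilde{\btheta}|\tilde{\bz})$ factor to $1$, leaving
\begin{equation*}
\pi_{\bfrho,\bfgamma}^{\bz}(\msb) = \int_{\R^d\times \R^p} Q_{\bfrho,\bfgamma}(\bz,\msb|\btheta) \,\Pi_{\bfrho,\bfgamma}(\rmd\btheta,\rmd\bz)\eqsp.
\end{equation*}
Disintegrating $\Pi_{\bfrho,\bfgamma}$ as $\Pi_{\bfrho,\bfgamma}(\rmd\btheta,\rmd\bz) = \Pi_{\bfrho,\bfgamma}(\rmd\btheta|\bz)\,\pi_{\bfrho,\bfgamma}^{\bz}(\rmd\bz)$ and using the first-step identity $\Pi_{\bfrho,\bfgamma}(\rmd\btheta|\bz) = \Pi_{\bfrho}(\rmd\btheta|\bz)$, Fubini's theorem then turns the right-hand side into
\begin{equation*}
\int_{\R^p} \pi_{\bfrho,\bfgamma}^{\bz}(\rmd\bz) \int_{\R^d} Q_{\bfrho,\bfgamma}(\bz,\msb|\btheta)\,\Pi_{\bfrho}(\rmd\btheta|\bz) = \int_{\R^p} P_{\bfrho,\bfgamma}^{\bz}(\bz,\msb)\,\pi_{\bfrho,\bfgamma}^{\bz}(\rmd\bz)\eqsp,
\end{equation*}
which is exactly $(\pi_{\bfrho,\bfgamma}^{\bz} P_{\bfrho,\bfgamma}^{\bz})(\msb)$, concluding the proof.

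The main obstacle is not technical arithmetic but rather making the conditional identity $\Pi_{\bfrho,\bfgamma}(\rmd\btheta|\bz) = \Pi_{\bfrho}(\rmd\btheta|\bz)$ rigorous from the one-step invariance; one clean way is to test against bounded measurable functions $f(\btheta,\bz)$ and use the explicit form of $P_{\bfrho,\bfgamma}$ in \eqref{eq:def:prop2:P_rho_gamma} to identify the joint distribution of $(\theta_{n+1},Z_{n+1})$ as $\bz \mapsto \Pi_{\bfrho}(\cdot|\bz)$ composed with the $Z$-marginal under stationarity. Everything else reduces to an application of Fubini's theorem.
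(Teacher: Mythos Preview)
Your proof is correct and follows the same approach as the paper's: both start from the invariance of $\Pi_{\bfrho,\bfgamma}$ under $P_{\bfrho,\bfgamma}$ (the paper cites \Cref{prop:convergence_N_1}), expand $P_{\bfrho,\bfgamma}$ via \eqref{eq:def:prop2:P_rho_gamma}, integrate out the $\btheta$-variable, and identify the result as $\pi_{\bfrho,\bfgamma}^{\bz} P_{\bfrho,\bfgamma}^{\bz}$ using the definition \eqref{eq:def:trans_kernel_z1}. Your version is in fact more careful, since you explicitly isolate and justify the conditional identity $\Pi_{\bfrho,\bfgamma}(\rmd\btheta\mid\bz) = \Pi_{\bfrho}(\rmd\btheta\mid\bz)$ that the paper's proof uses implicitly in its final displayed line.
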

\begin{proof}

We have for any $\mathsf{B} \in \mathcal{B}(\mathbb{R}^p)$
    \begin{equation*}
        \int_{\mathsf{B}}\pi_{\bfrho,\bfgamma}^{\bz}(\dd \bz) = \int_{\mathsf{B}}\int_{\mathbb{R}^d} \Pi_{\bfrho,\bfgamma}(\dd \btheta,\dd \bz) = \int_{\mathsf{B}}\pi_{\bfrho,\bfgamma}^{\bz}(\dd \bz)\int_{\mathbb{R}^d} \Pi_{\bfrho,\bfgamma}(\dd \btheta|\bz) \eqsp.
    \end{equation*}
  Therefore, using the fact that $P_{\bfrho,\bfgamma}$ leaves $\Pi_{\bfrho,\bfgamma}$ invariant from \Cref{prop:convergence_N_1} and Fubini's theorem, we get
    \begin{align}
          \int_{\mathsf{B}}\pi_{\bfrho,\bfgamma}^{\bz}(\dd \bz) &=      \int_{\mathsf{B}}\int_{\mathbb{R}^d} \Pi_{\bfrho,\bfgamma}(\dd \btheta,\dd \bz) 
        = \int_{\mathsf{B}}\int_{\mathbb{R}^d} \int_{\mathbb{R}^d \times \mathbb{R}^p} \Pi_{\bfrho,\bfgamma}(\dd \tilde{\btheta},\dd \tilde{\bz})P_{\bfrho,\bfgamma}((\tilde{\btheta},\tilde{\bz}),(\dd \btheta,\dd \bz)) \nonumber\\
        &= \int_{\mathsf{B}}\int_{\mathbb{R}^d} \int_{\mathbb{R}^d \times \mathbb{R}^p} \Pi_{\bfrho,\bfgamma}(\dd \tilde{\btheta},\dd \tilde{\bz})Q_{\bfrho,\bfgamma}(\tilde{\bz},\dd \bz | \tilde{\btheta})\Pi_{\bfrho}(\btheta|\bz)\,\dd \btheta \nonumber\\
                                                                &= \int_{\mathsf{B}} \int_{\mathbb{R}^d \times \mathbb{R}^p} \Pi_{\bfrho,\bfgamma}(\dd \tilde{\btheta},\dd \tilde{\bz})Q_{\bfrho,\bfgamma}(\tilde{\bz},\dd \bz | \tilde{\btheta})\int_{\mathbb{R}^d}\Pi_{\bfrho}(\btheta|\bz)\,\dd \btheta\\
       &= \int_{\rset^d} \pi_{\bfrho,\bfgamma}^{\bz}(\dd \tilde{\bz}) P_{\bfrho,\bfgamma}^{\bz}(\tilde{\bz},\msb) \eqsp.\nonumber
    \end{align}
  \end{proof}

  For any $i \in [b]$, let $\btheta_i^{\star}$ a minimiser of $\btheta \mapsto U_i(\B{A}_i\btheta)$, and define
  \begin{equation}
    \label{eq:def_u_star}
    \B{u}^{\star} = ([\B{A}_1(\btheta^{\star}-\btheta_1^{\star})]^{\top},\cdots,[\B{A}_b(\btheta^{\star}-\btheta_b^{\star})]^{\top})^{\top}
  \end{equation}

\begin{lemma}
\label{lem:expectation_P}
Assume \Cref{ass:well_defined_density}-\Cref{ass:supp_fort_convex} and let $\bfN\in(\N^*)^{b}, \gammabf,\bfrho \in (\rset_+^*)^b$ such that, for any $i \in [b]$, $\gamma_i \le 2 / (m_i + M_i + 1/\rho_i)$ and denote $\bz^{\star} = ([\B{A}_1\btheta^{\star}]^{\top}, \cdots,[\B{A}_b\btheta^{\star}]^{\top})^{\top}$.
Then, for any $\bz \in \mathbb{R}^{p}$ and $\varepsilon >0$,
\begin{align*}
    \int_{\mathbb{R}^p}\norm{\tilde{\bz}-\bz^{\star}}_{\B{D}_{\bfN\bfgamma}^{-1}}^2 P_{\bfrho,\bfgamma}^{\bz}(\bz,\dd\tilde{\bz})
    &\le \min_{i\in [b]}\{N_i\}^{-1}\Big[ \kappa_{\bfgamma}^2(1+ 2\varepsilon)\norm{\B{z}-\bz^{\star}}_{\B{D}_{\gamma}^{-1}}^2 \\
    &+ (1 + 1/(2\varepsilon))\max_{i \in [b]}\{\gamma_i M_i^2\}\norm{\B{u}^{\star}}^2
    + \mathrm{Tr}(\B{D}_{\bfgamma/\bfrho}\B{P}_{0}) + 2\sum_{i=1}^b d_i\Big]\eqsp,
\end{align*}
where the transition kernel $P_{\bfrho,\bfgamma}^{\bz}$ is defined in \eqref{eq:def:trans_kernel_z1} with $\bfN= \mathbf{1}_b$.
\end{lemma}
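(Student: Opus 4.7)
The plan is first to absorb the factor $(\min_{i\in[b]}\{N_i\})^{-1}$ via the trivial estimate $\|\cdot\|_{\B{D}_{\bfN\bfgamma}^{-1}}^2 \le (\min_{i\in[b]}\{N_i\})^{-1}\|\cdot\|_{\B{D}_{\bfgamma}^{-1}}^2$, and thereafter work exclusively with the $\bfN$-free norm. Under the kernel $P_{\bfrho,\bfgamma}^{\bz}(\bz,\cdot)$ defined in \eqref{eq:def:trans_kernel_z1_2}, $\tilde{\bz}$ is obtained by first drawing $\btheta \sim \Pi_{\bfrho}(\cdot|\bz) = \loiGauss(\boldsymbol{\mu}(\bz),\B{Q}^{-1})$ (see \eqref{eq:def:Pi_rho_cond}) and then performing one LMC step on each worker, giving
\begin{equation*}
\tilde{\bz}_i = \bar{F}_i(\bz) + (\gamma_i/\rho_i)\B{A}_i(\btheta - \boldsymbol{\mu}(\bz)) + \sqrt{2\gamma_i}\,\eta_i\eqsp,
\end{equation*}
where $\bar{F}_i(\bz) = (1-\gamma_i/\rho_i)\bz_i + (\gamma_i/\rho_i)\B{A}_i\boldsymbol{\mu}(\bz) - \gamma_i\nabla U_i(\bz_i)$ is the conditional mean and $\eta_i \sim \loiGauss(\B{0}_{d_i},\B{I}_{d_i})$ is independent of $\btheta$. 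Using this independence together with $\mathrm{Var}(\btheta) = \B{Q}^{-1}$, I would obtain the orthogonal decomposition
\begin{equation*}
\int \|\tilde{\bz} - \bz^{\star}\|_{\B{D}_{\bfgamma}^{-1}}^2 P_{\bfrho,\bfgamma}^{\bz}(\bz,\dd\tilde{\bz}) = \sum_{i=1}^{b} \frac{\|\bar{F}_i(\bz) - \bz_i^{\star}\|^2}{\gamma_i} + \sum_{i=1}^{b} \frac{\gamma_i}{\rho_i^2}\mathrm{Tr}(\B{A}_i\B{Q}^{-1}\B{A}_i^\top) + 2\sum_{i=1}^{b} d_i\eqsp,
\end{equation*}
in which the middle sum is exactly $\mathrm{Tr}(\B{D}_{\bfgamma/\bfrho}\B{P}_0)$ because the $i$-th diagonal block of $\B{P}_0 = \B{B}_0\bar{\B{B}}_0^{-1}\B{B}_0^\top$ is $\B{A}_i\B{Q}^{-1}\B{A}_i^\top/\rho_i$ by \eqref{eq:def_projection}.

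Next, I would split the deterministic drift as $\bar{F}_i(\bz) - \bz_i^{\star} = T_{1,i}(\bz) + T_{2,i}$ by adding and subtracting $\gamma_i\nabla U_i(\bz_i^{\star})$ with $\bz_i^{\star} = \B{A}_i\btheta^{\star}$, yielding
\begin{equation*}
T_{1,i}(\bz) = \bigl[(1-\gamma_i/\rho_i)\B{I}_{d_i} - \gamma_i H_i(\bz)\bigr](\bz_i - \bz_i^{\star}) + (\gamma_i/\rho_i)\B{A}_i(\boldsymbol{\mu}(\bz) - \btheta^{\star})\eqsp, \quad T_{2,i} = -\gamma_i\nabla U_i(\bz_i^{\star})\eqsp,
\end{equation*}
where $H_i(\bz) = \int_0^1 \nabla^2 U_i(\bz_i^{\star} + t(\bz_i - \bz_i^{\star}))\dd t$ satisfies $m_i\B{I}_{d_i} \preceq H_i(\bz) \preceq M_i\B{I}_{d_i}$ by \Cref{ass:supp_fort_convex}. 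Young's inequality $\|a+b\|^2 \le (1+2\varepsilon)\|a\|^2 + (1+1/(2\varepsilon))\|b\|^2$ then separates the two contributions.

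The decisive step is the contraction $\sum_{i} \|T_{1,i}(\bz)\|^2/\gamma_i \le \kappa_{\bfgamma}^2 \|\bz - \bz^{\star}\|_{\B{D}_{\bfgamma}^{-1}}^2$. The algebraic observation unlocking it is $\boldsymbol{\mu}(\bz) - \btheta^{\star} = \B{Q}^{-1}\sum_{j} \B{A}_j^\top(\bz_j - \bz_j^{\star})/\rho_j = \bar{\B{B}}_0^{-1}\B{B}_0^\top\B{\tilde{D}}_{0}^{\half}(\bz - \bz^{\star})$, which is structurally identical to the relation $\theta_n - \tilde{\theta}_n = \bar{\B{B}}_0^{-1}\B{B}_0^\top\B{\tilde{D}}_{0}^{\half}(Z_n - \tilde{Z}_n)$ exploited in the synchronous-coupling proof of \Cref{lem:W2_dirac}. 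Consequently $T_1(\bz) = \bigl[\B{I}_p - \B{D}_{U,\bz} - \B{D}_{\bfgamma}^{\half}\B{D}_{\bfgamma/\bfrho}^{\half}(\B{I}_p - \B{P}_0)\B{\tilde{D}}_{0}^{\half}\bigr](\bz - \bz^{\star})$, and conjugation by $\B{D}_{\bfgamma}^{-\half}$ reduces this to the symmetric matrix $\B{I}_p - \B{D}_{U,\bz} - \B{D}_{\bfgamma/\bfrho}^{\half}(\B{I}_p - \B{P}_0)\B{D}_{\bfgamma/\bfrho}^{\half}$, whose operator norm is at most $\kappa_{\bfgamma}$ under the step-size condition $\gamma_i \le 2/(m_i + M_i + 1/\rho_i)$ by positivity of $\B{I}_p - \B{P}_0$ — this is exactly the bound already carried out in \Cref{lem:W2_dirac}. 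For the residual term, invoking $\nabla U_i(\B{A}_i\btheta_i^{\star}) = \B{0}$ (used in the same form throughout the appendix, e.g.\ in the proof of \Cref{lem:momgenbeta2}) together with the $M_i$-Lipschitz continuity of $\nabla U_i$ yields $\|\nabla U_i(\bz_i^{\star})\| \le M_i\|\B{A}_i(\btheta^{\star} - \btheta_i^{\star})\|$, so that $\sum_{i}\|T_{2,i}\|^2/\gamma_i \le \max_{i\in[b]}\{\gamma_i M_i^2\}\|\B{u}^{\star}\|^2$ by \eqref{eq:def_u_star}. Assembling the four pieces — noise contribution $2\sum_i d_i$, Gaussian variance $\mathrm{Tr}(\B{D}_{\bfgamma/\bfrho}\B{P}_0)$, contracted $T_1$-term, and Lipschitz $T_2$-term — and applying the $(\min_{i\in[b]}\{N_i\})^{-1}$ prefactor produces the stated inequality. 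I expect the only real obstacle to be a faithful transplantation of the block-matrix manipulations of \Cref{lem:W2_dirac} to the present setting where the fixed vector $\bz^{\star}$ plays the role previously held by the second iterate $\tilde{Z}_n$.
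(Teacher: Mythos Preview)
Your proposal is correct and follows essentially the same route as the paper's proof. Both arguments reduce to the $\B{D}_{\bfgamma}^{-1}$-norm, write $\tilde{\bz}-\bz^{\star}$ as a deterministic drift plus two independent Gaussian contributions (yielding exactly $\mathrm{Tr}(\B{D}_{\bfgamma/\bfrho}\B{P}_0)+2\sum_i d_i$), split the drift via $\nabla U_i(\bz_i^{\star})=\nabla U_i(\B{A}_i\btheta^{\star})-\nabla U_i(\B{A}_i\btheta_i^{\star})$, recognise the contracting part as the matrix $\B{H}=\B{I}_p-\B{D}_U^{\star}-\B{D}_{\bfgamma}^{\half}\B{D}_{\bfgamma/\bfrho}^{\half}(\B{I}_p-\B{P}_0)\B{\tilde{D}}_0^{\half}$ whose $\B{D}_{\bfgamma}^{-1}$-operator norm is bounded by $\kappa_{\bfgamma}$ via \eqref{eq:contraction}, and finish with Young's inequality on the cross term; the only cosmetic differences are that the paper encodes the residual through the Hessian matrix $\tilde{\B{D}}_U^{\star}$ rather than the Lipschitz bound directly, and leaves the $(\min_i N_i)^{-1}$ prefactor implicit.
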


\begin{proof}
Let $\gamma_i \le 2 / (m_i + M_i + 1/\rho_i)$ for any $i \in [b]$.
Let $\xi$ be a $d$-dimensional Gaussian random variable independent of $\{\eta^{i} : i\in [b]\}$ where for any $i \in [b]$, $\eta^{i}$ is a $d_i$-dimensional Gaussian random variable.
Let $\bz \in \mathbb{R}^p$ and $Z$ be the random variable distributed according to $\updelta_{\bz}P_{\bfrho,\bfgamma}^{\bz}$, and defined by 
\begin{align*}
    \theta &= \bar{\B{B}}_0^{-1}\B{B}_0^{\top}\B{\tilde{D}}_{0}^{\half} \B{z} + \bar{\B{B}}_0^{-\half} \xi \eqsp, 
\end{align*}
and for any $i \in [b]$,
\begin{align*}
    Z^i &= \left(1-\gamma_i/\rho_i\right) \bz_i - \gamma_i \nabla U_i (\bz_i) + \frac{\gamma_i}{\rho_i} \B{A}_i \theta + \sqrt{2\gamma_i} \eta^i \\
    &= \left(1-\gamma_i/\rho_i\right) \bz_i - \gamma_i \nabla U_i (\bz_i) + \frac{\gamma_i}{\rho_i}\B{A}_i\bar{\B{B}}_0^{-1}\B{B}_0^{\top}\B{\tilde{D}}_{0}^{\half}\B{z} + \frac{\gamma_i}{\rho_i} \B{A}_i\bar{\B{B}}_0^{-\half} \xi + \sqrt{2\gamma_i} \eta^i \\
    &= \left(1-\gamma_i/\rho_i\right) \bz_i - \gamma_i [\nabla U_i (\bz_i) - \nabla U_i (\B{A}_i\btheta^{\star})] - \gamma_i [\nabla U_i (\B{A}_i\btheta^{\star}) - \nabla U_i (\B{A}_i\btheta_i^{\star})] \\
    &+ \frac{\gamma_i}{\rho_i}\B{A}_i\bar{\B{B}}_0^{-1}\B{B}_0^{\top}\B{\tilde{D}}_{0}^{\half}\B{z}
    + \frac{\gamma_i}{\rho_i} \B{A}_i\bar{\B{B}}_0^{-\half} \xi + \sqrt{2\gamma_i} \eta^i\eqsp.
\end{align*}
Let
\begin{align}
  &\B{D}_U^{\star} = \mathrm{diag}\prBig{\gamma_1\int_0^1 \grad^2 U_1(\bz_1 + t(\B{A}_1\btheta^{\star}-\bz_1))\,\dd t, \cdots, \gamma_b\int_0^1 \grad^2 U_b(\bz_b + t(\B{A}_b\btheta^{\star}-\bz_b))\,\dd t}\eqsp, \nonumber\\
  &\tilde{\B{D}}_U^{\star} = \mathrm{diag}\prBig{\gamma_1\int_0^1 \grad^2 U_1(\B{A}_1\btheta^{\star} + t(\B{A}_1\btheta_1^{\star}-\B{A}_1\btheta^{\star}))\,\dd t, \cdots, \gamma_b\int_0^1 \grad^2 U_b(\B{A}_b\btheta^{\star} + t(\B{A}_b\btheta_b^{\star}-\B{A}_b\btheta^{\star}))\,\dd t}\eqsp.\label{eq:DU_star}
\end{align}
Since $\B{P}_{0}\B{D}_{\bfrho}^{-\half}\bz^{\star} = \B{D}_{\bfrho}^{-\half}\bz^{\star}$, it follows that
\begin{equation*}
    Z - \bz^{\star} = \br{\B{I}_p - \B{D}_U^{\star} - \B{D}_{\bfgamma}^{\half} \B{D}_{\bfgamma/\bfrho}^{\half} (\B{I}_{p}-\B{P}_{0}) \B{D}_{\bfrho}^{-\half}}(\B{z}-\bz^{\star}) - \tilde{\B{D}}_U^{\star}\B{u}^{\star} + \B{D}_{\bfgamma}^{\half}\B{D}_{\bfgamma/\bfrho}^{\half}\B{B}_0\bar{\B{B}}_0^{-\half}\xi + \B{D}_{2\bfgamma}^{\half}\eta \eqsp.
\end{equation*}
With the notation $\B{H} = \B{I}_p - \B{D}_U^{\star} - \B{D}_{\bfgamma}^{\half} \B{D}_{\bfgamma/\bfrho}^{\half} (\B{I}_{p}-\B{P}_{0}) \B{D}_{\bfrho}^{-\half}$, \eqref{eq:contraction}, and using the fact that for any $\varepsilon > 0$, $\B{a},\B{b} \in \mathbb{R}^d$, $|\langle\B{a},\B{b} \rangle| \le \varepsilon \norm{\B{a}}^2 + (4\varepsilon)^{-1}\norm{\B{b}}^2$, it follows, for any $\bz \in \mathbb{R}^p$, that
\begin{align*}
    &\int_{\mathbb{R}^p}\normbig{\tilde{\bz}-\bz^{\star}}_{\B{D}_{\bfgamma}^{-1}}^2 P_{\bfrho,\bfgamma}^{\bz}(\bz,\dd\tilde{\bz}) \\
    &= \int_{\mathbb{R}^p}\int_{\mathbb{R}^d}\normbig{\B{H}(\B{z}-\bz^{\star}) - \tilde{\B{D}}_U^{\star}\B{u}^{\star} + \B{D}_{\bfgamma}^{\half}\B{D}_{\bfgamma/\bfrho}^{\half}\B{B}_0\bar{\B{B}}_0^{-\half}\boldsymbol{\xi} + \B{D}_{2\bfgamma}^{\half}\boldsymbol{\eta}}_{\B{D}_{\bfgamma}^{-1}}^2 \upphi_d(\boldsymbol{\xi})\,\dd \boldsymbol{\xi}\upphi_p(\boldsymbol{\eta})\,\dd \boldsymbol{\eta} \nonumber\\
    &= \normbig{\B{H}(\B{z}-\bz^{\star}) - \tilde{\B{D}}_U^{\star}\B{u}^{\star}}_{\B{D}_{\bfgamma}^{-1}}^2 + \mathrm{Tr}(\B{D}_{\bfgamma/\bfrho}\B{P}_{0}) + 2\sum_{i=1}^b d_i\nonumber\\
    &\le \kappa_{\bfgamma}^2\norm{\B{z}-\bz^{\star}}_{\B{D}_{\gamma}^{-1}}^2 - 2\langle\B{H}(\B{z}-\bz^{\star}), \tilde{\B{D}}_U^{\star}\B{u}^{\star}\rangle_{\B{D}_{\bfgamma}^{-1}} + \normbig{\tilde{\B{D}}_U^{\star}\B{u}^{\star}}_{\B{D}_{\bfgamma}^{-1}}^2
    + \mathrm{Tr}(\B{D}_{\bfgamma/\bfrho}\B{P}_{0}) + 2\sum_{i=1}^b d_i \nonumber \\
    &\le \kappa_{\bfgamma}^2(1+ 2\varepsilon)\norm{\B{z}-\bz^{\star}}_{\B{D}_{\bfgamma}^{-1}}^2 + \Big(1 + \frac{1}{2\varepsilon}\Big)\max_{i \in [b]}\{\gamma_i M_i^2\}\norm{\B{u}^{\star}}^2
    + \mathrm{Tr}(\B{D}_{\bfgamma/\bfrho}\B{P}_{0}) + 2\sum_{i=1}^b d_i\eqsp. \nonumber
\end{align*}
\end{proof}

\begin{proposition}
\label{initsupp}
Assume \Cref{ass:well_defined_density}-\Cref{ass:supp_fort_convex} and let $\bfN\in(\N^*)^{b}, \gammabf,\bfrho \in (\rset_+^*)^b$ such that, for any $i \in [b]$, $\gamma_i \le 2 / (m_i + M_i + 1/\rho_i)$.
Then, we have
\begin{multline*}
\int_{\rset^d} \|\bfz_1 - \B{z}^{\star}\|_{\B{D}_{\bfN\bfgamma}^{-1}}^2 \pi^{\bz}_{\bfrho,\bfgamma}(\dd \bz_1)\\\le \min_{i\in [b]}\{N_i\}^{-1}\frac{2}{1 - \kappa_{\bfgamma}^2}\prBigg{\frac{1 + \kappa_{\bfgamma}^2}{1 - \kappa_{\bfgamma}^2}\max_{i \in [b]}\{\gamma_i M_i^2\}\norm{\B{u}^{\star}}^2 + \mathrm{Tr}(\B{D}_{\bfgamma/\bfrho}\B{P}_{0}) + 2\sum_{i=1}^b d_i}\eqsp,
\end{multline*}
with $\kappa_{\bfgamma}$ defined in \eqref{eq:def:kappa}.
\end{proposition}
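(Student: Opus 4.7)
The plan is to reduce everything to a one-step inequality under the stationary distribution and close it up by a standard ``drift + bounded perturbation'' argument, then convert between the two weighted norms at the end.

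First, I would note that it is enough to control $I := \int_{\R^p}\|\bz - \bz^\star\|_{\B{D}_{\bfgamma}^{-1}}^2\,\pi^{\bz}_{\bfrho,\bfgamma}(\dd\bz)$, because for any $\bfN \in (\N^*)^b$ one has the pointwise comparison $\|\bz\|_{\B{D}_{\bfN\bfgamma}^{-1}}^2 = \sum_{i=1}^b \|\bz_i\|^2/(N_i\gamma_i) \le (\min_{i\in[b]} N_i)^{-1}\|\bz\|_{\B{D}_{\bfgamma}^{-1}}^2$. Integrating this against $\pi^{\bz}_{\bfrho,\bfgamma}$ turns the desired estimate into a bound on $I$ with the prefactor $(\min_{i\in[b]} N_i)^{-1}$ pulled out.

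Next, I would exploit the invariance stated in \Cref{lem:invariant}: $\pi^{\bz}_{\bfrho,\bfgamma}$ is fixed by $P_{\bfrho,\bfgamma}^{\bz}$, the one-step ($\bfN=\bfOne_b$) marginal kernel. Applying \Cref{lem:expectation_P} with $N_i=1$ for all $i$ (so that $\B{D}_{\bfN\bfgamma}^{-1}=\B{D}_{\bfgamma}^{-1}$ and the prefactor $\min_{i\in[b]}N_i^{-1}$ equals $1$), and integrating the resulting bound against $\pi^{\bz}_{\bfrho,\bfgamma}$ using invariance, yields for every $\varepsilon>0$
\begin{equation*}
I \;\le\; \kappa_{\bfgamma}^2(1+2\varepsilon)\,I \;+\; \bigl(1+\tfrac{1}{2\varepsilon}\bigr)\max_{i\in[b]}\{\gamma_i M_i^2\}\,\|\B{u}^\star\|^2 \;+\; \mathrm{Tr}(\B{D}_{\bfgamma/\bfrho}\B{P}_0) \;+\; 2\sum_{i=1}^b d_i .
\end{equation*}
Under the step-size condition $\gamma_i\le 2/(m_i+M_i+1/\rho_i)$, a direct check shows $\kappa_{\bfgamma}<1$, so the coefficient of $I$ on the right can be made strictly less than $1$ for $\varepsilon$ small enough; $I$ is finite because $\pi^{\bz}_{\bfrho,\bfgamma}$ is the marginal of the Gaussian-type distribution $\Pi_{\bfrho,\bfgamma}$, so the above inequality can be rearranged into an explicit upper bound on $I$.

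Finally, I would optimise $\varepsilon$ by requiring $\kappa_{\bfgamma}^2(1+2\varepsilon) = (1+\kappa_{\bfgamma}^2)/2$, equivalently $2\varepsilon = (1-\kappa_{\bfgamma}^2)/(2\kappa_{\bfgamma}^2)$. With this choice one computes $1-\kappa_{\bfgamma}^2(1+2\varepsilon) = (1-\kappa_{\bfgamma}^2)/2$ and $1+1/(2\varepsilon) = (1+\kappa_{\bfgamma}^2)/(1-\kappa_{\bfgamma}^2)$, giving
\begin{equation*}
I \;\le\; \frac{2}{1-\kappa_{\bfgamma}^2}\left(\frac{1+\kappa_{\bfgamma}^2}{1-\kappa_{\bfgamma}^2}\max_{i\in[b]}\{\gamma_i M_i^2\}\,\|\B{u}^\star\|^2 + \mathrm{Tr}(\B{D}_{\bfgamma/\bfrho}\B{P}_0) + 2\sum_{i=1}^b d_i\right).
\end{equation*}
Combining this with the norm comparison of the first step concludes the proof. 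The only genuinely delicate point is the interplay between the two weighted norms $\|\cdot\|_{\B{D}_{\bfgamma}^{-1}}$ and $\|\cdot\|_{\B{D}_{\bfN\bfgamma}^{-1}}$ in \Cref{lem:expectation_P}: one must apply that lemma specifically in the $\bfN=\bfOne_b$ case (so that the invariance of $\pi^{\bz}_{\bfrho,\bfgamma}$ from \Cref{lem:invariant} becomes available) and only afterwards transfer to the $\B{D}_{\bfN\bfgamma}^{-1}$ norm by the cheap pointwise bound, rather than trying to apply the lemma directly for general $\bfN$.
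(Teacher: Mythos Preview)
Your proof is correct and follows essentially the same approach as the paper: apply \Cref{lem:expectation_P} (in the $\bfN=\bfOne_b$ case), integrate against $\pi^{\bz}_{\bfrho,\bfgamma}$ using the invariance from \Cref{lem:invariant}, choose $\varepsilon=(1-\kappa_{\bfgamma}^2)/(4\kappa_{\bfgamma}^2)$, and rearrange. The only difference is cosmetic: you make the norm comparison $\|\cdot\|_{\B{D}_{\bfN\bfgamma}^{-1}}^2\le(\min_i N_i)^{-1}\|\cdot\|_{\B{D}_{\bfgamma}^{-1}}^2$ explicit at the outset, whereas the paper leaves it inside the terse ``rearranging terms concludes the proof'' (and also bakes it into the statement of \Cref{lem:expectation_P}). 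One small remark: your justification that $I<\infty$ because $\Pi_{\bfrho,\bfgamma}$ is ``Gaussian-type'' is imprecise, since $\Pi_{\bfrho,\bfgamma}$ is not Gaussian; finiteness follows instead from $\Pi_{\bfrho,\bfgamma}\in\mathcal{P}_2$ (it is obtained as a $W_2$-limit in \Cref{prop:convergence_N_1}).
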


\begin{proof}
    With the choice $\varepsilon = (1 - \kappa_{\bfgamma}^2) / (4 \kappa_{\bfgamma}^2)$ in \Cref{lem:expectation_P} and using \Cref{lem:invariant}, we have
    \begin{align*}
        \int_{\mathbb{R}^p} \norm{\tilde{\bz}-\bz^{\star}}^2_{\B{D}_{\bfgamma}^{-1}} \pi^{\bz}_{\bfrho,\bfgamma}(\dd\tilde{\B{z}}) \le &\frac{\kappa_{\bfgamma}^2 + 1}{2}\int_{\mathbb{R}^p}\norm{\B{z}-\bz^{\star}}_{\B{D}_{\gamma}^{-1}}^2\pi^{\bz}_{\bfrho,\bfgamma}(\dd\B{z}) + \frac{1 + \kappa_{\bfgamma}^2}{1 - \kappa_{\bfgamma}^2}\max_{i \in [b]}\{\gamma_i M_i^2\}\norm{\B{u}^{\star}}^2\\
        + \mathrm{Tr}(\B{D}_{\bfgamma/\bfrho}\B{P}_{0})
        + 2\sum_{i=1}^b d_i\eqsp. 
    \end{align*}
    Rearranging terms concludes the proof. 
\end{proof}

\begin{lemma}
\label{lem:expectation_P_N}
Assume \Cref{ass:well_defined_density}-\Cref{ass:supp_fort_convex} and let $\bfN\in(\N^*)^{b}, \gammabf,\bfrho \in (\rset_+^*)^b$ such that, for any $i \in [b]$, $N_i\gamma_i \le 2 / (m_i + M_i + 1/\rho_i), \gamma_{i}\tilde{M}_{i}<1$ and denote $\bz^{\star} = ([\B{A}_1\btheta^{\star}]^{\top}, \cdots,[\B{A}_b\btheta^{\star}]^{\top})^{\top}$.
Then, we have
\begin{align*}
    \int_{\mathbb{R}^p}\norm{\tilde{\bz}-\bz^{\star}}_{\B{D}_{\bfN\bfgamma}^{-1}}^2 P_{\bfrho,\bfgamma,\bfN}^{\bz}(\bz^\star,\dd\tilde{\bz})
    \le 2 \sum_{i=1}^b \gamma_i N_i \pr{1 + \mathrm{Tr}(\B{P}_0) / \rho_i} + 4\sum_{i=1}^b d_i\eqsp.
\end{align*}
where the transition kernel $P_{\bfrho,\bfgamma,\bfN}^{\bz}$ is defined in \eqref{eq:def:trans_kernel_z1}.
\end{lemma}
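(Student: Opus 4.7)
The plan is to reduce the integral to a one-step analysis of the LMC recursion on each worker, centred at $\bz^{\star,i} = \B{A}_i\btheta^{\star}$. First, observe that the conditional mean is $\bar{\B{B}}_0^{-1}\B{B}_0^{\top}\B{\tilde{D}}_{0}^{\half}\bz^{\star} = \B{Q}^{-1}\sum_{i=1}^{b}\B{A}_i^{\top}\B{A}_i\btheta^{\star}/\rho_i = \btheta^{\star}$, so under $P_{\bfrho,\bfgamma,\bfN}^{\bz}(\bz^{\star},\cdot)$ the master parameter is $\theta = \btheta^{\star} + \B{Q}^{-1/2}\xi$ with $\xi\sim\loiGauss(\B{0}_d,\B{I}_d)$. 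Since the local dynamics decouple across workers conditionally on $\theta$, the target integral becomes $\sum_{i=1}^{b}(N_i\gamma_i)^{-1}\,\E[\|\tilde{Y}_{N_i}^{(i)} - \bz^{\star,i}\|^2]$, where $(\tilde{Y}_{k}^{(i)})_{k\ge 0}$ are the LMC iterates started at $\tilde{Y}_{0}^{(i)} = \bz^{\star,i}$ with parameter $\theta$.

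Second, set $\Delta_{k}^{(i)} = \tilde{Y}_{k}^{(i)} - \bz^{\star,i}$ and linearise $\nabla U_i$ between $\tilde{Y}_k^{(i)}$ and $\bz^{\star,i}$ to obtain
\begin{equation*}
\Delta_{k+1}^{(i)} = A_{i,k}\Delta_k^{(i)} + \gamma_i b_i(\theta) + \sqrt{2\gamma_i}\,\xi_{k+1}^{(i)},\qquad \Delta_0^{(i)} = 0,
\end{equation*}
where $A_{i,k} = \B{I}_{d_i} - \gamma_i(\rho_i^{-1}\B{I}_{d_i} + H_{i,k}^{U})$ is symmetric positive with $\|A_{i,k}\|\le 1-\gamma_i\tilde{m}_i\le 1$ thanks to $\gamma_i\tilde{M}_i<1$, and $b_i(\theta) = \rho_i^{-1}\B{A}_i(\theta-\btheta^{\star}) - \nabla U_i(\bz^{\star,i})$ does not depend on $k$ along the local chain.

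Third, take conditional second moments and apply Young's inequality with parameter $1/N_i$:
\begin{equation*}
\E\bigl[\|\Delta_{k+1}^{(i)}\|^2 \bigm| \mathcal{F}_k,\theta\bigr] \le (1+1/N_i)\|A_{i,k}\Delta_k^{(i)}\|^2 + (1+N_i)\gamma_i^2\|b_i(\theta)\|^2 + 2\gamma_i d_i,
\end{equation*}
the last term coming from $\E[\|\xi_{k+1}^{(i)}\|^2]=d_i$. Iterating with $\|A_{i,k}\|\le 1$, $\Delta_0^{(i)}=0$, and $(1+1/N_i)^{N_i} \le \mathrm{e}\le 3$ (absorbing $\mathrm{e}-1\le 2$) yields
\begin{equation*}
\E\|\Delta_{N_i}^{(i)}\|^2 \le 2N_i(N_i+1)\gamma_i^2\,\E\|b_i(\theta)\|^2 + 4N_i\gamma_i d_i.
\end{equation*}
After dividing by $N_i\gamma_i$ the noise contribution is exactly $4d_i$, matching the second term of the target bound; the critical algebraic point is that the $1/N_i$ Young splitting preserves ballistic noise scaling rather than the worse $O(N_i d_i)$ given by a naive Minkowski iteration.

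Finally, I would bound $\E\|b_i(\theta)\|^2$. Since $\theta-\btheta^{\star}$ is centred Gaussian with covariance $\B{Q}^{-1}$, the cross-term vanishes and $\E\|b_i(\theta)\|^2 = \rho_i^{-2}\operatorname{Tr}(\B{A}_i^{\top}\B{A}_i\B{Q}^{-1}) + \|\nabla U_i(\bz^{\star,i})\|^2$. The first summand is at most $\operatorname{Tr}(\B{P}_0)/\rho_i$, using the identity $\sum_{j=1}^{b}\rho_j^{-1}\operatorname{Tr}(\B{A}_j^{\top}\B{A}_j\B{Q}^{-1}) = \operatorname{Tr}(\B{Q}^{-1}\B{Q}) = \operatorname{Tr}(\B{P}_0)$ together with non-negativity of each summand. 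The hardest part will be handling the $\|\nabla U_i(\bz^{\star,i})\|^2$ residual, which is absent from the target bound: this must be folded into the drift factor using the global optimality $\sum_{i}\B{A}_i^{\top}\nabla U_i(\B{A}_i\btheta^{\star}) = 0$, which controls the worker-wise residuals after aggregation against the factor $\operatorname{Tr}(\B{P}_0)/\rho_i$ and absorbs them into the constant $2$ in front of $\gamma_i N_i(1+\operatorname{Tr}(\B{P}_0)/\rho_i)$.
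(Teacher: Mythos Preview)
Your setup, the one-step recursion for $\Delta_k^{(i)}$, and the Young iteration are all sound and lead to the right decomposition $\E\|b_i(\theta)\|^2 = \rho_i^{-2}\operatorname{Tr}(\B{A}_i^{\top}\B{A}_i\B{Q}^{-1}) + \|\nabla U_i(\bz^{\star,i})\|^2$; the trace piece is indeed $\le \operatorname{Tr}(\B{P}_0)/\rho_i$ and the noise yields exactly $4d_i$ after division by $N_i\gamma_i$. The paper arrives at the same three-way split (a $\nabla U(\bz^{\star})$ drift, a $\xi$-term, an $\eta$-term) but by a different route: instead of iterating moment bounds, it telescopes the trajectory through the block-diagonal products $[\B{M}_{\infty}]^{-1}\B{M}_{k+1}$ to get an exact expression for $\B{D}_{\bfN\bfgamma}^{-1/2}(Z-\bz^{\star})$, then bounds each piece via operator norms and the contractivity $\|\B{I}_{d_i}-\B{C}_k^{(i)}\|\le 1-\gamma_i\tilde{m}_i$. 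Your approach is more elementary and avoids that machinery, at the cost of slightly looser constants.

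The genuine gap is in your last paragraph. The relation $\sum_i \B{A}_i^{\top}\nabla U_i(\B{A}_i\btheta^{\star}) = 0$ is a single $d$-dimensional linear constraint on the $b$ worker gradients; it gives no control on the individual $\|\nabla U_i(\B{A}_i\btheta^{\star})\|^2$, which can be arbitrarily large (take $b=2$ with cancelling gradients). So there is no mechanism to ``fold'' that residual into the constant $1$ in $1+\operatorname{Tr}(\B{P}_0)/\rho_i$. You are right, however, to flag this as the crux: the paper's own proof does not justify the disappearance of $\|\nabla U(\bz^{\star})\|$ either --- it bounds only the operator norm of the matrix multiplying $\nabla U(\bz^{\star})$ (by $\sum_i N_i\gamma_i$) and then writes the stated bound with the gradient factor silently dropped. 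Compare \Cref{lem:expectation_P}, where the analogous residual is retained as $\|\B{u}^{\star}\|^2$; the inequality as stated here appears to be missing a $\|\nabla U_i(\B{A}_i\btheta^{\star})\|^2$-type term, so do not try to manufacture the ``$1$'' from the optimality condition.
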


\begin{proof}
Let $\{(\eta^{i}_{k})_{k\ge 1}:i\in[b]\}$ be independent random variables such that for any $i\in[b]$, the sequences $\{(\eta^{i}_{k})_{k\ge 1}\}$ are i.i.d. $d_{i}$-dimensional Brownian motions and let $\xi$ a $d$-dimensional standard Gaussian random variable independent of $\{(\eta^{i}_{k})_{k\ge 1}:i\in[b]\}$.
Consider the stochastic process $(\Yb_{k})_{k\in\N}$ initialised for any $i \in [b]$ at $\Yb_{0}^{i} = \B{A}_i\btheta^\star$ and defined, for any $i \in[b], k\in\N$, by
\begin{equation}\label{eq:coupling_process_Y:end}
\begin{aligned}
\Yb_{k+1}^{i} = \Yb_{k}^{i} -\gamma_i\nabla \tildeU_i(\Yb_{k}^{i}) + (\gamma_i/\rho_i) \B{A}_i \theta + \sqrt{2\gamma_i}\eta_{k+1}^{i}\eqsp,
\end{aligned}
\end{equation}
where the potential $V_{i} = \by^{i} \mapsto U_{i}(\by^{i}) + \|\by^{i}\|^2/(2\rho_{i})$ and
\begin{equation}\label{eq:cont_coupling:end}
\begin{aligned}
    \theta = \bar{\B{B}}_0^{-1}\B{B}_0^{\top}\B{\tilde{D}}_{0}^{\half} \bz^\star + \bar{\B{B}}_0^{-\half} \xi\eqsp.
\end{aligned}
\end{equation}
In addition, we define the random variable $Z = (Z^1,\ldots,Z^b)$, for any $i \in [b]$, as
$$
\Zb^{i} = \Yb_{N_i}^{i}\eqsp.
$$
By definition, note that $\Zb$ is distributed according to $P_{\bfrho,\bfgamma,\bfN}^{\bz}(\bz^\star,\cdot)$.
Define the process $(\Ybr_{k}= \{\Ybr_{k}^{i}\}_{i=1}^b)_{k\in\N}$ valued in $\R^p\times\R^p$ defined for any $i\in[b]$, $k\ge 0$ by
\begin{align*}%\label{eq:def:discrete_stoch_approx_N}
\Ybr_k^{i}=\Yb_{\min(k, N_i)}^{i}\eqsp.
\end{align*}
and consider the following matrices defined, for any $k\in\N$, by
\begin{align}
&\B{H}_{U,k} = \mathrm{diag}\bigg(\gamma_1\int_0^1 \nabla^2 U_1((1-s) \Yb_{k}^{1} + s \bz^\star)\,\dd s,\nonumber \\
&\qquad\qquad\qquad\qquad\qquad\qquad\qquad\qquad\hdots, \gamma_b\int_0^1 \nabla^2 U_b ((1-s) \Yb_{k}^{b} + s \bz^\star )\,\dd s\bigg)\eqsp, \nonumber\\
&\B{J}(k) = \mathrm{diag}\pr{\1_{[N_1]}(k+1)\cdot\B{I}_{d_1},\cdots,\1_{[N_b]}(k+1)\cdot\B{I}_{d_b}}\eqsp,\label{eq:def:Jend}\\
&\B{C}_{k} = \B{J}(k)(\B{D}_{\bfgamma/\bfrho} + \B{H}_{U,k})\eqsp,\label{eq:def:Cend}\\
&\B{M}_{k+1} = (\B{I}_{p}-\B{C}_{0})^{-1} \ldots (\B{I}_{p} - \B{C}_{k} )^{-1}\eqsp, \qquad \text{ with } \B{M}_0 = \B{I}_p \eqsp.\label{eq:def:Mend}
\end{align}
Using these notation and \eqref{eq:coupling_process_Y:end}, for any $k\in\N$, we get
\begin{align*}
\Ybr_{k+1}-\bz^\star
=& (\B{I}_{p}-\B{C}_{k}) (\Ybr_{k}-\bz^\star)
+ \B{J}(k)\pr{\B{D}_{\bfgamma/\sqrt{\bfrho}}\B{B}_{0}\theta - \B{D}_{\bfgamma} \nabla V(\bz^\star)+\B{D}_{2\bfgamma}^{\half}\eta_{k+1}}\eqsp.
\end{align*}
Multiplying the previous equality by $\B{M}_{k+1}\B{D}_{\bfN\bfgamma}^{-\half}$, we obtain, for $k \ge 0$,
\begin{multline*}
\B{M}_{k+1}\B{D}_{\bfN\bfgamma}^{-\half} (\Ybr_{k+1} - \bz^\star)
= \B{M}_{k} \B{D}_{\bfN\bfgamma}^{-\half}(\Ybr_{k} - \bz^\star)\\
+ \B{M}_{k+1}\B{J}(k)\B{D}_{\bfN\bfgamma}^{-\half}\pr{\B{D}_{\bfgamma/\sqrt{\bfrho}}\B{B}_{0}\theta-\B{D}_{\bfgamma}\nabla V(\bz^\star)+\B{D}_{2\bfgamma}^{\half}\eta_{k+1}}\eqsp.
\end{multline*}
Summing the previous equality over $k\in\N$ gives
\begin{multline*}
\B{M}_{\infty} \B{D}_{\bfN\bfgamma}^{-\half}(\Ybr_{\bfN} - \bz^\star)
= \B{M}_{0}\B{D}_{\bfN\bfgamma}^{-\half}(\Ybr_{0} - \bz^\star)\\
+ \sum_{k=0}^{\infty} \B{M}_{k+1}\B{J}(k)\B{D}_{\bfN\bfgamma}^{-\half}\pr{\B{D}_{\bfgamma/\sqrt{\bfrho}}\B{B}_{0}\theta-\B{D}_{\bfgamma}\nabla V(\bz^\star)+\B{D}_{2\bfgamma}^{\half}\eta_{k+1}}\eqsp.
\end{multline*}
Multiplying the last equality by $[\B{M}_{\infty}]^{-1}$ and using the fact that $\Ybr_{0} = \bz^\star$, we get
\begin{multline}\label{eq:eq:rec:end}
    \B{D}_{\bfN\bfgamma}^{-\half}(\Zb-\bz_{\star})
    = \sum_{k=0}^{\infty} [\B{M}_{\infty}]^{-1}\B{M}_{k+1}\B{J}(k)\B{D}_{\bfN\bfgamma}^{-\half}\pr{\B{D}_{\bfgamma/\sqrt{\bfrho}}\B{B}_{0}\theta - \B{D}_{\bfgamma}\nabla V(\bz^\star)+\B{D}_{2\bfgamma}^{\half}\eta_{k+1}}\eqsp.
\end{multline}
Recall that $\B{P}_0 = \B{B}_{0} \bar{\B{B}}_{0}^{-1} \B{B}_{0}^{\top}$.
Hence, by \eqref{eq:cont_coupling:end} and using $\B{P}_{0}\B{D}_{\bfrho}^{-\half}\bz^{\star} = \B{D}_{\bfrho}^{-\half}\bz^{\star}$, we get
\[
    \B{D}_{\bfgamma/\sqrt{\bfrho}}\B{B}_{0}\theta- \B{D}_{\bfgamma}\nabla V(\bz^\star)
    = \B{D}_{\bfgamma/\sqrt{\bfrho}}\B{B}_{0}\bar{\B{B}}_0^{-\half} \xi - \B{D}_{\bfgamma}\nabla U(\bz^\star)\eqsp.
\]
Plugging this equality into \eqref{eq:eq:rec:end} yields
\begin{align}
    \nonumber
    \B{D}_{\bfN\bfgamma}^{-\half}(\Zb-\bz^{\star})
    &= -\sum_{k=0}^{\infty} [\B{M}_{\infty}]^{-1}\B{M}_{k+1}\B{J}(k)\B{D}_{\bfgamma/\bfN}^{\half}\nabla U(\bz^\star)\\
        \nonumber
    &+ \sum_{k=0}^{\infty} [\B{M}_{\infty}]^{-1}\B{M}_{k+1}\B{J}(k)\B{D}_{\bfgamma/(\bfN\bfrho)}^{\half} \B{B}_{0}\bar{\B{B}}_0^{-\half} \xi \\
        \label{eq:eq:end}
    &+ \sqrt{2}\sum_{k=0}^{\infty} [\B{M}_{\infty}]^{-1}\B{M}_{k+1}\B{J}(k)\B{D}_{\bfN}^{-\half}\eta_{k+1}\eqsp.
\end{align}
Recall that $[\B{M}_{\infty}]^{-1}\B{M}_{k+1} = (([\B{M}_{\infty}]^{-1}\B{M}_{k+1})^1,\ldots,([\B{M}_{\infty}]^{-1}\B{M}_{k+1})^b)$ is a block-diagonal matrix where, for any $i\in[b]$, $([\B{M}_{\infty}]^{-1}\B{M}_{k+1})^{i}=\prod_{l=k+1}^{\infty}\prn{\B{I}_{d_{i}}-\B{C}_l^{i}}$ where $\B{C}_l^{i}$ is defined in \eqref{eq:def:Cend}. 
In addition, since we suppose for any $i\in[b]$, that $\gamma_{i}\tilde{M}_{i}<1$, \Cref{lem:C_{i}nvertible} implies
\begin{align*}
\normbigg{\prod_{l=k+1}^{N_{i}-1}(\B{I}_{d_{i}}-\B{C}^{i}_l)}^2
\le \pr{1-\gamma_{i} \tilde{m}_{i}}^{2(N_{i}-k-1)}\eqsp.
\end{align*}
We now upper bound separately each term on the right-hand side of \eqref{eq:eq:end}.
First, using the Cauchy-Schwarz inequality, we have
\begin{align}
    \norm{\sum_{k=0}^{\infty} [\B{M}_{\infty}]^{-1}\B{M}_{k+1}\B{J}(k)\B{D}_{\bfgamma/\bfN}^{\half}}^2\nonumber
    &\le \sum_{i=1}^b (\gamma_i/N_i)\norm{\sum_{k=0}^{\infty} ([\B{M}_{\infty}]^{-1}\B{M}_{k+1})^i\B{J}^i(k)}^2 \nonumber\\
    &\le \sum_{i=1}^b (\gamma_i/N_i)\norm{\sum_{k=0}^{N_i-1} \prod_{l=k+1}^{N_i-1} \pr{\B{I}_{d_{i}}-\B{C}^{i}_l}}^2 \nonumber\\
    &\le \sum_{i=1}^b \gamma_i\sum_{k=0}^{N_i-1}\norm{\prod_{l=k+1}^{N_i-1} \pr{\B{I}_{d_{i}}-\B{C}^{i}_l}}^2 \nonumber\\
    &\le \sum_{i=1}^b \gamma_i\sum_{k=0}^{N_i-1}\pr{1-\gamma_{i} \tilde{m}_{i}}^{2(N_{i}-k-1)} \nonumber\\
    &\le \sum_{i=1}^b N_i\gamma_i \eqsp. \label{eq:pp1}
\end{align}
Second, using the same techniques as for the above inequality, we obtain
\begin{align}
    \norm{\sum_{k=0}^{\infty} [\B{M}_{\infty}]^{-1}\B{M}_{k+1}\B{J}(k)\B{D}_{\bfgamma/(\bfN\bfrho)}^{\half} \B{B}_{0}\bar{\B{B}}_0^{-\half}\xi}^2 &\le \sum_{i=1}^b \frac{N_i\gamma_i}{\rho_i}\norm{\B{B}_{0}\bar{\B{B}}_0^{-\half}\xi}^2 \label{eq:pp2}
\end{align} 
Finally, the third term can be upper-bounded as
\begin{align}
    \E\br{\norm{\sqrt{2}\sum_{k=0}^{\infty} [\B{M}_{\infty}]^{-1}\B{M}_{k+1}\B{J}(k)\B{D}_{\bfN}^{-\half}\eta_{k+1}}^2} &\le 2 \sum_{i=1}^b d_i\eqsp.\label{eq:pp3}
\end{align}
Combining \eqref{eq:eq:end}, \eqref{eq:pp1}, \eqref{eq:pp2} and \eqref{eq:pp3}, we get
\begin{align*}
    \int_{\mathbb{R}^p}\norm{\tilde{\bz}-\bz^{\star}}_{\B{D}_{\bfN\bfgamma}^{-1}}^2 P_{\bfrho,\bfgamma,\bfN}^{\bz}(\bz^\star,\dd\tilde{\bz})
    \le \sum_{i=1}^b \gamma_i N_i \pr{1 + \mathrm{Tr}(\B{P}_0) / \rho_i} + 2\sum_{i=1}^b d_i\eqsp.
\end{align*}
\end{proof}

Given $\varepsilon > 0$, we are now ready to provide a condition on the number of iterations $n_{\varepsilon}$ to achieve $\wasserstein{}(\mu^{\star}_{\rhobf} P_{\bfrho,\bfgamma,\bfN}^{n_\varepsilon}, \Pi_{\bfrho,\bfgamma}) \le \varepsilon/3$ in the case where for any $i \in [b]$, $m_i = m$, $M_i = M$, $\rho_i = \rho$, $\gamma_i = \gamma$ and $N_i = N$.
Define
\begin{multline*}
\mathsf{E}_0^2 = 9(1 + \|\bar{\B{B}}_0^{-1}\B{B}_0^{\top}\B{\tilde{D}}_{0}^{\half}\|) 2N \gamma \Bigg[\frac{2}{N(1 - \kappa_{\bfgamma}^2)}\bigg(\frac{1 + \kappa_{\bfgamma}^2}{1 - \kappa_{\bfgamma}^2}\cdot \gamma M^2\norm{\B{u}^{\star}}^2\\
+ (\gamma/\rho)\mathrm{Tr}(\B{P}_{0})
+ 2\sum_{i=1}^b d_i\bigg) + 2 b\gamma N \pr{1 + \mathrm{Tr}(\B{P}_0) / \rho} + 4\sum_{i=1}^b d_i \Bigg]\eqsp.
\end{multline*}
\begin{theorem}
\label{lem:mixing_time}
Assume \Cref{ass:well_defined_density}-\Cref{ass:supp_fort_convex} and assume that for any $i \in [b]$, $m_i=m$ and $M_i=M$.
In addition, let $\bfN = N\mathbf{1}_b, \gammabf = \gamma\mathbf{1}_b,\bfrho = \rho\mathbf{1}_b$, $\rho > 0,\gamma>0, N\ge1$, such that $\gamma < 1/\tilde{M}$, $N\gamma < 2/(m + \tilde{M})$, and \eqref{eq:Ngamma} is satisfied.
Then, for any $\varepsilon > 0$, any
$$
n_{\varepsilon} \ge 2\log\pr{\mathsf{E}_0/\varepsilon}/(N\gamma m),
$$
we have, $\wasserstein{}(\mu^{\star}_{\rhobf} P_{\bfrho,\bfgamma,\bfN}^{n_\varepsilon}, \Pi_{\bfrho,\bfgamma})
\le \varepsilon/3$.
\end{theorem}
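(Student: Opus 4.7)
The plan is to start from the general convergence bound in \Cref{cor:convergence_rho_gamma_star_v2}, specialise all parameters to the common values ($m_i=m,\,M_i=M,\,\rho_i=\rho,\,\gamma_i=\gamma,\,N_i=N$), and then plug in the two moment bounds proved for the stationary distribution and for $P^{\bz}_{\bfrho,\bfgamma,\bfN}(\bfz^\star,\cdot)$. First I would verify that the hypotheses made on $\gamma,N$ (namely $\gamma<1/\tilde M$, $N\gamma<2/(m+\tilde M)$ together with \eqref{eq:Ngamma}) are exactly what is required to invoke \Cref{cor:convergence_rho_gamma_star_v2}: this yields
\begin{equation*}
  \wasserstein{}(\mu_{\rhobf}^{\star} P_{\bfrho,\bfgamma,\bfN}^{n},\Pi_{\bfrho,\bfgamma})^{2}
  \le 2(1-Nm\gamma/2)^{2(n-1)}(1+\|\bar{\B{B}}_0^{-1}\B{B}_0^{\top}\B{\tilde{D}}_{0}^{\half}\|^2)\,N\gamma\cdot\bigl(\mathsf{I}_1+\mathsf{I}_2\bigr),
\end{equation*}
where $\mathsf{I}_1=\int\|\bfz_1-\bfz^\star\|_{\B{D}_{\bfN\bfgamma}^{-1}}^{2}\pi^{\bz}_{\bfrho,\bfgamma}(\rmd \bfz_1)$ and $\mathsf{I}_2=\int\|\bfz_1-\bfz^\star\|_{\B{D}_{\bfN\bfgamma}^{-1}}^{2}P_{\bfrho,\bfgamma,\bfN}^{\bz}(\bfz^\star,\rmd \bfz_1)$.

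Next, I would bound $\mathsf{I}_1$ and $\mathsf{I}_2$ separately. For $\mathsf{I}_1$, the key observation is that $\Pi_{\bfrho,\bfgamma,\bfN}=\Pi_{\bfrho,\bfgamma}$ for $\bfN=N\mathbf{1}_b$ by \Cref{cor:convergence_rho_gamma}, so the $\bfz$-marginal $\pi^{\bz}_{\bfrho,\bfgamma}$ is independent of $N$. Applying \Cref{initsupp} and using $\B{D}_{\bfN\bfgamma}^{-1}=N^{-1}\B{D}_{\bfgamma}^{-1}$ in the constant case directly yields
\begin{equation*}
  \mathsf{I}_1\le \frac{2}{N(1-\kappa_{\bfgamma}^{2})}\Bigl(\frac{1+\kappa_{\bfgamma}^{2}}{1-\kappa_{\bfgamma}^{2}}\,\gamma M^{2}\|\B{u}^\star\|^{2}+(\gamma/\rho)\,\mathrm{Tr}(\B{P}_{0})+2\sum_{i=1}^{b}d_i\Bigr).
\end{equation*}
For $\mathsf{I}_2$, the constant-parameter case of \Cref{lem:expectation_P_N} gives $\mathsf{I}_2\le 2bN\gamma(1+\mathrm{Tr}(\B{P}_{0})/\rho)+4\sum_{i=1}^{b}d_i$. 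Summing these two bounds reproduces exactly the bracket in the definition of $\mathsf{E}_0^{2}$, so that
\begin{equation*}
  9\,\wasserstein{}(\mu_{\rhobf}^{\star} P_{\bfrho,\bfgamma,\bfN}^{n},\Pi_{\bfrho,\bfgamma})^{2}\le \mathsf{E}_0^{2}\,(1-Nm\gamma/2)^{2(n-1)}.
\end{equation*}

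The last step is purely analytic. It suffices to pick $n$ so that $(1-Nm\gamma/2)^{n-1}\le \varepsilon/\mathsf{E}_0$. Using the elementary inequality $-\log(1-x)\ge x$ valid for $x\in[0,1)$ with $x=Nm\gamma/2\in(0,1)$ (which holds since $N\gamma\le 2/(m+\tilde M)\le 2/m$), it is enough to require $(n-1)\,Nm\gamma/2\ge \log(\mathsf{E}_0/\varepsilon)$, which is implied by the hypothesis $n_{\varepsilon}\ge 2\log(\mathsf{E}_0/\varepsilon)/(N\gamma m)$, completing the argument.

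I do not foresee a hard step; the proof is a specialisation-and-bookkeeping exercise. The only place where care is needed is in matching constants with the definition of $\mathsf{E}_0^{2}$ and in justifying the use of $-\log(1-x)\ge x$ (which requires $Nm\gamma/2<1$, automatically granted by the stated step-size condition). Should the reader wish, the $(n-1)$ versus $n$ discrepancy can be absorbed in the multiplicative constant by observing $(1-Nm\gamma/2)^{-1}\le 2$.
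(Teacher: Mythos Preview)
Your proposal is correct and follows exactly the same route as the paper's proof, which simply states that the result follows from \Cref{cor:convergence_rho_gamma_star_v2} combined with \Cref{initsupp} and \Cref{lem:expectation_P_N}, together with the inequality $1/\log(1/(1-x))\le 1/x$ (equivalently $-\log(1-x)\ge x$). Your write-up is in fact more detailed than the paper's one-line proof, and your remarks on the $(n-1)$ versus $n$ discrepancy and on the justification of $Nm\gamma/2<1$ are well placed.
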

\begin{proof}
By some algebra and using $1/\log(1/(1-x))\le1/x$ for $0< x <1$, the proof directly follows from \Cref{cor:convergence_rho_gamma_star_v2} combined with \Cref{initsupp} and \Cref{lem:expectation_P_N}.
\end{proof}

\subsection{Upper bound on the tolerance parameter $\bfrho_{\varepsilon}$}

Define
\begin{align*}
    R_0 &= 2 \sigma^2_U\Big(d \sigma_U^2 + \sum_{i=1}^b M_i^2 \|\B{A}_i(\btheta^\star-\btheta^\star_i)\|^2\Big) + 2 \sigma^4_U \eqsp, \\
    R_1 &= d \sigma_U^2 + \sum_{i=1}^b M_i^2 \|\B{A}_i(\btheta^\star-\btheta^\star_i)\|^2 + \sum_{i=1}^bd_iM_i/2\\
    R_2 &= 2d\max_{i\in[b]}\{M_i\}\sigma_U^2 + 2\sum_{i=1}^b M_i^3 \|\B{A}_i(\btheta^\star-\btheta^\star_i)\|^2 + 8\sigma_U^4 + 8\sigma_U^2\Big[2d\sigma_U^2 \\
    &+ 2\sum_{i=1}^b M_i^2 \|\B{A}_i(\btheta^\star-\btheta^\star_i)\|^2\Big]\eqsp.
\end{align*}

Recall that $\bar{\rho} = \max_{i\in[b]}\{\rho_i\}$.
Then, the following result holds.

\begin{lemma}
    \label{lem:mixing_rho}
    Assume \Cref{ass:well_defined_density}-\Cref{ass:supp_fort_convex}.
    For any $\varepsilon > 0$, let $\bfrho_{\varepsilon} \in (\R_+^*)^b$ such that
    \begin{align*}
    \bar{\rho}_{\varepsilon} \le &\frac{-R_1 + \sqrt{R_1^2 + 4R_0\varepsilon m_U^{\half}/(3\sqrt{2})}}{2R_0} \wedge \frac{\varepsilon\sqrt{m_U}}{3\sqrt{2}\sqrt{R_2 +[R_2/(12\sigma_U^2) + \sum_{i=1}^b d_i M_i]^2}} \\
    &\wedge \frac{1}{12 \sigma_U^2}\eqsp \wedge \frac{-\sum_{i=1}^bd_iM_i + \sqrt{(\sum_{i=1}^bd_iM_i)^2 + 6R_2}}{2R_2} \eqsp.
    \end{align*}
    Then, $W_2(\pi_{\bfrho_{\varepsilon}},\pi) \le \varepsilon/3$.
\end{lemma}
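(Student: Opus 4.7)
The plan is to apply the quantitative bias estimate of \Cref{prop:bound_bias_rho}, which yields $\wasserstein{}(\pi_{\bfrho_\varepsilon},\pi) \le \sqrt{(2/m_U)\max(A_1^2, A_3)}$ for $A_1, A_3$ the explicit expressions appearing in its proof, and then verify that each of the four upper bounds imposed on $\bar{\rho}_\varepsilon$ is tailored to make one constituent of this estimate small enough that the total is bounded by $\varepsilon/3$.

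I would start by checking applicability of \Cref{prop:bound_bias_rho}. Since $L_\beta^2 \le \bar{\rho}_\varepsilon \|\B{A}^\top \B{A}\| \max_i M_i^2 = \bar{\rho}_\varepsilon\, m_U \sigma_U^2$, the third listed condition $\bar{\rho}_\varepsilon \le 1/(12\sigma_U^2)$ gives $12 L_\beta^2 \le m_U$, which is the hypothesis under which the proposition is proved. The remaining task then splits into enforcing $A_1 \le \varepsilon\sqrt{m_U}/(3\sqrt{2})$ and $A_3 \le m_U \varepsilon^2/18$ separately, so that $\sqrt{(2/m_U)\max(A_1^2, A_3)} \le \varepsilon/3$.

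For $A_1$, the estimates $L_\beta^2/m_U \le \bar{\rho}_\varepsilon \sigma_U^2$, $\rho_i \le \bar{\rho}_\varepsilon$, and $\log(1+x) \le x$ applied term-by-term to the definition of $A_1$ produce $A_1 \le \bar{\rho}_\varepsilon R_1 + \bar{\rho}_\varepsilon^2 R_0$, where the grouping of terms determines the definitions of $R_0$ and $R_1$. The first listed condition is precisely the positive root of the quadratic inequality $R_0 \bar{\rho}_\varepsilon^2 + R_1 \bar{\rho}_\varepsilon \le \varepsilon\sqrt{m_U}/(3\sqrt{2})$, so it suffices.

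The harder term is $A_3$, which has the structure $A_3 = e^{2v+X} - 2e^v + 1$ where $v = \sum_i (d_i/2)\log(1+\rho_i M_i)$ and $X$ collects the remaining exponent from the proof of \Cref{prop:bound_bias_rho}. I would rewrite $A_3 = (e^v-1)^2 + e^{2v}(e^X - 1)$ and use $e^t - 1 \le t\, e^t$ for $t\ge 0$ to obtain $A_3 \le (v^2 + X)\, e^{2v+X}$. Term-by-term majorisation gives $v \le \bar{\rho}_\varepsilon (\sum_i d_i M_i)/2$ and $X \le \bar{\rho}_\varepsilon^2 R_2$, which is exactly what fixes the definition of $R_2$. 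The fourth condition is equivalent to $R_2 \bar{\rho}_\varepsilon^2 + (\sum_i d_i M_i)\bar{\rho}_\varepsilon \le 3/2$ (obtained by squaring), hence bounds $e^{2v+X}$ by an absolute constant, and combined with the third condition it gives $A_3$ of the order $\bar{\rho}_\varepsilon^2 [R_2 + (R_2/(12\sigma_U^2) + \sum_i d_i M_i)^2]$. The second listed condition is then calibrated so that this quantity is at most $m_U \varepsilon^2/18$, and taking the $\max$ of the two resulting bounds concludes. The main technical obstacle I expect is tracking the absolute constants in the exponential expansion of $A_3$ so that the numerical factors in $R_2$ and in the second and fourth conditions line up exactly as stated; the rewriting $A_3 = (e^v-1)^2 + e^{2v}(e^X-1)$ seems to be the algebraic simplification that avoids unmanageable cross terms.
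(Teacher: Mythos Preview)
Your overall strategy is correct and matches the paper's: invoke \Cref{prop:bound_bias_rho} (the third condition ensures its applicability), then separately enforce $A_1 \le \varepsilon\sqrt{m_U}/(3\sqrt{2})$ and $A_3^{1/2} \le \varepsilon\sqrt{m_U}/(3\sqrt{2})$. Your treatment of $A_1$ via $A_1 \le R_1\bar\rho + R_0\bar\rho^2$ is exactly what the paper does.

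The handling of $A_3$ differs. You propose the identity $A_3 = (e^v-1)^2 + e^{2v}(e^X-1)$ and then bound each piece via $e^t-1\le te^t$, which yields $A_3 \le (v^2+X)e^{2v+X}$ and hence a bound of the form $e^{3/2}\bar\rho^2[R_2 + (\sum_i d_iM_i)^2/4]$ after the fourth condition. The paper instead first \emph{lower}-bounds the subtracted term via $2e^v \ge 2+2v$ (i.e.\ $2\prod_i(1+\rho_iM_i)^{d_i/2} \ge 2+\sum_i d_i\log(1+\rho_iM_i)$), obtaining $A_3 \le e^{2v+X}-1-2v$, and then applies $e^x \le 1+x+x^2$ for $x\le 3/2$ (guaranteed by the fourth condition) to get $A_3 \le X + (X+2v)^2 \le \bar\rho^2 R_2 + (\bar\rho^2 R_2 + \bar\rho\sum_i d_iM_i)^2$. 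Using the third condition $\bar\rho \le 1/(12\sigma_U^2)$ inside the square then gives precisely $\bar\rho^2[R_2 + (R_2/(12\sigma_U^2)+\sum_i d_iM_i)^2]$, which is what the second condition is calibrated to. Your decomposition is perfectly valid but produces an extra constant factor $e^{3/2}$ and a different quadratic term, so it would not line up with the second condition exactly as stated; you correctly flagged this as the obstacle. The paper's route---linearise the subtracted exponential, then use a quadratic Taylor bound---is the simplification that makes the constants close up.
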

\begin{proof}
    Let $\varepsilon > 0$.
    From \eqref{eq:bound_bias}, for any $\bar{\rho} \le 1/(12 \sigma_U^2)$, $W_2(\pi_{\bfrho},\pi) \le \sqrt{\frac{2}{ m_U}}\max(A_1,A_3^{\half})\eqsp$, where $A_1,A_3$ are defined in \eqref{eq:ratio_lower_bound} and \eqref{eq:proof_bias_pi_rho_A_2_2} respectively.
    This implies that $W_2(\pi_{\bfrho},\pi) \le \varepsilon/3$ is verified if $\max(A_1,A_3^{\half}) \le \varepsilon\sqrt{m_U}/(3\sqrt{2})$.
    First, $A_1 \le \varepsilon\sqrt{m_U}/(3\sqrt{2})$ holds if
    \begin{equation}
        \label{eq:rho_1}
        \bar{\rho} \le \frac{-R_1 + \sqrt{R_1^2 + 4R_0\varepsilon m_U^{\half}/(3\sqrt{2})}}{2R_0} \wedge \frac{1}{12 \sigma_U^2}\eqsp.
    \end{equation}
    We now focus on $A_3$. 
    Using the fact that for any $x \in \mathbb{R}, \mathrm{e}^x \ge x + 1$, we have $2\prod_{i=1}^b(1+\rho_iM_i)^{d_i} \ge 2 + \sum_{i=1}^bd_i\log(1+\rho_iM_i)$ and therefore
    \begin{align*}
         A_3 & \le \exp\Big(\bar{\rho}^2R_2 + \sum_{i=1}^b d_i\log(1+\rho_i M_i)\Big) - 1 -\sum_{i=1}^b d_i\log(1+\rho_iM_i)\eqsp.
    \end{align*}
    Since $\sum_{i=1}^b d_i\log(1+\rho_i M_i) \le \bar{\rho} \sum_{i=1}^b d_iM_i$, $\bar{\rho}^2R_2 + \sum_{i=1}^b d_i\log(1+\rho_i M_i) \le 3/2$ holds for 
    \begin{equation}
        \label{eq:rho_2}
        \bar{\rho} \le \frac{-\sum_{i=1}^bd_iM_i + \sqrt{(\sum_{i=1}^bd_iM_i)^2 + 6R_2}}{2R_2}\eqsp.
    \end{equation}
    Since for any $x \le 3/2, \mathrm{e}^x \le 1 + x + x^2$ and using the fact that $\bar{\rho} \le 1/(12\sigma_U^2)$, it follows that 
    \begin{align*}
        A_3 \le \bar{\rho}^2R_2 + \Big(\bar{\rho}^2R_2 + \bar{\rho}\sum_{i=1}^b d_i M_i)\Big)^2 \le \bar{\rho}^2\br{B_1 +\Big(\frac{R_2}{12\sigma_U^2} + \sum_{i=1}^b d_i M_i\Big)^2}\eqsp.
    \end{align*}
    Hence $A_3^{\half}\le \varepsilon\sqrt{m_U}/(3\sqrt{2})$ holds under \eqref{eq:rho_2} and
    \begin{equation}
        \label{eq:rho_3}
        \bar{\rho} \le \frac{\varepsilon\sqrt{m_U}}{3\sqrt{2}\sqrt{R_2 +\Big(\frac{R_2}{12\sigma_U^2} + \sum_{i=1}^b d_i M_i\Big)^2}}\eqsp.
    \end{equation}
    The proof is concluded by combining \eqref{eq:rho_1}, \eqref{eq:rho_2} and \eqref{eq:rho_3}.
\end{proof}

\subsection{Upper bound on the step-size $\bfgamma_{\varepsilon}$ and number of local iteration $\bfN_{\varepsilon}$}

Based on \Cref{cor:bias_pi_rho_pirho_gamma} or \Cref{cor:bias_pi_rho_pirho_gamma_alternative}, we now determine an upper bound on $\bfgamma_{\varepsilon}$ to ensure $\wasserstein{} (\Pi_{\bfrho}, \Pi_{\bfrho,\bfgamma_{\varepsilon}})\le \varepsilon/3$ in the case $\bfN = N\mathbf{1}_b, \gammabf = \gamma\mathbf{1}_b,\bfrho = \rho\mathbf{1}_b$ where $\rho > 0,\gamma>0, N\ge1$.
The following results hold depending if \Cref{ass:hessian_lipschitz} is considered.
Define 
\begin{align}
    &C_{\rho} = \frac{4\tilde{M}^2(1 + \|\bar{\B{B}}_0^{-1}\B{B}_0^{\top}\B{\tilde{D}}_{0}^{\half}\|^2)}{5 m}\eqsp, \label{Crho}\\
    &C_0 = (\tilde{M}^2/2)\br{\tilde{M}/\tilde{m} + 1/6}\sum_{i=1}^b d_i \eqsp, \quad C_1 = \sum_{i=1}^b d_i\eqsp, \quad C_2 = \varepsilon^2/(9C_{\rho})\eqsp.\nonumber
\end{align}

\begin{lemma}
    \label{lemma:Ng1}
    Assume \Cref{ass:well_defined_density}-\Cref{ass:supp_fort_convex} and assume for any $i \in [b]$, $m_i=m$ and $M_i=M$.
    In addition, let $\rho,\gamma_{\varepsilon}>0$ and $N_{\varepsilon}\ge1$ such that $\rhobf = \rho \mathbf{1}_b$, $\gammabf_{\varepsilon} = \gamma_{\varepsilon} \mathbf{1}_b$, $\bfN_{\varepsilon} = N_{\varepsilon} \mathbf{1}_b$ and $\varepsilon > 0$ satisfying
    \begin{align}
        \label{eq:choice_gamma}
        \gamma_{\varepsilon} &\le \frac{-C_1 + \sqrt{C_1^2 + 4C_0C_2}}{2C_0} \wedge \frac{m}{40\tilde{M}^2}\eqsp.
    \end{align}
    Then $\wasserstein{} (\Pi_{\bfrho}, \Pi_{\bfrho,\bfgamma_{\varepsilon}}) \le \varepsilon/3$.
\end{lemma}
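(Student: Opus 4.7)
The plan is to invoke \Cref{cor:bias_pi_rho_pirho_gamma} in the homogeneous regime and then reduce the resulting bound to a scalar quadratic inequality whose positive root is exactly the first term of the minimum defining $\gamma_\varepsilon$. First, I would verify that the hypothesis of \Cref{cor:bias_pi_rho_pirho_gamma} is met: since all $m_i$, $M_i$, $\rho_i$ coincide, the ratio $\min_{i}\{m_i/\tilde{M}_i\}/\max_{i}\{m_i/\tilde{M}_i\}$ equals $1$, so the assumed bound $\gamma_\varepsilon \le m/(40\tilde{M}^2)$ matches the step-size constraint of the proposition, and the prescribed $N_\varepsilon$ coincides with the choice $N_i^{\star}(\gamma_i)$ in \eqref{eq:eq:good_choice_N}. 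Moreover $\bfN_\varepsilon = N_\varepsilon \mathbf{1}_b$, so by \Cref{cor:convergence_rho_gamma} we have $\Pi_{\bfrho,\bfgamma_\varepsilon,\bfN_\varepsilon}=\Pi_{\bfrho,\bfgamma_\varepsilon,\mathbf{1}_b}$, which is the object whose distance to $\Pi_{\bfrho}$ is bounded in \Cref{cor:bias_pi_rho_pirho_gamma}.

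Next, I would specialise the bound of \Cref{cor:bias_pi_rho_pirho_gamma} to constant parameters. The max/min ratio involving $m_i/\tilde{M}_i$ and $m_i/\tilde{M}_i^2$ collapses, and the sum $\sum_{i=1}^{b} d_i \gamma_\varepsilon m(1 + \gamma_\varepsilon^2 \tilde{M}^2/12 + \gamma_\varepsilon \tilde{M}^2/(2\tilde{m}))$ factors as $m\gamma_\varepsilon C_1 (1 + \gamma_\varepsilon^2 \tilde{M}^2/12 + \gamma_\varepsilon \tilde{M}^2/(2\tilde{m}))$. Combining with the prefactor and absorbing the $m$-power into $C_\rho$, one obtains an inequality of the shape
\begin{equation*}
W_2^2(\Pi_{\bfrho,\bfgamma_\varepsilon,\bfN_\varepsilon}, \Pi_{\bfrho}) \le C_\rho\, \gamma_\varepsilon \Big(C_1 + \gamma_\varepsilon\, C_1 \tilde{M}^2/(2\tilde{m}) + \gamma_\varepsilon^2\, C_1 \tilde{M}^2/12\Big).
\end{equation*}
Using the hypothesis $\gamma_\varepsilon \le m/(40\tilde{M}^2) \le 1$ to bound $\gamma_\varepsilon^2 \le \gamma_\varepsilon$ in the last term, the right-hand side is majorised by $C_\rho \gamma_\varepsilon (C_1 + C_0 \gamma_\varepsilon)$, with $C_0 = (\tilde{M}^2/2)[\tilde{M}/\tilde{m} + 1/6] C_1$ exactly as defined in \eqref{Crho}.

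Finally, the upper bound on $\gamma_\varepsilon$ in \eqref{eq:choice_gamma} is precisely the positive root of the quadratic $C_0 x^2 + C_1 x - C_2 = 0$. Hence for any $\gamma_\varepsilon$ smaller than this root one has $C_0 \gamma_\varepsilon^2 + C_1 \gamma_\varepsilon \le C_2$, and multiplying by $C_\rho$ gives
\begin{equation*}
W_2^2(\Pi_{\bfrho,\bfgamma_\varepsilon,\bfN_\varepsilon}, \Pi_{\bfrho}) \le C_\rho (C_0 \gamma_\varepsilon^2 + C_1 \gamma_\varepsilon) \le C_\rho C_2 = \varepsilon^2/9,
\end{equation*}
which yields the claim $W_2(\Pi_{\bfrho,\bfgamma_\varepsilon,\bfN_\varepsilon}, \Pi_{\bfrho}) \le \varepsilon/3$. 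The only mildly delicate step is the bookkeeping in the second paragraph: matching the various powers of $m$, $\tilde{M}$ and $\tilde{m}$ appearing in the raw bound of \Cref{cor:bias_pi_rho_pirho_gamma} to the definition of $C_\rho$, and invoking $\gamma_\varepsilon \le 1$ to fold the cubic term into $C_0 \gamma_\varepsilon^2$ without deteriorating the prefactor.
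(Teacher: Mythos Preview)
Your approach is essentially identical to the paper's: invoke \Cref{cor:bias_pi_rho_pirho_gamma} in the homogeneous case, reduce the resulting bound to the scalar quadratic $C_0\gamma_\varepsilon^2 + C_1\gamma_\varepsilon \le C_2$, and observe that the first term in \eqref{eq:choice_gamma} is its positive root. The paper's proof is a two-line sketch stating exactly this reduction without any of the intermediate algebra; you supply the details the paper omits (verifying the hypothesis, collapsing the max/min ratios, and folding the cubic term into the quadratic), which is fine.

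One minor caveat on the bookkeeping you flag: the inequality $\gamma_\varepsilon \le m/(40\tilde M^2) \le 1$ that you use to absorb the cubic term is not automatic, since $m/(40\tilde M^2)$ need not be $\le 1$ without a lower bound on $\tilde M$. A cleaner way to handle this is to note that $\gamma_\varepsilon \tilde M \le m/(40\tilde M)\le 1/40$ (using $m\le \tilde M$), so the cubic contribution $\gamma_\varepsilon^2 \tilde M^2/12$ is already dominated by a multiple of the quadratic one and can be absorbed into $C_0$ directly. The paper's proof, being terse, does not spell out this step either, so this is not a discrepancy between your argument and theirs but rather a point both leave implicit.
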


\begin{proof}
    Let $\varepsilon > 0$. 
    By \Cref{cor:bias_pi_rho_pirho_gamma}, note that $\wasserstein{}^{2} (\Pi_{\bfrho}, \Pi_{\bfrho,\bfgamma_{\varepsilon}})\le \varepsilon^2/9$ is satisfied if
    $$
    C_0\gamma_{\varepsilon}^2 + C_1\gamma_{\varepsilon} \le C_2\eqsp.
    $$
    This inequality is satisfied under the choice \eqref{eq:choice_gamma}.
\end{proof}

We now provide a condition on $\bfN$ and $\bfgamma$ when \Cref{ass:hessian_lipschitz} is considered.

\begin{lemma}
    \label{lemma:Ng2}
    Assume \Cref{ass:well_defined_density}-\Cref{ass:supp_fort_convex} and assume for any $i \in [b]$, $m_i=m$, $M_i=M$ and $L_i=L$.
    In addition, let $\rho,\gamma_{\varepsilon}>0$ and $N_{\varepsilon}\ge1$ such that $\rhobf = \rho \mathbf{1}_b$, $\gammabf_{\varepsilon} = \gamma_{\varepsilon} \mathbf{1}_b$, $\bfN_{\varepsilon} = N_{\varepsilon} \mathbf{1}_b$ and $\varepsilon > 0$ satisfying
    \begin{align}
        \label{eq:choice_gamma}
        \gamma_{\varepsilon} \le &\frac{\varepsilon}{6b\sqrt{5\max_{i \in [b]}\{d_i \}C_\rho\tilde{M}^2[4 + (\max_{i \in [b]}\{d_i\}L^2m)/(20\tilde{M}^4)]}} \wedge \frac{m}{40\tilde{M}^2}\\
         & \quad \quad\wedge \frac{\varepsilon}{6b(5C_{\rho}\max_{i \in [b]}\{d_i\}m^3/\tilde{M}^2)}\eqsp,
    \end{align}
    where $C_{\rho}$ is defined in \eqref{Crho}.
    Then $\wasserstein{} (\Pi_{\bfrho}, \Pi_{\bfrho,\bfgamma_{\varepsilon}}) \le \varepsilon/3$.
\end{lemma}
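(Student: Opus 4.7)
The argument parallels that of \Cref{lemma:Ng1} but invokes the sharper bound of \Cref{cor:bias_pi_rho_pirho_gamma_alternative} (the formal version of \Cref{prop:bias_gamma_bis}) in place of \Cref{cor:bias_pi_rho_pirho_gamma}. The hypothesis $L_i=L$ implicitly supplies \Cref{ass:hessian_lipschitz}; the bound $\gamma_\varepsilon\le m/(40\tilde M^2)$ matches the step-size hypothesis of that proposition in the homogeneous regime $m_i=m$, $M_i=M$, $\rho_i=\rho$; and the natural companion choice $N_\varepsilon=\lfloor m/(20\gamma_\varepsilon\tilde M^2)\rfloor$ is exactly \eqref{eq:eq:good_choice_Nmain}.

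Specialising the prefactor of \Cref{cor:bias_pi_rho_pirho_gamma_alternative} to the homogeneous case gives
\[
4\bigl(1+\|\bar{\B{B}}_0^{-1}\B{B}_0^{\top}\B{\tilde{D}}_{0}^{\half}\|^2\bigr)\cdot\frac{m/\tilde M^2}{(m/\tilde M)^2}
= \frac{4\bigl(1+\|\bar{\B{B}}_0^{-1}\B{B}_0^{\top}\B{\tilde{D}}_{0}^{\half}\|^2\bigr)}{m}
= \frac{5C_\rho}{\tilde M^2},
\]
the last equality being the definition \eqref{Crho} of $C_\rho$. Hence $\wasserstein{}^2(\Pi_{\bfrho,\bfgamma_\varepsilon,\bfN_\varepsilon},\Pi_{\bfrho})\le (5C_\rho/\tilde M^2)\,\mathscr R^\star(\bfgamma_\varepsilon)$.

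Next I would control $\mathscr R^\star(\bfgamma_\varepsilon)$ by exploiting $\mathfrak f=m/(20\tilde M)\le 1/20$ (since $m\le\tilde M$) and $\tilde m\ge m$. The first two summands of $\mathscr R^\star$ are quadratic in $\gamma_\varepsilon$ and, after using $(\tilde M^4/\tilde m)(\mathfrak f/\tilde M)\le\tilde M^2/20$ and rounding the resulting $O(1)$ constants up to $4$, combine into at most
\[
b\max_i\{d_i\}\,\gamma_\varepsilon^2\tilde M^2\bigl[\,4 + \max_i\{d_i\}L^2 m/(20\tilde M^4)\,\bigr].
\]
The third summand is linear in $\gamma_\varepsilon$ and bounded by a constant multiple of $b\max_i\{d_i\}\gamma_\varepsilon m^3/\tilde M^2$, using $(1+\mathfrak f+\mathfrak f^2)\le 2$ together with $\tilde M\mathfrak f^3 = m^3/(8000\tilde M^2)$.

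To finish, I would substitute these two estimates into the prefactor $5C_\rho/\tilde M^2$, apply $\sqrt{a+b}\le\sqrt a+\sqrt b$, and impose that each of the two resulting summands of $\wasserstein{}(\Pi_{\bfrho,\bfgamma_\varepsilon,\bfN_\varepsilon},\Pi_{\bfrho})$ is at most $\varepsilon/6$. The quadratic-in-$\gamma_\varepsilon$ contribution reproduces the first upper bound on $\gamma_\varepsilon$ in the hypothesis, and the linear-in-$\gamma_\varepsilon$ contribution produces the third bound with coefficient $5C_\rho\max_i\{d_i\}m^3/\tilde M^2$; summing the two gives $\wasserstein{}(\Pi_{\bfrho,\bfgamma_\varepsilon,\bfN_\varepsilon},\Pi_{\bfrho})\le\varepsilon/3$. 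The only delicate step is the numerical bookkeeping above---in particular tracking how the factor $4$ absorbs the pieces $1$, $1/20$ and the small combinatorial constants, and how the coefficient $5C_\rho\max_i\{d_i\}m^3/\tilde M^2$ emerges from collapsing $\tilde M\mathfrak f^3$ against the prefactor $5C_\rho/\tilde M^2$. There is no conceptual difficulty beyond \Cref{cor:bias_pi_rho_pirho_gamma_alternative}.
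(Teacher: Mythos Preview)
Your proposal follows essentially the same approach as the paper: invoke \Cref{cor:bias_pi_rho_pirho_gamma_alternative}, specialise to the homogeneous parameters, and split $\mathscr R^\star(\bfgamma)$ into its quadratic-in-$\gamma$ block and its linear-in-$\gamma$ block, then impose that each contribution is below a fixed fraction of the target. The paper's proof is terser---it works directly at the level of $W_2^2$ and states that each of the two pieces of $\mathscr R^\star$ is bounded by $2\varepsilon^2/9$---whereas you pass to $W_2$ via $\sqrt{a+b}\le\sqrt a+\sqrt b$ and target $\varepsilon/6$ for each summand; this is a cosmetic difference only. Your identification of the prefactor as $5C_\rho/\tilde M^2$ and your treatment of the small constants ($\mathfrak f\le 1/20$, $m/\tilde m\le 1$, $1+\mathfrak f+\mathfrak f^2\le 2$) are more explicit than what the paper writes, and are in the right spirit.
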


\begin{proof}
    In \Cref{cor:bias_pi_rho_pirho_gamma_alternative}, we dissociate $R^\star(\bfgamma)$ into two contributions and the conditions we impose on $\gamma_{\varepsilon}$ ensure $\wasserstein{} (\Pi_{\bfrho}, \Pi_{\bfrho,\bfgamma_{\varepsilon}}) \le \varepsilon/3$.
    More precisely, we have $\sum_{i=1}^b d_i \gamma_i^2\tilde{M}_i^2 + \frac{d_i\gamma_i^2 \mathfrak{f}_i}{\tilde{M}_i}(d_iL^2_i + \frac{\tilde{M}_i^4}{\tilde{m}_i}) \le 2\varepsilon^2/9$ and  $\sum_{i=1}^b d_i \gamma_i \tilde{M}_i \mathfrak{f}_i^3(1+\mathfrak{f}_i+\mathfrak{f}_{i}^2) \le 2\varepsilon^2/9$ where $\mathfrak{f}_i < 1$ for any $i \in [b]$.
\end{proof}

\subsection{Discussion}

Let $\bfrho_{\varepsilon} = \rho_{\varepsilon} \mathbf{1}_b$ such that $\wasserstein{}(\pi_{\bfrho_{\varepsilon}},\pi) \le \varepsilon/3$.
From \Cref{lem:mixing_rho}, $\rho_{\varepsilon} = \mathcal{O}(\varepsilon/d)$ when $\varepsilon \rightarrow 0$ and $d \rightarrow \infty$.
Similarly, let $\bfgamma_{\varepsilon} = \gamma_{\varepsilon}\mathbf{1}_b$ such that $\wasserstein{} (\Pi_{\bfrho_{\varepsilon}}, \Pi_{\bfrho_{\varepsilon},\bfgamma_{\varepsilon}}) < \varepsilon/3$. 
Under \Cref{ass:well_defined_density}-\Cref{ass:supp_fort_convex}, we obtain by \Cref{lemma:Ng1} $\gamma_{\varepsilon}=\mathcal{O}(\varepsilon^4/d^3)$.
On the other hand, when \Cref{ass:hessian_lipschitz} is additionally assumed, we get by \Cref{lemma:Ng2} $\gamma_{\varepsilon}=\mathcal{O}(\varepsilon^2/d^2)$.
Finally, to apply \Cref{lem:mixing_time} for the previous choices $\gamma_{\varepsilon}$ and $\rho_{\varepsilon}$, we obtain for $\bfN_{\varepsilon} = N_{\varepsilon}\mathbf{1}_b$ the conditions $N_{\varepsilon}=\mathcal{O}(d/\varepsilon^2)$ and $N_{\varepsilon}=\mathcal{O}(1)$ under \Cref{ass:well_defined_density}-\Cref{ass:supp_fort_convex} and \Cref{ass:well_defined_density}-\Cref{ass:supp_fort_convex}-\Cref{ass:hessian_lipschitz}, respectively.
In both scenarios, \Cref{lem:mixing_time} implies $n_{\varespilon} = \mathcal{O}(d^2\log(d)/(\varepsilon^2 |\log(\varepsilon)|)$.
This concludes the results depicted in Table 1 in the main paper.

%%% Local Variables:
%%% mode: latex
%%% TeX-master: "main"
%%% End:

% \section{Additional results}

% \input{sec-additional_figures}

\end{document}